\def\Indep{\biguplus}
\def\indep{\uplus}
\def\cover{\ll}
\def\aff{\operatorname{Aff}}
\def\lin{\operatorname{Lin}}
\def\Span{\operatorname{Span}}
\def\Ce{\mathcal C}
\def\Te{\mathcal T}
\def\Fe{\mathcal F}
\def\Pe{\mathcal P}
\def \Tr{\mathrm{Tr}\,}
\def\Se {\mathcal S}
\def\permut{\mathscr{S}}
\def\<{\langle\,}
\def\>{\,\rangle}
\def\vtl{\vartriangleleft}
\def \Af{\mathrm{Af}}
\def \FV{\mathrm{FinVect}}
\newtheorem{theorem}{Theorem}[section]
\newtheorem{lemma}[theorem]{Lemma}
\newtheorem{coro}[theorem]{Corollary}
\newtheorem{prop}[theorem]{Proposition}
\theoremstyle{remark}
\newtheorem{remark}{Remark}[section]
\newtheorem{exm}{Example}[section]
\begin{document}

\title{On the structure of  higher order quantum maps}

\author{Anna Jenčová}
\affiliation{Mathematical Institute, Slovak Academy of Sciences, Štefánikova 49, 814 73
Bratislava, Slovakia}
\orcid{ }
\email{jenca@mat.savba.sk}
\maketitle

\begin{abstract}
  We study higher order quantum maps in the context of a *-autonomous category of affine
subspaces. We show that types of higher order maps can be identified with certain Boolean
functions that we call type functions. By an extension of this identification, the algebraic structure of
Boolean functions is inherited by some sets of quantum objects including higher order maps. 
Using the M\"obius transform, we assign to each
type function a poset whose elements are labelled by subsets of indices of the involved 
spaces. We then show that the type function corresponds to a comb type if and only if the
poset is a chain. We also devise a procedure for decomposition of the poset to a set of
basic chains from which the type function is constructed by taking maxima and minima of
concatenations of the basic chains in different orders. On the level of higher order maps,
 maxima and minima correspond to affine mixtures and intersections, respectively.

  \end{abstract}

\section{Introduction}

Higher order quantum maps encompass all quantum objects and admissible transformations
between them. Starting from the set of quantum states, the whole hierarchy of maps can be built,
including at each step all transformations between objects built at the previous steps. 
The basic notion here is that of a quantum channel, defined as a completely positive
trace preserving map. Quantum channels represent  physical transformations between quantum
state spaces, but all the higher order quantum maps can be described as
quantum channels with a special structure, see
\cite[Prop.~2]{apadula2024nosignalling}.

Transformations between quantum channels are given by  quantum
supermaps, consisting of a pair of channels connected by an ancilla
\cite{chiribella2008transforming}. We may then consider transformations from superchannels
to channels and so on, building a hierarchy of quantum
combs \cite{chiribella2008quantum,chiribella2009theoretical}, also called quantum
games
\cite{gutoski2007toward},  which are given by quantum circuits with some free slots.  
The free input and output spaces of such circuits have a fixed ordering, and therefore have a
definite causal structure. Causally ordered transformations include quantum testers
\cite{chiribella2008memory} or PPOVMs \cite{ziman2008process}, describing tests and
measurements on quantum channels and combs, and other types of maps, e.g.  no-signalling channels.

Besides the causally ordered maps, the hierarchy of higher order maps  contains
transformations that have {indefinite causal
structure}, 
\cite{oreshkov2012quantum,chiribella2013quantum, perinotti2017causal}. Such maps include, for example,  transformations between
quantum combs or transformations 
 of no-signalling channels, which can be obtained  as affine mixtures of quantum combs
 having different causal orders \cite{oreshkov2012quantum,perinotti2017causal}. An example  is a quantum switch
\cite{chiribella2013quantum}, which transforms a pair of channels $\Phi$ and $\Psi$ into a coherent
combination of their compositions $\Phi\circ\Psi$ and $\Psi\circ \Phi$. 

In recent years, several theoretical frameworks for description of higher order maps were
proposed, see also the recent review \cite{taranto2025higher}. In \cite{perinotti2017causal,bisio2019theoretical}, a formalism of {types} is introduced,
built from elementary types (corresponding to state spaces) and transformations. A
categorical approach is taken in
\cite{kissinger2019acategorical,simmons2022higher,simmons2024acomplete}, where a *-autonomous
category $\mathrm{Caus}[\mathcal C]$ is constructed over a compact closed symmetric monoidal category satisfying
certain properties. The work \cite{hoffreumon2026projective} builds on these results 
using the description of higher order objects by superoperator projections, which extends a similar
description of {process matrices} in \cite{araujo2015witnessing}. In a similar approach, the projections
corresponding to transformations between given quantum objects were characterized in
\cite{milz2024characterising}.

 The formalism of types provides a very general framework for
description of higher order maps. The inductive construction does not  refer to any 
specific properties of quantum systems and can therefore be transferred  to  more general 
contexts.  The connection of types to *-autonomous categories enables us to recognize
equivalence of certain types and to apply  the underlying algebraic (Boolean algebras,
\cite{hoffreumon2026projective}) and logical (BV-logics, \cite{simmons2022higher})
structures to study causal relations inherent in higher order  maps of a given type. 
The combinatorial description of types proved useful in the study of signalling relations
and the restrictions they pose on possible compositions of the maps,
\cite{apadula2024nosignalling}. On the other hand,
the axiomatic approach of type theory in itself seems  insufficient for the study of
 constructions of higher order maps (such as QC-CC or QC-QC maps \cite{wechs2021quantum},
 routed quantum circuits \cite{vanrietvelde2025consistent}) or 
 their possible physical implementations.

The aim of the present work is to gain more understanding about the structure of types of
higher order quantum maps, in an approach that combines the works
\cite{kissinger2019acategorical,simmons2022higher},
\cite{bisio2019theoretical,apadula2024nosignalling},
\cite{hoffreumon2026projective} and \cite{milz2024characterising}. 
A crucial tool in these works is the representation of the transformations  via the Choi
isomorphism, which allows to identify transformations of a given type  with elements of 
the intersection of a certain subspace in the space of multipartite hermitian operators
with the cone of positive operators and a hyperplane determined by a
positive multiple of the identity. 
Since the cone of positive
  operators remains the same whichever type of higher order maps we are dealing with and
  the hyperplanes differ only by a positive multiple, it follows that types of 
  higher order maps can be represented in terms of subspaces of multipartite hermitian
  operators, or equivalently, by projections onto them.  This is the idea behind the
  projective characterization of process matrices in \cite{araujo2015witnessing} and
  higher order quantum maps in \cite{hoffreumon2026projective, milz2024characterising}.

In the first part  (Section \ref{sec:affine}), we combine the above idea with the categorical
approach of \cite{kissinger2019acategorical,simmons2022higher}. We investigate the
category $\Af$, whose objects are finite dimensional vector spaces with a distinguished
affine subspace. We prove that with the structures inherited from the
 compact closed structure of $\FV$, $\Af$ becomes a *-autonomous category. The internal
 hom  $X\multimap Y=(X\otimes Y^*)^*$ in $\Af$ directly corresponds to the set of all linear
 transformations mapping elements of  the affine
subspace related to $X$ into the affine subspace related to $Y$. 
Restricting to spaces of
hermitian matrices, we show that the intersection with the positive cone gives exactly the
Choi matrices of completely positive maps preserving given affine subspaces, in
particular, the higher order quantum maps can be characterized by analysing the internal 
homs in $\Af$. Using the special structure of hermitian matrices, we can also obtain a 
projective characterization as in  \cite{milz2024characterising}. 

We define first order objects in
$\Af$ as those where the distinguished affine subspace is a hyperplane. The higher order
objects are then defined by applying the internal homs, or equivalently, tensor products
and duality, on a finite set $\{X_1,\dots,X_n\}$ of first order objects. Along the lines of
\cite{bisio2019theoretical}, we then find an independent decomposition of the tensor
product $V=V_{X_1}\otimes \dots\otimes V_{X_n}$ of the underlying vector spaces into subspaces  labelled by binary strings. We use this
decomposition to define an injective map from a certain subalgebra  $\Fe_n$ of Boolean
functions into objects of $\Af$. This map depends on the choice of the
set  $\{X_1,\dots,X_n\}$. Its range consists of objects that have the tensor product
$V$ as the underlying subspace and inherits the boolean lattice
structure of $\Fe_n$, with $\wedge$ and $\vee$ given by
corresponding operations on the affine subspaces. 
In categorical terms, $\wedge$ and $\vee$ are given by pullbacks and pushouts,
respectively, moreover, the above map transforms tensor products of functions into tensor
products of objects. Any higher order
object built over $n$ first order objects can be found in the range of such a map, the
corresponding function $f\in \Fe_n$ being uniquely determined by its type. These elements
in $\Fe_n$ are called the {type functions}.
Let us note that in the case of quantum objects, the
 structure inherited from $\Fe_n$ corresponds to the algebraic structure of the
 superoperator projections studied in \cite{hoffreumon2026projective}. 
 
The second part of the paper (Section \ref{sec:type}) is devoted to investigation  of the structure of type
functions.  We first notice that there are special type functions $p_S$, $S\subseteq [n]$, whose conjugates are mapped to objects
representing channels. We then use the M\"obius transform to express any function in
$\Fe_n$ as a combination of these functions. This leads to definition of a subposet
$\Pe_f$ in $2^n$ of subsets $S\subseteq [n]$ such that $p_S$ has a nonzero coefficient. We show
that if $f$ is a type function, then $\Pe_f$ is a graded poset with even rank and $f$ is
fully determined by $\Pe_f$. It is also shown that $f$ is a type function corresponding to
combs if and only if  $\Pe_f$ is a chain (of even length), and the subsets in the chain determine the input and
output spaces of the combs. We then introduce a {causal product} on $\Fe_n$, which in
case of chain types  corresponds to connections of the chains. This operation can be
seen as a version of the 'prec' connector of \cite{hoffreumon2026projective} and the
sequential product $<$  of \cite{simmons2022higher, simmons2024acomplete}.  The  structure theorem
of type functions then proves that any type function is obtained from
several basic chain types by taking maxima and minima (or minima and maxima) over their
concatenations  in different orders. This can be seen as a version 
of the normal form of \cite{hoffreumon2026projective}. 

In Section \ref{sec:pf0} we then show that instead of the poset $\Pe_f$, we may use a
smaller poset $\Pe_f^0$, whose vertices are labelled by subsets of $[n]$. We
show that any type function is fully determined  by $\Pe_f^0$, by presenting a procedure
for  decomposing the poset into a set of chains, from which the type function is
constructed via the structure theorem.

 Let us stress that the introduced framework is only suitable for the study
of features of maps that can be expressed by affine constraints. There are many important
properties that are not affine, such as causal nonseparability, LOSR (local operations and
shared randomness) or LOSE (local operations and shared entanglement) bipartite channels,
etc. For these one should include other restrictions e.g. on the positive cone, but such
considerations are outside the scope of this work.

Let us also remark that to obtain our results for higher order quantum maps, we could have
restricted our investigation to the *-autonomous subcategory of quantum objects (see Section
\ref{sec:affine_q}), which would make our constructions somewhat easier. Specifically, in
the general case there is no natural  identification of a linear space $V$ with its dual, and we also have to
specify some fixed element in the hyperplane for all involved first order objects. It is not yet clear
whether such a more general approach can be really useful, on the other hand,
the complications
involved are not that difficult to tackle. The more general category of affine subspaces was used with
the idea that similar considerations might be applicable for investigation of higher order
maps in the framework of general probabilistic theories, see e.g.
\cite{plavala2023general} for an overview of GPTs  and \cite{bavaresco2024indefinite}
for some suggestions for higher order constructions in this framework.

\section{Affine subspaces and higher order maps}\label{sec:affine}

\subsection{The category $\FV$} \label{sec:fv}

The category $\FV$ is a basic  example of a compact closed category, whose structure
underlies all the results in this paper. For a standard reference on category theory and symmetric
monoidal categories, see \cite{maclane2013categories}. For more on compact closed categories, see
\cite{ponto2014traces}. For the use of categories in quantum theory, see
\cite{heunen2019categories}.

Let  $\FV$ be the category of finite dimensional real vector spaces with linear maps. 
We will denote the usual tensor product by $\otimes$, then  $(\FV,\otimes, I=\mathbb R)$
is a symmetric monoidal category, with the associators, unitors and symmetries given by
the obvious isomorphisms 
\begin{align*}
\alpha_{U,V,W}&:(U\otimes V)\otimes W\simeq U\otimes (V\otimes W), \\
\lambda_V&: I\otimes
V\simeq
V, \qquad \rho_V: V\otimes I\simeq V,\\
\sigma_{U,V}&: U\otimes V\simeq V\otimes U.
\end{align*}
For any permutation $\pi\in \permut_n$, we will denote the isomorphism $V_1\otimes
\dots\otimes V_n\to V_{\pi^{-1}(1)}\otimes \dots\otimes V_{\pi^{-1}(n)}$ again by $\pi$.

Let  $(-)^*: V\mapsto V^*$ be the usual vector space dual, with duality denoted by
$\<\cdot,\cdot\>: V^*\times V\to \mathbb R$. We will use the canonical identification
$V^{**}=V$ and $(V_1\otimes V_2)^*=V_1^*\otimes V_2^*$. With this duality, $\FV$ is
compact closed. This means that for each object $V$, there are morphisms $\eta_V: I\to V^*\otimes
V$ (the 'cup') and $\epsilon_V: V\otimes V^*\to I$ (the 'cap') such that the following
'snake'
identities hold:
\begin{equation}\label{eq:snake}
(\epsilon_V\otimes V)\circ (V\otimes \eta_V)=V,\qquad (V^*\otimes \epsilon_V)\circ
(\eta_V\otimes V^*)=V^*,
\end{equation}
here we denote the identity map on the object $V$ by $V$. We will see in
Example \ref{exm:quantum} below that in the case related to quantum theory, that is when
$V$ is the space of complex hermitian
matrices, the two morphisms are given by (a multiple of) the maximally entangled state.
In fact, as
it is now well established, the identities in \eqref{eq:snake} are exactly what lies
behind fundamental protocols such as quantum teleportation
\cite{heunen2019categories}. These identities also enable us to identify linear maps with
their Choi matrices, which is an important tool in the higher order quantum theory.

Let us identify the 'cup' and 'cap'  morphisms for an object $V$ in $\FV$.
First, $\eta_V$ is a linear map $\mathbb R\to V^*\otimes V$, which can be
identified with the element $\eta_V(1)\in V^*\otimes V$ and   $\epsilon_V\in (V\otimes
V^*)^*=V^*\otimes V$ is again an element of the same space.  Choose a basis
$\{e_i\}$ of $V$, let $\{e_i^*\}$ be the dual basis of $V^*$, that is,
$\<e_i^*,e_j\>=\delta_{i,j}$. Let us define
\[
\eta_V(1)=\epsilon_V:=\sum_i e_i^*\otimes e_i.
\]
It is easy to see that this definition does not depend on the choice of the basis, indeed,
$\epsilon_V$ is the linear functional on $V\otimes V^*$ defined by
\[
\<\epsilon_V, x\otimes x^*\>=\<x^*,x\>,\qquad x\in V, \ x^*\in V^*.
\]
It is also easily checked that the snake identities \eqref{eq:snake} hold.

For two objects $V$ and $W$ in $\FV$, we will denote the set of all morphisms (i.e. linear
maps) $V\to
W$ by $\FV(V,W)$. An object $V\multimap W$ is called an internal hom if for every object
$U$ there exists an
isomorphism (called currying) 
\[
\FV(U\otimes V,W)\simeq \FV(U,V\multimap W)
\]
natural in $U$, $V$ and $W$. In a compact closed category an internal hom always exists
and can be given as $V\multimap W=V^*\otimes W$.

The set  $\FV(V,W)$  itself has a structure of a real vector
space and it is well known that we have the identification 
$\FV(V,W)\simeq V^*\otimes W=V\multimap W$. This can be given as follows: for each $f\in \FV(V,W)$, we have 
$C_f:=(V^*\otimes f)(\epsilon_V)=\sum_i e_i^*\otimes f(e_i)\in V^*\otimes W$. Conversely,
since $\{e_i^*\}$ is a basis of $V^*$, 
any element $w\in V^*\otimes W$ can be uniquely written as $w=\sum_i e_i^*\otimes w_i$ for
$w_i\in W$, and since $\{e_i\}$ is a basis of $V$, the assignment $f(e_i):=w_i$ determines a
unique map $f:V\to W$. The relations between $f\in \FV(V,W)$ and $C_f\in V^*\otimes W$ can
be also written as
\[
\<C_f,x\otimes y^*\>=\<\epsilon_V,x\otimes f^*(y^*)\>=\<f^*(y^*),x\>=\<y^*,f(x)\>,\qquad x\in
V,\ y^*\in W^*,
\]
here $f^*:W^*\to V^*$ is the adjoint of $f$.

The following two examples  are most important for us.

\begin{exm}\label{exm:classical} Let $V=\mathbb R^N$. In this case, we fix the canonical basis $\{|i\>,\
i=1,\dots,N\}$. We will identify $(\mathbb R^N)^*=\mathbb R^N$, with duality
$\<x,y\>=\sum_i x_iy_i$, in particular, we identify $I=I^*$. We then have
$\epsilon_V=\sum_i |i\>\otimes |i\>$ and if $f:\mathbb R^N\to \mathbb R^M$ is given by the
matrix $A$ in the two canonical bases, then  $C_f=\sum_i |i\>\otimes A|i\>$ is the
vectorization of $A$.

\end{exm}

\begin{exm}\label{exm:quantum} Let $V=M_n^h$ be the space of $n\times n$ complex hermitian matrices. We again
identify $(M_n^h)^*=M_n^h$, with duality $\<A,B\>=\Tr AB$. Let us choose the basis in $M_n^h$, given as
\[
\left\{|j\>\<k|+|k\>\<j|,\ j\le k,\ i\biggl(|j\>\<k|-|k\>\<j|\biggr),\ j<k\right\}.
\]
Then one can check that
\[
\left\{\frac12\biggl(|j\>\<k|+|k\>\<j|\biggl),\ j\le k,\
\frac{i}{2}\biggl(|k\>\<j|-|j\>\<k|\biggr),\ j<k\right\}
\]
is the dual basis and we have
\[
\epsilon_V=\sum_{j,k} |j\>\<k|\otimes |j\>\<k|,
\]
note that up to normalization, this is a maximally entangled state in
$\mathbb C^n\otimes \mathbb C^n$.
For any $f:M_n^h\to M_m^h$, 
\[
C_f=\sum_{j,k} |j\>\<k|\otimes f(|j\>\<k|)
\]
is the Choi matrix of $f$.

\end{exm}

 In the rest of this work, we will use the following notations. 
For subspaces in a vector space, we will use $\vee$ and $\wedge$ for the usual
lattice operations on subspaces, that is, for $V_1,V_2\subseteq V$, we put $V_1\wedge
V_2=V_1\cap V_2$ and $V_1\vee V_2=\mathrm{Span}(V_1\cup V_2)$. The direct sum will be
denoted by $+$, that is, $V_1+V_2=V_1\vee V_2$ with $V_1\wedge V_2=\{0\}$. For $V_1,\dots,
V_n\subseteq V$, we denote the direct sum by $\sum$, that is, $\sum_i V_i=\bigvee V_i$
with $V_i\wedge V_j=\{0\}$.  The notation
$\oplus$ is reserved for the orthogonal direct sum of subspaces in $M_n^h$ with inner
product $\<\cdot,\cdot\>$.

\subsection{The category $\Af$}

We now introduce the category $\Af$, whose objects  are of the form $X=(V_X,A_X)$, where
$V_X$ is an object in $\FV$  and $A_X\subseteq V_X$ is a proper affine subspace, see Appendix
\ref{sec:app_affine} for definitions and basic properties. Morphisms $X\xrightarrow{f} Y$ in $\Af$ are linear maps $f:V_X\to V_Y$  such that
$f(A_X)\subseteq A_Y$. For two objects $X$ and $Y$ and a  linear map $f:V_X\to V_Y$, we
write $X\xrightarrow{f} Y$   with
the meaning that $f$ is a morphism in $\Af$. In particular, if $V_X=V_Y=V$, then  $X\xrightarrow{id_V} Y$ means
that  $A_X\subseteq A_Y$. The set of all morphisms $X\xrightarrow{f} Y$ in $\Af$ will be
denoted by $\Af(X,Y)$.

For any object $X$, we put
\begin{align*}
L_X:=\lin(A_X), \quad  S_X:=\Span(A_X), \quad D_X=\dim(V_X),\quad d_X=\dim(L_X).
\end{align*}
By \eqref{eq:affine_l} and \eqref{eq:affine_s},  we have
\begin{equation}\label{eq:ALS}
A_X=a+L_X=S_X\cap \{\tilde a\}^*,
\end{equation}
for any choice of elements $a\in A_X$ and $\tilde a\in A^*_X$.

We will next introduce a monoidal structure $\otimes$ as follows. For two objects $X$ and
$Y$, we put  $V_{X\otimes Y}=V_X\otimes V_Y$ and construct the affine subspace
$A_{X\otimes Y}$ as the affine span of 
\[
A_X\otimes A_Y=\{a\otimes b,\ a\in A_X,\ b\in A_Y\}.
\]
Fix any $a_X\in A_X$, $\tilde a_X\in A^*_X$ and $a_Y\in A_Y$, $\tilde
a_Y\in A^*_Y$. 
Since $a_X\otimes a_Y\in A_X\otimes A_Y\subseteq \{\tilde a_X\otimes \tilde a_Y\}^*$, the affine span of
$A_X\otimes A_Y$ is a proper affine subspace and we have by Lemma \ref{lemma:dual}
\begin{equation}\label{eq:aff_tensor}
A_{X\otimes Y}:=\aff(A_X\otimes A_Y)=\{A_X\otimes A_Y\}^{**}.
\end{equation}

\begin{lemma}\label{lemma:tensor_spaces}
For any $a_X\in A_X$, $a_Y\in A_Y$, we  have
\begin{align}
L_{X\otimes Y}&=\lin(A_X\otimes A_Y)=\Span(\{x\otimes y-a_X\otimes a_Y,\ x\in A_X,\ y\in
A_Y\})\label{eq:lxy1}\\
&= (a_X\otimes L_Y)+ (L_X\otimes a_Y)+ (L_X\otimes L_Y)\label{eq:lxy}
\end{align}
(here $+$ denotes the direct sum of subspaces). We also have
\[
S_{X\otimes Y}=S_X\otimes S_Y.
\]
\end{lemma}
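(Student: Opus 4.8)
The plan is to establish the three claimed identities in turn, using the definitions of $L_X$, $S_X$ as linearization and span, and the general facts about affine subspaces collected in the appendix (in particular \eqref{eq:ALS} and the behavior of $\lin$ and $\Span$). Throughout I fix base points $a_X\in A_X$, $a_Y\in A_Y$, and write $A_X=a_X+L_X$, $A_Y=a_Y+L_Y$.

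For \eqref{eq:lxy1}: by definition $L_{X\otimes Y}=\lin(A_{X\otimes Y})=\lin(\aff(A_X\otimes A_Y))$, and since the linearization of an affine set is unchanged under passing to its affine span, this equals $\lin(A_X\otimes A_Y)$. Picking $a_X\otimes a_Y$ as the base point of the affine set $A_X\otimes A_Y$, its linearization is exactly $\Span(\{x\otimes y-a_X\otimes a_Y:\ x\in A_X,\ y\in A_Y\})$, which is the right-hand side of \eqref{eq:lxy1}. For \eqref{eq:lxy}, I would substitute $x=a_X+\ell$ with $\ell\in L_X$ and $y=a_Y+m$ with $m\in L_Y$, so that $x\otimes y-a_X\otimes a_Y=a_X\otimes m+\ell\otimes a_Y+\ell\otimes m$. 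This shows the span in \eqref{eq:lxy1} is contained in $(a_X\otimes L_Y)+(L_X\otimes a_Y)+(L_X\otimes L_Y)$. For the reverse inclusion, taking $m$ ranging over $L_Y$ with $\ell=0$ gives all of $a_X\otimes L_Y$; symmetrically $L_X\otimes a_Y$; and then elements of the form $\ell\otimes m$ are obtained as $(x\otimes y-a_X\otimes a_Y)-a_X\otimes m-\ell\otimes a_Y$, so bilinearity and spanning yield all of $L_X\otimes L_Y$. The one genuine point to verify is that the sum in \eqref{eq:lxy} is \emph{direct}: this follows because $V_X=\Span(A_X)=\Ce a_X\oplus L_X$ need not hold, but $S_X=\Span(A_X)$ does decompose as $\Ce a_X\oplus L_X$ (as $A_X$ is a proper affine subspace, $a_X\notin L_X$), and likewise $S_Y=\Ce a_Y\oplus L_Y$; tensoring these two direct sum decompositions of $S_X$ and $S_Y$ gives $S_X\otimes S_Y=(\Ce a_X\otimes\Ce a_Y)\oplus(\Ce a_X\otimes L_Y)\oplus(L_X\otimes\Ce a_Y)\oplus(L_X\otimes L_Y)$, and \eqref{eq:lxy} is the sum of the last three summands, hence direct.

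For the last identity $S_{X\otimes Y}=S_X\otimes S_Y$: by definition $S_{X\otimes Y}=\Span(A_{X\otimes Y})=\Span(\aff(A_X\otimes A_Y))=\Span(A_X\otimes A_Y)$, since taking affine span does not change the linear span. Now $\Span(A_X\otimes A_Y)\subseteq S_X\otimes S_Y$ is immediate. Conversely, writing $A_X=a_X+L_X$, $A_Y=a_Y+L_Y$, the computation above (or directly the decomposition of $\Ce a_X\otimes\Ce a_Y$, $\Ce a_X\otimes L_Y$, $L_X\otimes\Ce a_Y$, $L_X\otimes L_Y$) shows that $a_X\otimes a_Y\in\Span(A_X\otimes A_Y)$ together with $L_{X\otimes Y}$ already spans all four summands, so $S_X\otimes S_Y=\Ce(a_X\otimes a_Y)+L_{X\otimes Y}\subseteq\Span(A_X\otimes A_Y)$. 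Combining the two equalities gives $S_{X\otimes Y}=S_X\otimes S_Y$.

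The main obstacle, and the only place where care is needed, is the directness of the decomposition in \eqref{eq:lxy}; everything else is a routine unwinding of definitions together with bilinearity. The cleanest way to handle it is the one indicated above: reduce to the direct sum decompositions $S_X=\Ce a_X\oplus L_X$ and $S_Y=\Ce a_Y\oplus L_Y$ (valid because $A_X$, $A_Y$ are proper affine subspaces, so the base point is not in the linearization), and tensor them; this simultaneously pins down $S_X\otimes S_Y$ and exhibits \eqref{eq:lxy} as a direct sum of three of its four canonical summands. I would also double-check the edge cases where $L_X=0$ (i.e. $A_X$ is a single point), in which \eqref{eq:lxy} correctly collapses to $a_X\otimes L_Y$.
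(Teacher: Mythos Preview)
Your argument is correct, but it follows a somewhat different path from the paper's. The paper establishes \eqref{eq:lxy1} and the inclusion $L_{X\otimes Y}\subseteq (a_X\otimes L_Y)+(L_X\otimes a_Y)+(L_X\otimes L_Y)$ just as you do, but then proves $S_{X\otimes Y}=S_X\otimes S_Y$ \emph{first}, by the one-line observation that every element of $S_X$ is a scalar multiple of some element of $A_X$ (and similarly for $S_Y$), so $S_X\otimes S_Y$ is already spanned by $A_X\otimes A_Y$. With that in hand, the paper finishes \eqref{eq:lxy} by a dimension count: the right-hand side has dimension at most $d_X+d_Y+d_Xd_Y$, while $d_{X\otimes Y}=\dim(S_X\otimes S_Y)-1=(d_X+1)(d_Y+1)-1=d_X+d_Y+d_Xd_Y$, forcing both the equality and the directness of the sum simultaneously.

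Your route is more explicit: you produce the reverse inclusion by hand and obtain directness by tensoring the decompositions $S_X=\mathbb R a_X\oplus L_X$ and $S_Y=\mathbb R a_Y\oplus L_Y$, then read off $S_{X\otimes Y}=S_X\otimes S_Y$ at the end. This is perfectly sound and arguably more transparent about \emph{why} the sum is direct; the paper's dimension count is shorter but hides that structure. One small remark: in your write-up you use $\Ce$ for the scalar field, but in this paper $\Ce$ is defined as $\mathcal C$ and the ambient vector spaces are real, so you should write $\mathbb R a_X$ rather than $\Ce a_X$.
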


\begin{proof} The equality \eqref{eq:lxy1} follows from Lemma \ref{lemma:dual}. For any $x\in A_X$, $y\in A_Y$
 we have
\[
x\otimes y-a_X\otimes a_Y=a_X\otimes (y-a_Y)+(x-a_X)\otimes a_Y+(x-a_X)\otimes (y-a_Y),
\]
so that $L_{X\otimes Y}=\lin(A_X\otimes A_Y)$ is contained in the subspace on the RHS of \eqref{eq:lxy}.
Let $d$ be the dimension of this subspace, then clearly
\[
d_{X\otimes Y}\le d\le d_X+d_Y+d_Xd_Y.
\]
On the other hand, any element of $S_X$ has the form $tx$ for some $t\in \mathbb R$ and
$x\in A_X$, so that it is easily seen that $S_X\otimes S_Y=S_{X\otimes Y}$. 
Hence 
\begin{align*}
d_{X\otimes Y}&=\dim(L_{X\otimes Y})=\dim(S_{X\otimes
Y})-1=\dim(S_X)\dim(S_Y)-1=(d_X+1)(d_Y+1)-1\\
&=d_X+d_Y+d_Xd_Y.
\end{align*}
This completes the proof.

\end{proof}

\begin{lemma}\label{lemma:monoidal} Let $I=(\mathbb R,\{1\})$. Then 
$(\Af,\otimes, I)$ is a symmetric monoidal category.
\end{lemma}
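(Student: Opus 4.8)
The plan is to verify the symmetric monoidal category axioms for $(\Af,\otimes,I)$ by leveraging the fact that the underlying structure in $\FV$ already satisfies them. The key observation is that a morphism in $\Af$ is just a morphism in $\FV$ between the underlying vector spaces that additionally respects the affine subspaces; so the coherence isomorphisms (associators, unitors, symmetries) will be \emph{inherited} from $\FV$ provided we check two things: (i) the defining linear isomorphisms of $\FV$ actually restrict to isomorphisms of the relevant affine subspaces, i.e. they are morphisms in $\Af$ in both directions, and (ii) the bifunctoriality of $\otimes$ holds, i.e. if $X\xrightarrow{f}X'$ and $Y\xrightarrow{g}Y'$ in $\Af$ then $f\otimes g$ is a morphism $X\otimes Y\to X'\otimes Y'$ in $\Af$. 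Once (i) and (ii) are established, all the pentagon, triangle and hexagon identities hold automatically because they hold in $\FV$ and the forgetful functor $\Af\to\FV$ is faithful and preserves the relevant data.

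First I would check bifunctoriality of $\otimes$. Given morphisms $f:X\to X'$ and $g:Y\to Y'$, we have $f(A_X)\subseteq A_{X'}$ and $g(A_Y)\subseteq A_{Y'}$, hence $(f\otimes g)(A_X\otimes A_Y)\subseteq A_{X'}\otimes A_{Y'}\subseteq A_{X'\otimes Y'}$. Since $f\otimes g$ is linear and $A_{X\otimes Y}=\aff(A_X\otimes A_Y)$ is the affine span, and affine maps send affine spans into affine spans (an affine combination of images is the image of the affine combination), it follows that $(f\otimes g)(A_{X\otimes Y})\subseteq A_{X'\otimes Y'}$, so $f\otimes g$ is a morphism in $\Af$. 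Functoriality in each variable (preservation of identities and composition) is then immediate from the same facts in $\FV$. Next, for the unit $I=(\mathbb R,\{1\})$ I would check that $\lambda_V:I\otimes V\to V$ and $\rho_V$ send $A_{I\otimes X}$ onto $A_X$ and vice versa: since $A_{I\otimes X}=\aff(\{1\}\otimes A_X)$ and $\lambda(1\otimes a)=a$, we get $\lambda(A_{I\otimes X})=\aff(A_X)=A_X$, and similarly for the inverse, so $\lambda$ is an isomorphism in $\Af$; likewise $\rho$.

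For the associator $\alpha_{X,Y,Z}$, I would use Lemma \ref{lemma:tensor_spaces}: it suffices to show $\alpha$ maps $A_{(X\otimes Y)\otimes Z}$ onto $A_{X\otimes(Y\otimes Z)}$. Both are affine spans of the respective products of $A_X,A_Y,A_Z$, and $\alpha$ carries $(a\otimes b)\otimes c$ to $a\otimes(b\otimes c)$, so it carries the generating set of one affine span bijectively onto that of the other; since $\alpha$ is a linear isomorphism it maps the affine span onto the affine span. Alternatively, one can note $S_{(X\otimes Y)\otimes Z}=(S_X\otimes S_Y)\otimes S_Z$ and use that $A=S\cap\{\tilde a\}^*$ for a suitable functional, but the affine-span argument is cleaner. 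The symmetry $\sigma_{X,Y}$ is handled the same way: $\sigma(a\otimes b)=b\otimes a$ maps the generators of $A_{X\otimes Y}$ onto those of $A_{Y\otimes X}$. Having checked that each of $\alpha,\lambda,\rho,\sigma$ (and their inverses) is a morphism in $\Af$, the coherence axioms in $\Af$ follow from those in $\FV$ by faithfulness of the forgetful functor, and naturality is inherited likewise. I do not expect any genuine obstacle here; the only mildly delicate point is being careful that ``affine span of a product of affine subspaces is the correct affine subspace'' — but this is exactly what Lemma \ref{lemma:tensor_spaces} and the discussion preceding it record, so it can be invoked directly.
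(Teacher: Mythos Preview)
Your proposal is correct and follows essentially the same approach as the paper: both reduce the verification to checking that $\otimes$ is a functor on $\Af$ (via the affine-span argument on generators $A_X\otimes A_Y$) and that the structural isomorphisms $\alpha,\lambda,\rho,\sigma$ of $\FV$ preserve the affine subspaces, with coherence inherited from $\FV$. Your write-up is in fact slightly more explicit than the paper's (you spell out the unitor and symmetry cases and the faithfulness-of-the-forgetful-functor justification for coherence), but the underlying argument is identical.
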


\begin{proof} Note that this structure is inherited from the symmetric monoidal structure
in $\FV$. To show that $\otimes$ is a functor, we have to check that for $X_1\xrightarrow{f} Y_1$ and $X_2\xrightarrow{g} Y_2$ in
$\Af$, we have $X_1\otimes Y_1\xrightarrow{f\otimes g} X_2\otimes Y_2$ which amounts to
showing that 
\[
(f\otimes g)(A_{X_1\otimes Y_1})\subseteq A_{X_2\otimes Y_2}.
\]
Let $x\in A_{X_1}$, $y\in A_{Y_1}$, then $f(x)\otimes g(y)\in A_{X_2}\otimes
A_{Y_2}\subseteq A_{X_2\otimes Y_2}$. Since  $A_{X_1\otimes Y_1}$ is the affine subspace
generated by $A_{X_1}\otimes A_{Y_1}$, the above inclusion follows by linearity of $f\otimes
g$. 

It only remains to prove that the associators, unitors and symmetries from
$\FV$ are morphisms in $\Af$. We will prove this for the associators $\alpha_{X,Y,Z}:V_X\otimes (V_Y\otimes V_Z)\to
(V_X\otimes V_Y)\otimes V_Z$, the other proofs are similar. We need to check that
$\alpha_{X,Y,Z}(A_{X\otimes(Y\otimes Z)})\subseteq A_{(X\otimes Y)\otimes Z}$. It is easily
checked that $A_{X\otimes(Y\otimes Z)}$ is the affine span of elements of the form
$x\otimes (y\otimes z)$, $x\in A_X$, $y\in A_Y$ and $z\in A_Z$, and we have
\[
\alpha_{X,Y,Z}(x\otimes (y\otimes z))=(x\otimes y)\otimes z\in A_{(X\otimes Y)\otimes Z}
\]
for all such elements. The desired inclusion follows by linearity.

\end{proof}

By Corollary \ref{coro:dual},  the dual affine subspace $A^*_X$ is a proper affine subspace in $V_X^*$, so that
$X^*:=(V_X^*,A^*_X)$ is an object in $\Af$. We have $X^{**}=X$ and the corresponding subspaces
are related as
\begin{equation}\label{eq:duality}
L_{X^*}=S_X^\perp,\qquad S_{X^*}=L_X^\perp.
\end{equation}
It is easily seen that for any  $X\xrightarrow{f} Y$, the adjoint map satisfies $f^*(
A_Y^*)\subseteq A^*_X$, so that $Y^*\xrightarrow{f^*} X^*$ and the duality $(-)^*$ is a
full and faithful functor 
$\Af^{op}\to \Af$. As we will see below, $\Af$ with this monoidal structure and
duality is not compact closed. Nevertheless, we next show that it is *-autonomous, which
is a weaker property meaning that $(\Af,\otimes, I)$ is a symmetric monoidal category
and the duality $(-)^*$ is a full and faithful contravariant functor such that the
internal hom is given by $X\multimap Y=(X\otimes Y^*)^*$, see \cite{barr1979star} for details.
This property will be crucial for the structure of higher order objects studied further.

\begin{theorem} $(\Af,\otimes,I)$ is a *-autonomous category, with duality $(-)^*$, such
that $I^*=I$.

\end{theorem}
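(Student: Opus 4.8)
The plan is to verify the definition of a *-autonomous category by checking the required coherence conditions, leveraging as much as possible the fact that everything is inherited from the compact closed structure on $\FV$. By Lemma \ref{lemma:monoidal}, $(\Af,\otimes,I)$ is already a symmetric monoidal category, and we have already observed that $(-)^*$ is a full and faithful contravariant functor with $X^{**}=X$ and $I^*=I$ (the last since $I^*=(\mathbb R^*,\{1\}^*)=(\mathbb R,\{1\})=I$ under our identifications). So the core task is to show that $X\multimap Y:=(X\otimes Y^*)^*$ genuinely serves as an internal hom, i.e. that there are natural isomorphisms
\[
\Af(X\otimes Y, Z)\simeq \Af(X, Y\multimap Z),
\]
and that the duality is compatible in the sense required by Barr's axioms (equivalently, that the canonical map $X\to (X\multimap I)\multimap I$, or $X\to X^{**}$, is an isomorphism, which we already have).

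First I would unwind what $X\multimap Y$ is concretely. Since $(X\otimes Y^*)^*$ has underlying space $(V_X\otimes V_Y^*)^* = V_X^*\otimes V_Y = V_X\multimap V_Y$, which is $\FV(V_X,V_Y)$ via the Choi-type identification of Section \ref{sec:fv}, the underlying vector space of $X\multimap Y$ is just the space of all linear maps $V_X\to V_Y$. The affine subspace $A_{X\multimap Y}$ is by definition $A_{X\otimes Y^*}^{*}=(\aff(A_X\otimes A_{Y^*}))^*=\{A_X\otimes A_Y^*\}^{***}=\{A_X\otimes A_Y^*\}^*$. Using the pairing identity $\langle C_f, x\otimes y^*\rangle = \langle y^*, f(x)\rangle$ from Section \ref{sec:fv}, together with the characterization $A_{Y^*}=\{b^* : \langle b^*, \cdot\rangle \text{ suitably normalized}\}$ and $A_Y = S_Y\cap\{\tilde a\}^*$ from \eqref{eq:ALS}, I would show that $C_f\in A_{X\multimap Y}$ exactly when $f(A_X)\subseteq A_Y$; this is the content foreshadowed in the introduction ("the internal hom $X\multimap Y$ in $\Af$ directly corresponds to the set of all linear transformations mapping elements of $A_X$ into $A_Y$"). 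The key computation here is that $\langle C_f, a\otimes b^*\rangle = \langle b^*, f(a)\rangle$ ranges over the correct set of scalars for all $a\in A_X$, $b^*\in A_{Y^*}$ precisely under the affine-map condition, which uses the bipolar relations of Lemma \ref{lemma:dual} and Corollary \ref{coro:dual} applied to the tensor product.

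Next, with this concrete description in hand, the currying isomorphism $\Af(X\otimes Y, Z)\simeq \Af(X, Y\multimap Z)$ follows from the corresponding isomorphism in $\FV$ — which is just the standard adjunction $\FV(V_X\otimes V_Y, V_Z)\simeq\FV(V_X, V_Y\multimap V_Z)$ sending $g$ to $x\mapsto (y\mapsto g(x\otimes y))$ — by checking that a morphism restricts to one of affine subspaces on one side if and only if its curry does on the other. Concretely, $g(A_{X\otimes Y})\subseteq A_Z$ iff $g(a\otimes b)\in A_Z$ for all $a\in A_X$, $b\in A_Y$ (since $A_{X\otimes Y}$ is the affine span and $g$ is linear), iff for each fixed $a\in A_X$ the curried map $\hat g(a)$ sends $A_Y$ into $A_Z$, iff $\hat g(a)\in A_{Y\multimap Z}$ for all $a\in A_X$, iff $\hat g(A_X)\subseteq A_{Y\multimap Z}$ — the last step again using that $A_{Y\multimap Z}$ is cut out by the affine-map condition from the previous paragraph, plus linearity and affine-span arguments. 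Naturality in $X,Y,Z$ is inherited verbatim from naturality of currying in $\FV$ since all structure maps of $\Af$ are $\FV$-morphisms. Finally, one verifies $X\multimap I \cong X^*$ (so that the duality is the internal-hom dual), which is immediate: $X\multimap I = (X\otimes I^*)^* = (X\otimes I)^* \cong X^*$ using $I^*=I$ and the unitor, and the unitor is an $\Af$-morphism by Lemma \ref{lemma:monoidal}. Putting these together with Barr's definition (\cite{barr1979star}) gives that $(\Af,\otimes,I)$ is *-autonomous.

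The main obstacle I expect is the concrete identification of $A_{X\multimap Y}$ with the set of affine maps, i.e. establishing that $\{A_X\otimes A_Y^*\}^* = \{C_f : f(A_X)\subseteq A_Y\}$ as affine subspaces of $V_X^*\otimes V_Y$. The subtlety is that the affine span $A_{X\otimes Y^*}=\aff(A_X\otimes A_{Y^*})$ is strictly larger than $A_X\otimes A_{Y^*}$, so one must take its full annihilator/dual and show it is not too large; this is where Lemma \ref{lemma:tensor_spaces} (giving $L_{X\otimes Y^*}$ and $S_{X\otimes Y^*}=S_X\otimes S_{Y^*}$) and the careful bookkeeping of dimensions becomes essential — one checks the dimension of $\{A_X\otimes A_Y^*\}^*$ matches $d_X d_Y + \text{(correction for affine, not linear, maps)}$, consistent with the space of affine maps $A_X\to A_Y$ having the expected dimension. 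Everything else is routine transport of structure from $\FV$, where the snake identities \eqref{eq:snake} and the currying adjunction do all the real work; the point of the theorem is precisely that these pass to $\Af$ because affine-subspace conditions are preserved under the relevant linear operations, even though $\Af$ fails to be compact closed (the cup $\eta$ and cap $\epsilon$ need not be $\Af$-morphisms, which is why we only get the weaker *-autonomous structure).
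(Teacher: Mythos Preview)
Your approach is correct and matches the paper's: both reduce the currying isomorphism to the one in $\FV$ and check that the affine-subspace conditions transfer via the pairing identity. The only difference is organizational---the paper proves $\Af(X\otimes Y,Z^*)\simeq\Af(X,(Y\otimes Z)^*)$ directly from $\langle f(x\otimes y),z\rangle=\langle\hat f(x),y\otimes z\rangle$ and defers the concrete description of $A_{X\multimap Y}$ to the subsequent Proposition~\ref{prop:ihom_morphisms}, whereas you establish that description first and then curry.

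One correction: the ``main obstacle'' you anticipate is not there, and no dimension bookkeeping or appeal to Lemma~\ref{lemma:tensor_spaces} is needed. The identification $\{A_X\otimes A_Y^*\}^*=\{C_f:f(A_X)\subseteq A_Y\}$ is immediate: $C_f\in(A_X\otimes A_Y^*)^*$ iff $\langle C_f,a\otimes b^*\rangle=\langle b^*,f(a)\rangle=1$ for all $a\in A_X$ and $b^*\in A_Y^*$, iff $f(a)\in (A_Y^*)^*=A_Y$ for every $a\in A_X$, by Corollary~\ref{coro:dual}(i).
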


\begin{proof} By Lemma \ref{lemma:monoidal}, we have that $(\Af,\otimes, I)$ is a symmetric
monoidal category. We have also seen that the duality $(-)^*$ is a full and faithful
contravariant functor. We only need to check the natural isomorphisms 
\[
\Af(X\otimes Y,Z^*)\simeq \Af(X,(Y\otimes Z)^*).
\]
Since $\FV$ is compact, we have the natural isomorphisms
\[
\FV(V_X\otimes V_Y,V^*_Z)\simeq \FV(V_X,V_Y^*\otimes V_Z^*),
\]
determined by the equalities
\[
\<f(x\otimes y),z\>=\<\hat f(x),y\otimes z\>,\qquad x\in V_X,\ y\in V_Y,\ z\in V_Z,
\]
for $f\in \FV(V_X\otimes V_Y,V_Z^*)$ and the corresponding morphism  $\hat f\in \FV(V_Z,V_Y^*\otimes V_Z^*)$. Since
$A_{X\otimes Y}$ is an affine span of $A_X\otimes A_Y$, we see that
$f$ is in $\Af(X\otimes Y, Z^*)$ if and only if $f(x\otimes y)\in A^*_Z$ for all $x\in A_X$, $y\in
A_Y$, that is, 
\[
1=\<f(x\otimes y),z\>=\<\hat f(x),y\otimes z\>\qquad \forall x\in A_X,\
\forall y\in A_Y,\ \forall z\in A_Z.
\]
But this is equivalent to
\[
\hat f(x)\in (A_Y\otimes A_Z)^*=A^*_{Y\otimes Z},\qquad \forall x\in A_X,
\]
which means that $\hat f\in \Af(X, (Y\otimes Z)^*)$.

\end{proof}
A *-autonomous category is compact closed if it satisfies $(X\otimes Y)^*=X^*\otimes
Y^*$. 
In general, $X\parr Y=(X^*\otimes Y^*)^*$ defines a dual symmetric monoidal
structure that is different from $\otimes$. 
We next show that $\Af$ is not compact closed.

\begin{prop}\label{prop:noncompact} For objects in $\Af$, we have $(X\otimes
Y)^*=X^*\otimes Y^*$ exactly in one of the following situations:
\begin{enumerate}
\item[(i)] $X\simeq I$ or $Y\simeq I$,
\item[(ii)] $d_X=d_Y=0$,
\item[(iii)] $d_{X^*}=d_{Y^*}=0$.
\end{enumerate}

\end{prop}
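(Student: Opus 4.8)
The plan is to compare dimensions. Recall from Lemma~\ref{lemma:tensor_spaces} that $S_{X\otimes Y}=S_X\otimes S_Y$, hence $\dim S_{X\otimes Y}=(d_X+1)(d_Y+1)$, while from the duality relations \eqref{eq:duality} we have $\dim S_{X^*}=D_X-d_X$ and $\dim L_{X^*}=d_{X^*}=D_X-1-\dim S_X=D_X-1-(d_X+1)$, that is $d_{X^*}=D_X-d_X-2$ whenever $X\not\simeq I$ (one must be slightly careful with the degenerate cases; if $A_X$ is a single point then $d_X=0$ and $d_{X^*}=D_X-1$, consistent since then $S_X$ has dimension $1$ only if $A_X\ne\{0\}$, which holds as $A_X$ is a proper affine subspace not through the origin unless $V_X=0$). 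The condition $(X\otimes Y)^*=X^*\otimes Y^*$ is an equality of objects in $\Af$ with the same underlying space $V_X^*\otimes V_Y^*$, so it holds if and only if the affine subspaces agree, and since one always has $A_{X^*\otimes Y^*}\subseteq A_{(X\otimes Y)^*}$ (because $A_{X^*}\otimes A_{Y^*}\subseteq A_{(X\otimes Y)^*}$, which follows from $\langle a_X\otimes a_Y, \tilde a_X\otimes \tilde a_Y\rangle = 1$ for $a_X\in A_X$, $\tilde a_X\in A_X^*$ etc.), equality is equivalent to equality of the linear parts $L_{(X\otimes Y)^*}=L_{X^*\otimes Y^*}$, and since $L\subseteq S$ with $\dim S=\dim L+1$ in each case, equivalently to $\dim L_{(X\otimes Y)^*}=\dim L_{X^*\otimes Y^*}$.

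Next I would compute both dimensions. On one side, $d_{X^*\otimes Y^*}=d_{X^*}+d_{Y^*}+d_{X^*}d_{Y^*}$ by Lemma~\ref{lemma:tensor_spaces} applied to $X^*,Y^*$. On the other side, $d_{(X\otimes Y)^*}=D_{X\otimes Y}-d_{X\otimes Y}-2 = D_XD_Y-(d_X+d_Y+d_Xd_Y)-2$, again using the general formula $d_{Z^*}=D_Z-d_Z-2$ valid for $Z\not\simeq I$ (and $X\otimes Y\simeq I$ forces $X\simeq I$ or $Y\simeq I$, case (i)). Writing $m_X=D_X-d_X-2=d_{X^*}$, so $D_X=d_X+m_X+2$, the target equality $d_{(X\otimes Y)^*}=d_{X^*\otimes Y^*}$ becomes
\[
(d_X+m_X+2)(d_Y+m_Y+2)-(d_X+d_Y+d_Xd_Y)-2 = m_X+m_Y+m_Xm_Y.
\]
Expanding the left side and cancelling, I expect to arrive at a relation of the form
\[
d_Xm_Y+d_Ym_X+ d_Xd_Y + m_Xm_Y + 2d_X+2d_Y+2m_X+2m_Y+4 - d_X-d_Y-d_Xd_Y-2 = m_X+m_Y+m_Xm_Y,
\]
which simplifies to $d_Xm_Y+d_Ym_X+d_X+d_Y+m_X+m_Y+2=0$, i.e.
\[
(d_X+1)(m_Y+1)+(d_Y+1)(m_X+1)=0.
\]
Since all of $d_X,d_Y,m_X,m_Y$ are nonnegative integers, each of the two summands is $\ge 1$, a contradiction — so the routine arithmetic above must actually be producing an equation that does admit the stated solutions. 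I would therefore re-examine: the equality should instead reduce to something like $(d_X+1)(d_Y+1)+(m_X+1)(m_Y+1)=(d_X+1)(d_Y+1)+\dots$ — the correct bookkeeping will show the LHS of the object equality is governed by $D_XD_Y$ and the RHS by a product of smaller spaces, and equality of these forces either the $d$'s to vanish (then the affine subspaces are points, case (ii)), or the $m$'s to vanish (then $A_X,A_Y$ are hyperplanes, which is case (iii)), or one of $D_X,D_Y$ to equal $1$ (case (i)).

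The main obstacle, and the step to carry out carefully, is exactly this dimension count together with the degenerate boundary cases: I must track the difference between $L$ and $S$ (the ``$+1$'' shifts) consistently through tensor products and duals, and handle separately the possibility that one of the objects is $I$ (where $d=D-1=0$ and the formula $d_{Z^*}=D_Z-d_Z-2$ would give $-1$, signalling precisely case (i)). Concretely I would: (a) establish the always-true inclusion $X^*\otimes Y^*\hookrightarrow (X\otimes Y)^*$ via $id$; (b) reduce equality to equality of the dimensions $d_{(X\otimes Y)^*}$ and $d_{X^*\otimes Y^*}$; (c) substitute the two closed-form expressions and factor the resulting polynomial identity in $d_X,d_Y,d_{X^*},d_{Y^*}$; (d) read off that the factorisation is a product/sum of nonnegative terms that vanishes precisely in the three listed regimes, checking in each that the isomorphism type conditions $X\simeq I$, $d_X=d_Y=0$, $d_{X^*}=d_{Y^*}=0$ are correctly matched; (e) verify the converse, that in each of (i)--(iii) the equality $(X\otimes Y)^*=X^*\otimes Y^*$ genuinely holds, which for (i) is the unitor, and for (ii),(iii) follows because then either $A_X,A_Y$ are single points (so $A_{X\otimes Y}$ is the single point $a_X\otimes a_Y$ and dually) or $A_X,A_Y$ are hyperplanes whose tensor behaves symmetrically under $(-)^*$.
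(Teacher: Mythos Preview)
Your overall strategy is exactly the paper's: prove the inclusion $A_{X^*\otimes Y^*}\subseteq A_{(X\otimes Y)^*}$, reduce equality to equality of dimensions, and compute both sides using Lemma~\ref{lemma:tensor_spaces} and the duality relations. The reason your arithmetic produced a contradiction is a single wrong formula: you use $d_{X^*}=D_X-d_X-2$, but the duality \eqref{eq:duality} gives $L_{X^*}=S_X^\perp$, hence
\[
d_{X^*}=\dim S_X^\perp = D_X-\dim S_X = D_X-(d_X+1)=D_X-d_X-1,
\]
with no extra ``$-1$''. (Your own sanity check in the $d_X=0$ case already gave $d_{X^*}=D_X-1$, which contradicts your general formula; this was the signal.) There is no degenerate exception for $X\simeq I$: in that case $D_X=1$, $d_X=0$, and $d_{X^*}=0$, consistent with $D_X-d_X-1$.

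With the correct relation $D_X=d_X+d_{X^*}+1$ substituted, the difference collapses to
\[
d_{(X\otimes Y)^*}-d_{X^*\otimes Y^*}
= D_XD_Y-(d_X+1)(d_Y+1)-\bigl(d_{X^*}+d_{Y^*}+d_{X^*}d_{Y^*}\bigr)
= d_Xd_{Y^*}+d_{X^*}d_Y,
\]
a sum of two nonnegative terms. Vanishing forces $d_Xd_{Y^*}=d_{X^*}d_Y=0$, and the four sign choices give precisely cases (i)--(iii): $d_X=d_{X^*}=0$ means $D_X=1$, i.e.\ $X\simeq I$ (similarly for $Y$); $d_X=d_Y=0$ is (ii); $d_{X^*}=d_{Y^*}=0$ is (iii). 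No separate converse verification is needed once you have the dimension identity, since the inclusion is already established.
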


\begin{proof} Since $\FV$ is compact, we have $V_{(X\otimes Y)^*}=(V_X\otimes
V_Y)^*=V_X^*\otimes V_Y^*=V_{X^*\otimes Y^*}$. It is also easily seen by definition that
$A_{X^*}\otimes A_{Y^*}=A^*_X\otimes A^*_Y\subseteq A^*_{X\otimes Y}=A_{(X\otimes Y)^*}$, so that we always have $A_{X^*\otimes
Y^*}\subseteq A_{(X\otimes Y)^*}$.  Hence the equality holds if and
only if $d_{X^*\otimes Y^*}=d_{(X\otimes Y)^*}$. From Lemma
\ref{lemma:tensor_spaces}, we see that
\[
d_{X^*\otimes Y^*}=d_{X^*}+d_{Y^*}+d_{X^*}d_{Y^*}.
\]
On the other hand, we have using \eqref{eq:duality} that $L_{(X\otimes Y)^*}=S_{X\otimes
Y}^\perp=(S_X\otimes S_Y)^\perp$, so that
\[
d_{(X\otimes Y)^*}=D_XD_Y-\dim(S_X)\dim(S_Y)=D_XD_Y-(d_X+1)(d_Y+1).
\]
Taking into account that by \eqref{eq:duality} we have $d_{X^*}=D_X-d_{X}-1$, similarly
for $d_{Y^*}$, we obtain

\[
d_{(X\otimes Y)^*}-d_{X^*\otimes Y^*}=d_Xd_{Y^*}+d_{X^*}d_Y.
\]
This is equal to 0 iff $d_Xd_{Y^*}=d_Yd_{X^*}=0$, which amounts to the conditions in the
lemma.

\end{proof}

In a *-autonomous category, the internal hom can be identified as $X\multimap Y=(X\otimes
Y^*)^*$. The underlying vector space is $V_{X\multimap Y}=(V_X\otimes
V_Y^*)^*=V_X^*\otimes V_Y=V_X\multimap V_Y$
and we have seen in Section \ref{sec:fv} that we may identify this space with
$\FV(V_X,V_Y)$, by $f \leftrightarrow C_f$. This property is extended to $\Af$, in the
following sense.

\begin{prop}\label{prop:ihom_morphisms} For any objects $X,Y$ in $\Af$, the map $f\mapsto C_f$ is a bijection
of $\Af(X,Y)$ onto $A_{X\multimap Y}$. In particular,  $A^*_X$ can be identified with
$\Af(X,I)$.

\end{prop}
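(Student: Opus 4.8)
The plan is to unwind all the identifications and verify the two set inclusions separately. Recall that $X \multimap Y = (X \otimes Y^*)^*$, so $V_{X\multimap Y} = V_X^* \otimes V_Y$, which we already identified in Section \ref{sec:fv} with $\FV(V_X, V_Y)$ via $f \leftrightarrow C_f$, characterized by $\langle C_f, x \otimes y^* \rangle = \langle y^*, f(x) \rangle$ for all $x \in V_X$, $y^* \in V_Y^*$. So the claim is purely about which linear maps $f$ have their Choi element $C_f$ landing in the distinguished affine subspace $A_{X \multimap Y}$, and I want to show this is exactly the set of morphisms $X \xrightarrow{f} Y$, i.e. those $f$ with $f(A_X) \subseteq A_Y$.

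First I would pin down $A_{X \multimap Y}$ concretely. Since $A_{X \multimap Y} = A^*_{X \otimes Y^*}$ and $A_{X\otimes Y^*} = \aff(A_X \otimes A_{Y^*})$, I would use $\{A_X \otimes A_{Y^*}\}^{**} = \aff(A_X\otimes A_{Y^*})$ (as in the construction of $\otimes$) together with the fact that taking the dual affine subspace of an affine span reduces to a pointwise orthogonality condition: $C \in A^*_{X\otimes Y^*}$ iff $\langle C, x \otimes y^* \rangle = 1$ for all $x \in A_X$, $y^* \in A_{Y^*}$. Here I use $A_{Y^*} = A^*_Y$. Translating via the Choi correspondence, $C_f \in A_{X\multimap Y}$ iff $\langle C_f, x \otimes y^* \rangle = \langle y^*, f(x)\rangle = 1$ for all $x \in A_X$ and all $y^* \in A^*_Y$.

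Now the key step: $\langle y^*, f(x) \rangle = 1$ for all $y^* \in A^*_Y$ is, by the defining property of the dual affine subspace (Corollary/Lemma in the appendix, $A^{**} = A$ for proper affine subspaces), exactly equivalent to $f(x) \in A^{**}_Y = A_Y$. Hence $C_f \in A_{X\multimap Y}$ iff $f(x) \in A_Y$ for every $x \in A_X$, i.e. iff $f(A_X) \subseteq A_Y$, i.e. iff $f \in \Af(X,Y)$. Since $f \mapsto C_f$ is already a bijection $\FV(V_X,V_Y) \to V_X^* \otimes V_Y = V_{X\multimap Y}$, restricting it gives the asserted bijection $\Af(X,Y) \to A_{X\multimap Y}$. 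For the last sentence, take $Y = I = (\mathbb R, \{1\})$: then $X \multimap I = (X \otimes I^*)^* = (X\otimes I)^* \simeq X^*$, and a morphism $X \to I$ is a linear functional $f: V_X \to \mathbb R$ with $f(A_X) \subseteq \{1\}$, i.e. $f \in A^*_X$ by definition of the dual affine subspace; chasing the isomorphism $X\multimap I \simeq X^*$ identifies $C_f$ with $f$ itself, giving $\Af(X,I) \cong A^*_X$.

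The only slightly delicate point — the ``main obstacle'' — is making sure all the identifications are consistent: that $\{A_X \otimes A_{Y^*}\}^{**}$ really does equal $\aff(A_X \otimes A_{Y^*})$ and that dualizing it yields the pointwise condition $\langle C, x\otimes y^*\rangle = 1$ on the generators (rather than on the whole affine span), which is where properness of the affine subspaces and Lemma \ref{lemma:dual} are needed; and that the Choi pairing $\langle C_f, x\otimes y^*\rangle = \langle y^*, f(x)\rangle$ is applied with the correct placement of $X$ versus $Y^*$ in the tensor factors. Once those bookkeeping facts are in place, the argument is essentially a one-line equivalence $f(A_X)\subseteq A_Y \iff \forall x\in A_X,\, \forall y^*\in A^*_Y:\ \langle y^*, f(x)\rangle = 1$, using $A_Y^{**} = A_Y$.
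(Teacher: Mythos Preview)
Your proof is correct and follows essentially the same approach as the paper: you identify $A_{X\multimap Y}=(A_X\otimes A_Y^*)^*$, translate membership of $C_f$ into the pointwise condition $\langle \tilde y, f(x)\rangle=1$ for all $x\in A_X$, $\tilde y\in A_Y^*$, and use $A_Y^{**}=A_Y$ to conclude $f(A_X)\subseteq A_Y$. Your treatment is somewhat more explicit about the bookkeeping (and you spell out the $Y=I$ case), but the argument is the same.
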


\begin{proof} Let $f\in \FV(V_X,V_Y)$. Since by definition $A_{X\multimap Y}=A^*_{X\otimes
Y^*}=(A_X\otimes A_Y^*)^*$, we see that $C_f\in A_{X\multimap Y}$ if and only
if for all $x\in A_X$ and $\tilde y\in A^*_Y$, we have
\[
1=\<C_f, x\otimes \tilde y\>=\<\tilde y,f(x)\>.
\]
This latter statement is clearly equivalent to $f(A_X)\subseteq A_Y$, so that $f\in
\Af(X,Y)$. 

\end{proof}

\subsubsection{Quantum and classical objects}
\label{sec:affine_q}
We now restrict to objects such that the underlying vector spaces are spaces
of hermitian matrices, as in Example
\ref{exm:quantum}. We may also restrict morphisms between such spaces to completely positive maps. We show that this restriction
amounts to taking an intersection of $A_{X\multimap Y}$ with the cone of positive semidefinite
matrices. {This, and subsequent examples,  shows that for characterization of sets  of
higher order quantum maps
 it is enough to
work with the category $\Af$.} 

An object $X$ of $\Af$ will be called {quantum} if $V_X=M_n^h$ for some $n$ and $A_X$ is an
affine subspace such that both $A_X$ and $A^*_X$ contain a positive multiple of the identity matrix
$E_n$\footnote{We use the notation $E_n$, and not $I_n$, to avoid the slight chance that
it might be confused with the monoidal unit.}.
(recall that we identify $(M_n^h)^*=M_n^h$). 

\begin{prop}\label{prop:ihom_quantum} Let $X$, $Y$ be quantum objects in $\Af$. Then 
\begin{enumerate}
\item[(i)] $X^*$ and $X\otimes Y$ are quantum objects as well. 
\item[(ii)] Let $V_X=M_n^h$, $V_Y=M_m^h$. Then for any $f\in \FV(M_n^h,M_m^h)$, we have
$C_f\in A_{X\multimap Y}\cap M_{mn}^+$ if and only if $f$ is completely positive and
\[
f(A_X\cap M_n^+)\subseteq A_Y\cap M_m^+.
\]
\end{enumerate}

\end{prop}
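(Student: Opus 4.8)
The plan is to prove the two parts of Proposition~\ref{prop:ihom_quantum} separately, using the explicit description of the Choi correspondence from Example~\ref{exm:quantum} together with the duality and tensor formulas already established for $\Af$.

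For part (i), I would first observe that a quantum object $X$ is, by definition, one whose vector space is $M_n^h$ and for which both $A_X$ and $A^*_X$ contain a positive multiple of the identity. To see that $X^*$ is quantum, note that $V_{X^*}=(M_n^h)^*=M_n^h$, $A_{X^*}=A^*_X$, and $A_{X^*}^*=A^{**}_X=A_X$; the condition that $A_X$ and $A^*_X$ both contain a positive multiple of $E_n$ is manifestly symmetric in $X$ and $X^*$, so $X^*$ is quantum. For $X\otimes Y$, we have $V_{X\otimes Y}=M_n^h\otimes M_m^h\simeq M_{nm}^h$, and $A_{X\otimes Y}=\aff(A_X\otimes A_Y)$ contains $(\alpha E_n)\otimes(\beta E_m)=\alpha\beta E_{nm}$ whenever $\alpha E_n\in A_X$ and $\beta E_m\in A_Y$; similarly $A^*_{X\otimes Y}=A_{(X\otimes Y)^*}=A^*_X\otimes A^*_Y$ (or rather its affine span) contains a positive multiple of $E_{nm}$ because $A^*_X$ and $A^*_Y$ do. So the only real content is the identification $M_n^h\otimes M_m^h\simeq M_{nm}^h$ respecting the duality $\langle A,B\rangle=\Tr A^TB$ and the positive cone, which is standard.

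For part (ii), the plan is to decouple the two conditions. First, by Proposition~\ref{prop:ihom_morphisms}, $C_f\in A_{X\multimap Y}$ already forces $f\in\Af(X,Y)$, i.e. $f(A_X)\subseteq A_Y$; so the real question is how the positivity condition $C_f\in M_{mn}^+$ interacts with this. By the standard Choi theorem, $C_f\in M_{mn}^+$ is equivalent to $f$ (viewed as a map on all of $M_n$ after the canonical complexification) being completely positive. I would spell out why: the $C_f$ of Example~\ref{exm:quantum} is exactly the Choi matrix $\sum_{j,k}|j\rangle\langle k|\otimes f(|j\rangle\langle k|)$, and Choi's theorem says this is positive semidefinite iff $f$ is completely positive. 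Then, given that $f$ is CP and $f(A_X)\subseteq A_Y$, I claim $f(A_X\cap M_n^+)\subseteq A_Y\cap M_m^+$: indeed, $A_X\cap M_n^+\subseteq A_X$ maps into $A_Y$ by the morphism property, and into $M_m^+$ because $f$ is positive; the intersection direction is immediate. Conversely, if $f$ is CP and $f(A_X\cap M_n^+)\subseteq A_Y\cap M_m^+$, I must recover $f(A_X)\subseteq A_Y$ in order to get $C_f\in A_{X\multimap Y}$, and here I would use that $A_X$ contains the positive multiple $\gamma E_n$ of the identity (quantum object), so that $A_X$ is the affine span of $A_X\cap M_n^+$ — more precisely every element of $A_X$ is of the form $a_1+t(a_1-a_2)$ with $a_1,a_2\in A_X\cap M_n^+$ for suitable $t$, using that $A_X$ contains an interior point $\gamma E_n$ of $M_n^+$ so a small ball around it inside $A_X$ lies in $M_n^+$ — and then $f(A_X)\subseteq A_Y$ follows by affine-linearity of $f$ on $A_X$ together with $A_Y$ being affine.

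The step I expect to be the main obstacle is precisely this last point: showing that for a quantum object the affine span of $A_X\cap M_n^+$ is all of $A_X$, so that the hypothesis on the positive part of $A_X$ propagates to all of $A_X$. The key fact making this work is that $\gamma E_n$ is a relative interior point of $A_X$ lying in the interior of $M_n^+$ (within the ambient real vector space $M_n^h$), hence a whole relative neighbourhood of $\gamma E_n$ in $A_X$ is contained in $M_n^+$; affine combinations of points in that neighbourhood already sweep out $L_X$, hence all of $A_X=\gamma E_n+L_X$. Everything else — Choi's theorem, the bijection from Proposition~\ref{prop:ihom_morphisms}, and the elementary cone manipulations — is routine, so I would state those briefly and concentrate the write-up on the affine-span argument.
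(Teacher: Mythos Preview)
Your proposal is correct and follows essentially the same route as the paper: part (i) via the inclusions $A_X\otimes A_Y\subseteq A_{X\otimes Y}$ and $A^*_X\otimes A^*_Y\subseteq A^*_{X\otimes Y}$, and part (ii) by combining Choi's theorem with Proposition~\ref{prop:ihom_morphisms} and the key observation that $A_X=\aff(A_X\cap M_n^+)$ because $A_X$ contains an interior point of $M_n^+$. One small slip: you write $A^*_{X\otimes Y}=\aff(A^*_X\otimes A^*_Y)$, but in general only the inclusion $\aff(A^*_X\otimes A^*_Y)\subseteq A^*_{X\otimes Y}$ holds (equality would make $\Af$ compact closed, cf.\ Proposition~\ref{prop:noncompact}); fortunately the inclusion is all you need here.
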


\begin{proof} The statement (i) is easily seen from  $A_X\otimes A_Y
\subseteq  A_{X\otimes Y}$ and $A^*_X\otimes A^*_Y\subseteq A^*_{X\otimes
Y}$. To show (ii), let $C_f\in   A_{X\multimap Y}\cap M_{mn}^+$. By the properties of the Choi
isomorphism $f$ is completely positive and by Proposition \ref{prop:ihom_morphisms},
$f(A_X)\subseteq A_Y$, this proves one implication. For the converse, note that we only
need to prove that under the given assumptions, $f(A_X)\subseteq A_Y$, for which it is enough
to show that $A_X\subseteq \aff(A_X\cap M_n^+)$. To see this, let  $t_XE_n\in  A_X$ for
some $t_X>0$.  Any element in $A_X$ can be written in the form $t_XE_n+v$ for some $v\in L_X$.
Since $t_XE_n\in int(M_n^+)$, there is some $s>0$ such that $a_\pm:=t_XE_n\pm sv\in M_n^+$, and
since $\pm sv\in L_X$, we see that $a_\pm \in A_X\cap M_n^+$. It is now easily checked
that
\[
t_XE_n+v=\frac{1+s}{2s}a_++\frac{s-1}{2s}a_-\in \aff(A_X\cap M_n^+). 
\]

\end{proof}

\begin{remark} Note that the same results can be obtained if we only assume that both $A_X$ and
$A_X^*$ contain an interior point of the positive cone $M_n^+$.

\end{remark}

\begin{remark} The above results show that the quantum objects with the morphisms
restricted by complete positivity form a *-autonomous subcategory in $\Af$. Observe also that this
subcategory is equivalent to the category $\mathrm{Caus}[\bf{CPM}]$, constructed in
\cite{kissinger2019acategorical}, which is referred to as higher-order causal category for
quantum theory. 

\end{remark}

Let $X=(M_n^h,A_X)$ be a quantum object. Since both $A_X$ and $A_X^*$ contain a positive multiple
of the identity $E_n$ and $\<E_n,a\>=\Tr[a]$ for any $a\in M_n^h$, we see that the object
$X$ is uniquely identified by the dimension $n_X=n$,
the  subspace $S_X\in M_n^h$ containing $E_n$ and a positive constant $c_X>0$ such that
\begin{equation}\label{eq:quantum_object}
A_X=\{a\in S_X,\ \Tr[a]=c_X\}.
\end{equation}
We then have
\[
L_X=\{v\in S_X,\ \Tr[v]=0\},\qquad S_X=L_X\oplus \mathbb R\{E_n\},
\]
where $\oplus$ denotes the orthogonal direct sum with respect to the inner product
$\<a,b\>=\Tr[ab]$ in $M_n^h$. 

Since we identify $M_n^h$ and $(M_n^h)^*$ using $\<\cdot,\cdot\>$, we also identify the
annihilator 
$S_X^\perp$ with the orthogonal complement of $S_X$ in $M_n^h$. We obtain
\begin{equation}\label{eq:duality_q}
S_{X^*}=L_{X^*}\oplus \mathbb R\{E_n\}=S_X^\perp\oplus \mathbb R\{E_n\}.
\end{equation}
Let $a\in A_X$ and  $\tilde a\in A_{X^*}$. Then $\tilde a=\tilde v+tE_n$ for $\tilde v\in
S_X^\perp$ and $t\in \mathbb R$ and from
\[
1=\<a,\tilde a\>=t\Tr[a]=tc_X
\]
we get $t=c_X^{-1}$, so that $c_{X^*}=\Tr[\tilde a]=c_X^{-1}n$. It is
also clear from \eqref{eq:aff_tensor} that if $Y$ is another quantum  object, then we have
$c_{X\otimes Y}=c_Xc_Y$. 

 The description of the affine subspace for a quantum object in  \eqref{eq:quantum_object} is, up
to the positivity constraint, basically the same as the quantum object set in \cite[Def.~1]{milz2024characterising}, where the linear subspace $S_X$ is characterized as the
range of an orthogonal projection (with respect to the Hilbert-Schmidt inner product). As
we have seen in Proposition \ref{prop:ihom_morphisms}, the affine subspace $A_{X\multimap
Y}$ is precisely the set of Choi matrices of  maps between affine sets $A_X$ and $A_Y$.
Our next result then corresponds to \cite[Theorem 2]{milz2024characterising}, where this
set is described in terms of the corresponding projections and trace constraints.

\begin{lemma}\label{lemma:internal_hom_q}  Let $X=(M_n^h,A_X)$, $A_X=\{a\in S_X,\
\Tr[a]=c_X\}$, $Y=(M_m^h,A_Y)$, $A_Y=\{a\in S_Y,\
\Tr[a]=c_Y\}$ be  quantum objects. Then $X\multimap Y=(M^h_{mn}, A_{X\multimap Y})$, where
$A_{X\multimap Y}$ is determined by 
\[
S_{X\multimap Y}=  (S_X^\perp\otimes M_m^h)\oplus (S_X\otimes L_Y)\oplus \mathbb
R\{E_{mn}\},\qquad c_{X\multimap Y}=nc_X^{-1}c_Y.
\]
\end{lemma}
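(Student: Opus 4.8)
The plan is to compute the subspace $S_{X\multimap Y}$ and the constant $c_{X\multimap Y}$ directly from the definitions, using the characterization $X\multimap Y = (X\otimes Y^*)^*$ together with the formulas already established for tensor products and duals. First I would recall that for a quantum object $Z$, the data $(S_Z, c_Z)$ determines $A_Z = \{a\in S_Z : \Tr[a]=c_Z\}$, and that by \eqref{eq:duality_q} we have $S_{Z^*} = S_Z^\perp \oplus \mathbb R\{E\}$, with $c_{Z^*} = c_Z^{-1}\dim$. So the strategy is: (1) compute $S_{Y^*}$ and $c_{Y^*}$; (2) compute $S_{X\otimes Y^*}$ and $c_{X\otimes Y^*}$ using Lemma \ref{lemma:tensor_spaces} (which gives $S_{X\otimes Y^*} = S_X\otimes S_{Y^*}$) and the tensor behaviour of the trace; (3) dualize once more to get $S_{X\multimap Y} = S_{X\otimes Y^*}^\perp \oplus \mathbb R\{E_{mn}\}$ and $c_{X\multimap Y} = c_{X\otimes Y^*}^{-1}\cdot mn$.

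For step (1), $S_{Y^*} = S_Y^\perp \oplus \mathbb R\{E_m\}$ and $c_{Y^*} = m c_Y^{-1}$. For step (2), Lemma \ref{lemma:tensor_spaces} gives $S_{X\otimes Y^*} = S_X\otimes S_{Y^*} = S_X\otimes(S_Y^\perp \oplus \mathbb R\{E_m\})$. For the constant, note that for $a_X\in A_X$ and $\tilde a_Y\in A_{Y^*}$ we have $a_X\otimes \tilde a_Y \in A_{X\otimes Y^*}$ and $\Tr[a_X\otimes \tilde a_Y] = \Tr[a_X]\Tr[\tilde a_Y] = c_X c_{Y^*} = c_X m c_Y^{-1}$, so $c_{X\otimes Y^*} = m c_X c_Y^{-1}$. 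For step (3), $S_{X\multimap Y} = (S_X\otimes S_{Y^*})^\perp \oplus \mathbb R\{E_{mn}\}$ and $c_{X\multimap Y} = (m c_X c_Y^{-1})^{-1}\cdot mn = n c_X^{-1} c_Y$, which is the claimed constant.

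It remains to rewrite $(S_X\otimes S_{Y^*})^\perp$ in the stated form $(S_X^\perp\otimes M_m^h)\oplus(S_X\otimes L_Y)$. Here I would use that the orthogonal complement of a tensor product of subspaces decomposes as $(A\otimes B)^\perp = (A^\perp\otimes M_m^h) \oplus (A\otimes B^\perp)$ with respect to the inner product $\langle a,b\rangle = \Tr[a^Tb]$ on $M_n^h\otimes M_m^h = M_{mn}^h$ (taking $A = S_X \subseteq M_n^h$, $B = S_{Y^*}\subseteq M_m^h$). Since $S_{Y^*} = S_Y^\perp \oplus \mathbb R\{E_m\}$, its orthogonal complement inside $M_m^h$ is $S_{Y^*}^\perp = (S_Y^\perp)^\perp \cap \{E_m\}^\perp = S_Y \cap \{E_m\}^\perp = L_Y$, using $S_Y = L_Y\oplus\mathbb R\{E_m\}$. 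Plugging in, $(S_X\otimes S_{Y^*})^\perp = (S_X^\perp\otimes M_m^h)\oplus(S_X\otimes L_Y)$, and adding back $\mathbb R\{E_{mn}\}$ — which is contained in $S_X\otimes S_{Y^*}$ since $E_{mn}=E_n\otimes E_m$ with $E_n\in S_X$, $E_m\in S_{Y^*}$ — gives exactly $S_{X\multimap Y}$ as stated.

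The main obstacle I anticipate is the bookkeeping in step (3): one must be careful that the direct-sum decomposition of $(S_X\otimes S_{Y^*})^\perp$ is genuinely orthogonal, that $E_{mn}$ does not already lie in the first two summands (it does not, since $E_{mn}\in S_X\otimes S_{Y^*}$), and that adding $\mathbb R\{E_{mn}\}$ produces a direct sum — this needs $E_{mn}\notin (S_X^\perp\otimes M_m^h)\oplus(S_X\otimes L_Y)$, which follows because that space is contained in $(S_X\otimes S_{Y^*})^\perp$ while $E_{mn}\in S_X\otimes S_{Y^*}$ and $E_{mn}\neq 0$. A minor point worth checking is that $X\multimap Y$ is indeed a quantum object (so that the $(S,c)$ parametrization applies), but this is immediate from Proposition \ref{prop:ihom_quantum}(i) applied to $X$ and $Y^*$. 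Everything else is a routine unwinding of the tensor and duality formulas already in hand.
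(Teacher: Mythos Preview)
Your proposal is correct and follows essentially the same route as the paper: both compute $X\multimap Y=(X\otimes Y^*)^*$ by first using Lemma~\ref{lemma:tensor_spaces} to get $S_{X\otimes Y^*}=S_X\otimes S_{Y^*}$, then applying the duality formula \eqref{eq:duality_q} to obtain $S_{X\multimap Y}=(S_X\otimes S_{Y^*})^\perp\oplus\mathbb R\{E_{mn}\}$ and the constant $c_{X\multimap Y}=nc_X^{-1}c_Y$, and finally rewriting the orthogonal complement of the tensor product; your identification $S_{Y^*}^\perp=L_Y$ is exactly what underlies the paper's step $(S_X\otimes S_{Y^*})^\perp=(S_X^\perp\otimes M_m^h)\vee(M_n^h\otimes L_Y)=(S_X^\perp\otimes M_m^h)\oplus(S_X\otimes L_Y)$.
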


\begin{proof} From $X\multimap Y=(X\otimes Y^*)^*$, we have by Lemma
\ref{lemma:tensor_spaces} and \eqref{eq:duality}, together with the above considerations, that
\[
S_{X\multimap Y}=L_{X\multimap Y}\oplus \mathbb R\{E_{mn}\},\quad c_{X\multimap
Y}=mn(c_Xc_{Y^*})^{-1}=(c_Xc_Y^{-1}m)^{-1}mn=nc_X^{-1}c_Y
\]
and
\begin{align*}
L_{X\multimap Y} =
(S_X\otimes S_{Y^*})^\perp =(S_X^\perp\otimes M_m^h) \vee (M_n^h\otimes
L_Y)=(S_X^\perp\otimes M_m^h)\oplus (S_X\otimes L_Y).
\end{align*}

\end{proof}

\begin{exm}[States, channels and combs] \label{exm:quantum_maps} A basic example of a quantum object
is $X=(M_n^h,A_X)$, determined by $S_X=V_X=M_n^h$ and
$c_X=1$. Let us denote this object by $\Se_n$. The set $A_{\Se_n}\cap M_n^+$ is the set of
$n$ by $n$ density matrices.  By Proposition
\ref{prop:ihom_morphisms}, $\Ce_{m,n}:=\Se_m\multimap \Se_n$ is a quantum object as well, with
underlying vector space $M_{mn}^h$, and
Proposition \ref{prop:ihom_quantum} shows that  $A_{\Ce_{m,n}}\cap M_{mn}^+$ is the set
of Choi matrices of completely positive maps $f:M_m\to M_n$ mapping states to states, 
i. e. {quantum channels}. 

Note that the dual object
$\Ce^*_{m,n}=\Se_m\otimes \Se_n^*$ represents the set of Choi matrices of replacement
channels $M_n\to M_m$, that is, channels that map any state in $M_n$ to a fixed state
in $M_m$. We also have $\Se^*_n=\Se_n\multimap I=\Ce_{n,1}$ and $\Se_n=I\multimap \Se_n=\Ce_{1,n}$. 
We can proceed inductively as follows. Put 
\[
\Ce^2_{m,n,k,l}=\Ce_{m,n}\multimap \Ce_{k,l}=(\Se_m\multimap \Se_n)\multimap (\Se_k\multimap
\Se_l),
\]
then
$A_{\Ce^2_{m,n,k,l}}\cap M_{mnkl}^+$
is the set of Choi matrices of completely positive maps $M_{mn}\to M_{kl}$, mapping Choi matrices of channels
to Choi matrices of channels, or {quantum 2-combs},
\cite{chiribella2008transforming,chiribella2009theoretical}. The 1-combs
coincide with quantum channels. The {quantum
$(N+1)$-combs} for any $N$ is  the set of Choi matrices of
completely positive maps mapping $N$-combs to 1-combs. The corresponding objects in
$\Af$ are then given as
$\Ce^N_{n_1,\dots,n_{2N}}=\Ce^{N-1}_{n_2,\dots,n_{2N-2}}\multimap \Ce_{n_1,n_{2N}}$. By
Proposition 
\ref{prop:ihom_quantum}, these are all quantum objects. It was proved in
\cite{chiribella2009theoretical} that
the completely positive maps corresponding to $N$-combs have the form 

\begin{center}
\includegraphics[scale=0.8]{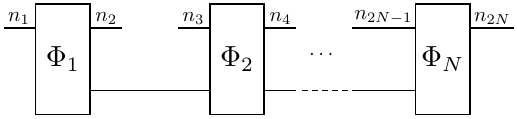}
\end{center}
for some channels $\Phi_1,\dots, \Phi_N$.
\end{exm}

\begin{exm}[No-signalling channels and process matrices]\label{exm:ns_pm}
No-signalling channels describe the situation when no signalling is allowed between two
parties $A$ and $B$. As it was shown in \cite{chiribella2013quantum, gutoski2009properties}, this
corresponds to bipartite channels that can be obtained by affine combinations of channels
acting locally on $A$ and $B$. In our setting, the  bipartite channels are represented by
the object $\Ce_{n_An_B,m_Am_B}$ and the no-signalling channels by  $\Ce_{n_A,m_A}\otimes
\Ce_{n_B,m_B}$, see also \cite{kissinger2019acategorical}.

Process matrices were introduced in \cite{oreshkov2012quantum} as
positive linear functionals that map tensor products of pairs (or
more generally tuples) of Choi matrices of channels to 1:
\begin{center}
\includegraphics[scale=0.8]{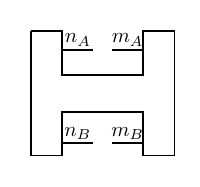}
\end{center}

\noindent
By linearity, this property extends to all
no-signalling channels. In our setting, the process matrices correspond to $(\Ce_{n_A,m_A}\otimes
\Ce_{n_B,m_B})^*$.

\end{exm}

We can define classical objects in $\Af$ in a similar way, replacing $M_n^h$ by $\mathbb
R^N$ and the positive cone by $\mathbb R_+^N$, and we require that  both
$A_X$ and $A^*_X$ contains a positive multiple of the unit vector $e_N=(1,\dots,1)\in
\mathbb R^N$. A statement similar to Proposition
\ref{prop:ihom_quantum} holds in this case,
with complete positivity replaced by positivity. Again, it can be seen that the
*-autonomous subcategory of classical objects and positive morphisms is equivalent to the
category $\mathrm{Caus}[\bf{Mat}(\mathbb R_+)]$ of \cite{kissinger2019acategorical}.

We can similarly treat
classical-to-quantum and quantum-to-classical maps as morphisms between these types of
objects, satisfying appropriate positivity assumptions.

\begin{exm}[Partially classical maps]\label{exm:quantum_classical}  We may similarly define the basic
classical object as 
\[
\Pe_k:=(\mathbb R^k, \{x\in \mathbb R^k,\ \sum_i x_i=1\}).
\]
In this case, $\mathcal A_{\Pe_k}\cap \mathbb R^k_+$ is the probability simplex. We then
obtain further  useful objects by combining $\Pe_k$ with  quantum objects. For example, it can be
easily seen that $\Se_n\multimap \Pe_k$ intersected with the cone $M_n^+\otimes \mathbb R^n_+$
corresponds to $k$-outcome {measurements}. Similarly, we obtain $k$-outcome {quantum instruments}
 from $\Se_m\multimap (\Se_n\otimes \Pe_k)$, {quantum multimeters} from $(\Se_m\otimes
\Pe_k)\multimap \Pe_l$, {quantum testers} from  $\Ce^N_{n_1,\dots,n_{2N}}\multimap
\Pe_k$, etc. In general, maps of the form $X^*\multimap \Pe_k$ can be used to define
probabilistic objects related to $X$.  

\end{exm}

\subsection{First order and higher order objects}

We say that an object $X$ in $\Af$ is {first order} if $d_X=D_X-1$, equivalently, $S_X=V_X$.
Another equivalent condition is $d_{X^*}=0$, which means that $A_X$ is determined by a
single element $\tilde a_X\in V_X^*$ as 
\[
A_X=\{\tilde a_X\}^*,\qquad A^*_X=\{\tilde a_X\}.
\]
Note that first order objects, resp. their duals, are exactly those satisfying
condition (iii), resp. condition (ii), in Proposition \ref{prop:noncompact}, in
particular, $(X\otimes Y)^*=X^*\otimes Y^*$ for first order objects $X$ and $Y$.

{Higher order objects} in $\Af$ are objects  obtained from a finite set $\{X_1,\dots,X_n\}$ of first order objects by
taking tensor products and duals. The above is indeed a set, so that all the objects are
different (though they may be isomorphic) and the ordering is not essential. We will also
assume that the monoidal  unit $I$ is not contained in this set. By definition of
$X\multimap Y$,
and since we may identify $X\multimap I$ with $X^*$, we see that higher order objects are
also generated by applying the internal hom inductively on $\{X_1,\dots, X_n\}$ if we allow $X_i=I$ for some  
$i$. It follows that the objects introduced  in Examples \ref{exm:quantum_maps},
\ref{exm:ns_pm} and
\ref{exm:quantum_classical}
are indeed higher order objects in $\Af$  according to the above definition.

Of course, any first order
object is also higher order with $n=1$. Note that we cannot say that a higher order object
generated from $\{X_1,\dots, X_n\}$ is automatically ''of order $n$'', as the following lemma shows. 

\begin{lemma}\label{lemma:1ordertensor} Let $X$, $Y$ be first order, then $X\otimes Y$ is
first order as well.

\end{lemma}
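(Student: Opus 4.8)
The plan is to show that $X \otimes Y$ satisfies $d_{X\otimes Y} = D_{X\otimes Y} - 1$, equivalently $S_{X\otimes Y} = V_{X\otimes Y}$, using the characterization of first order objects already available. Recall that $X$ first order means $S_X = V_X$, and by Lemma \ref{lemma:tensor_spaces} we have $S_{X\otimes Y} = S_X \otimes S_Y$. Hence if both $X$ and $Y$ are first order, then
\[
S_{X\otimes Y} = S_X \otimes S_Y = V_X \otimes V_Y = V_{X\otimes Y},
\]
which is exactly the condition that $X\otimes Y$ is first order. So the argument is essentially a one-line consequence of Lemma \ref{lemma:tensor_spaces}.

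Alternatively, I would phrase it through the dimension count, to double-check consistency with Lemma \ref{lemma:tensor_spaces}: for first order $X$ we have $d_X = D_X - 1$, similarly $d_Y = D_Y - 1$, and the lemma gives $d_{X\otimes Y} = d_X + d_Y + d_X d_Y = (d_X+1)(d_Y+1) - 1 = D_X D_Y - 1 = D_{X\otimes Y} - 1$, again confirming that $X\otimes Y$ is first order. One could also note this via duality: $X$ first order is equivalent to $d_{X^*} = 0$, and since $X, Y$ are first order, Proposition \ref{prop:noncompact} gives $(X\otimes Y)^* = X^* \otimes Y^*$; then by Lemma \ref{lemma:tensor_spaces} applied to the duals, $d_{(X\otimes Y)^*} = d_{X^*} + d_{Y^*} + d_{X^*}d_{Y^*} = 0$, so $X\otimes Y$ is first order.

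There is no real obstacle here; the only thing to be slightly careful about is invoking the correct half of Lemma \ref{lemma:tensor_spaces}, namely the statement $S_{X\otimes Y} = S_X \otimes S_Y$, which holds unconditionally and does not require first-order-ness. I would present the proof via the span identity as the cleanest route, and perhaps remark parenthetically that the dimension identity gives an independent check.
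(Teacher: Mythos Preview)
Your proof is correct and follows exactly the same route as the paper: the paper's proof is the one-line computation $S_{X\otimes Y}=S_X\otimes S_Y=V_X\otimes V_Y=V_{X\otimes Y}$, invoking Lemma~\ref{lemma:tensor_spaces}. Your additional dimension count and duality arguments are correct but superfluous here.
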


\begin{proof} We have $S_{X\otimes Y}=S_X\otimes S_Y=V_X\otimes V_Y=V_{X\otimes Y}$.

\end{proof}

As we have seen, higher order objects are obtained by applying the internal hom
iteratively. The next lemma shows some properties of such iterations.  

\begin{lemma}\label{lemma:combs} Let $X,Y,Z$ be any objects in $\Af$ and let
$F_0,F_1,F_2$ be first order objects. Then we have
\begin{enumerate}
\item[(i)] $Z\multimap (X\multimap Y)\simeq (Z\otimes X)\multimap Y\simeq X\multimap
(Z\multimap Y)$.
\item[(ii)] $(Z\multimap F_1)\multimap (F_0\multimap F_2)\simeq ((F_0\multimap Z)\multimap
F_1)\multimap F_2$.
\item[(iii)]  If $F=(V_F, \{\tilde a_F\}^*)$ is first order, then $Z\multimap F$ is determined as
\[
A_{Z\multimap F}=\{w\in V_{Z}^*\otimes V_F, (id\otimes \tilde a_F)(w)\in
A_Z^*\}.
\]

\end{enumerate}

\end{lemma}

\begin{proof}
 The statement (i) is a well known property (currying) of  closed
symmetric monoidal categories and (iii) is easily seen from the
definition and properties of $\multimap$.
The proof of (ii) is given in Appendix \ref{app:lemma}. 

\end{proof}

Note also that since we identify $X^{**}=X$ for any object $X$, the isomorphisms in (i)
above are given by the symmetries in $\FV$, that is, by permutations of the components in
the tensor products of the underlying vector spaces. To save some parentheses, we also
assume that the internal homs associate to the right, so we write
$X\multimap Y\multimap Z$ instead of $X\multimap (Y\multimap Z)$. 

\begin{exm}[Channels and Combs] \label{exm:chans_combs} Let  $X$ and $Y$ be first order objects in $\Af$. As we
have seen, $C_1(X,Y):= X\multimap Y$ is then a higher order object, called a {channel} or 
1-comb (We slightly  abuse
the terminology here).  We will inductively construct higher order objects in $\Af$,
similarly as in Example \ref{exm:quantum_maps}. An
{$N$-comb} over first order objects $X_1,\dots,X_{2N}$ is an object
\begin{align*}
C_N(X_1,\dots,X_{2N})&:=C_{N-1}(X_2,\dots,X_{2N-1})\multimap (X_1\multimap X_{2N})\\
&\simeq X_{1}\multimap C_{N-1}(X_2,\dots,X_{2N-1})\multimap X_{2N}\\
&\simeq X_1\multimap(X_2\multimap\dots \multimap (X_{N}\multimap
X_{N+1})\multimap \dots \multimap X_{2N-1})  \multimap X_{2N}\\
&\simeq (\dots (X_1\multimap X_2)\multimap X_3)\multimap\dots )\multimap
X_{2N-1})\multimap X_{2N}
\end{align*}
where the isomorphisms follow by repeated use of the statements (i) and (ii) of Lemma
\ref{lemma:combs}. Note that the first two isomorphism are obtained using only the
part  (i) of the Lemma, so that they hold even if the involved object are not first order, but 
we need to assume first order objects for the isomorphism in the last line. The subspace $A_{C_N}$ for an
$N$-comb $C_N$ can be found inductively, using Lemma \ref{lemma:combs} (iii).
If $X_1,\dots, X_{2N}$ are quantum objects, then $\tilde a_{X_i}$ is always a multiple of
the identity, so we obtain the characterization of quantum combs  in
\cite{chiribella2009theoretical}.

\end{exm}

\section{Combinatorial description of higher order objects}

Types of higher order quantum maps were introduced in 
\cite{perinotti2017causal,bisio2019theoretical},  where a combinatorial
description of the corresponding subspaces is given. More precisely, it is shown that the
subspace for the corresponding type of higher order objects can be combined from
some independent system of  linear subspaces
labelled by binary strings.

In this section we show that we can have a similar description of higher order objects in
$\Af$, though the construction is slightly more complicated,  see also Remark
\ref{rem:innerp} below. We will use boolean
functions to characterize the  subsets of binary strings corresponding to the type, which
will turn out useful for further description of types.   We will need the definitions, notations and results  given in Appendix
 \ref{sec:app_affine} for affine subspaces, 
Appendix \ref{sec:boolean} for boolean functions and the notations given at the end of
Section \ref{sec:fv} for operations on subspaces of a vector space.  The notation $\boxplus$ will always mean the  disjoint union of subsets  of natural numbers and the operadic
composition of permutations, as in Appendix \ref{app:basic}.

For a first order object $X=(V_X, \{\tilde a_X\}^*)$, let us pick an element $a_X\in
A_X$. We have a direct sum decomposition
\begin{equation}\label{eq:direct}
V_X=L_{X,0}+ L_{X,1},
\end{equation}
where $L_{X,0}:= \mathbb R\{a_X\}$, $L_{X,1}:=\{\tilde a_X\}^\perp=L_X$.
We define the {conjugate object}  as $\tilde X=(V_X^*,\{a_X\}^*)$. Note that we always
have $\tilde a_X\in A_{\tilde X}$ and with the choice $a_{\tilde X}=\tilde a_X$, we obtain 
$\tilde{\tilde X}=X$ and 
\begin{equation}\label{eq:complement}
L_{\tilde X,u}=L_{X,1-u}^\perp,\qquad u\in \{0,1\}.
\end{equation}
These  definitions depend on the choice of $a_X$, but we will assume below that this
choice is fixed and that we choose $a_{\tilde X}=\tilde a_X$. Since we will always work
with a finite set of objects at a time, this will not create any problems. 

 The situation is simplified in the case of quantum or classical objects,
where we have an identification of $V_X$ with its dual $V_X^*$ by a standard inner product, and  $\tilde a_X$
is a (positive) multiple of the identity by definition. The hyperplane $\{\tilde a_X\}^*$
contains a (necessarily unique) positive multiple of identity as
well, which we then choose to be the distinguished element $a_X$. Any first  order quantum object $X$ 
can be also determined by the dimension $n_X$ and a positive constant $c_X$
 such that  $X=(M_{n_X}^h,A_X=\{a,\ \Tr[a]=c_X\})$. In this case, we have $\tilde
 a_X=c_X^{-1}E_{n_X}$, $a_X\in n_{X}^{-1}c_XE_{n_X}$ and the conjugate object is the
 first order quantum object determined by $n_{\tilde X}=n_X$ and $c_{\tilde X}=n_Xc_X^{-1}$. 
This gives us the quite
familiar decomposition of any hermitian operator as a multiple of the identity and 
a traceless part:
\begin{equation}\label{eq:quantum_conj}
L_{X,0}=L_{\tilde X,0}= \mathbb R\{E_{n_X}\},\qquad  L_{X,1}=L_{\tilde X,1}=
\{E_n\}^\perp=\{a\in M_{n_X}^h,\ \Tr[a]=0\}
 \end{equation}
Similarly for classical objects.

\begin{remark}\label{rem:innerp} It is also possible to use an identification 
$V_X\simeq V_X^*$ by fixing some inner product  in $V_X$. The element $\tilde
a_X$ then becomes identified with an element of $V_X$, and $A_X=\{\tilde a_X\}^*$ 
contains a (unique) multiple of $\tilde a_X$. Then both $V_X$ and $V_X^*$ are decomposed
into $\mathrm{span}(\tilde a_X)$ and its orthogonal complement. But note that the choice
of the inner product is quite arbitrary, and the constructions below would still depend on
this choice. We could also choose another inner product in $M_n^h$ and obtain a different
decomposition. Nevertheless, the subspaces obtained for higher order objects would still remain the same,
regardless of the chosen approach.

\end{remark}

Let $X_1,\dots,X_n$ be first order objects in $\Af$. Let $a_{X_i}\in A_{X_i}$ be fixed and
let $\tilde X_i$ be the conjugate first order objects. Let us denote $V_i=V_{X_i}$ and 
\[
L_{i,u}:= L_{X_i,u},\qquad  \tilde L_{i,u}:= L_{\tilde X_i,u} \qquad u\in \{0,1\},\ i\in [n].
\]
For a string $s\in \{0,1\}^n$, we define
\[
L_s:=L_{1,s_1}\otimes\dots \otimes L_{n,s_n}, \qquad \tilde L_s:=\tilde
L_{1,s_1}\otimes\dots \otimes \tilde L_{n,s_n},
\]
then by \eqref{eq:direct} we have the direct sum decompositions 
\[
V:=V_1\otimes \dots \otimes V_n=\sum_{s\in \{0,1\}^n} L_s,\qquad V^*=V_1^*\otimes
\dots\otimes V_n^*=\sum_{s\in \{0,1\}^n} \tilde L_s.
\]

\begin{lemma}\label{lemma:Lperp}   For any $s\in \{0,1\}^n$, we have 
\[
L_s^\perp=
\sum_{t\in\{0,1\}^n} (1-{\chi}_s(t))\tilde L_t,\qquad \tilde L_s^\perp=
\sum_{t\in\{0,1\}^n} (1-{\chi}_s(t))L_t.
\]
 Here $\chi_s:\{0,1\}^n\to\{0,1\}$ is the characteristic function of $s$, that is,
 $\chi_s(t)=1$ iff $t=s$.  

\end{lemma}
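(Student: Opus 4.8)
The plan is to prove the first identity; the second then follows immediately by applying it with each $X_i$ replaced by its conjugate $\tilde X_i$, since $\tilde{\tilde X}_i=X_i$ (with the fixed choice $a_{\tilde X_i}=\tilde a_{X_i}$), which simply interchanges the roles of $L_{i,u}$ and $\tilde L_{i,u}$ while the pairing $V^*\times V^{**}=V^*\times V$ stays the same. Since $\chi_s(t)=1$ precisely when $t=s$, the right-hand side of the first identity is $\sum_{t\neq s}\tilde L_t$, and because the $\tilde L_t$ form a direct sum inside $V^*$ this is the complementary summand $\bigoplus_{t\neq s}\tilde L_t$ of $\tilde L_s$. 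So the claim reduces to $L_s^\perp=\bigoplus_{t\neq s}\tilde L_t$.

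First I would prove the inclusion $\bigoplus_{t\neq s}\tilde L_t\subseteq L_s^\perp$. By \eqref{eq:complement} we have $\tilde L_{i,u}=L_{\tilde X_i,u}=L_{i,1-u}^\perp$, the annihilator taken in $V_i^*$; hence for the natural pairing $V_i^*\times V_i\to\mathbb R$ one has $\langle y_i,x_i\rangle=0$ whenever $x_i\in L_{i,v}$, $y_i\in\tilde L_{i,u}$ and $v\neq u$. Now fix $t\neq s$ and pick an index $i$ with $s_i\neq t_i$. For decomposable elements $x=x_1\otimes\dots\otimes x_n\in L_s$ and $y=y_1\otimes\dots\otimes y_n\in\tilde L_t$ the pairing factorizes, $\langle y,x\rangle=\prod_{j}\langle y_j,x_j\rangle$, and the $i$-th factor vanishes because $x_i\in L_{i,s_i}$, $y_i\in\tilde L_{i,t_i}$ with $s_i\neq t_i$. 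By bilinearity, $\langle\tilde L_t,L_s\rangle=0$, i.e. $\tilde L_t\subseteq L_s^\perp$; taking the direct sum over all $t\neq s$ gives the inclusion.

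For the reverse inclusion I would compare dimensions. From $\tilde L_{i,u}=L_{i,1-u}^\perp$ and the direct sum $V_i=L_{i,0}+ L_{i,1}$ we get $\dim\tilde L_{i,u}=\dim V_i-\dim L_{i,1-u}=\dim L_{i,u}$, and multiplying over the tensor factors yields $\dim\tilde L_t=\dim L_t$ for every $t$. Since the $\tilde L_t$ are in direct sum inside $V^*$ and the $L_t$ are in direct sum inside $V$,
\[
\dim\Bigl(\bigoplus_{t\neq s}\tilde L_t\Bigr)=\sum_{t\neq s}\dim L_t=\dim V-\dim L_s=\dim L_s^\perp .
\]
Combined with the inclusion already established, this forces equality, proving the first identity; the second is the same statement for the conjugate objects.

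There is no real obstacle here: the only substantive point is the orthogonality relation $\langle\tilde L_t,L_s\rangle=0$ for $t\neq s$, which comes from \eqref{eq:complement} together with the factorization of the pairing over tensor components, and everything else is the dimension count above. (One cannot in general read off the annihilator of a single summand $L_s$ of a tensor-product direct-sum decomposition "summand by summand" without knowing the pairing structure, which is exactly what the orthogonality relation supplies.)
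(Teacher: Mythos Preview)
Your proof is correct. The paper takes a slightly different route: it invokes the standard identity
\[
(L_{1,s_1}\otimes\cdots\otimes L_{n,s_n})^\perp=\bigvee_j V_1^*\otimes\cdots\otimes L_{j,s_j}^\perp\otimes\cdots\otimes V_n^*,
\]
substitutes $L_{j,s_j}^\perp=\tilde L_{j,1-s_j}$ from \eqref{eq:complement}, and then expands each $V_i^*=\tilde L_{i,0}+\tilde L_{i,1}$ to read off the sum $\sum_{t\neq s}\tilde L_t$ directly. Your argument instead establishes the inclusion $\bigoplus_{t\neq s}\tilde L_t\subseteq L_s^\perp$ by checking the pairing on decomposable tensors and then closes by a dimension count. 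Both are standard and equally elementary; the paper's version avoids the dimension step by appealing to the tensor-annihilator formula, while yours avoids quoting that formula at the cost of the (easy) dimension comparison.
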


\begin{proof} Using \eqref{eq:complement} and the direct sum decomposition of $V_i^*$, we get
\begin{align*}
\left(L_{1,s_{1}}\otimes \dots\otimes L_{n,s_{n}}\right)^\perp&= \bigvee_j\left(
V_{1}^*\otimes
\dots \otimes V_{j-1}^*\otimes \tilde L_{j,1-s_{j}}\otimes V_{j+1}^*\otimes\dots \otimes
V_{n}^*\right)\\
&= \bigvee_j \left( \sum_{\substack{t\in \{0,1\}^n\\ t_{j}\ne s_{j}}} \tilde
L_{1,t_{1}}\otimes\dots \otimes \tilde
L_{n,t_{n}}\right)= \sum_{\substack{t\in \{0,1\}^n\\ t\ne s}} \left( \tilde L_{1,t_{1}}\otimes\dots \otimes \tilde
L_{n,t_{n}}\right).
\end{align*}
The proof of the other equality is the same.

\end{proof}

For the fixed first order objects  $X_1,\dots,X_n$ and their conjugate objects $\tilde
X_1,\dots,\tilde X_n$, we introduce the following definitions. Put $a:=
a_{X_1}\otimes\dots \otimes  a_{X_n}$, $\tilde a:= \tilde
a_{X_1}\otimes\dots\otimes  \tilde a_{X_n}$.
For  $f\in \Fe_n$ (see Appendix \ref{app:functions}) define 
\begin{equation}\label{eq:SfAf}
S_f=S_f(X_1,\dots,X_n):=\sum_{s\in \{0,1\}^n} f(s)L_s,\qquad
A_f=A_f(X_1,\dots,X_n):=S_f\cap \{\tilde a\}^*.
\end{equation}
It is clear from definition that $A_f$ is an affine subspace. Since
$f(\theta_n)=1$, the space $S_f$ always contains the subspace $L_0=L_{1,0}\otimes\dots\otimes
L_{n,0}=\mathbb R\{a\}$ and it is clear that $L_s\subseteq \{\tilde a\}^\perp$ for any
$s\ne \theta_n$. It follows that $a\in A_f$, so that $A_f\ne \emptyset$, and since $A_f\subseteq
\{\tilde a\}^*$, we see that  $A_f$ is proper and $\tilde
a\in A^*_f$.  It is easy to see that we have
\[
\lin(A_f)=\sum_{s\in\{0,1\}^n\setminus\{\theta_n\}} f(s)L_s,\qquad \Span(A_f)=S_f.
\]
We may now define the objects
\[
X_f=X_f(X_1,\dots, X_n):=(V,A_f(X_1,\dots,X_n))
\]
in $\Af$.

\begin{prop}\label{prop:Xf_const} Let $X_1,\dots,X_n$ be first order objects and $\tilde
X_1,\dots,\tilde X_n$ the conjugate objects. 
The map $\Fe_n\ni f\mapsto X_f(X_1,\dots,X_n)\in \Af$ is injective and we have the
following properties. 
\begin{enumerate}
\item[(i)] For the least and the largest element in $\Fe_n$, 
\[
X_{p_{n}}=\tilde X_1^*\otimes \dots\otimes \tilde X_n^*=(\tilde X_1\otimes \dots\otimes
\tilde X_n)^*,\qquad
X_{1_n}=X_1\otimes\dots\otimes X_n.
\]

\item[(ii)] The dual object satisfies
\[
X_f^*(X_1,\dots,X_n)=X_{f^*}(\tilde X_1,\dots,\tilde X_n).
\]
\item[(iii)] Let $f_1\in \Fe_{n_1}$, $f_2\in \Fe_{n_2}$. For the decomposition
$[n]=[n_1]\boxplus[n_2]$,
\[
X_{f_1\otimes f_2}(X_1,\dots,X_n)=X_{f_1}(X_1,\dots, X_{n_1})\otimes
X_{f_2}(X_{n_1+1},\dots,X_n).
\]
\item[(iv)] For any $\sigma\in \permut_n$ we have an isomorphism
\[
X_{f\circ\sigma}(X_1,\dots,X_n)\xrightarrow{\sigma}
X_f(X_{\sigma^{-1}(1)},\dots,X_{\sigma^{-1}(n)}) .
\]
\end{enumerate}
\end{prop}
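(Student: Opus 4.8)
The plan is to verify the four claims by tracking how the combinatorial data $f \mapsto S_f, A_f$ interacts with the operations in $\Af$, using the direct-sum decomposition $V = \sum_{s} L_s$ and Lemma~\ref{lemma:Lperp} as the main engine. First I would establish injectivity: since the $L_s$ form a direct sum inside $V$, the subspace $S_f = \sum_s f(s) L_s$ determines the set $\{s : f(s) = 1\}$ (each $L_s$ is nonzero because each $L_{i,u}$ is nonzero for a first order object), hence determines $f$; and $X_f$ is determined by $A_f$, which via \eqref{eq:ALS} is determined by $S_f$ together with the fixed functional $\tilde a$. So $f \mapsto X_f$ is injective.

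For (i), I would unwind the definitions of the extreme elements $p_n$ and $1_n$ of $\Fe_n$ (from the appendix): $1_n \equiv 1$, so $S_{1_n} = \sum_s L_s = V$, giving $A_{1_n} = \{\tilde a\}^*$; on the other hand $A_{X_1 \otimes \dots \otimes X_n}$ is the affine span of $A_{X_1} \otimes \dots \otimes A_{X_n} = \{\tilde a_{X_1}\}^* \otimes \dots \otimes \{\tilde a_{X_n}\}^*$, and by Lemma~\ref{lemma:tensor_spaces} (applied inductively) $S_{X_1 \otimes \dots \otimes X_n} = S_{X_1} \otimes \dots \otimes S_{X_n} = V$ with the defining functional $\tilde a_{X_1} \otimes \dots \otimes \tilde a_{X_n} = \tilde a$; comparing gives $X_{1_n} = X_1 \otimes \dots \otimes X_n$. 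For $p_n$: the characteristic function $\chi_{\theta_n}$ picks out only the all-zero string, so $S_{p_n} = L_0 = \mathbb R\{a\}$, forcing $A_{p_n} = \{a\}$; and $(\tilde X_1 \otimes \dots \otimes \tilde X_n)^*$ has, by \eqref{eq:duality} and Lemma~\ref{lemma:tensor_spaces}, underlying affine subspace $\{A_{\tilde X_1} \otimes \dots \otimes A_{\tilde X_n}\}^* $; since $a_{\tilde X_i} = \tilde a_{X_i}$ we get $A_{\tilde X_i} = \{a_{X_i}\}^*$, and tracing through, the dual affine subspace collapses to the single point $a_{X_1} \otimes \dots \otimes a_{X_n} = a$. (I'd double-check here exactly which $p_n$ the appendix uses; the identity $p_n = \chi_{\theta_n}$ as a Boolean function is the point.)

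For (ii), the dual: by \eqref{eq:duality}, $S_{X_f^*} = L_{X_f}^\perp$ and $L_{X_f^*} = S_{X_f}^\perp$. Now $S_{X_f} = \sum_s f(s) L_s$, so by Lemma~\ref{lemma:Lperp} (extended by linearity over the direct sum), $S_{X_f}^\perp = \sum_t (1 - f(t)) \tilde L_t$ — here $\tilde L_t$ is built from the conjugate objects $\tilde X_i$. Since the Boolean complement $f^*$ satisfies $f^*(t) = 1 - f(1 - t)$ or $f^*(t) = 1 - f(t)$ depending on the appendix's convention (I would check: the relevant one should make $L_{X_f^*}$ match $\lin(A_{f^*}(\tilde X_1, \dots, \tilde X_n))$), this is exactly the linear part of $A_{f^*}(\tilde X_1, \dots, \tilde X_n)$; matching the defining functionals ($\tilde a$ for $X_f$ versus $a$ for the conjugate family, consistent with $a_{\tilde X_i} = \tilde a_{X_i}$) finishes it. The step I expect to require the most care is reconciling the Boolean-complement convention of $\Fe_n$ with the perp-operation, i.e. confirming that $s \leftrightarrow 1-s$ bookkeeping lines up — this is where the definition $\tilde{\tilde X} = X$ and the choice $a_{\tilde X} = \tilde a_X$ get used and where a sign/flip error would hide.

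For (iii) and (iv), these are bookkeeping on the tensor decomposition. For (iii), with $[n] = [n_1] \oplus [n_2]$, a string $s \in \{0,1\}^n$ splits as $s = (s^{(1)}, s^{(2)})$ and $L_s = L_{s^{(1)}}^{(1)} \otimes L_{s^{(2)}}^{(2)}$, while the product Boolean function satisfies $(f_1 \otimes f_2)(s) = f_1(s^{(1)}) f_2(s^{(2)})$; hence $S_{f_1 \otimes f_2} = S_{f_1} \otimes S_{f_2}$, and intersecting with $\{\tilde a\}^*$ (noting $\tilde a = \tilde a^{(1)} \otimes \tilde a^{(2)}$) gives, via Lemma~\ref{lemma:tensor_spaces} again, $A_{f_1 \otimes f_2} = \aff(A_{f_1} \otimes A_{f_2}) = A_{X_{f_1} \otimes X_{f_2}}$. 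For (iv), the symmetry isomorphism $\sigma$ on $V = V_1 \otimes \dots \otimes V_n$ sends $L_s$ to $L'_{s \circ \sigma^{-1}}$ where the primed spaces are those of the permuted family; so $\sigma(S_{f \circ \sigma}) = \sum_s (f \circ \sigma)(s) \, L'_{s \circ \sigma^{-1}} = \sum_{s'} f(s') L'_{s'} = S_f(X_{\sigma^{-1}(1)}, \dots, X_{\sigma^{-1}(n)})$, and $\sigma$ carries $\tilde a$ to the correspondingly permuted $\tilde a'$, so $\sigma$ maps $A_{f\circ\sigma}$ onto $A_f$ of the permuted family, giving the asserted isomorphism in $\Af$. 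None of this last part should present a real obstacle beyond index-juggling.
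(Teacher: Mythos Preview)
Your proposal is correct and follows essentially the same route as the paper: injectivity, (i), (iii), and (iv) match the paper's arguments almost verbatim, and for (ii) the paper computes $\Span(A_f^*)=\lin(A_f)^\perp$ rather than your $L_{X_f^*}=S_f^\perp$, which is the dual variant of the same calculation via Lemma~\ref{lemma:Lperp}. Your hedge about the complement convention is unnecessary: in $\Fe_n$ one has $f^*(t)=1-f(t)+p_n(t)$ (so $f^*(t)=1-f(t)$ for $t\ne\theta_n$, and $f^*(\theta_n)=1$), with no $s\leftrightarrow 1-s$ flip involved---once you plug this in, your identity $S_f^\perp=\sum_t(1-f(t))\tilde L_t$ is exactly $\lin(A_{f^*}(\tilde X_1,\dots,\tilde X_n))$, and since both affine subspaces contain $\tilde a=a_{\tilde X_1}\otimes\cdots\otimes a_{\tilde X_n}$ you are done.
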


\begin{proof} Since  $L_s$, $s\in \{0,1\}$ is an independent decomposition of $V$, the
subspace $S_f$ has a unique decomposition in terms of $L_s$. It follows 
 that the map $f\mapsto A_f$, and hence also $f\mapsto X_f$ is injective. 
We have 
\[
S_{p_n}=L_{\theta_n}=\mathbb R\{a\}\qquad S_{1_n}=\sum_{s\in \{0,1\}^n}L_s=V,
\]
Since $X_1\otimes\dots\otimes X_n=(V,\{\tilde a\}^*)$ and $\tilde X_1\otimes \dots\otimes
\tilde X_n=(V^*,\{a\}^*)$, this proves (i). For (ii), it is enough to prove that 
$A^*_f(X_1,\dots,X_n)=A_{f^*}(\tilde X_1,\dots, \tilde X_n)$. To see this, we compute
using Lemma \ref{lemma:Lperp} and the fact that the subspaces form an independent
decomposition,
\begin{align*}
\Span( A^*_f)&=\lin(A_f)^\perp=\left(\sum_{s\in\{0,1\}^n\setminus\{0\}}
f(s)L_s\right)^\perp=
\bigwedge_{\substack{s\in\{0,1\}^n\\ s\ne 0, f(s)=1}}L_s^\perp\\ &=
\bigwedge_{\substack{s\in\{0,1\}^n\\ s\ne 0,
f(s)=1}}\left(\sum_{t\in\{0,1\}^n}(1-{\chi}_s(t))\tilde L_t\right)\\
&=\sum_{t\in\{0,1\}^n} \left(\bigwedge_{\substack{s\in \{0,1\}^n\\ s\ne 0, f(s)=1}}
(1-{\chi}_s(t))\tilde L_t\right)=\sum_{t\in \{0,1\}^n} f^*(t) \tilde L_t.
\end{align*}
To see the last equality, note that
\[
\bigwedge_{\substack{s\in \{0,1\}^n\\ s\ne 0, f(s)=1}}
(1-{\chi}_s(t))=\begin{dcases} 1 & \text{if } t=\theta_n\\ 1-f(t) & \text{if } t\ne \theta_n
\end{dcases} \ = f^*(t).
\]
The statement (iii) is easily seen from the definitions. To show (iv), compute
\begin{align*}
\sigma^{-1}(S_f(X_{\sigma^{-1}(1)},\dots,X_{\sigma^{-1}(n)}))&=\sigma^{-1}(\sum_{s}
f(s)L_{\sigma^{-1}(1),s_1}\otimes\dots \otimes
L_{\sigma^{-1}(n),s_n})\\
&=\sum_s f(s) L_{1,s_{\sigma(1)}}\otimes\dots\otimes
L_{n,s_{\sigma(n)}}=S_{f\circ\sigma}(X_1,\dots,X_n).
\end{align*}
It follows that
\[
A_f(X_{\sigma^{-1}(1)},\dots, X_{\sigma^{-1}(n)})=S_f(X_{\sigma^{-1}(1)},\dots,
X_{\sigma^{-1}(n)})\cap
\{\sigma(\tilde a)\}^*=
\sigma(A_{f\circ\sigma}(X_1,\dots,X_n)).
\]

\end{proof}

With the decomposition $[n]=[n_1]\boxplus[n_2]$ and $f\in \Fe_{n_1}$,  $g\in \Fe_{n_2}$, we
will use the notations
\begin{equation}\label{eq:notations}
f\to g:=(f\otimes g^*)^*,\qquad f\parr g:=(f^*\otimes g^*)^*.
\end{equation}

\begin{coro}\label{coro:maps} With the above notations, we have 
\[
X_{f\to g}(X_1,\dots,X_{n_1+n_2})=X_{f}(\tilde X_1,\dots,\tilde X_{n_1})\multimap
X_g(X_{n_1+1},\dots, X_{n})
\]
and
\[
X_{f\parr g}=X_f(X_1,\dots, X_{n_1})\parr X_g(X_{n_1+1},\dots, X_n).
\]
\end{coro}

\begin{proof}
Immediate from Proposition \ref{prop:Xf_const}.
\end{proof}

Since  $\{L_s, \ s\in \{0,1\}^n\}$ is an independent decomposition of $V$,   the subspaces $S_f$ form a
distributive sublattice in the lattice of subspaces of $V$ and we clearly have 
$f\le g$ if and only if $S_f\subseteq S_g$, and $S_{f\wedge g}=S_f\cap S_g$, $S_{f\vee
g}=S_f\vee S_g$.  With these operations and complementation given as $S_f\mapsto S_{f^*}$, 
each sequence $X_1,\dots,X_n$ of first order objects defines a representation of the boolean algebra $\Fe_n$ as
a boolean lattice (i. e. a complemented distributive lattice) of subspaces of a vector
space $V=V_{X_1}\otimes \dots\otimes V_{X_n}$, with a one dimensional subspace $\mathbb R
a$ as the bottom and the whole space as the top element.

If all the first order objects are quantum, we have an identification
$V=V^*=M_N^h$ and both $a$ and $\tilde a$ are positive multiples of the identity $E_N$. It
follows that all $X_f(X_1,\dots,X_n)$ are quantum objects as well. Further, by
\eqref{eq:quantum_conj} the subspaces
$S_f(X_1,\dots, X_n)$ remain unchanged if some $X_i$ are replaced with $\tilde X_i$, in
particular, we get  by \eqref{eq:duality_q}
\[
S_{f^*}(X_1,\dots,X_n)=S_{f^*}(\tilde X_1,\dots,\tilde
X_n)=S_{X^*_f(X_1,\dots,X_n)}=S_f(X_1,\dots,X_n)^\perp+\mathbb
R\{E_N\}.
\]
This shows that the complementation in $\Fe_n$ is reflected as the orthocomplementation in
the sublattice of subspaces in $M_N^h$ containing $\mathbb R\{E_N\}$.

Let us also note that in terms of the objects in the category $\Af$, the lattice structure
in $\{X_f(X_1\dots,X_n),\ f\in \Fe_n\}$ is obtained as follows. We have  $f\le g$ if and only if $X_f\xrightarrow{id_V} X_g$
(which means that $A_f\subseteq A_g$). It can be shown that if $k\le f,g\le h$, then the following diagrams are  a pullback resp.  pushout:
\[
\xymatrix{
X_{f\wedge g}\ar[r]^{id_V}
\ar[d]_{id_V} & X_f\ar[d]^{id_V} \\
X_g \ar[r]_{id_V}& X_h
} \qquad 
\xymatrix{
X_{k}\ar[r]^{id_V}
\ar[d]_{id_V} & X_f\ar[d]^{id_V} \\
X_g \ar[r]_{id_V}& X_{f\vee g}
}
\]
This holds in particular for the bottom and top elements $k=p_n$ and $h=1_n$. In the
situation that the order of the first order objects $X_1,\dots,X_n$ is not fixed, we may
replace the identity arrows above by appropriate permutations, as in Proposition
\ref{prop:Xf_const} (iv).

Our main goal in this paragraph  is to show that the higher order objects are precisely those of the form
$Y=X_f(X_1,\dots,X_n)$ for some choice of the
first order objects $X_1,\dots, X_n$ and a function $f$ that belongs to a special subclass
 $\Te_n\subseteq \Fe_n$. The elements of this subclass will be called the {type
 functions},
 or {types}, and are defined as those functions in $\Fe_n$ that can be obtained by taking
 the constant function $1_1$ in each coordinate and then repeatedly applying
 complementation  and tensor
 products of such functions in any order. The set of indices for which the corresponding
 coordinate  was subjected to taking the dual an odd  number of times will be called the
{inputs} (of $f$) and denoted by $I=I_f$, indices  in $O=O_f:=[n]\setminus I_f$ will be
called {outputs}. The reason for this terminology will become clear later. It is easy to observe that if $f\in \Te_n$, then $O_{f^*}=I_f$ and $I_{f^*}=O_f$. Further,
for $f_1\in \Te_{n_1}$, $f_2\in \Te_{n_2}$, we have $O_{f_1\otimes f_2}=O_{f_1}\boxplus
O_{f_2}$ and  $I_{f_1\otimes f_2}=I_{f_1}\boxplus
I_{f_2}$, see \eqref{eq:disu} for the definition of $\boxplus$.

We have the following  description of the sets of type functions.

\begin{prop}\label{prop:type_min} The system  $\{\Te_n\}_{n\in \mathbb N}$ is the smallest
system  such that
\begin{enumerate}
\item $\Te_1=\Fe_1$, $\Te_n\subseteq \Fe_n$ for all $n$,

\item For $[n]=[n_1]\boxplus [n_2]$, $\Te_{n_1}\otimes \Te_{n_2}\subseteq \Te_{n}$,
\item $\Te_n$  is invariant under permutations: if $f\in \Te_n$, then $f\circ \sigma\in
\Te_n$ for any $\sigma\in \permut_n$,
\item $\Te_n$  is invariant under complementation: if $f\in \Te_n$ then $f^*\in \Te_n$.

\end{enumerate}

\end{prop}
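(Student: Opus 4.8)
The statement claims that $\{\Te_n\}$, defined as the family generated from $1_1$ by tensor products and complementation (and implicitly closed under permutations, since the order of the involved first order objects was declared inessential), is the \emph{smallest} family of subsets of $\{\Fe_n\}$ satisfying conditions 1--4. The plan is the usual two-direction argument for a "smallest system" characterization: first show that $\{\Te_n\}$ itself satisfies 1--4, and then show that any family $\{\mathcal G_n\}$ satisfying 1--4 must contain $\{\Te_n\}$.

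\textbf{Step 1: $\{\Te_n\}$ satisfies the four conditions.} For condition 1, observe that $\Te_1$ consists of the functions obtainable from $1_1$ by complementation only (tensoring being unavailable in a single coordinate), and since $\Fe_1$ has just the two elements $1_1$ and $p_1=1_1^*$, we get $\Te_1 = \Fe_1$; the inclusion $\Te_n\subseteq\Fe_n$ is immediate because the generating operations $\otimes$ and $(-)^*$ both stay inside $\{\Fe_n\}$ (this uses that $\Fe_n$ is closed under these operations, from the appendix). Conditions 2 and 4 are essentially restatements of the closure properties built into the definition of $\Te_n$: any $f_1\otimes f_2$ with $f_i\in\Te_{n_i}$ is, by unfolding the generating expressions of $f_1$ and $f_2$, itself a legal generating expression, hence in $\Te_n$; likewise $f^*$ for $f\in\Te_n$. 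Condition 3 (permutation invariance) I would argue by induction on the structure of the generating expression: permuting coordinates commutes with complementation, and a permutation of $[n]=[n_1]\oplus[n_2]$ can be rewritten, after possibly precomposing with a block transposition (itself handled since we allow arbitrary $\sigma$ and the definition already builds in permutation closure per the remark preceding Proposition~\ref{prop:type_min}), as a tensor product of permutations of the two blocks applied to $f_1\otimes f_2$. Here I would lean on the stated fact that $\Te_n$ is closed under permutations as part of its definition, so this step is little more than bookkeeping.

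\textbf{Step 2: minimality.} Let $\{\mathcal G_n\}$ be any family satisfying 1--4. I claim $\Te_n\subseteq\mathcal G_n$ for all $n$, proved by induction on the number of operations in a generating expression for $f\in\Te_n$. The base case is a function built from a single $1_1$: such a function in $\Fe_1$ is either $1_1$ or $p_1 = 1_1^*$; since $\mathcal G_1=\Fe_1$ by condition 1, both lie in $\mathcal G_1$. For the inductive step, $f$ is either $g^*$ for some $g\in\Te_m$ built with fewer operations, in which case $g\in\mathcal G_m$ by induction and $f=g^*\in\mathcal G_m$ by condition 4; or $f = \sigma\circ(g_1\otimes g_2)$ for $g_i\in\Te_{n_i}$ with fewer operations each and some permutation $\sigma$ reindexing the disjoint union $[n_1]\oplus[n_2]$ to the actual coordinate set of $f$ — then $g_i\in\mathcal G_{n_i}$ by induction, $g_1\otimes g_2\in\mathcal G_n$ by condition 2, and $f\in\mathcal G_n$ by condition 3. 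This exhausts the ways a generating expression can be built, so the induction closes.

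\textbf{Main obstacle.} The only genuinely delicate point is making the structural induction on "generating expressions" precise, in particular handling the interaction of tensor products with the coordinate relabelling: a tensor product $f_1\otimes f_2$ formally lives on the ordered disjoint union $[n_1]\oplus[n_2]$, but in the definition of $\Te_n$ "applying tensor products in any order" implicitly allows the two factors to occupy any two complementary blocks of coordinates, which is exactly why permutation invariance must be part of the package. I would address this by fixing at the outset the convention that a generating expression is a finite binary tree whose leaves are copies of $1_1$, internal nodes are labelled by $\otimes$ or by $(-)^*$, together with a bijection from the leaf set to $[n]$; complementation and permutation invariance of $\Te_n$ then ensure this is well-defined and the induction above goes through verbatim. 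Everything else is routine.
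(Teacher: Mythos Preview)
Your proposal is correct and follows essentially the same approach as the paper. The paper's own proof is a one-sentence assertion that both directions are ``clear by construction''; you have simply unpacked that assertion into the two-direction argument and the structural induction that justify it, with appropriate care about the permutation subtlety.
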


\begin{proof} It is clear by construction that any system of subsets $\{\Se_n\}_n$ with
these properties must contain the type functions and that $\{\Te_n\}_n$ itself has these
properties.

\end{proof}

To simplify some of the  statements in the next sections, we
also introduce the set $\Te_0$ of trivial type
functions on the empty string $\varepsilon$, containing a unique element
$1_\varepsilon: \varepsilon\mapsto 1$. It is easy to see that for any $f\in \Te_n$, we
have $1_\varepsilon^*=1_\varepsilon$ and  $1_\varepsilon \otimes f=f$.

\bigskip

Assume that  $Y$ is a higher order object constructed from a set of distinct first
order objects $Y_1,\dots, Y_n$, $Y_i=(V_{Y_i},\{\tilde a_{Y_i}\}^*)$.
Let us fix elements $a_{Y_i}\in A_{Y_i}$ and construct the conjugate objects $\tilde Y_i$. 
By compactness of $\FV$, we may assume (relabeling the objects if necessary) that the vector space of $Y$ has the form
\[
V_Y=V:=V_{1}\otimes \dots\otimes V_{n},
\]
where  $V_i$ is either $V_{Y_i}$ or $V_{Y_i}^*$, according to whether $Y_i$ was subjected
to taking duals an even or odd number of times. Similarly as for the type functions, the indices such that the first
case is true will be called the  outputs and the subset of outputs in $[n]$ will be denoted
by $O$, or $O_Y$, when we need to specify the object. The set $I=I_Y:=[n]\setminus O_Y$ is
the set of  inputs. Note that although we cannot yet exclude that $Y$ was constructed from
$Y_i$ in several different ways, the input and output spaces are always the same, fixed in 
the structure of $V$.

\begin{theorem}\label{thm:boolean} Let $Y$ be a higher order object, constructed from first
order objects $Y_1,\dots,Y_n$. For $i\in [n]$, let 
$X_i=Y_i$ if $i\in O_Y$ and $X_i=\tilde Y_i$ for $i\in I_Y$. 
There is a unique function $f\in \Te_n$, with $O_f=O_Y$,  such that 
\[
Y= X_f=(V, A_f(X_1,\dots,X_n)).
\]
Conversely, let $X_1,\dots, X_n$  be first order objects  and let
$f\in \Te_n$. Then $Y=X_f$ is a higher order object with $O_Y=O_f$, with underlying first
order objects $Y_1,\dots, Y_n$, where $Y_i=X_i$ for $i\in O_f$ and $Y_i=\tilde X_i$ for
$i\in I_f$.  

\end{theorem}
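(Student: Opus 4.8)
The plan is to prove both directions by induction on $n$, using the inductive definition of higher order objects via the internal hom together with the characterization of $\Te_n$ in Proposition \ref{prop:type_min}. The base case $n=1$ is immediate: a first order object $Y_1$ is $X_{1_1}$ with $1_1\in\Fe_1=\Te_1$ and $O_Y=\{1\}$, while $Y_1^*=\tilde Y_1^* = X_{p_1}$ with $p_1\in\Te_1$ and $O_Y=\emptyset$; conversely both functions in $\Fe_1=\Te_1$ produce first order objects by Proposition \ref{prop:Xf_const}(i).

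For the first (harder) direction, I would peel off the outermost constructor used to build $Y$. Since higher order objects are generated by $\otimes$ and $(-)^*$ from first order objects, the outermost step is either a tensor product $Y = Y'\otimes Y''$ with $Y'$, $Y''$ higher order on disjoint subsets of the $Y_i$, or a dual $Y = (Y')^*$. In the tensor case, apply the inductive hypothesis to get $Y' = X_{f_1}$, $Y'' = X_{f_2}$ for type functions $f_1\in\Te_{n_1}$, $f_2\in\Te_{n_2}$ with the stated input/output sets; then Proposition \ref{prop:Xf_const}(iii) gives $Y = X_{f_1\otimes f_2}$, and $f_1\otimes f_2\in\Te_n$ by Proposition \ref{prop:type_min}(2), with $O_{f_1\otimes f_2}=O_{f_1}\oplus O_{f_2}$ matching $O_Y$, after possibly applying a permutation (Proposition \ref{prop:Xf_const}(iv), and invariance of $\Te_n$ under permutations) to arrange the index ordering. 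In the dual case, apply the inductive hypothesis to $Y'$, getting $Y'=X_{f'}(X_1',\dots,X_n')$ with $f'\in\Te_n$; then Proposition \ref{prop:Xf_const}(ii) yields $Y=(Y')^* = X_{(f')^*}(\tilde X_1',\dots,\tilde X_n')$, with $(f')^*\in\Te_n$ by Proposition \ref{prop:type_min}(4), and $O_{(f')^*}=I_{f'}$, which is exactly how taking a dual flips the input/output roles so that the conjugate objects $\tilde X_i'$ are the correct $X_i$ relative to $O_Y$. The subtlety to be careful about here: the decomposition $V_Y = V_1\otimes\cdots\otimes V_n$ fixes which $V_i$ is $V_{Y_i}$ versus $V_{Y_i}^*$, and I must check that the inductive bookkeeping of which $X_i$ equals $Y_i$ and which equals $\tilde Y_i$ stays consistent through both constructors — this is where the parity count defining $O_Y$ does the work, and it matches the parity count defining $O_f$ because $\otimes$ adds the parities disjointly and $(-)^*$ flips every parity, exactly as complementation acts on type functions. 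Uniqueness of $f$ follows from injectivity of $f\mapsto X_f$ (Proposition \ref{prop:Xf_const}) once the first order objects $X_1,\dots,X_n$ are pinned down by $O_Y$.

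For the converse direction, I would again induct using Proposition \ref{prop:type_min}: every $f\in\Te_n$ is obtained from copies of $1_1$ by repeated $\otimes$ and $(-)^*$ (and permutations). Starting from $f=1_1$, $X_{1_1}=X_1$ is a first order, hence higher order, object with $O=\{1\}$. If $f = f_1\otimes f_2$ with $f_i\in\Te_{n_i}$, then by induction $X_{f_1}$ and $X_{f_2}$ are higher order over the appropriate underlying first order objects, and $X_f = X_{f_1}\otimes X_{f_2}$ by Proposition \ref{prop:Xf_const}(iii) is higher order by definition, with $O_f = O_{f_1}\oplus O_{f_2}$. If $f = (f')^*$, then by induction $X_{f'}(X_1,\dots,X_n)$ is higher order, and Proposition \ref{prop:Xf_const}(ii) gives $X_f = X_{f'}(X_1,\dots,X_n)^* = X_{(f')^*}(\tilde X_1,\dots,\tilde X_n)$, which is higher order (dual of a higher order object), with the underlying first order objects now $\tilde X_i$ for $i\in O_{f'}$ and $X_i$ for $i\in I_{f'}$ — i.e. $Y_i = X_i$ for $i\in O_f = I_{f'}$ and $Y_i=\tilde X_i$ for $i\in I_f$, as claimed. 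Permutation steps are handled by Proposition \ref{prop:Xf_const}(iv) together with the fact that permuting the first order objects does not change the status of being higher order.

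The main obstacle I anticipate is not any single computation but the careful matching of two parallel parity bookkeepings across the induction: on the object side, whether $V_i$ appears as $V_{Y_i}$ or $V_{Y_i}^*$ (determining $O_Y$), and on the function side, the action of complementation on type functions (determining $O_f$). Both are governed by the same rule — $\otimes$ takes disjoint unions, $(-)^*$ complements everything — so the verification reduces to checking that the isomorphisms of Proposition \ref{prop:Xf_const}(ii)--(iv) really do intertwine these two operations on the nose (not just up to unspecified isomorphism), which I would do by tracking the underlying linear maps, all of which are permutations of tensor factors in $\FV$ as noted after Lemma \ref{lemma:combs}. A secondary point requiring care is that a higher order object may a priori be built from $\{Y_1,\dots,Y_n\}$ in more than one way; the theorem asserts the resulting $f$ is nonetheless unique, which follows because $O_Y$ is intrinsic to the structure of $V_Y$ (as remarked just before the theorem) and then injectivity of $f\mapsto X_f$ finishes the argument.
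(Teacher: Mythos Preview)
Your proposal is correct and follows essentially the same inductive strategy as the paper: peel off the outermost constructor, apply Proposition~\ref{prop:Xf_const}(iii) for tensor products and (ii) for duals, and track that the input/output parities on the object side match $O_f$, $I_f$ on the function side. One point of care: in the dual case $Y=(Y')^*$ the object $Y'$ still involves all $n$ first order objects, so invoking ``the inductive hypothesis'' there is not literally induction on $n$; the paper handles this by first proving the tensor case by induction on $n$ and then observing separately that the conclusion passes from $Y$ to $Y^*$ --- which is exactly the content of your dual step, just framed as a closure property rather than an inductive one.
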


\begin{proof} Since the map $f\mapsto X_f$ is injective, uniqueness is clear.  To show existence of this
function, we will proceed by induction on $n$. For $n=1$, the assertion is easily seen
to be true, since in this case, we we have either $Y=Y_1$ or $Y=Y_1^*$. In the first case,
$O=\{1\}$, $I=\emptyset$, 
$X_1=Y_1$ and 
\[
S_Y=V_Y=V_1=L_{1,0}+ L_{1,1}=1(0)L_{1,0}+ 1(1)L_{1,1}=S_1(X_1),
\]
so in this case $f\in \Te_1$ is the constant 1. If $Y=Y_1^*$, we have $O=\emptyset$,
$I=\{1\}$, $X_1=\tilde
Y_1$, and then
\[
S_Y=\mathbb R\{\tilde a_{Y_1}\}=L_{1,0}=p_1(0)L_{1,0}+ p_1(1)L_{1,1}=S_{p_1}(X_1),
\]
so that $f=1^*=p_{1}\in \Te_1$. It is clear that $O_f=O_Y$ in both cases. 

Assume now that the assertion is true for
all $m<n$. By construction, $Y$ is either the tensor
product $Y=Z_1\otimes Z_2$, with
$Z_1$ constructed from $Y_{1},\dots, Y_{m}$ and $Z_2$ from $Y_{{m+1}},\dots,
Y_{n}$,
 or $Y$ is the dual of such a product. Let us assume the first case. It is clear that
 $O_{Z_1}\boxplus O_{Z_2}=O_Y$, and similarly for $I$, so that the corresponding objects
 $X_1,\dots, X_m$ and $X_{m+1},\dots,X_n$  remain the same. By the induction 
assumption, there are functions $f_1\in \Te_m$ and $f_2\in \Te_{n-m}$ such that
$O_{f_1}=O_{Z_1}$, $O_{f_2}=O_{Z_2}$ and,  by Proposition \ref{prop:Xf_const}(iii), 
\[
Y=Z_1\otimes Z_2=X_{f_1}(X_1,\dots,X_m)\otimes X_{f_2}(X_{m+1},\dots,X_n)=X_{f_1\otimes
f_2}(X_1,\dots,X_n).
\]
This implies the assertion, with $f=f_1\otimes f_2\in \Te_n$ and $O_f=O_{f_1}\boxplus
O_{f_2}=O_Y$. 
To finish the proof, it is now enough to observe that if the assertion holds for $Y$ then
it also  holds for $Y^*$. So assume that $Y=X_f(X_1,\dots,X_n)$ for some $f\in \Te_n$, 
then by Proposition \ref{prop:Xf_const}(ii), $Y^*=X_f^*= X_{f^*}(\tilde X_1,\dots,\tilde X_n)$. 
By the construction of conjugate objects, we have  $\tilde X_i=\tilde{\tilde Y}_i=Y_i$
if $i\in I_{Y}$ and $\tilde X_i=\tilde {Y}_i$ if $i\in O_Y$. Since by definition and the
assumption,
$O_{Y^*}=I_Y=I_f=O_{f^*}$, this proves the statement.

The converse is proved by a similar induction argument, using Proposition
\ref{prop:Xf_const}.

\end{proof}

Let us stress that in general, the objects $X_f$ depend on the choice of the elements
$a_{X_i}$. From the above proof, it is clear that  the  construction in Theorem
\ref{thm:boolean} does not depend on the choice of the elements $a_{Y_i}\in A_{Y_i}$.
Furthermore, if  all the first order objects are quantum, 
we have $S_Y=S_f(Y_1,\dots,Y_n)$,
since the space $S_f$ is unchanged if some of the objects are replaced by conjugates. 
If $Y_i$ are determined by the dimension $m_i$ and positive constants $c_i$, then 
$\tilde a_i=c_i^{-1}E_{m_i}$ and $a_i=n_i^{-1}c_iE_{m_i}$. It follows that $Y=(M_N^h,A_Y)$ is a quantum
object determined by the subspace $S_f(Y_1,\dots,Y_n)$ and $c=\frac{c_O}{c_I}m_I$, with 
$N=\Pi_i m_i$, $m_I=\Pi_{i\in I}m_i$, $c_O=\Pi_{i\in O} c_i$ and $c_I=\Pi_{\i\in I}c_i$.

\section{The type functions}\label{sec:type}

The aim of this section is to gain some understanding into the structure and properties of
the set of type functions. 
We start by an important example.

\begin{exm}\label{exm:type_channels}
Let $T\subseteq [n]$. It is easily seen that the function  $p_T$ (see Example \ref{ex:pS}
in Appendix \ref{sec:boolean}) is a type function, since we have
\[
p_T(s)=\Pi_{j\in T}(1-s_j)=\Pi_{j\in T} 1^*(s_j)=(\otimes_{j\in T}1^*)(s).
\]
By definition, $T$ is the set of inputs for $p_T$. Let $Y_1,\dots, Y_n$ be 
 first order objects, $Y_i=(V_i, \{\tilde a_{Y_i}\}^*)$ and let $X_1,\dots, X_n$ be first
 order objects as in Theorem \ref{thm:boolean}, so that $X_i=\tilde Y_i$ if $i\in T$ and
 $X_i=Y_i$ otherwise. Let $k=|T|$ and let $\sigma\in \permut_n$ be such that
$\sigma^{-1}(T)=[k]$. Then 
$p_T\circ \sigma=p_{k}\otimes 1_{n-k}$. By Proposition \ref{prop:Xf_const}, it follows that
we have the  isomorphism  
\[
X_{p_T}(X_1,\dots,X_n)\overset{\sigma}{\simeq}X_{p_{k}\otimes
1_{n-k}}(X_{\sigma^{-1}(1)},\dots, X_{\sigma^{-1}(n)})=Y_T^*\otimes Y_{[n]\setminus T},
\]
here $Y_T:=\otimes_{j\in T} \tilde X_j=\otimes_{j\in T} Y_j$ and $Y_{[n]\setminus T}:=\otimes_{j\in
[n]\setminus T} X_j=\otimes_{j\in [n]\setminus T}Y_j$ are first order object by Lemma
\ref{lemma:1ordertensor}. 
 In particular, $A_{Y_T^*}=\{\tilde a_T\}$, where $\tilde a_T:=\otimes_{i\in T} \tilde a_{Y_i}$. It follows that any element in 
\[
A_f(X_1,\dots,X_n)\simeq  A_{Y_T^*\otimes
Y_{[n]\setminus T}}
\]
is of the form $\tilde a_T\otimes \omega$ for some state $\omega\in A_{Y_{[n]\setminus T}}$.
This corresponds to a channel in
$Y_T\multimap Y_{[n]\setminus T}$ that maps any state in $A_{Y_T}$ to the fixed
state $\omega$. In this way, we can see that the type function 
$p_T$ describes the type of replacement channels with set of input indices  in $T$. 

By duality, the elements of $T$ are output indices for the conjugate function $p_T^*$, so
this time we put $X_j=Y_j$ for $j\in T$ and $X_j=\tilde Y_j$ otherwise.
Using Proposition \ref{prop:Xf_const} (ii) and the result for $p_T$, we obtain the isomorphisms
\[
X_{p_T^*}(X_1,\dots,X_n)=X_{p_T}^*(\tilde X_1,\dots,\tilde X_n)\overset{\sigma}{\simeq}
(Y_T^*\otimes 
Y_{[n]\setminus T})^*\overset{\rho}\simeq  Y_{[n]\setminus T}\multimap Y_T,
\]
where $\rho$ is the transposition in $\permut_2$. It follows
that the type function $p^*_T=1-p_T+p_{n}$ corresponds to the type of  channels with
output indices in $T$.

\end{exm}

\begin{lemma}\label{lemma:fh_setting} Let $f\in\Te_n$ and let $O_f=O$,  $I=I_f$. Then
\[
p_I\le f\le p_O^*.
\]

\end{lemma}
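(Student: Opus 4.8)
The plan is to prove both inequalities by induction on $n$, mirroring the inductive definition of type functions used in Proposition~\ref{prop:type_min} and Theorem~\ref{thm:boolean}; in fact it suffices to prove one of the two inequalities, since the other follows by complementation. Recall that for $f\in\Te_n$ we have $O_{f^*}=I_f$ and $I_{f^*}=O_f$, and that complementation reverses the order on $\Fe_n$; hence $p_I\le f$ for all type functions is equivalent to $f^*\le p_I^* = p_{O_{f^*}}^*$ for all type functions, i.e. to the statement $f\le p_{O_f}^*$ for all type functions. So I will focus on establishing $p_{I_f}\le f$ for every $f\in\Te_n$.

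\textbf{Base case and inductive step.} For $n=1$ we have $\Te_1=\Fe_1=\{1_1,p_1\}$. If $f=1_1$ then $O_f=\{1\}$, $I_f=\emptyset$, and $p_\emptyset = 1_1 = f$, so $p_{I_f}\le f$ holds with equality. If $f=p_1$ then $I_f=\{1\}$ and $p_{I_f}=p_1=f$, again with equality. For the inductive step, a type function $f\in\Te_n$ with $n\ge 2$ is, up to a permutation of coordinates (which affects neither the inequality, interpreted coordinatewise, nor the identification $p_{\sigma^{-1}(T)} = p_T\circ\sigma$), either a tensor product $f = f_1\otimes f_2$ with $f_i\in\Te_{n_i}$, or a complement $f = g^*$ with $g\in\Te_n$. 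In the tensor case, $I_f = I_{f_1}\oplus I_{f_2}$, and since $p_{I_{f_1}\oplus I_{f_2}} = p_{I_{f_1}}\otimes p_{I_{f_2}}$ (the function $p_T$ is the tensor product of $1^*=p_1$ over the coordinates in $T$ and $1$ elsewhere), the inductive hypotheses $p_{I_{f_i}}\le f_i$ combine under $\otimes$—which is monotone on $\Fe_n$ because $S_{f_1\otimes f_2} = S_{f_1}\otimes S_{f_2}$ and $f\le g \iff S_f\subseteq S_g$—to give $p_{I_f}\le f$. In the complement case, $f=g^*$ with $g\in\Te_n$; by the inductive hypothesis (applied to $g$, which has the same ``size'' $n$—so strictly speaking I would set up the induction on the number of $\otimes$/$(-)^*$ operations used to build $f$, or note that complementation can always be pushed to the leaves, rather than on $n$ alone) we have $p_{O_g}\le g^*$ already established in the form we want, namely $g\le p_{O_g}^* = p_{I_{g^*}}^*$...

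\textbf{Main obstacle.} The delicate point is the bookkeeping of the induction: complementation does not decrease $n$, so a naive induction on $n$ does not close. The clean fix is to induct on the construction tree: every $f\in\Te_n$ is built by finitely many applications of $\otimes$ and $(-)^*$ starting from copies of $1_1$, and using $(f_1\otimes f_2)^* = f_1^*\otimes f_2^*$ and $f^{**}=f$ one may normalize so that complements appear only at the leaves; then $f$ is a tensor product $\bigotimes_{i\in O_f} 1_1 \otimes \bigotimes_{i\in I_f} p_1$ after a permutation—but that is exactly $p_{I_f}$ itself tensored up with $1$'s, which would make $f=p_{I_f}$, contradicting the existence of genuine comb types. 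The resolution is that $\otimes$ is \emph{not} the only way type functions are combined once we pass through duality: $(f_1\otimes f_2)$ with a subsequent partial re-grouping under further tensoring and dualizing produces the chain/comb structure, and the normalization ``complements at the leaves'' is valid only for the syntactic tree, where the leaves are the original $1_1$'s, giving $f\ge p_{I_f}$ because replacing some leaf factors $1_1$ by $p_1=1_1^*$ only shrinks the function and the minimal such replacement pattern consistent with the input set $I_f$ is exactly $p_{I_f}$. I would make this precise by showing, by structural induction, that $f(s)=1$ whenever $p_{I_f}(s)=1$, i.e. whenever $s_j=0$ for all $j\in I_f$: for a leaf this is immediate; for $f=f_1\otimes f_2$ it follows by splitting $s$ and applying the hypothesis to each factor since $I_f$ splits accordingly; and for $f=g^*$ one uses that $g^*(s)=1$ iff $g(t)=0$ for some $t$ with $\chi_s$-support pattern... more directly, $p_{I_{g^*}} = p_{O_g}$ and one checks $p_{O_g}\le g^*$ is equivalent to $g\le p_{O_g}^*$, which is the ``other'' inequality for $g$ and is handled symmetrically by the same induction. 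Thus the real content is simply that the two inequalities $p_{I_f}\le f$ and $f\le p_{O_f}^*$ are exchanged by complementation and are jointly preserved under $\otimes$, so proving them together by induction on the construction of $f$ closes the argument; I expect the only genuine effort to be stating the induction hypothesis in the symmetric form that survives the complementation step.
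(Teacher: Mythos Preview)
Your final paragraph lands on exactly the argument the paper gives: prove the pair of inequalities $p_{I_f}\le f$ and $f\le p_{O_f}^*$ simultaneously, observing that complementation swaps them (since $O_{f^*}=I_f$ and $(-)^*$ is order-reversing on $\Fe_n$) and that $\otimes$ preserves them (since $p_{I_g}\otimes p_{I_h}=p_{I_g\oplus I_h}$ and, using Lemma~\ref{lemma:fproduct}(i), $p_{O_g}^*\otimes p_{O_h}^*\le (p_{O_g}\otimes p_{O_h})^*=p_{O_g\oplus O_h}^*$). The paper phrases this as a straight induction on $n$: for $n\ge 2$, every $f\in\Te_n$ satisfies $f\approx g\otimes h$ or $f\approx (g\otimes h)^*$ with $g\in\Te_m$, $h\in\Te_{n-m}$ and $0<m<n$, so the factors are strictly smaller and there is no need to induct on the depth of a construction tree.

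The detour in your ``Main obstacle'' paragraph contains a genuine error you should excise: the identity $(f_1\otimes f_2)^*=f_1^*\otimes f_2^*$ is false in $\Fe_n$ (Lemma~\ref{lemma:fproduct}(i) gives only $f_1\otimes f_2\le (f_1^*\otimes f_2^*)^*$, with equality only in the trivial cases). You noticed the absurd consequence yourself, but the write-up would be cleaner if you simply dropped that attempt and went directly to the symmetric induction. Also, your justification of the monotonicity of $\otimes$ via $S_{f_1\otimes f_2}=S_{f_1}\otimes S_{f_2}$ is unnecessarily indirect: $\otimes$ on Boolean functions is pointwise multiplication of $\{0,1\}$-valued maps, which is obviously monotone in each argument.
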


\begin{proof} This is obviously true for $n=1$. Indeed, in this case,
$\Te_1=\Fe_1=\{1_1=p_\emptyset,1_1^*=p_{1}\}$. If $f=1_1$, then $O=[1]$, $I=\emptyset$ and 
\[
p_I=p_{\emptyset}=1_1=p_O^*,
\]
the case  $f=p_1$ is obtained by taking complements. Assume that the assertion holds for
$m<n$. Let $f\in \Te_n$ and assume that  $f=g\otimes h$ for some  $g\in
\Te_m$, $h\in \Te_{n-m}$.  By the assumption,
\[
p_{I_g}\otimes p_{I_h}\le g\otimes h\le p^*_{O_g}\otimes p^*_{O_h}\le (p_{O_g}\otimes
p_{O_h})^*,
\]
the last inequality follows from Lemma \ref{lemma:fproduct}. With the decomposition
$[n]=[m][m+1,n]$, we have   
$O_f=O_g\boxplus O_h$, $I_f=I_g\boxplus I_h$, so that by Lemma \ref{lemma:PSPT}, 
$p_{O_f}=p_{O_g}\otimes p_{O_h}$ and
similarly for $p_{I_f}$. Now notice that for any $f\in \Te_n$ we have either $f\approx g\otimes
h$ or $f\approx (g\otimes h)^*$ (see Appendix \ref{app:functions} for the definition of
$\approx$). Since the inequality is easily seen to be preserved by
permutations, and reversed by duality which also switches the input and output sets, the
assertion is proved.

\end{proof}

Combining this with Proposition \ref{prop:Xf_const}, we get the following result
(cf. \cite[Proposition 2]{apadula2024nosignalling}). 

\begin{coro}\label{coro:setting} Let $Y$ be a higher order objects constructed from
first order objects $Y_1,\dots, Y_n$,  $O_Y=O$,
$I_Y=I$.  Then there are
$\sigma_1,\sigma_2\in \permut_n$ such that we have the  morphisms 
\[
(Y_I^*\otimes Y_O)\xrightarrow{\sigma_1} Y
\xrightarrow{\sigma_2} (Y_I\multimap Y_O).
\]

\end{coro}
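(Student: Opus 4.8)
The plan is to obtain Corollary~\ref{coro:setting} as a direct translation of Lemma~\ref{lemma:fh_setting} through the dictionary established in Theorem~\ref{thm:boolean} and Proposition~\ref{prop:Xf_const}. First I would apply Theorem~\ref{thm:boolean}: since $Y$ is a higher order object built from first order objects $Y_1,\dots,Y_n$ with output set $O=O_Y$ and input set $I=I_Y$, there is a function $f\in\Te_n$ with $O_f=O$ such that $Y=X_f(X_1,\dots,X_n)$, where $X_i=Y_i$ for $i\in O$ and $X_i=\tilde Y_i$ for $i\in I$. By Lemma~\ref{lemma:fh_setting} we have $p_I\le f\le p_O^*$ in $\Fe_n$, hence (as noted after Proposition~\ref{prop:Xf_const}) $A_{p_I}\subseteq A_f\subseteq A_{p_O^*}$, i.e. the identity map $V\to V$ gives morphisms
\[
X_{p_I}(X_1,\dots,X_n)\xrightarrow{id_V} X_f(X_1,\dots,X_n)=Y \xrightarrow{id_V} X_{p_O^*}(X_1,\dots,X_n).
\]

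Next I would identify the two outer objects using Example~\ref{exm:type_channels}. There, with $\sigma\in\permut_n$ chosen so that $\sigma^{-1}(I)=[|I|]$, one has the isomorphism $X_{p_I}(X_1,\dots,X_n)\overset{\sigma}{\simeq}\tilde X_I^*\otimes X_{[n]\setminus I}$; substituting $X_i=\tilde Y_i$ for $i\in I$ and $X_i=Y_i$ for $i\in O=[n]\setminus I$ gives $\tilde X_I=\otimes_{i\in I}\tilde{\tilde Y}_i=\otimes_{i\in I}Y_i=Y_I$ and $X_{[n]\setminus I}=\otimes_{i\in O}Y_i=Y_O$, so $X_{p_I}(X_1,\dots,X_n)\simeq Y_I^*\otimes Y_O$. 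Dually, again by Example~\ref{exm:type_channels} applied with the set $O$ (so $p_O^*$ corresponds to channels with output indices $O$), one gets $X_{p_O^*}(X_1,\dots,X_n)\simeq \tilde X_{[n]\setminus O}\multimap X_O = Y_I\multimap Y_O=[Y_I,Y_O]$, using $\tilde X_i=Y_i$ for $i\in I$ and $X_i=Y_i$ for $i\in O$. Composing the identity morphisms above with these isomorphisms, and folding the permutations from Example~\ref{exm:type_channels} into permutations $\sigma_1,\sigma_2\in\permut_n$, yields morphisms $Y_I^*\otimes Y_O\xrightarrow{\sigma_1}Y\xrightarrow{\sigma_2}[Y_I,Y_O]$ as claimed.

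The only points requiring care — and the closest thing to an obstacle — are bookkeeping issues: making sure the roles of the objects $X_i$ (tilde or not) match up consistently between Theorem~\ref{thm:boolean}, Lemma~\ref{lemma:fh_setting} and Example~\ref{exm:type_channels}, and checking that the various permutation isomorphisms compose to honest morphisms in $\Af$ (this is automatic since symmetries are morphisms in $\Af$ and composition of morphisms is a morphism, cf. Lemma~\ref{lemma:monoidal} and Proposition~\ref{prop:Xf_const}(iv)). In particular one should note that Lemma~\ref{lemma:fh_setting} is stated for the fixed ordering in which the first order objects appear, and since $X_{p_I}$, $X_f$, $X_{p_O^*}$ all have the same underlying space $V$, the two middle arrows are literally $id_V$; the permutations enter only when rewriting $X_{p_I}$ and $X_{p_O^*}$ in the recognizable forms $Y_I^*\otimes Y_O$ and $[Y_I,Y_O]$. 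No new computation beyond what is already in Example~\ref{exm:type_channels} is needed; the statement is essentially a repackaging, which is why I expect the proof in the paper to be just a couple of lines citing Lemma~\ref{lemma:fh_setting}, Proposition~\ref{prop:Xf_const} and Example~\ref{exm:type_channels}.
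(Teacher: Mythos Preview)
Your proposal is correct and follows exactly the approach the paper intends: the corollary is stated as an immediate consequence of Lemma~\ref{lemma:fh_setting} combined with Proposition~\ref{prop:Xf_const}, and your argument spells this out precisely, invoking Theorem~\ref{thm:boolean} to realize $Y$ as $X_f$ and Example~\ref{exm:type_channels} to identify $X_{p_I}$ and $X_{p_O^*}$ with $Y_I^*\otimes Y_O$ and $Y_I\multimap Y_O$ respectively. The paper gives no further details beyond the one-line remark preceding the statement, so your write-up is in fact more explicit than the original.
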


We also obtain  a simple  way to identify the output indices  of a type
function.

\begin{prop}\label{prop:fh_outputs} For $f\in \Te_n$, $j\in O_f$ if and only if
$f(e^j)=1$, here $e^j=\delta_{1,j}\dots\delta_{n,j}$.

\end{prop}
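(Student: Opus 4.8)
The statement asks to characterize the output indices $O_f$ of a type function $f\in\Te_n$ via the values of $f$ on the ``singleton'' strings $e^j$ (the string with a $1$ in position $j$ and $0$ elsewhere). The natural approach is induction on $n$, mirroring the inductive structure of $\Te_n$ given in Proposition~\ref{prop:type_min} (and used in Lemma~\ref{lemma:fh_setting}): every $f\in\Te_n$ is, up to a permutation of coordinates, either a tensor product $g\otimes h$ with $g\in\Te_m$, $h\in\Te_{n-m}$, or the complement of such a product.

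\textbf{Base case.} For $n=1$ we have $\Te_1=\Fe_1=\{1_1,p_1\}$. If $f=1_1$ then $O_f=\{1\}$ and $f(e^1)=1_1(1)=1$; if $f=p_1=1_1^*$ then $O_f=\emptyset$ and $f(e^1)=p_1(1)=1-1=0$. So the equivalence holds.

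\textbf{Inductive step.} Suppose the claim holds for all $m<n$. First handle a tensor product $f=g\otimes h$ with the decomposition $[n]=[m]\oplus[m+1,n]$; recall $O_f=O_g\oplus O_h$. Fix $j\in[n]$, say $j\in[m]$ (the case $j\in[m+1,n]$ is symmetric). Writing the string $e^j\in\{0,1\}^n$ as the concatenation of $e^j\in\{0,1\}^m$ (in the first block) with $\theta_{n-m}$ (the all-zero string in the second block), the definition of the tensor product of Boolean functions gives $f(e^j)=g(e^j)\cdot h(\theta_{n-m})$. Since $h(\theta_{n-m})=1$ (every function in $\Fe_{n-m}$ takes value $1$ at $\theta$), we get $f(e^j)=g(e^j)$. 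By the induction hypothesis applied to $g\in\Te_m$, this equals $1$ iff $j\in O_g$, i.e.\ iff $j\in O_f$, as desired. For the complemented case $f=(g\otimes h)^*$, note $O_f=I_{g\otimes h}$ and $f^*(s)=1-(g\otimes h)(s)$ for $s\neq\theta_n$; since $e^j\neq\theta_n$, $f(e^j)=1-(g\otimes h)(e^j)$, which by the previous paragraph is $1$ iff $(g\otimes h)(e^j)=0$ iff $j\notin O_{g\otimes h}$ iff $j\in I_{g\otimes h}=O_f$. Finally, the property is manifestly invariant under permuting coordinates: replacing $f$ by $f\circ\sigma$ permutes both the singleton strings and the output set $O_f$ accordingly. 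Since every element of $\Te_n$ is reached from $\Te_1$ by these three operations, the induction closes.

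\textbf{Expected obstacle.} There is no deep obstacle here; the only points requiring care are bookkeeping ones: (i) making sure that the recursion ``$f\approx g\otimes h$ or $f\approx(g\otimes h)^*$'' is stated correctly and that one genuinely needs only these two cases (this is exactly the content used in the proof of Lemma~\ref{lemma:fh_setting}), and (ii) correctly tracking how $e^j$ decomposes across the block $[n]=[m]\oplus[m+1,n]$ and that the ``other block'' contributes the value $1$ because the all-zeros string always evaluates to $1$. A slightly cleaner alternative, avoiding the complementation case entirely, would be to invoke Lemma~\ref{lemma:fh_setting}: from $p_{I_f}\le f\le p_{O_f}^*$ and the explicit formulas $p_T(e^j)=\prod_{k\in T}(1-\delta_{j,k})$, which equals $0$ iff $j\in T$, one reads off that $j\in O_f$ forces $p_{O_f}^*(e^j)=1-p_{O_f}(e^j)+p_n(e^j)=1$, hence $f(e^j)=1$ since $f\le p_{O_f}^*$... but this direction only gives one implication cleanly; the converse ($f(e^j)=1\Rightarrow j\notin I_f$, equivalently $j\in O_f$) follows from $p_{I_f}\le f$ only after noting that $f(e^j)=1$ together with $p_{I_f}\le f$ gives no contradiction, so one still needs the induction or the observation $p_{I_f}(e^j)=0$ when $j\in I_f$ to conclude $j\notin I_f$. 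I would therefore present the induction as the primary argument and perhaps remark on the Lemma~\ref{lemma:fh_setting} shortcut afterward.
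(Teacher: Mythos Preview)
Your induction argument is correct and complete. However, the paper's proof is precisely the Lemma~\ref{lemma:fh_setting} shortcut you sketch at the end and then dismiss as only giving ``one implication cleanly.'' In fact both implications follow immediately from the lemma, and your analysis of the shortcut has the two bounds swapped.

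For the forward direction, the paper uses the \emph{lower} bound: if $j\in O_f$ then $j\notin I_f$, so $p_{I_f}(e^j)=\prod_{k\in I_f}(1-\delta_{j,k})=1$; since $p_{I_f}\le f$ this forces $f(e^j)=1$. (Your attempt to get this from the upper bound does not work: $f\le p_{O_f}^*$ together with $p_{O_f}^*(e^j)=1$ only yields $f(e^j)\le 1$, which is vacuous.) For the converse, the paper uses the \emph{upper} bound: if $f(e^j)=1$ then $p_{O_f}^*(e^j)=1$, hence $p_{O_f}(e^j)=0$ (using $p_n(e^j)=0$), which happens exactly when $j\in O_f$. So the lemma gives a two-line proof with no induction needed. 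Your induction is a perfectly valid alternative route, just longer; the only genuine error in your write-up is the misattribution of which inequality handles which direction in the shortcut.
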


\begin{proof} Let $j\in O_f$, then by Lemma \ref{lemma:fh_setting}, $p_{I_f}(e^j)=1\le
f(e^j)$, so that $f(e^j)=1$. Conversely, if $f(e^j)=1$, then by the other inequality in
lemma \ref{lemma:fh_setting}, $p_{O_f}(e^j)=0$, whence $j\in O_f$.

\end{proof}

\begin{exm}\label{exm:T2} The type functions for $n=2$ are given as (writing $\bar u=1-u$
for $u\in \{0,1\}$, and $s=s_1s_2$):
\[
1_2(s)=1,\quad p_{2}(s)=\bar s_1\bar s_2,\quad p_{\{1\}}(s)=\bar s_1, \quad 
p_{\{1\}}^*(s)= 1-\bar s_1+\bar
s_1\bar s_2,
\]
and functions  obtained from these by permutation, which gives 6 different elements.
We have seen in Appendix \ref{app:functions} that  $\Fe_n$ has $2^{2^n-1}$ elements, so that $\Fe_2$ has 8
elements in total. The two remaining functions  are
\[
g(s)=1-\bar s_1-\bar s_2+2\bar s_1\bar s_2,\qquad g^*(s)=\bar s_1+\bar s_2-\bar s_1\bar s_2.
\]
It can be  checked directly from Lemma \ref{lemma:fh_setting} and
Proposition \ref{prop:fh_outputs} that $g$ is not a type function. Indeed, if $g\in \Te_2$, we would have $O_g=\emptyset$, so that
$p_{2}\le g\le p_\emptyset^*=p_{2}$, which is obviously not the case. Clearly, also the
complement $g^*\notin \Te_2$. Notice also that $g^*=p_{\{1\}}\vee p_{\{2\}}$, so that
$\Te_2$ is not a lattice. 

\end{exm}

Since $\Fe_2$ can be identified as a sublattice in $\Fe_n$ for
all $n\ge 2$ as $\Fe_2\ni f\mapsto f\otimes 1_{n-2}\in \Fe_n$, the above example shows
that  $\Te_n$, $n\ge 2$ is a  subposet in the distributive lattice 
$\Fe_n$ but not a sublattice,  so that for $f_1,f_2\in
\Te_n$, none of $f_1\wedge f_2$ or $f_1\vee f_2$ has to be a type function.
Nevertheless, we have by the above results that all type functions with the same output
indices are contained in the interval $p_I\le f\le p_O^*$, which is a distributive
lattice. Elements of such an interval  will be called {subtypes}. It is easily seen
that for $n=2$ all subtypes are type functions, but it is not difficult to find a subtype
for $n=3$ which is not in $\Te_3$  (see also Example \ref{exm:T3sub}
below). The objects corresponding to
subtypes are not necessarily
higher order objects, but are embedded in  $Y_I\multimap Y_O$ and contain the replacement
channels. If $f_1$ and $f_2$ have the same output set, then  $f_1\vee f_2$ and $f_1\wedge
f_2$ are subtypes. {By the remarks below Proposition \ref{prop:Xf_const}, the corresponding objects can be
obtained by pushouts resp. pullbacks of the higher order objects corresponding to $f_1$
and $f_2$.}
 A relevant example of a subtype is shown in Section \ref{sec:examples}
(Example \ref{exm:subtype_afp}). An interesting  non-example is given
below.

\begin{exm}[Bistochastic channels] \label{exm:T2_cont}
As we have seen in Example  \ref{exm:T2}, the function $g=p^*_{\{1\}}\wedge p^*_{\{2\}}$
is not a subtype, but nevertheless it corresponds to an interesting set of channels.
Indeed,  $p_{\{1\}}^*$ describes the type of channels with output space labeled by
1 and input space  by 2, while $p_{\{2\}}^*$ has the input and output space reversed. The function $g$ then corresponds to their
intersection, which is precisely the set of bistochastic (or unital) channels. This is not
a (sub)type since the input and output spaces are not fixed. Nevertheless, using the
structure of $\Fe_n$ and the construction in Theorem \ref{thm:boolean}, one can
characterize higher order maps over bistochastic channels. It was shown in
\cite{chiribella2022quantum} that the
bistochastic channels can be interpreted as bidirectional with respect to the time
direction, and higher order maps over them contain operations with indefinite time
direction, such as the quantum time flip.

\end{exm}

\subsection{The poset $\Pe_f$}


By Theorem \ref{thm:mobius}, any boolean function has a unique expression of the form
\[
f=\sum_{T\subseteq [n]} \hat f_Tp_T,
\]
where $\hat f$ is the M\"obius transform of $f$. Using this, we introduce a poset related
to $f$, which will be useful for description of the structure of $f$. We will need the
definitions and basic results in Appendix \ref{app:poset}.

Let $\mathcal P_f$ be the subposet in the
distributive lattice $2^n$,  of elements such that
$\hat f_T\ne 0$. The main result of this paragraph is that any type function $f\in \Te_n$ is fully determined by $\Pe_f$.  
We first need to show how some of the operations on type functions are reflected on  $\Pe_f$.

\begin{lemma}\label{lemma:Pf} Let $f\in \Te_n$.
\begin{enumerate}

\item[(i)] If $\sigma\in \permut_n$, then $S\mapsto \sigma^{-1}(S)$ is an isomorphism of
$\Pe_{f}$ onto $\Pe_{f\circ\sigma}$.
\item[(ii)] For $g\in \Te_m$ and the decomposition $[n+m]=[n]\boxplus [m]$, we have
$\Pe_{f\otimes g}\simeq \Pe_{f}\times \Pe_g$, with the isomorphism given by $(S,T)\mapsto
S\boxplus T$ and 
\[
(\widehat{f\otimes g})_{(S,T)} =\hat f_S\hat g_T.
\]

\end{enumerate}

\end{lemma}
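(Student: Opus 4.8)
The plan is to prove both statements by relating them to the known behaviour of the M\"obius transform under permutations and under the tensor product of Boolean functions, which are recorded in Appendix \ref{app:functions}/\ref{app:poset}. For part (i), I would start from the M\"obius expansion $f=\sum_{T\subseteq[n]}\hat f_T\,p_T$. Composing with $\sigma$ and using that $p_T\circ\sigma = p_{\sigma^{-1}(T)}$ (which follows from the definition $p_T(s)=\prod_{j\in T}(1-s_j)$, since $(p_T\circ\sigma)(s)=\prod_{j\in T}(1-s_{\sigma(j)})=\prod_{k\in\sigma^{-1}(T)}(1-s_k)$), I get $f\circ\sigma = \sum_T \hat f_T\,p_{\sigma^{-1}(T)} = \sum_{U} \hat f_{\sigma(U)}\,p_U$. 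By uniqueness of the M\"obius expansion (Theorem \ref{thm:mobius}), $(\widehat{f\circ\sigma})_U = \hat f_{\sigma(U)}$, hence $\hat f_{\sigma(U)}\neq 0 \iff (\widehat{f\circ\sigma})_U\neq0$, i.e. $U\in\Pe_{f\circ\sigma} \iff \sigma(U)\in\Pe_f$. So $S\mapsto\sigma^{-1}(S)$ is the claimed bijection $\Pe_f\to\Pe_{f\circ\sigma}$; it is an order isomorphism because $\sigma$ and $\sigma^{-1}$ preserve inclusion of subsets of $[n]$, and $\Pe_f$, $\Pe_{f\circ\sigma}$ carry the induced order from $2^n$.

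For part (ii), I would again argue via the M\"obius transform. The key input is the multiplicative behaviour of $p_T$ under $\otimes$: for the decomposition $[n+m]=[n]\oplus[m]$ and $S\subseteq[n]$, $T\subseteq[m]$, one has $p_S\otimes p_T = p_{S\oplus T}$ (this is essentially Lemma \ref{lemma:PSPT}, and is also immediate from the product formula). Then
\[
f\otimes g=\Bigl(\sum_{S\subseteq[n]}\hat f_S\,p_S\Bigr)\otimes\Bigl(\sum_{T\subseteq[m]}\hat g_T\,p_T\Bigr)=\sum_{S,T}\hat f_S\hat g_T\,(p_S\otimes p_T)=\sum_{S,T}\hat f_S\hat g_T\,p_{S\oplus T},
\]
using bilinearity of $\otimes$ on Boolean functions (which holds because $\otimes$ acts by multiplication on disjoint coordinate blocks). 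Since every subset $R\subseteq[n+m]$ decomposes uniquely as $R=S\oplus T$ with $S=R\cap[n]$, $T=R\cap[m+1,n+m]$, uniqueness of the M\"obius expansion gives $(\widehat{f\otimes g})_{S\oplus T}=\hat f_S\hat g_T$, which is the displayed formula. Consequently $S\oplus T\in\Pe_{f\otimes g} \iff \hat f_S\hat g_T\neq0 \iff \hat f_S\neq0 \text{ and } \hat g_T\neq0 \iff S\in\Pe_f \text{ and } T\in\Pe_g$, so $(S,T)\mapsto S\oplus T$ is a bijection $\Pe_f\times\Pe_g\to\Pe_{f\otimes g}$. It is an order isomorphism since $S_1\oplus T_1\subseteq S_2\oplus T_2$ in $2^{n+m}$ iff $S_1\subseteq S_2$ and $T_1\subseteq T_2$, which is exactly the product order on $\Pe_f\times\Pe_g$.

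I expect the only mildly delicate point to be bookkeeping: making sure that "$f\otimes g$ expanded in the $p$-basis" is manipulated correctly, i.e. that $\otimes$ really is bilinear over the $p_T$ basis and that $p_S\otimes p_T=p_{S\oplus T}$ with the index conventions of \eqref{eq:disu}; both are available from the appendix and from Lemmas \ref{lemma:fproduct}, \ref{lemma:PSPT}. There is no serious obstacle here — the whole lemma is a transparent consequence of the uniqueness of the M\"obius transform once these two algebraic identities for $p_T$ are in hand, and in fact neither part genuinely uses that $f,g$ are \emph{type} functions rather than arbitrary elements of $\Fe$; the hypothesis $f\in\Te_n$ is stated only because that is the context in which the lemma will be applied.
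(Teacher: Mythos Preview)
Your proposal is correct and follows essentially the same argument as the paper: expand in the $p_T$ basis, use $p_T\circ\sigma=p_{\sigma^{-1}(T)}$ and $p_S\otimes p_T=p_{S\oplus T}$ from Lemma~\ref{lemma:PSPT}, then invoke uniqueness of the M\"obius transform. Your observation that the hypothesis $f\in\Te_n$ is not actually used is also correct.
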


\begin{proof} The statement is proved using Lemma \ref{lemma:PSPT}.  We have
\[
f\circ \sigma=\sum_{S\subseteq [n]} \hat f_S p_S\circ\sigma=\sum_{S\subseteq [n]} \hat
f_Sp_{\sigma^{-1}(S)}=\sum_{S\subseteq [n]} \hat
f_{\sigma(S)}p_{S}.
\]
The statement (i) now follows by uniqueness of the M\"obius transform. Similarly, for
$s=s^1s^2$,
\begin{align*}
f\otimes g(s)&=f(s^1)g(s^2)=\sum_{S\subseteq [n]}\sum_{T\subseteq [m]} \hat f_S\hat
g_Tp_S(s^1)p_T(s^2)=\sum_{S\subseteq [n]}\sum_{T\subseteq [m]} \hat f_S\hat
g_T(p_S\otimes p_T)(s)\\
&= \sum_{S\subseteq [n]}\sum_{T\subseteq [m]} \hat f_S\hat
g_T(p_{S\boxplus T})(s).
\end{align*}
This proves (ii).

\end{proof}

\begin{theorem}\label{thm:graded} Let $f\in \Te_n$, then $\mathcal P_f$ is a graded poset
with even rank $r(f):=r(\Pe_f)\le n$.  Moreover, we have
\[
f=\sum_{S\in \mathcal P_f}(-1)^{\rho_f(S)}p_S,
\]
where $\rho_f$ is the rank function of $\Pe_f$.

\end{theorem}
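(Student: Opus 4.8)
The natural strategy is induction on $n$, following the recursive definition of type functions via Proposition~\ref{prop:type_min}: every $f \in \Te_n$ is, up to a permutation and possibly a complement, of the form $g \otimes h$ with $g \in \Te_{m}$, $h \in \Te_{n-m}$ (with $m<n$ in the nontrivial case), and the base case $n=1$ is handled by the two functions $1_1 = p_\emptyset$ and $p_{1}$, whose posets $\Pe_{1_1} = \{\emptyset\}$ and $\Pe_{p_1} = \{[1]\}$ are both single points, hence graded of rank $0$, and the claimed expansion $f = \sum_{S \in \Pe_f} (-1)^{\rho_f(S)} p_S$ reads $1_1 = p_\emptyset$ and $p_1 = p_{\{1\}}$, which is correct. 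The inductive claim has three parts which I would carry through together: (a) $\Pe_f$ is graded, (b) its rank $r(f)$ is even and $\le n$, and (c) the coefficients in the M\"obius expansion are exactly $(-1)^{\rho_f(S)}$ for $S\in\Pe_f$ (and $0$ otherwise).

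First I would dispose of the operations that do not change the combinatorial type. By Lemma~\ref{lemma:Pf}(i), a permutation $\sigma$ induces a poset isomorphism $\Pe_f \simeq \Pe_{f\circ\sigma}$, and since $\hat f_{\sigma(S)} = \widehat{f\circ\sigma}_S$, it preserves gradedness, rank, and the sign pattern; so it suffices to prove the statement for one representative in each permutation orbit. For the tensor product $f\otimes g$, Lemma~\ref{lemma:Pf}(ii) gives $\Pe_{f\otimes g} \simeq \Pe_f \times \Pe_g$ with $(\widehat{f\otimes g})_{(S,T)} = \hat f_S \hat g_T$. A product of two graded posets is graded with $\rho_{\Pe_f\times\Pe_g}(S,T) = \rho_f(S) + \rho_g(T)$ and rank $r(\Pe_f)+r(\Pe_g)$; by induction both summands are even, so the sum is even, and $r(\Pe_f)+r(\Pe_g) \le m + (n-m) = n$. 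The sign pattern also multiplies correctly: $(-1)^{\rho_f(S)}(-1)^{\rho_g(T)} = (-1)^{\rho_f(S)+\rho_g(T)} = (-1)^{\rho_{f\otimes g}(S,T)}$, so part (c) is inherited. (I would invoke here the relevant basic facts about products of graded posets from Appendix~\ref{app:poset}.)

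The real content — and the step I expect to be the main obstacle — is handling complementation: showing that if $\Pe_f$ is graded with even rank $\le n$ and the expansion $f = \sum_{S\in\Pe_f}(-1)^{\rho_f(S)}p_S$ holds, then the same is true for $f^*$. I would first need an explicit description of $\Pe_{f^*}$ in terms of $\Pe_f$. From the expansion of $f$ together with $f^* = 1_n - f + p_n$ (the identity $f^* = 1 - f + p_{[n]}$ used in Examples~\ref{exm:type_channels} and~\ref{exm:T2}) one can try to compute $\hat{f^*}$ directly, but it is cleaner to use the complementation symmetry on $2^n$: the key structural fact I expect to need is that $S \mapsto [n]\setminus S$ (possibly composed with order-reversal) carries $\Pe_f$ to $\Pe_{f^*}$, turning a graded poset of rank $r$ into a graded poset of rank $r$ with the rank function replaced by $r - \rho_f$; since $r$ is even, $(-1)^{r-\rho_f(S)} = (-1)^{\rho_f(S)}$, which is exactly what keeps the sign pattern consistent and explains why the rank must be even. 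Establishing this duality on the M\"obius side — that $\widehat{f^*}_{[n]\setminus S}$ equals, up to the correct sign, $\hat f_S$, with the vanishing sets matching up — is the technical heart; I would prove it by a M\"obius-inversion computation over $2^n$, using the poset-theoretic facts from Appendix~\ref{app:poset} about how complementation interacts with the incidence algebra of the Boolean lattice, and then read off that $\Pe_{f^*}$ is graded of the same (even) rank $\le n$ with the asserted signs. Once the three operations (permutation, tensor, complement) are handled, the induction closes and the theorem follows.
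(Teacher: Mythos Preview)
Your handling of the base case, permutations, and tensor products matches the paper's proof and is correct. The gap is in the complementation step: your conjectured ``key structural fact'' that $S \mapsto [n]\setminus S$ carries $\Pe_f$ to $\Pe_{f^*}$ is simply false. Take $f = p_{\{1\}} \in \Te_2$: then $\Pe_f = \{\{1\}\}$, but $f^* = 1 - p_{\{1\}} + p_{\{1,2\}}$ gives $\Pe_{f^*} = \{\emptyset, \{1\}, \{1,2\}\}$, not $\{\{2\}\}$. More generally, Proposition~\ref{prop:mobius}(ii) shows $\widehat{f^*}_S = -\hat f_S$ for $S \notin \{\emptyset,[n]\}$ and $\widehat{f^*}_S = 1-\hat f_S$ for $S\in\{\emptyset,[n]\}$, so $\Pe_{f^*}$ and $\Pe_f$ coincide away from $\emptyset$ and $[n]$; no set-complementation is involved, and the identity $\widehat{f^*}_{[n]\setminus S}=\pm\hat f_S$ you hope to establish does not hold.

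The paper's argument for this step is direct. Write
\[
f^* = 1_n - f + p_n = (1-\hat f_\emptyset)p_\emptyset - \sum_{S\ne\emptyset,[n]} \hat f_S\, p_S + (1-\hat f_{[n]})p_n.
\]
By the inductive hypothesis applied to $f$, if $\emptyset \in \Pe_f$ then it is the least element, so $\hat f_\emptyset = (-1)^0 = 1$; if $[n] \in \Pe_f$ then it is the greatest element, so $\hat f_{[n]} = (-1)^{r(f)} = 1$ since $r(f)$ is even. Hence passing from $\Pe_f$ to $\Pe_{f^*}$ toggles the presence of $\emptyset$ and of $[n]$ and leaves everything else untouched. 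Adding or removing a global bottom and/or top element keeps the poset graded, changes the rank by $0$ or $\pm 2$ (so it remains even and $\le n$), and shifts the rank function on the common part by $\pm 1$, which exactly matches the sign flip $\hat f_S \mapsto -\hat f_S$ for $S\ne\emptyset,[n]$. Replace your set-complementation idea with this toggle description and the induction closes exactly as you outlined.
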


\begin{proof} We will proceed by induction on $n$. Assume that  $n=1$. Then  $2^n=\{\emptyset, [1]\}$ and
$\Te_1=\{1_1,p_1\}$. For both type functions, $\mathcal P_f$ is a singleton, which 
is clearly a graded poset, with rank $k=0$ and trivial rank function $\rho_f\equiv 0$.  We have
\[
1_1=p_\emptyset=(-1)^{\rho(\emptyset)}p_\emptyset.
\]
The statement for $f=p_1$ follows by duality.  

Assume next that the statement holds for all $m<n$ and let $f\in \Te_n$. By construction,
it is enough to show that the property is invariant under permutations and complement and
that it holds for any $f$ of the form 
$f=f_1\otimes f_2$ for  type functions $f_1\in \Te_{n_1}$, $f_2\in \Te_{n_2}$.
So assume $f$ has the desired property and let $\sigma\in \permut_n$. 
It is clear by the
isomorphism in Lemma \ref{lemma:Pf} (i) that $\Pe_{f\circ\sigma}$ is a graded poset as
well, with the same even rank as $f$ and rank function
$\rho_{f\circ\sigma}=\rho_f\circ \sigma$. Then
\[
f\circ \sigma=\sum_{S\subseteq [n]} (-1)^{\rho_f(S)}p_S\circ\sigma=\sum_{S\subseteq [n]}
(-1)^{\rho_f\circ \sigma(S)}p_S.
\]
Further, assume that we have
\begin{align}
f^*&=1-f+p_n=(1-\hat f_\emptyset)p_\emptyset -
\sum_{\emptyset, [n]\ne S\subseteq [n]} \hat f_Sp_S+(1-\hat f_{[n]})p_n\notag\\
&=(1-\hat f_\emptyset)1 -\sum_{\substack{S\in \mathcal P_f\\ \emptyset \ne S,
[n]\ne S}}
(-1)^{\rho_f(S)}p_S+(1- \hat f_{[n]})p_n.\label{eq:dual_rank}
\end{align}
If $\emptyset \in \Pe_f$, then $\emptyset$ is the least element of $\Pe_f$, so that 
$\rho_f(\emptyset)=0$ and therefore $\hat f_\emptyset =
(-1)^0=1$. Similarly, if $[n]\in \Pe_f$, then $[n]$ is the largest element in $\Pe_f$,
hence it is the last element in any maximal chain. It follows that $\rho_f([n])=r(f)$ and hence
$\hat f_{[n]}=(-1)^{r(f)}=1$ (since the rank $r(f)$  is even). 
Therefore the equality \eqref{eq:dual_rank} implies that $\mathcal P_{f^*}$ differs from $\mathcal P_f$ only in the bottom  and
top elements:  $\emptyset \in \mathcal P_f$ iff  $\emptyset \notin \mathcal P_{f^*}$
and $[n] \in \mathcal P_f$ iff  $[n] \notin \mathcal P_{f^*}$. It follows that $\mathcal
P_{f^*}$ is graded as well, with rank  equal to $r(f)-2$, $r(f)$ or $r(f)+2$, which in any case
is even. Furthermore,  this also implies 
that for all $S\in \Pe_f$, $S\notin \{\emptyset, [n]\}$, we
have  $\rho_{f^*}(S)=\rho_f(S)\pm 1$, according to whether $\emptyset$ was added or removed. The
statement now follows from \eqref{eq:dual_rank}.

Now assume that $f=f_1\otimes f_2$. By the induction assumption, both $\Pe_{f_1}$ and $\Pe_{f_2}$ are graded posets. By Lemma
\ref{lemma:Pf}, $\Pe_f\simeq \Pe_{f_1}\times \Pe_{f_2}$, so that $\Pe_f$  is a graded poset as well, with 
rank function $\rho_f(S,T)=\rho_{f_1}(S)+\rho_{f_2}(T)$ and rank $r(f)=r(f_1)+r(f_2)$.
 By Lemma \ref{lemma:Pf} (ii), we get 
\[
f=\sum_{S\subseteq [n_1], T\subseteq [n_2]} (\widehat {f_1})_S(\widehat
{f_2})_T p_S\otimes p_T=
\sum_{S\subseteq [n_1], T\subseteq [n_2]}(-1)^{\rho_{f_1}(S)+\rho_{f_2}(T)}p_{S\boxplus T}.
\]
This finishes the proof.

\end{proof}

\begin{remark}\label{remark:n} Notice that we need to assume $n$ to be known. Indeed, for any $m$ and
$f$, $\Pe_f$ and $\Pe_{f\otimes 1_m}$ are the same, but the  two type functions are
different. In particular, the corresponding constructions of higher order
objects are different. See also the examples in Section \ref{sec:examples}.

\end{remark}

In the course of the above proof, we have also shown the following.
\begin{coro}\label{coro:pf_dual}  Let $f\in \Te_n$. Then 
\[
\Pe_{f^*} = \Pe_{f} \triangle
\{\emptyset,[n]\},
\]
here $\triangle$ denotes the symmetric difference of two sets: $A\triangle B=(A\cup B)\setminus
(A\cap B)$.
\end{coro}

\subsection{Labelled Hasse diagrams}
\label{sec:labelled}

We  introduce {labels} for the elements of $\Pe_f$ in the following way. 
 For  $S\in \Pe_f$, put
\[
L_S:=\{i\in [n]\ : \ i\in S,\ \forall S'\subsetneq S, i\notin S'\}.
\]
In other words, $i$ is a label for $S$ if $S$ is a minimal element in the subposet of
elements containing $i$ in  $\Pe_f$. 
We will use the notation $L_{S,f}$ if the function $f$ has to be
specified. It is easily seen that for $S\in \mathrm{Min}(\Pe_f)$, $L_S=S$ and for any
$S\in \Pe_f$,  $S=\cup_{S'\subseteq S} L_{S'}$.
It follows that $f\in \Te_n$ (with known $n$) is fully determined by the order relation on $\Pe_f$ and the
label sets. All the information about $f$ can be therefore obtained from the {labelled
Hasse diagram} of $\Pe_f$. 
Some examples  of type functions and their labelled Hasse diagrams  will be
given in Section \ref{sec:examples}.

Let us denote
\[
I_f^F:=\cap_{S\in \mathrm{Min}(\Pe_f)} L_S, \qquad O_f^F:=[n]\setminus
\cup_{S\in \Pe_f}L_S.
\]
It is easily checked by Proposition \ref{prop:fh_outputs} that any  $i\in O_f^F$ is an
output index, since in this case  we have
$f(e^i)=f(\theta_n)=1$. Such elements will be called
the {free outputs} of $f$. If $f$ has some free outputs, then necessarily $[n]\notin
\Pe_f$. Similarly, any  $j\in I_f^F$ is an input of $f$, since $j$ must be
contained in any $T\in \Pe_f$, so that $p_T(e^j)=0$ for all $T\in \Pe_f$ and consequently
$f(e^j)=0$. Such elements will be called {free
inputs} of $f$. The elements of $I_f^F\cup O_f^F$ will be called {free indices} of
$f$.

It is clear that we have
\begin{equation}\label{eq:nofree}
f\approx p_k\otimes g\otimes 1_l
\end{equation}
where
$k=|I_f^F|$, $l=|O_f^F|$  and $g\in \Te_{n-k-l}$ has no free indices. Note that we may
have $k=0$, in which case $f$ has no free inputs and $p_k=1_\varepsilon$ is trivial,
similarly for $l$ (see the paragraph below Proposition \ref{prop:type_min}). As posets,
$\Pe_f\simeq \Pe_g$,  with labels
\[
L_{S,f}=\begin{dcases} \sigma(L_{S,g}), &\text{if } S\notin \mathrm{Min}(\Pe_f)\\
\sigma(L_{S,g})\cup I^F_f, &\text {otherwise},
\end{dcases}
\]
for some $\sigma\in \permut_n$. Clearly, $n-k-l$ has to be specified for $g$.

The  two distinguished elements $\emptyset$ and $[n]$, if present in $\Pe_f$, can be easily recognized
from its structure as a labelled poset. Indeed, $\emptyset\in \Pe_f$ if and only if
$\Pe_f$ has the  smallest element and it  has an empty label. 
Similarly, $[n]\in \Pe_f$ if and
only if $\Pe_f$ has  the largest element and $\cup_{S\in \Pe_f}L_S=[n]$. The basic operations on
type functions are obtained as follows. 

\begin{coro}\label{coro:Pf} Let $f\in \Te_n$, $g\in \Te_m$. Then
\begin{enumerate}
\item[(i)] For $\sigma\in \permut_n$,  $\Pe_{f\circ\sigma}\simeq \Pe_f$, with the labels
changed as $L_S\mapsto \sigma^{-1}(L_S)$.

\item[(ii)] If $[n]\in \Pe_{f^*}$, then  $L_{[n],f^*}=O_f^F$.  All other elements and labels remain the same. 
\item[(iii)] Assume the decomposition $[n+m]=[n]\boxplus[m]$. Then 
$\Pe_{f\otimes g}\simeq \Pe_f\times \Pe_g$, with label sets
\[
L_{(S,T)}=\begin{dcases} L_S\cup (n+L_T), & \text{if } S\in \mathrm{Min}(\Pe_f),\ T\in
\mathrm{Min}(\Pe_g)\\
L_S, & \text{if } S\notin \mathrm{Min}(\Pe_f),\ T\in
\mathrm{Min}(\Pe_g)\\
n+ L_T, &\text{if }S \in \mathrm{Min}(\Pe_f),\ T\notin
\mathrm{Min}(\Pe_g)\\
\emptyset, & \text{otherwise}.
\end{dcases}
\]

\end{enumerate}

\end{coro}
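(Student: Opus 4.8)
The plan is to obtain all three statements by transporting the label sets $L_S$ along the poset isomorphisms already established in Lemma \ref{lemma:Pf} and Corollary \ref{coro:pf_dual}; no new facts about type functions are needed. Throughout I would argue directly from the definition
\[
L_S=\{i\in S:\ i\notin S'\ \text{for every }S'\in\Pe_f\text{ with }S'\subsetneq S\}
\]
and from the identity $S=\bigcup_{S'\subseteq S}L_{S'}$ noted just after it. For (i), Lemma \ref{lemma:Pf}(i) identifies $\Pe_f$ with $\Pe_{f\circ\sigma}$ via $S\mapsto\sigma^{-1}(S)$; since $i\in\sigma^{-1}(S)$ iff $\sigma(i)\in S$ and $U\subsetneq\sigma^{-1}(S)$ in $\Pe_{f\circ\sigma}$ corresponds to $\sigma(U)\subsetneq S$ in $\Pe_f$, comparing the defining conditions term by term gives $L_{\sigma^{-1}(S),\,f\circ\sigma}=\sigma^{-1}(L_{S,f})$, which is the claim.

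For (ii), Corollary \ref{coro:pf_dual} tells us $\Pe_{f^*}$ and $\Pe_f$ agree outside $\{\emptyset,[n]\}$ and differ there only by insertion/deletion of $\emptyset$ and of $[n]$. For $S\in\Pe_{f^*}$ with $S\notin\{\emptyset,[n]\}$ the families of proper elements below $S$ in $\Pe_{f^*}$ and in $\Pe_f$ can differ only by $\emptyset$ (since $[n]\not\subsetneq S$), and $\emptyset$ contributes the vacuous constraint $i\notin\emptyset$; hence $L_{S,f^*}=L_{S,f}$. It remains to compute $L_{[n],f^*}$ when $[n]$ is inserted, i.e.\ when $[n]\notin\Pe_f$, in which case every element of $\Pe_f$ is a proper subset of $[n]$ and
\[
L_{[n],f^*}=[n]\setminus\bigcup_{\substack{W\in\Pe_{f^*}\\ W\subsetneq[n]}}W=[n]\setminus\bigcup_{W\in\Pe_f}W=[n]\setminus\bigcup_{W\in\Pe_f}L_W=O_f^F,
\]
where the second equality uses that $\Pe_{f^*}\setminus\{[n]\}$ and $\Pe_f$ differ only by $\emptyset$ (which contributes nothing to the union) and the third uses $W=\bigcup_{W'\subseteq W}L_{W'}$.

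For (iii), Lemma \ref{lemma:Pf}(ii) identifies $\Pe_{f\otimes g}$ with $\Pe_f\times\Pe_g$ via $(S,T)\mapsto S\oplus T$, where $T$ is read inside the copy $\{n+1,\dots,n+m\}$ of $[m]$. The elements of $\Pe_{f\otimes g}$ strictly below $S\oplus T$ are exactly the $S'\oplus T'$ with $S'\subseteq S$, $T'\subseteq T$ and $(S',T')\neq(S,T)$. I would split an index of $S\oplus T$ according to which of the two blocks it lies in: for $i\in S$, if $T\notin\mathrm{Min}(\Pe_g)$ then the pair $(S,T')$ with $T'\in\Pe_g$, $T'\subsetneq T$ already forces $i\notin L_{S\oplus T}$, so $L_{S\oplus T}\cap[n]=\emptyset$, whereas if $T\in\mathrm{Min}(\Pe_g)$ the only relevant pairs are $(S',T)$ with $S'\in\Pe_f$, $S'\subsetneq S$, so $L_{S\oplus T}\cap[n]=L_S$; by the symmetric argument $L_{S\oplus T}\cap(n+[m])$ equals $n+L_T$ when $S\in\mathrm{Min}(\Pe_f)$ and $\emptyset$ otherwise. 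Assembling the four combinations of (non)minimality of $S$ and $T$ yields the displayed formula.

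The only genuinely fiddly point is this four-way case analysis in (iii) — keeping the two index blocks and the minimal/non-minimal alternatives straight — together with the conversion ``union of proper elements $=$ union of their labels'' used in (ii); once Lemma \ref{lemma:Pf} and Corollary \ref{coro:pf_dual} are available there is no substantial obstacle, the corollary being essentially a careful unwinding of the label definition.
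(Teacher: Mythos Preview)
Your proposal is correct and follows essentially the same route as the paper: both arguments take the poset isomorphisms from Lemma~\ref{lemma:Pf} and Corollary~\ref{coro:pf_dual} as given and then unwind the definition of $L_S$, splitting in (iii) according to which block the index lies in and whether the other coordinate is minimal. Your write-up is simply more explicit than the paper's terse version (which dismisses (i) as ``quite clear'' and gives only one direction of the case analysis in (iii)).
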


\begin{proof} We only need to prove the statements on the label sets. This is quite clear
in (i). In
(ii), if $[n]\in \Pe_{f^*}$, then the only new indices  not appearing below $[n]$ can be
the free outputs of $f$. In (iii), assume that $i\in L_{(S,T)}$, then $S\boxplus T$ must be a minimal
element in $\Pe_{f\otimes g}$ containing $i$. Hence, either $i\in S$ or $i\in n+T$. In the
first case, $i\in (S'\boxplus T')\le (S\boxplus T)$ whenever  $i\in S'\le S$ and $T'\le T$, so we must have 
$i\in L_S$ and $T\in \mathrm{Min}(\Pe_g)$. Similarly, for $i\in n+T$, we get $i\in n+L_T$
and $S\in \mathrm{Min}(\Pe_f)$.

\end{proof}

We next show that the input and output sets of $f\in \Te_n$ can be easily recognized from
the labels in $\Pe_f$. 

\begin{prop}\label{prop:pfinput} Let $f\in \Te_n$ and $i\in [n]$. Then
\begin{enumerate}
\item[(i)] All $S\in \Pe_f$ such that $i\in L_S$  have the same rank,  which will be
denoted by $r_f(i)$. If  $i\in O_f^F$,   we put $r_f(i):=r(f)+1$. 
\item [(ii)] $i\in O_f$ if and only if $r_f(i)$ is odd.

\end{enumerate}
\end{prop}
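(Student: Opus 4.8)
The plan is to prove both statements simultaneously by induction on $n$, following the now-familiar pattern: verify the claims for $n=1$, then show they are preserved under permutations, under complementation, and under tensor products $f=f_1\otimes f_2$, which by Proposition~\ref{prop:type_min} suffices. For the base case $n=1$ the posets $\Pe_f$ are singletons of rank $0$; for $f=1_1$ the unique $S=\emptyset$ has empty label set, $O_f=[1]$, and $O_f^F=[1]$, so the convention $r_f(i)=r(f)+1=1$ (odd) matches $i\in O_f$; for $f=p_1$ the label set of $[1]\in\Pe_f$ is $\{1\}$ of rank $0$ (even), and indeed $1\in I_f$. Permutation invariance is immediate from Corollary~\ref{coro:Pf}(i): relabeling by $\sigma^{-1}$ commutes with the rank function (Lemma~\ref{lemma:Pf}(i)) and carries $O_f$ to $O_{f\circ\sigma}$.

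For the complementation step I would use Corollary~\ref{coro:Pf}(ii) and Corollary~\ref{coro:pf_dual}: $\Pe_{f^*}$ differs from $\Pe_f$ only by adding or removing $\emptyset$ and $[n]$, and the rank of every intermediate element shifts by $\pm 1$ according to whether $\emptyset$ was added or removed, while $r(f^*)=r(f)\pm2$. The key point is that for an index $i$ in the label of some $S\notin\{\emptyset,[n]\}$, both the rank $r_f(i)$ and the target rank $r(f)$ shift in the same direction when $\emptyset$ is added (both $+1$ in rank counting from the new bottom, but parity of $r_f(i)$ flips while the criterion "$r_f(i)$ odd" must still detect $i\in O_{f^*}=I_f$ — so I need to track this carefully). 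Concretely: if $\emptyset\notin\Pe_f$ then it is added in $\Pe_{f^*}$, so $\rho_{f^*}(S)=\rho_f(S)+1$ for all old $S$, flipping the parity of $r_f(i)$; since complementation exchanges inputs and outputs ($O_{f^*}=I_f$), the parity criterion flips exactly as needed. The free indices require separate attention: a free output $i\in O_f^F$ becomes, by Corollary~\ref{coro:Pf}(ii), an element of $L_{[n],f^*}$ when $[n]$ is added to $\Pe_{f^*}$, with rank $r(f^*)=r(f)$ — which is even, consistent with $i\in I_{f^*}$. Conversely if $[n]\in\Pe_f$ already, it is removed, and what were the labels of $[n]$ (which include no free outputs since $O_f^F=\emptyset$ in that case) are distributed — I need to check that Corollary~\ref{coro:Pf}(ii) as stated handles this, possibly invoking that $\Pe_f\setminus\{[n]\}$ still carries those indices on elements of rank $r(f)$... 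Actually the cleaner route is to reduce to the free-index-free case first.

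Therefore I would first dispose of free indices using the decomposition $f\approx p_k\otimes g\otimes 1_l$ recorded just before Corollary~\ref{coro:Pf}: the free inputs $I_f^F$ appear exactly as labels of the minimal elements of $\Pe_f\simeq\Pe_g$, all of rank $0$ (even, matching $i\in I_f$), and the free outputs get the convention $r_f(i)=r(f)+1$ (odd, matching $i\in O_f$). For the tensor step with both $f_1,f_2$ free-index-free, I use Lemma~\ref{lemma:Pf}(ii) and Corollary~\ref{coro:Pf}(iii): $\Pe_{f_1\otimes f_2}\simeq\Pe_{f_1}\times\Pe_{f_2}$ with rank function $\rho(S,T)=\rho_{f_1}(S)+\rho_{f_2}(T)$, and an index $i$ from the first block with $i\in L_{S,f_1}$ appears in $L_{(S,T),f_1\otimes f_2}$ precisely when $T\in\mathrm{Min}(\Pe_{f_2})$, i.e.\ $\rho_{f_2}(T)=0$, so $r_{f_1\otimes f_2}(i)=r_{f_1}(i)$ — parity and the membership $O_{f_1\otimes f_2}\cap[n_1]=O_{f_1}$ are both inherited; symmetrically for the second block. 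Part (i)'s claim that all $S$ carrying a given label have the same rank then follows in each inductive step from the corresponding statement downstairs plus the observation that the operations either preserve ranks (permutation), shift them uniformly (complement), or add a constant $\rho_{f_2}$- or $\rho_{f_1}$-value that is forced to be $0$ on the relevant elements (tensor).

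\textbf{Main obstacle.} The delicate bookkeeping is entirely in the complementation step: I must keep straight that $r_f(i)$ as defined is the rank \emph{in the original} $\Pe_f$, that adding/removing $\emptyset$ shifts \emph{all} ranks of surviving elements by $\pm1$ (hence flips every parity), and that this flip is exactly compensated by the input/output swap $O_{f^*}=I_f$ — so the statement "$i\in O_f \iff r_f(i)$ odd" is self-consistent under duality only because $r(f)$ stays even. The free-output case, where a new index enters $L_{[n],f^*}$ at the even rank $r(f^*)$, must be checked against the convention $r_{f^*}(i)=r(f^*)+1$ versus the actual rank — here the index is \emph{not} free in $f^*$ (it has become a genuine, in fact free, input sitting on the top element), so one uses the actual rank $r(f^*)$, even, matching $i\in I_{f^*}$; verifying the two conventions dovetail here is the one genuinely fiddly point.
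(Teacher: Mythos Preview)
Your proposal is correct and follows essentially the same inductive scheme as the paper: base case $n=1$, then invariance under permutations, complementation, and tensor products, with the tensor argument carried out exactly as you describe via $\rho_f(S\oplus T)=\rho_g(S)+\rho_h(T)$ and $T\in\mathrm{Min}(\Pe_h)$ forcing $\rho_h(T)=0$. The detour you contemplate through the decomposition $f\approx p_k\otimes g\otimes 1_l$ is unnecessary: the paper handles the free-output case directly inside the complementation step, precisely along the lines of your ``Main obstacle'' paragraph (an $i\in O_f^F$ appears as the sole label of $[n]\in\Pe_{f^*}$, at even rank $r(f^*)$, matching $i\in I_{f^*}$), and your own analysis of that point is already correct.
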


\begin{proof} As before, we proceed by induction on $n$. Both assertions are quite trivial for $n=1$,
so assume the statements hold for $m<n$. It is easily seen that the properties are
invariant under permutations. Assume (i) and (ii) hold for $f\in \Te_n$ and consider
$f^*$. If $i\in L_{[n],f^*}$, then $i$ cannot be contained in the label set of any other
element, so (i) is true. Also, by Corollary \ref{coro:Pf}, $L_{[n],f^*}=O_f^F$, so that
$i$ is an input of $f^*$. Since $[n]$ is the largest element of $\Pe_f$, 
$\rho_f([n])=r(f)$ is even, so that (ii) holds as well. By duality, both statements hold
if $i\in O_{f^*}^F$. In all other cases, $i\in L_{S,f}$ if and only if $i\in L_{S,f^*}$, so
(i) is true for $f^*$. By the
proof of Proposition \ref{thm:graded} we have  
$\rho_{f^*}(S)=\rho_f(S)\pm 1$ for any $S$, depending only on the fact whether $\emptyset
\in \Pe_f$. This implies that (i) and (ii)  are preserved by complementation.

 It is now enough to assume that
$f=g\otimes h$ for some $g\in \Te_m$ and $h\in \Te_{n-m}$. 
Suppose without loss of generality that $i\in [m]$, then $i\in L_{S\boxplus T, f}$ if and
only if $i\in L_{S,g}$ and $T\in \mathrm{Min}(h)$.  Since then $\rho_h(T)=0$, we have
by the induction assumption
\[
\rho_f(S\boxplus T)=\rho_g(S)+\rho_h(T)=\rho_g(S)= r_g(i).
\]
The statement (ii) follows from the fact that $i\in O_f$ if and only if $i\in O_g$.

\end{proof}
\subsubsection{Examples}\label{sec:examples} 

In this paragraph, we will  give some examples of simple type functions, their corresponding Hasse
diagrams and types of higher order maps.

\begin{exm}[Free indices]\label{exm:free} Let  $f\in \Te_n$ and let $I^F$/$O^F$ be the
free inputs/outputs of $f$. Up to a permutation, we may assume that $I^F=[k]$ and
$O^F=[n-l,n]$ and then $f=p_k\otimes g\otimes 1_l$ for some type function $g\in
\Te_{n-k-l}$. It follows that 
\[
S_f=(\otimes_{i=1}^k \tilde a_i)\otimes  S_g\otimes
(\otimes_{i=n-l+1}^n V_l).
\]
We clearly have $O_f=O^F\boxplus O$ and $I_f=I^F\boxplus I$, where $O=O_g$ and
$I=I_g$.

Assume that all the basic first order objects are the usual quantum state spaces
(elementary systems), so that  $Y_i=(M^h_{n_i},
\{\Tr[a]=1\})$ and $\tilde a_i=E_{n_i}$.  Let $K=\Pi_{i=1}^k n_i$, $L=\Pi_{i=n-l+1}^n n_i$. Then we have
$A_f=E_K\otimes A_g\otimes A_{\Se_L}$,  (see Example \ref{exm:quantum_maps} for the
notation). Any channel of type $f$ has the Choi matrix of the form
\[
C=E_K\otimes \sum_j\alpha_j C_j \otimes \rho_j =E_K\otimes D,
\]
where $D=\sum_j\alpha_j C_j \otimes \rho_j$ for some channels $C_j$ of type $g$ and states $\rho_j$ on $M_L^h$, with
$\sum_j\alpha_j=1$, so that $\Tr_O D=E_I\otimes \rho$, where $\Tr[\rho]=1$. By the characterization of quantum combs in
\cite{chiribella2009theoretical}, we see that $C$ has the form 
\begin{center}
\includegraphics[scale=0.8]{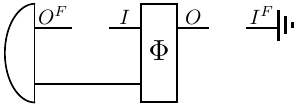}
\end{center}
If $g$ is the type of all channels between elementary systems $I\to O$, then clearly
$\Phi$ is some channel  $I\otimes R\to O$, for some ancillary system $R$. In general, we
may always suppose that $g=(h\to k)=(h\otimes k^*)^*$ for some types $h$ and $k$, and it
can be proved by induction that $\Phi$ is a map of type $h\otimes 1\to k$, where the function 1
represents the ancillary input of $\Phi$. Using the currying equality in Lemma
\ref{lemma:combs} (i), we obtain that $(h\otimes 1\to k)\approx (1\to g)$.

\end{exm}

To describe further examples, we introduce the following notations: for $n\in \mathbb N$,
put
\begin{equation}\label{eq:gammas}
\gamma_n:= \begin{dcases} \sum_{j=0}^n (-1)^jp_{[j]}, & \text{if $n$ is even}\\
\sum_{j=1}^n (-1)^{j-1}p_{[j]} & \text{if $n$ is odd}
\end{dcases}
\end{equation}
(here we put  $[0]=\emptyset$). Note that $\gamma_n\in \Fe_n$ and  $\gamma_n^*=\gamma_{n-1}\otimes 1_1$ for $n\in
\mathbb N$, where we put $\gamma_0=1_\varepsilon$. For each $n$, the poset $\Pe_{\gamma_n}$ is a chain of even
length and we will see in Section  \ref{sec:chains} below that $\gamma_n\in \Te_n$. 

\begin{exm}[$\Te_2$] \label{exm:T2hasse} The labelled Hasse diagrams of elements in $\Te_2$
described in Example \ref{exm:T2} are up
to permutations as
follows:
\begin{center}
\includegraphics[scale=0.8]{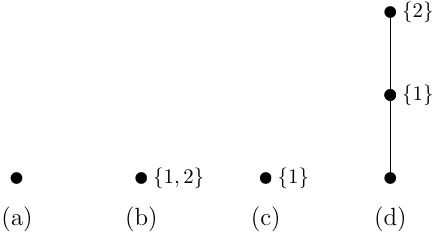}
\end{center}

The diagram (a) corresponds to the function $1=p_\emptyset$, so the diagram
has only one element with no label. This describes  first order objects $Y_1\otimes Y_2$
(states). This type has two free outputs and no inputs.  Diagram (b) corresponds to $p_2=p_{\{1,2\}}$, which is 
$Y_1^*\otimes Y_2^*$ (the unit functional). Here we have two free inputs and no outputs.
The function in (c) is $p_{\{1\}}=p_{1}\otimes 1_1$, which describes replacement
channels $Y_1\to Y_2$. In this case we have one free input and one free output.
The diagram in (d) corresponds to the function $\gamma_2=p_{\{1\}}^*$. As we
have seen in Example \ref{exm:type_channels}, this function is related to channels
$Y_2\multimap Y_1$.
This type has no free indices, note that the input/output indices follow the results of
Proposition \ref{prop:pfinput}.

We now depict  maps of the above types in a  form more familiar in   quantum
theory:
\begin{center}
\begin{minipage}{0.25\textwidth}
\centering
\includegraphics[scale=0.8]{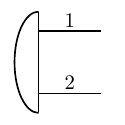}

(a)
\end{minipage}%
\begin{minipage}{0.25\textwidth}
\centering
\includegraphics[scale=0.8]{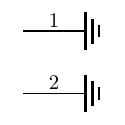}

(b)
\end{minipage}%
\begin{minipage}{0.25\textwidth}
\centering
\includegraphics[scale=0.8]{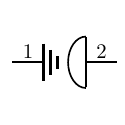}

(c)
\end{minipage}%
\begin{minipage}{0.25\textwidth}
\centering
\includegraphics[scale=0.8]{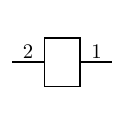}

(d)
\end{minipage}
\end{center}

\end{exm}

\begin{exm}[$\mathcal T_3$] \label{exm:T3}
Note that by the above example, all the posets corresponding to  elements in $\Te_2$ are chains. 
This is also true for $n=3$. Indeed, up to a permutation that does not change the chain structure, 
any $f\in \Te_3$ is either a product of two elements $g\in \Te_2$ and $h\in \Te_1$,
or the dual of such a  product. Since $g$ must be a chain and $|\Pe_h|=1$, their product
must be a chain as well. Taking the dual of a chain only adds/removes the least/largest
elements, so the dual of a chain must be a chain as well. All the diagrams depicted in
(a)-(d) above describe functions  contained in $\Te_3$,  with  an additional free
output 3. The corresponding maps have the form
\begin{center}
\begin{minipage}{0.25\textwidth}
\centering
\includegraphics[scale=0.8]{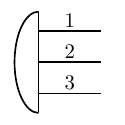}

(a)
\end{minipage}%
\begin{minipage}{0.25\textwidth}
\centering
\includegraphics[scale=0.8]{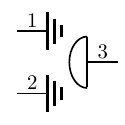}

(b)
\end{minipage}%
\begin{minipage}{0.25\textwidth}
\centering
\includegraphics[scale=0.8]{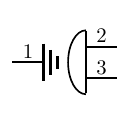}

(c)
\end{minipage}%
\begin{minipage}{0.25\textwidth}
\centering
\includegraphics[scale=0.8]{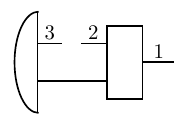}

(d)
\end{minipage}
\end{center}
The only other
elements of $\Te_3$ are (up to permutations):
\begin{center}
\includegraphics[scale=0.8]{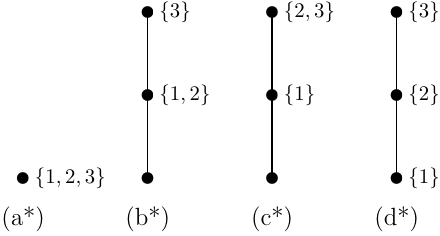}
\end{center}
Notice that in $\Te_3$, the function with diagram in (a*) is the conjugate of  the
function with diagram in  (a), etc. The diagram in (d*) corresponds to $\gamma_3$,
 which is a channel with an additional free input 1:
\begin{center}
\begin{minipage}{0.25\textwidth}
\centering
\includegraphics[scale=0.8]{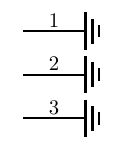}

(a*)
\end{minipage}%
\begin{minipage}{0.25\textwidth}
\centering
\includegraphics[scale=0.8]{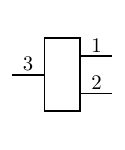}

(b*)
\end{minipage}%
\begin{minipage}{0.25\textwidth}
\centering
\includegraphics[scale=0.8]{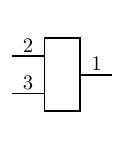}

(c*)
\end{minipage}%
\begin{minipage}{0.25\textwidth}
\centering
\includegraphics[scale=0.8]{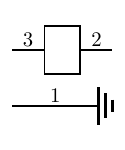}

(d*)
\end{minipage}
\end{center}

\end{exm}

\begin{exm}[$\Te_3$ subtypes]\label{exm:T3sub} We will next find all subtypes for $n=3$, given by the input and output
sets $I=\{1,2\}$ and $O=\{3\}$, all the other nontrivial subtypes can be found by
permutations and duals. The subtypes are given by the condition
\[
p_{\{1,2\}}\le f \le p_{\{3\}}^*.
\]
The function of the left hand side is equal to 1 on all strings  $s$ with $s_1=s_2=0$; the
functions on the right are equal to 0 whenever $s\ne\theta_3$ and $s_3=0$. This leaves us
with only 3 strings where the values of $f$ are not fixed: 101, 011 and 111, and we have
$p_{\{1,2\}}(s)=0$, $p^*_{\{3\}}(s)=1$ for all these strings.
This means that there are 8 subtypes in this case. All types in this interval are
described by the chains in  Examples \ref{exm:T2} and
\ref{exm:T3}.  The two extremal cases correspond to
the diagram (b) in Example \ref{exm:T2}, resp. diagram (c*) in Example \ref{exm:T3} (with
$\{3\}$ in the middle and $\{1,2\}$ on the top). The only remaining types with the given
input/output sets are two chains of the form (d*), with $\{3\}$ in the middle, let us
denote these functions by $f_1$ and $f_2$. Including infima and suprema, we obtain the
first 5 subtypes in  following table (note that $f_1\wedge f_2=p_{\{1,2\}}$):

\begin{center}
\begin{tabular}{|c|c|c|c|c|c|c|c|c|}
\hline
$s$  & $p_{\{1,2\}}$ & $p_{\{3\}}^*$ & $f_1$ & $f_2$ & $f_1\vee f_2$ & $g_1$ & $g_2$ & $g_3$\\
 \hline 
 101 & 0 & 1 &1 &0 &1 &0 &0 &1 \\ 
 011 & 0 & 1 &0 &1 &1 &0 &1 & 0\\ 
 111 & 0 & 1 &0 &0 &0 &1 &1 & 1\\
 \hline
\end{tabular}
\end{center}
Hence there are only four types in this interval, and only one additional subtype that
is obtained by taking joins and meets of chains.  The remaining subtypes $g_1,g_2,g_3$
have no clear interpretation and the corresponding subspaces $S_{g_i}$ depend on
the choice of the element $a_{Y_i}\in A_{Y_i}$, $i=1,2$, e.g.
\[
S_{g_1}=\tilde a_{Y_1}\otimes \tilde a_{Y_2}\otimes
V_{Y_3}+\{a_{Y_1}\}^\perp\otimes \{a_{Y_2}\}^\perp\otimes \{\tilde a_{Y_3}\}^\perp.
\]

\end{exm}

\begin{exm}[$\Te_4$]\label{exm:T4} The only elements in $\Te_4$ such that the posets are not
isomorphic to any of those given above have the diagrams:
\begin{center}
\includegraphics[scale=0.8]{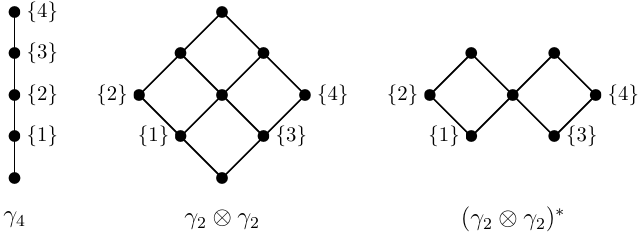}
\end{center}
The corresponding type functions and related higher order objects are: 
\begin{enumerate}
\item $\gamma_4(s)=1-\bar s_1+\bar s_1\bar s_2-\bar s_1\bar s_2\bar s_3+\bar s_1\bar
s_2\bar s_3\bar s_4$, related to 2-combs (superchannels), see Proposition
\ref{prop:chains_combs} below,
\item $(\gamma_2\otimes\gamma_2)(s)=1-\bar s_1-\bar s_3+\bar s_1\bar s_2 + \bar s_1\bar
s_3 +\bar s_3\bar s_4-\bar s_1\bar s_2\bar s_3-\bar s_1\bar s_3\bar s_4 +\bar s_1\bar
s_2\bar s_3\bar s_4$ related to no-signalling bipartite channels,
\item $(\gamma_2\otimes \gamma_2)^*(s)=\bar s_1+\bar s_3-\bar s_1\bar s_2 - \bar s_1\bar
s_3 -\bar s_3\bar s_4+\bar s_1\bar s_2\bar s_3+\bar s_1\bar s_3\bar s_4$ related to process
matrices (Example \ref{exm:ns_pm}).

\end{enumerate}

\end{exm}

\begin{exm}[Adapters] \label{exm:subtype_afp} An adapter is a  map  that transforms process matrices to process matrices.
Let $g=(\gamma_2\otimes \gamma_2)^*\in \Te_4$ be the type functions of process matrices,
then the type function for adapters is  $a=g\to g=(g\otimes g^*)^*\in \Te_8$. The poset
$\Pe_a$ is already rather big and complicated, we will show the diagram of the corresponding reduced poset
in Example \ref{exm:adapter} (Section \ref{sec:pf0}). In the quantum case, this type was
described in \cite{milz2022resource} as the set of operations in the resource theory of
causal connection. As a possible choice of free operations, the
set of admissible free preserving (AFP) adapters  was suggested. If we denote the set of
inputs of the source/target process matrix by $I$/$I'$, the free preserving adapters are
those of the type $f=p_I\to p_{I'}=(p_I\otimes p_{I'}^*)^*$. The AFP adapters then
correspond to the subtype $a\wedge f$. 

\end{exm}

\subsection{Chains and combs}
\label{sec:chains}
 We have seen that for some type functions the poset $\Pe_f$ is a chain, which is also a
 basic example of a graded poset. A chain in $2^n$ has the form  $\Ce=\{S_1\subsetneq S_2\subsetneq \dots \subsetneq
S_N\}$, $S_i\subseteq [n]$. Note that the length of the chain $\Ce$ is $N-1$.
It is clear that  $\Ce$ is graded with rank $N-1$
and rank function $\rho(S_i)=i-1$. 

\begin{prop}\label{prop:chains} For a chain   $\Ce=\{S_1\subsetneq S_2\subsetneq \dots \subsetneq
S_N\}$, the function  
\[
f=f_\Ce:=\sum_{i=1}^N (-1)^{i-1} p_{S_i}
\]
is a type function if and only if $N$ odd. In this case, we say that $f$ is a chain type.

\end{prop}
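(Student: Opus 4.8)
The plan is to prove both directions by relating the chain $\Ce$ to the standard "comb" type functions $\gamma_N$ via a permutation and the tensor operation, using Theorem~\ref{thm:graded} and Proposition~\ref{prop:Xf_const} (iv). For the \emph{only if} direction, assume $f_\Ce\in\Te_n$. By Theorem~\ref{thm:graded}, $\Pe_{f_\Ce}$ must be a graded poset of \emph{even} rank. Since $\Pe_{f_\Ce}$ is exactly the chain $\Ce=\{S_1\subsetneq\dots\subsetneq S_N\}$ (the M\"obius coefficients $(-1)^{i-1}$ are nonzero), its rank is $N-1$, so $N-1$ must be even, i.e. $N$ is odd. This direction is immediate once we observe that the displayed expression for $f_\Ce$ is already the M\"obius expansion, so $\Pe_{f_\Ce}=\Ce$ and $\rho_{f_\Ce}(S_i)=i-1$.

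For the \emph{if} direction, suppose $N$ is odd; we must exhibit $f_\Ce$ as a type function. The idea is to reduce to $\gamma_N$ of~\eqref{eq:gammas} (with $N$ odd, so $\gamma_N=\sum_{j=1}^{N}(-1)^{j-1}p_{[j]}$) after possibly padding with free indices and applying a permutation. First I would handle the case where the chain starts at $S_1=\emptyset$ or not: if $S_1\ne\emptyset$, set $I^F:=S_1$ (these will become free inputs) and reindex so that $S_1\subsetneq\dots\subsetneq S_N$ with $|S_1|\ge 1$; if $[n]\notin\Ce$, the elements of $[n]\setminus S_N$ become free outputs. Concretely, writing $k=|S_1|$ and $\ell=n-|S_N|$, and choosing sizes $m_i=|S_{i+1}|-|S_i|$ for $i=1,\dots,N-1$ (so $\sum m_i = |S_N|-|S_1|$), I claim there is a permutation $\sigma\in\permut_n$ with
\[
f_\Ce\circ\sigma = p_k\otimes \left(\text{a chain type on the "middle" block}\right)\otimes 1_\ell,
\]
where the middle-block function is built from $\gamma$-type functions of the appropriate single-block sizes. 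The key combinatorial fact is that a chain type whose consecutive sets grow by blocks of sizes $m_1,\dots,m_{N-1}$ is, up to permutation, the function $\gamma$ on $N-1$ blocks, which can itself be built from iterated tensor products and complements once we know each "elementary" chain type $\gamma_j$ is in $\Te_j$. So the real content is to establish $\gamma_N\in\Te_N$ for odd $N$, which is exactly what the text promises to verify "in Section~\ref{sec:chains} below" — presumably by the recursion $\gamma_{N}\approx (1_1\otimes \gamma_{N-2})^*$ or $\gamma_N = (\gamma_{N-2}\otimes 1_1)^*$ type identity, combined with $\gamma_1=1_1\in\Te_1$ and the closure properties in Proposition~\ref{prop:type_min}.

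The main obstacle I anticipate is the bookkeeping in the \emph{if} direction: showing that an \emph{arbitrary} chain $\Ce$ in $2^n$ (with arbitrary, not necessarily nested-by-singletons, sets $S_i$) gives, after the right permutation, precisely a tensor product $p_k\otimes(\cdots)\otimes 1_\ell$ of $\gamma$-functions. This requires (a) checking that $p_S\circ\sigma=p_{\sigma^{-1}(S)}$ and that $p_{S\oplus T}=p_S\otimes p_T$ under block decompositions (Lemma~\ref{lemma:PSPT}), so that the M\"obius expansion of a tensor product of $\gamma$'s indexed by a "staircase" of blocks matches $\sum_i(-1)^{i-1}p_{S_i}$; and (b) verifying the rank parities line up so that the $\pm$ signs are correct — here oddness of $N$ is exactly what makes the alternating signs in $f_\Ce$ agree with the signs $(-1)^{\rho}$ coming from Theorem~\ref{thm:graded} applied to the tensor product. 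Once $\gamma_N\in\Te_N$ is in hand and the permutation/tensor dictionary is set up, the argument is essentially a verification that two M\"obius expansions coincide, which is routine but must be done carefully with the block sizes.
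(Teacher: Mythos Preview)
Your ``only if'' direction is identical to the paper's and correct: since $\Pe_{f_\Ce}=\Ce$ has rank $N-1$, Theorem~\ref{thm:graded} forces $N-1$ to be even.

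For the ``if'' direction you begin exactly as the paper does: strip off the free indices so that, up to a permutation,
\[
f_\Ce \approx p_{|S_1|}\otimes g\otimes 1_{n-|S_N|},
\]
where $g$ is the chain type for a chain with $\emptyset$ at the bottom and the full set at the top. At this point, however, your route runs into a genuine difficulty. The ``middle'' chain you obtain is \emph{not} in general a permutation of any $\gamma_m$: its increments $m_i=|S_{i+1}|-|S_i|$ need not equal $1$, so the ambient dimension of $g$ is $|S_N|-|S_1|$, not $N-1$, and there is no tensor-product identity that builds an arbitrary block-increment chain type out of the $\gamma$'s alone. (The causal product $\vtl$, introduced only later in the paper, would do that job, but it is not available here.) Your sentence ``a chain type whose consecutive sets grow by blocks of sizes $m_1,\dots,m_{N-1}$ is, up to permutation, the function $\gamma$ on $N-1$ blocks'' is exactly the step that does not go through.

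The paper bypasses this entirely. After the stripping step it inducts directly on $N$ (the number of elements in the chain, not on $n$), using duality: once $S_1=\emptyset$ and $S_N=[n]$, one computes
\[
g^*=1-g+p_n=\sum_{j=1}^{N-2}(-1)^{j-1}p_{S_{j+1}},
\]
which is itself a chain type for the shorter chain $S_2\subsetneq\dots\subsetneq S_{N-1}$ with $N-2$ (still odd) elements. By the induction hypothesis $g^*\in\Te_n$, hence $g=g^{**}\in\Te_n$, and then $f_\Ce\approx p_{|S_1|}\otimes g\otimes 1_{n-|S_N|}\in\Te_n$. So the key move you are missing is simply to take the dual of the stripped function \emph{as it stands}, rather than trying to match it to some $\gamma_N$. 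This is precisely the ``$\gamma_N\approx(1_1\otimes\gamma_{N-2})^*$'' recursion you mention, but applied to the arbitrary chain directly; no reduction to singleton increments is needed.
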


\begin{proof}
By Proposition \ref{thm:graded}, if $f\in \Te_n$, then the rank of $f$ must be even, so
that $N$ must be odd. 
We will show that the converse is true. We proceed by induction on $N$. For $N=1$, we have
$f=p_{S_1}\in \Te_n$. Assume that the statement holds for all odd numbers $M<N$ and let
$\Ce$ be a chain as above. It is easily checked that   
\[
f\approx  p_{n_1}\otimes g\otimes 1_{n-n_N},
\]
where $n_i:=|S_i|$ and  $g\in
\Fe_{n_N-n_1}$ is the function for a chain $\Ce'$  in $2^{n_N-n_1}$ of the form $\Ce':=\{\emptyset\subsetneq S'_2\subsetneq \dots
\subsetneq [n_N-n_1]\}$. Since $f$ is a type function if $g$ is, this shows that we may assume that 
the chain $\Ce$ contains $\emptyset$ and $[n]$.  But then 
\[
f=1+\sum_{j=2}^{N-1} (-1)^{j-1}p_{S_j}+ p_{n} 
\]
and
\[
f^*=1-f+p_{n}=\sum_{j=1}^{N-2} (-1)^{j-1}p_{S_{j+1}},
\]
By the induction assumption $f^*\in \Te_n$, hence also $f=f^{**}\in
\Te_n$.
\end{proof}

Let $f\in \Te_n$ be a chain type and let $\Pe_f=\{S_1\subsetneq \dots \subsetneq
S_N\}$ be the corresponding chain. There is a decomposition of $[n]$ given as
\begin{equation}\label{eq:chain_decomp}
T_0:=S_1,\quad T_j:=S_{j+1}\setminus S_{j},\ j=1,\dots,N-1, \quad T_{N}:=[n]\setminus S_N.
\end{equation}
It is clear that the label sets are given as $L_{S_j}=T_{j-1}$, $j=1,\dots, N$ and
it can be easily seen from Proposition \ref{prop:pfinput} that 
\begin{equation}\label{eq:chain_io}
I_f=\bigcup_{j=0}^{(N-1)/2}{T_{2j}}, \qquad O_f=\bigcup_{j=0}^{(N-1)/2}{T_{2j+1}}\cup
O_f^F,\qquad I_f^F=T_0,\qquad O_f^F=T_{N}
\end{equation}
(note that $N$ must be odd).  As we have seen, $f\approx p_{n_1}\otimes g\otimes 1_{n-n_N}$
and $g$ is a chain type with no free indices. By Proposition \ref{prop:Xf_const}, we have 
for any collection of first order objects
\[
X_f(X_1,\dots,X_n)\overset{\sigma}{\simeq} \tilde X^*_{I_f^F}\otimes
X_g(X_{\sigma^{-1}(1)},\dots,X_{\sigma^{-1}(n_N-n_1)})\otimes X_{O_f^F},
\]
for some $\sigma\in \permut_n$. We will show below that chain types correspond to an
important kind of higher order objects.

\begin{prop}\label{prop:chains_combs}  Let $f\in \Te_n$ be a chain type with
$\Pe_f=\{\emptyset=S_1\subsetneq\dots \subsetneq S_N=[n]\}$, with
label sets  $T_i=L_{S_{i+1}}$,  $i=1,\dots,n$.  Let $Y=X_f(X_1,\dots,X_n)$ for some first order objects $X_1,\dots, X_n$. 
Then for $N\ge 3$, $Y$ is an $(N-1)/2$-comb. 
More precisely, let $Y_1,\dots, Y_n$ be such that $Y_i=X_i$ for $i\in O_f$ and $Y_i=\tilde X_i$ for $i\in I_f$.
Then
\[
Y{\overset{\sigma}{\simeq}}
Y_{T_{N-1}}\multimap (Y_{T_{N-2}}\multimap\dots \multimap (Y_{T_{\frac{N+1}2}}\multimap
Y_{T_{\frac{N-1}2}})\multimap\dots\multimap Y_{T_2})\multimap Y_{T_1} 
 \]
 where we put $Y_T=\otimes_{j\in T} Y_j$.

\end{prop}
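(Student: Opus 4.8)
The plan is to reduce the statement to the already established isomorphism $X_f\overset{\sigma}{\simeq}\tilde X^*_{I^F_f}\otimes X_g(\dots)\otimes X_{O^F_f}$ and then to prove the comb formula for $X_g$, where $g$ is the chain type obtained by stripping the free indices, so that $\Pe_g$ is a chain containing $\emptyset$ and $[m]$ with $m=n_N-n_1$. Since in the statement we are told $\Pe_f=\{\emptyset=S_1\subsetneq\dots\subsetneq S_N=[n]\}$ already contains $\emptyset$ and $[n]$, we have $T_0=S_1=\emptyset$ and $T_N=[n]\setminus S_N=\emptyset$, so there are no free indices and $f=g$; thus it suffices to prove the formula directly for such $f$. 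First I would record the explicit shape: by the argument in the proof of Proposition~\ref{prop:chains} we have $f^*=\sum_{j=1}^{N-2}(-1)^{j-1}p_{S_{j+1}}$, which is again a chain type, now on the chain $\{S_2\subsetneq\dots\subsetneq S_N=[n]\}$ of odd length $N-2$; iterating, each application of complement removes the current bottom element (or adds $\emptyset$ back), matching how $\multimap$ peels off one first-order tensor factor at a time.

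The core is an induction on $N$ (equivalently on the number $(N-1)/2$ of teeth). The base case $N=3$: then $\Pe_f=\{\emptyset\subsetneq S_2\subsetneq[n]\}$ with $T_1=S_2$, $T_2=[n]\setminus S_2$, and one computes $f=1-p_{S_2}+p_n=p_{S_2}^*$. By Example~\ref{exm:type_channels}, $p_{S_2}^*$ corresponds to channels with output indices $S_2=T_1$, i.e. $X_f\overset{\sigma}{\simeq}Y_{T_2}\multimap Y_{T_1}$ (a $1$-comb / channel), which is exactly the claimed formula for $N=3$. For the inductive step, suppose the result holds for all chain types with fewer elements. Given $f$ with chain $\emptyset=S_1\subsetneq\dots\subsetneq S_N=[n]$, consider $f^*$; its chain is $\{S_2\subsetneq\dots\subsetneq S_N=[n]\}$, which has odd length $N-2$ and top element $[n]$ but bottom element $S_2\neq\emptyset$, so $S_2$ plays the role of the free-input block $T_0$ for $f^*$. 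By \eqref{eq:chain_io} applied to $f^*$ the free inputs of $f^*$ are exactly $T_1=L_{S_2}$, so $f^*\approx p_{|T_1|}\otimes g'\otimes 1_0$ where $g'$ is the chain type on $[n]\setminus T_1$ with chain $\{\emptyset\subsetneq S_3\setminus T_1\subsetneq\dots\subsetneq([n]\setminus T_1)\}$ of $N-2$ elements. By Proposition~\ref{prop:Xf_const}(iii),(iv) and Example~\ref{exm:type_channels},
\[
X_{f^*}(X_1,\dots,X_n)\overset{\tau}{\simeq}\tilde Y_{T_1}^{\,*}\otimes X_{g'}(\dots)
\]
for a suitable permutation $\tau$, with $X_{g'}$ an $(N-3)/2$-comb by the induction hypothesis, spelled out in terms of $Y_{T_2},\dots,Y_{T_{N-1}}$. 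Dualizing, $X_f=X_{f^*}^{*}$, and using $(\,\cdot\,\otimes\,\cdot\,)^*$ together with $X\multimap W=(X\otimes W^*)^*$ one gets
\[
X_f\overset{\sigma}{\simeq}\bigl(\,\tilde Y_{T_1}^{\,*}\otimes X_{g'}(\dots)\bigr)^{*}\simeq X_{g'}(\dots)^{*}\,\text{-into-}\,Y_{T_1},
\]
more precisely $X_f\overset{\sigma}{\simeq}\bigl(Y_{T_2}\multimap(\dots\multimap Y_{T_{(N-1)/2}})\bigr)\multimap Y_{T_1}$ once the inner comb is substituted, which is the asserted formula with one more $\multimap Y_{T_1}$ wrapped around the $(N-1)/2-1$-comb; the parenthesization and the position of the ``innermost slot'' $Y_{T_{(N+1)/2}}\multimap Y_{T_{(N-1)/2}}$ follow from the associativity isomorphisms of Lemma~\ref{lemma:combs}(i).

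The main obstacle, and the place requiring care rather than cleverness, is bookkeeping the permutations and the nesting of $\multimap$'s so that the teeth $T_i$ end up in exactly the order displayed in the statement. Concretely: (a) translating between the ``tensor-of-complement'' normal form $p_{|T_1|}\otimes g'\otimes 1$ for $f^*$ and the internal-hom form for $f$ uses that first-order objects satisfy $(X\otimes Y)^*=X^*\otimes Y^*$ (noted after Lemma~\ref{lemma:1ordertensor}) and that $Y_T=\otimes_{j\in T}Y_j$ is again first order by Lemma~\ref{lemma:1ordertensor}, so the block $Y_{T_1}$ behaves like a single first-order object; (b) one must check the output-index bookkeeping, namely that $Y_i=X_i$ for $i\in O_f$ and $Y_i=\tilde X_i$ for $i\in I_f$ is consistent with $Y_i$ for $f^*$ being swapped, which is exactly Corollary~\ref{coro:pf_dual} / Proposition~\ref{prop:pfinput}(ii) saying the rank parity of a label block flips by one under complement, so a block that was an ``input tooth'' for $f$ becomes an ``output tooth'' for $f^*$ — matching the fact that the outermost $Y_{T_1}$ in the comb is an output while its neighbour $Y_{T_2}$ is an input; and (c) at each inductive step the symmetry $\sigma$ of the statement must be composed with the $\tau$ coming from Proposition~\ref{prop:Xf_const}(iv) and the swaps in $X\multimap W=(X\otimes W^*)^*$. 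None of this is conceptually hard given the machinery already in place, but it is the part where a careless index could produce the reversed chain $T_1,T_2,\dots$ versus $T_{N-1},T_{N-2},\dots$; I would fix conventions once (say, always peel from the bottom $S_1$ of the chain, which corresponds to the outermost output slot $Y_{T_1}$) and let the induction carry them through.
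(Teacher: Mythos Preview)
Your overall strategy (induction on $N$, dualize to $f^*$, strip free indices, apply the hypothesis to the smaller chain) is exactly the paper's, and your base case $N=3$ is fine. The inductive step, however, contains a concrete miscomputation that makes the argument break.

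You write that ``$f^*$; its chain is $\{S_2\subsetneq\dots\subsetneq S_N=[n]\}$''. This is not correct. Since $S_1=\emptyset\in\Pe_f$ and $S_N=[n]\in\Pe_f$, Corollary~\ref{coro:pf_dual} (or the direct computation $f^*=1-f+p_n$ with $f=1-p_{S_2}+\cdots-p_{S_{N-1}}+p_n$) gives
\[
f^*=\sum_{j=1}^{N-2}(-1)^{j-1}p_{S_{j+1}}=p_{S_2}-p_{S_3}+\cdots+p_{S_{N-1}},
\]
so $\Pe_{f^*}=\{S_2\subsetneq\dots\subsetneq S_{N-1}\}$: both endpoints are removed, not just the bottom one. (You could also have caught this from your own parity remark: a chain type must have an \emph{odd} number of elements, i.e.\ even length, whereas your $\{S_2,\dots,S_N\}$ has $N-1$ elements, which is even.) Consequently $f^*$ has free inputs $T_1=S_2$ \emph{and} free outputs $T_{N-1}=[n]\setminus S_{N-1}$, and your factorization $f^*\approx p_{|T_1|}\otimes g'\otimes 1_0$ is wrong: the correct one is $f^*\approx p_{n_2}\otimes g\otimes 1_{n-n_{N-1}}$ with $g$ a chain type on $[n_{N-1}-n_2]$ whose chain runs from $\emptyset$ to $[n_{N-1}-n_2]$ and has $N-2$ elements. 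Only this $g$ satisfies the induction hypothesis; your $g'$ (obtained by stripping $T_1$ alone) would still be missing its top element, so you could not invoke the hypothesis on it.

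Once you peel both ends, the dualization step produces two outer slots at once,
\[
X_f=X_{f^*}^*(\tilde X_1,\dots,\tilde X_n)\overset{\sigma}{\simeq}(Y_{T_{N-1}}\otimes \tilde X_g\otimes Y_{T_1}^*)^*\overset{\pi}{\simeq} Y_{T_{N-1}}\multimap \tilde X_g\multimap Y_{T_1},
\]
and the inductive hypothesis applied to $\tilde X_g$ fills in the middle as the $(N-3)/2$-comb on $T_2,\dots,T_{N-2}$. Your version adds only one outer $\multimap Y_{T_1}$ per step, which cannot reproduce the nested comb with its alternating input/output teeth; the two-sided peel is what makes the parity and the tooth ordering come out right. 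Everything you wrote about bookkeeping of $Y_i$ versus $\tilde X_i$ and the role of Lemma~\ref{lemma:combs} is fine and carries over unchanged once this is fixed.
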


\begin{proof} Let $Y_1,\dots,Y_n$ be as assumed, then  by \eqref{eq:chain_io}, 
\[
Y_{T_i}=\begin{dcases} \otimes_{j\in T_i}X_j,  & \text{ if } i\text{ is odd},\\
\otimes_{j\in T_i}\tilde X_i,& \text{ if }i\text{ is even}.
\end{dcases}
\]
We will proceed by induction on $N$.
Let $N=3$, then $f=1-p_{S_2}+p_{n}$, and we see from Example \ref{exm:type_channels}
that  $Y\overset{\sigma}{\simeq} Y_{T_2}\multimap  Y_{T_1}$. 
Assume the assertion is true for  $N-2$.  As in the proof of Proposition \ref{prop:chains}, we see that 
\[
f^*=\sum_{i=1}^{N-2}(-1)^{i-1}p_{S_{i+1}}\approx p_{n_2}\otimes g\otimes 1_{n-n_{N-1}}
\]
where $g\in \Te_{n_{N-1}-n_2}$ is the chain type for  a chain $\{\emptyset\subsetneq
\sigma(S_3\setminus S_2)\subsetneq \dots\subsetneq \sigma(S_{N-1}\setminus
S_{2})=[n_{N-1}-n_2]\}$, for some $\sigma\in \permut_n$ such that $\sigma(S_2)=[n_2]$
and $\sigma(T_{N-1})=\sigma([n]\setminus S_{N-1})=[n-n_{N-1}]$. 
 By Proposition \ref{prop:Xf_const}, we see that 
\[
X_f(X_1,\dots,X_n)=X_{f^*}^*(\tilde X_1,\dots, \tilde X_n)\overset{\sigma}\simeq (Y_{T_{N-1}}\otimes
\tilde X_g\otimes Y^*_{T_1})^*= Y_{T_{N-1}}\multimap \tilde
X_g\multimap Y_{T_1}
\]
where $\tilde X_g=X_g(\tilde X_{\sigma^{-1}(1)},\dots, \tilde X_{\sigma^{-1}(n_{N-1})})$. 
Since $g$ satisfies the induction assumption, and $\tilde {\tilde
X}_i=X_i$, we obtain 
\[
\tilde X_g\overset{\sigma'}{\simeq}
Y_{T_{N-2}}\multimap \dots\multimap (Y_{T_{\frac{N+1}2}}\multimap
Y_{T_{\frac{N-1}2}})\multimap \dots \multimap Y_{T_2},
\]
for some permutation $\sigma'$. Since $Y_{T_j}$ are first order objects, this proves the
result (see Example \ref{exm:chans_combs}).

\end{proof}

\medskip

 Using the above result and Example \ref{exm:free}, we find the quantum comb
corresponding to a general chain of odd length $N$ with possibly nontrivial free inputs
($T_0=S_1\ne \emptyset$) and free outputs ($T_N=[n]\setminus S_N\ne\emptyset$). The chain
and the comb are depicted below. Note that the causal ordering of the spaces in the comb goes down the chain, so
it is  opposite to the order of the chain.

\begin{center}
\begin{minipage}[c]{0.3\textwidth}
\centering
\includegraphics[scale=0.8]{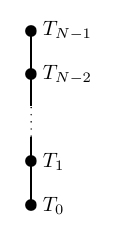}
\end{minipage}
\begin{minipage}[c]{0.5\textwidth}
\centering
\includegraphics[scale=0.8]{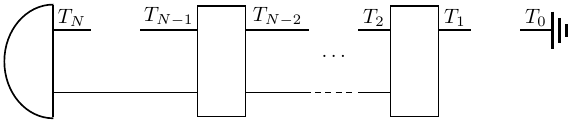}
\end{minipage}
\end{center}

\medskip

\subsection{Connecting chains: the causal product}

It is easy to see that two chains can be appended to create a single chain using the
ordinal sum, and any chain
of more than one elements can be decomposed as an ordinal sum of  chains 
(see Appendix \ref{app:poset} for the definitions). Such operations are trickier for
chain types, since the chains have to be of even length.

To motivate our next definition, let us consider the chain type $\gamma_6$ of length 6. It can
be easily seen that it can be composed of two chain types of length 2 and 4 as follows
\begin{equation}\label{eq:causal_motivate}
\gamma_6(s)=\gamma_2(s_1s_2)-\bar s_1\bar s_2+ \bar s_1\bar
s_2\gamma_4(s_3s_4s_5s_6)=((\gamma_2-p_2)\otimes
1_2+p_2\otimes \gamma_4)(s^1s^2),
\end{equation}
where $s^1=s_1s_2$ and $s^2=s_3\dots s_6$.
As we have noticed above, the causal ordering of the spaces in the corresponding comb goes
down the chain, so that the part of the chain corresponding to $\gamma_2$ cannot influence
the outputs in the part described by $\gamma_4$. Hence we can say that $\gamma_2$ is in the causal
future of $\gamma_4$. We will next extend the expression in
\eqref{eq:causal_motivate}  for
any pair of functions in $\Fe_n$.

For a fixed decomposition $[n]=[n_1]\boxplus[n_2]$ and functions
$f_1\in \Fe_{n_1}$, $f_2\in \Fe_{n_2}$, we define their 
causal product as 
\[
f_1\vartriangleleft f_2:=(f_1-p_{n_1})\otimes 1_{n_2}+p_{n_1}\otimes f_2.
\]
For  $s^1\in \{0,1\}^{n_1}$ and $s^2\in \{0,1\}^{n_2}$, this function acts as
\begin{equation}\label{eq:causal_product}
(f_1\vtl f_2)(s^1s^2)= f_1(s^1)-p_{n_1}(s^1)+p_{n_1}(s^1)f_2(s^2)=\begin{dcases} f_1(s^1), &
\text{ if } s^1\ne \theta_{n_1},\\
   f_2(s^2), & \text{ if } s^1=\theta_{n_1}.
   \end{dcases}
\end{equation}

\begin{remark}  As in the case of chains, we may interpret $f_1\vtl f_2$ as '$f_1$ is in
the causal future of $f_2$'. For first order objects $X_1,\dots,X_n$, the corresponding subspace has the form
\[
S_{f_1\vtl f_2}(X_1,\dots,X_n)= L_{f_1}(X_1,\dots,x_{n_1})\otimes V_{n_2}+a_{n_1}\otimes
S_{f_2}(X_{n_1+1},\dots,X_n), 
\]
with $V_{n_2}=\otimes_{i=1}^{n_2} V_{X_{n_1+i}}$ and $a_{n_1}=\otimes_{j=1}^{n_1}
a_{X_j}$. Notice that in the case of quantum objects, this corresponds to the one-way
signalling composition $\succ$ defined in \cite{hoffreumon2026projective} for the
corresponding orthogonal projections, i.e. denoting by $P_f$ the projection onto $S_f$, we
have $P_{f_1\vtl f_2}=P_{f_1}\succ P_{f_2}$. 

\end{remark}

The following properties are immediate from \eqref{eq:causal_product}. See also
\cite{hoffreumon2026projective} for the corresponding properties of $\succ$.  

\begin{lemma}\label{lemma:causal_product}
Let $f_1,g_1\in \Fe_{n_1}$, $f_2,g_2\in \Fe_{n_2}$. Then $f_1\vartriangleleft f_2\in \Fe_{n_1+n_2}$ and we
have 
\begin{enumerate}
\item[(i)] $(f_1\vtl f_2)^*=f_1^*\vtl f_2^*$,
\item[(ii)]$(f_1\vee g_1)\vtl (f_2\vee g_2)=(f_1\vtl f_2)\vee ( g_1\vtl g_2)=(f_1\vtl
g_2)\vee (g_1\vtl f_2)$,
\item[(iii)] $(f_1\wedge g_1)\vtl (f_2\wedge g_2)=(f_1\vtl f_2)\wedge ( g_1\vtl g_2)=(f_1\vtl
g_2)\wedge (g_1\vtl f_2)$.
\end{enumerate}
Moreover, for any $f_3\in \Fe_{n_3}$, and for the decomposition $[n]=[n_1]\boxplus
[n_2]\boxplus [n_3]$, we have 
\[
(f_1\vtl f_2)\vtl f_3=f_1\vtl (f_2\vtl f_3).
\]
\end{lemma}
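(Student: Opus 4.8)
The plan is to prove each property directly from the pointwise description \eqref{eq:causal_product}, which expresses $(f_1 \vtl f_2)(s^1s^2)$ as a case split on whether $s^1 = \theta_{n_1}$. Since all the Boolean operations involved ($\vee$, $\wedge$, complementation in $\Fe$) are computed pointwise, and the causal product is itself defined by a pointwise case split, every identity reduces to checking agreement on each string $s^1 s^2 \in \{0,1\}^{n_1+n_2}$ (or $\{0,1\}^{n_1+n_2+n_3}$ for associativity), in each branch of the case split. That $f_1 \vtl f_2 \in \Fe_{n_1+n_2}$ follows because on the string $\theta_{n_1}\theta_{n_2} = \theta_{n_1+n_2}$ the value is $f_2(\theta_{n_2}) = 1$ (as $f_2 \in \Fe_{n_2}$), and membership in $\Fe$ (see Appendix \ref{app:functions}) is exactly the normalization condition $g(\theta) = 1$; alternatively one reads it off the formula $f_1 \otimes 1_{n_2} + p_{n_1}\otimes(f_2 - 1_{n_2})$, noting $1_{n_1}\otimes 1_{n_2} = 1_{n_1+n_2}$ and $p_{n_1}\otimes g \in \Fe$ whenever $g \in \Fe_{n_2}$.

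For (i): recall that complementation on $\Fe$ sends $g \mapsto g^* = 1 - g + p$ where $p$ is the bottom element, so on strings $s \ne \theta$ we have $g^*(s) = 1 - g(s)$ while $g^*(\theta) = 1$. Hence for $s^1 \ne \theta_{n_1}$, $(f_1 \vtl f_2)^*(s^1 s^2) = 1 - f_1(s^1) = f_1^*(s^1) = (f_1^* \vtl f_2^*)(s^1 s^2)$, using $f_1^* \ne $ bottom so the first branch of \eqref{eq:causal_product} applies to $f_1^* \vtl f_2^*$ as well; and for $s^1 = \theta_{n_1}$, both sides equal $f_2^*(s^2)$ (this covers $s^1 s^2 = \theta$ too, where both give $1$). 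For (ii) and (iii), which are dual to each other under (i) together with de Morgan, it suffices to prove (ii). Fix $s^1 s^2$. If $s^1 \ne \theta_{n_1}$, then all four displayed expressions evaluate at $s^1 s^2$ to a join of the form $f_1(s^1) \vee g_1(s^1)$ in the first two cases and $f_1(s^1) \vee g_1(s^1)$ again in the third (since $f_1 \vtl g_2$ and $g_1 \vtl f_2$ both read off their left argument on $s^1 \ne \theta_{n_1}$), noting that $(f_1 \vee g_1) \ne$ bottom unless both are, and handling that degenerate case separately is harmless since then $f_1(s^1) = g_1(s^1) = 0$ for $s^1 \ne \theta_{n_1}$. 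If $s^1 = \theta_{n_1}$, all four expressions evaluate to $f_2(s^2) \vee g_2(s^2)$. So the three expressions agree pointwise, hence as elements of $\Fe_{n_1+n_2}$.

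For associativity: fix $s^1 s^2 s^3$. Compute $(f_1 \vtl f_2) \vtl f_3$ at this point: it equals $(f_1 \vtl f_2)(s^1 s^2)$ if $s^1 s^2 \ne \theta_{n_1+n_2}$, i.e.\ if not both $s^1 = \theta_{n_1}$ and $s^2 = \theta_{n_2}$, and equals $f_3(s^3)$ otherwise. Unfolding the inner causal product, this is $f_1(s^1)$ when $s^1 \ne \theta_{n_1}$; $f_2(s^2)$ when $s^1 = \theta_{n_1}$ but $s^2 \ne \theta_{n_2}$; and $f_3(s^3)$ when $s^1 = \theta_{n_1}$ and $s^2 = \theta_{n_2}$. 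The same three-way split applied to $f_1 \vtl (f_2 \vtl f_3)$ gives the identical values, since $(f_2 \vtl f_3)(s^2 s^3) = f_2(s^2)$ for $s^2 \ne \theta_{n_2}$ and $= f_3(s^3)$ for $s^2 = \theta_{n_2}$, and the outer product reads $f_1(s^1)$ whenever $s^1 \ne \theta_{n_1}$. Thus both sides coincide pointwise.

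**Main obstacle.** There is no serious obstacle: everything is a routine pointwise verification once \eqref{eq:causal_product} is in hand. The only point demanding a little care is the bookkeeping around the "first branch" of \eqref{eq:causal_product}: in identities (ii), (iii) the expressions $f_1 \vee g_1$, $f_1 \wedge g_1$, $f_1^*$ etc.\ may equal the bottom element $p_{n_1}$ of $\Fe_{n_1}$ only in trivial situations, but since the case split in \eqref{eq:causal_product} keys on the string $\theta_{n_1}$ rather than on which function is bottom, one should phrase the branches in terms of $s^1 = \theta_{n_1}$ versus $s^1 \ne \theta_{n_1}$ throughout and never worry about which element is the bottom one — this keeps the argument uniform.
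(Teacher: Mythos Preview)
Your proposal is correct and follows exactly the approach the paper indicates: the paper's proof consists of the single sentence ``The following properties are immediate from \eqref{eq:causal_product},'' and you have simply spelled out the pointwise case split that this sentence leaves implicit. One small remark: your aside ``using $f_1^* \ne$ bottom so the first branch of \eqref{eq:causal_product} applies'' is unnecessary (as you yourself note under ``Main obstacle''), since the case split in \eqref{eq:causal_product} depends only on whether $s^1=\theta_{n_1}$, not on which function is plugged in.
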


We can also combine $f_1$ and $f_2$ in the opposite order:
\[
f_2\vtl f_1: =1_{n_1}\otimes (f_2-p_{n_2})+f_1\otimes p_{n_2},
\]
so that
\begin{equation}\label{eq:causal_product_op}
(f_2\vtl f_1)(s^1s^2)=f_2(s^2)-p_{n_2}(s^2)+p_{n_2}(s^2)f_1(s^1)=\begin{dcases} f_2(s^2), & \text{ if }
s^2\ne \theta_{n_2},\\
   f_1(s^1), & \text{ if } s^2=\theta_{n_2}.
   \end{dcases}
\end{equation}
Of course, this product has similar properties as listed in the above lemma.
To avoid any confusion, we have to bear in mind the fixed decomposition $[n]=[n_1]\boxplus
[n_2]$ and that for the concatenation $s=s^1s^2$, $f_i$ acts on $s^i$. 

\begin{lemma}\label{lemma:causal_tensor} In the situation as above, we have
\[
f_1\otimes f_2 = (f_1\vtl f_2)\wedge (f_2\vtl f_1),\qquad  f_1\parr f_2 = (f_1\vtl f_2)\vee
(f_2\vtl f_1).
\]
\end{lemma}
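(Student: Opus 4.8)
The plan is to verify the identity pointwise on an arbitrary input $s = s^1 s^2$ with $s^1 \in \{0,1\}^{n_1}$ and $s^2 \in \{0,1\}^{n_2}$, using the explicit formulas \eqref{eq:causal_product} and \eqref{eq:causal_product_op} for the two causal products. First I would recall that $f_1 \otimes f_2$ evaluated at $s^1 s^2$ is simply the product $f_1(s^1) f_2(s^2)$, which (since these are Boolean values) equals $\min(f_1(s^1), f_2(s^2))$, i.e.\ the value of $f_1(s^1) \wedge f_2(s^2)$. So the claim reduces to showing
\[
(f_1\vtr f_2)(s^1s^2)\wedge (f_2\vtr f_1)(s^1s^2) = f_1(s^1)\wedge f_2(s^2)
\]
for all $s^1, s^2$.

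The main step is then a case analysis on whether $s^1 = \theta_{n_1}$ and whether $s^2 = \theta_{n_2}$. Reading off \eqref{eq:causal_product_op}, $(f_1 \vtr f_2)(s^1 s^2) = f_2(s^2)$ when $s^2 \neq \theta_{n_2}$ and equals $f_1(s^1)$ when $s^2 = \theta_{n_2}$; symmetrically $(f_2 \vtr f_1)(s^1 s^2) = f_1(s^1)$ when $s^1 \neq \theta_{n_1}$ and equals $f_2(s^2)$ when $s^1 = \theta_{n_1}$. In the generic case $s^1 \neq \theta_{n_1}$, $s^2 \neq \theta_{n_2}$ the two factors are $f_2(s^2)$ and $f_1(s^1)$, whose meet is exactly $f_1(s^1) \wedge f_2(s^2)$. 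In the case $s^1 = \theta_{n_1}$, $s^2 \neq \theta_{n_2}$ both factors equal $f_2(s^2)$, and since $f_1(\theta_{n_1}) = 1$ (all functions in $\Fe_{n_1}$ satisfy $f_1(\theta_{n_1}) = 1$), we have $f_1(s^1) \wedge f_2(s^2) = 1 \wedge f_2(s^2) = f_2(s^2)$, matching. The case $s^1 \neq \theta_{n_1}$, $s^2 = \theta_{n_2}$ is symmetric, giving $f_1(s^1)$ on both sides. Finally, when $s^1 = \theta_{n_1}$ and $s^2 = \theta_{n_2}$, every term equals $1$, and indeed $f_1(\theta_{n_1}) \wedge f_2(\theta_{n_2}) = 1$.

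Since the displayed identity holds at every point, the two Boolean functions coincide, which is the assertion. I do not anticipate a real obstacle here: the only subtlety is bookkeeping the correct branch of \eqref{eq:causal_product} versus \eqref{eq:causal_product_op} for $f_1 \vtr f_2$ and $f_2 \vtr f_1$ respectively, and remembering to invoke $f_i(\theta_{n_i}) = 1$ in the mixed cases — this normalization is what makes the "default to the other function on $\theta$" behaviour glue together into an honest meet. (One could alternatively give an algebra-of-polynomials argument by expanding $f_1 \vtr f_2$ and $f_2 \vtr f_1$ via their definitions $1_{n_1}\otimes f_2 + (f_1 - 1_{n_1})\otimes p_{n_2}$ etc.\ and multiplying out, but the pointwise check is cleaner and transparently correct.)
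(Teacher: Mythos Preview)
Your proof is correct and follows the same strategy as the paper: a direct pointwise verification using the explicit formulas \eqref{eq:causal_product} and \eqref{eq:causal_product_op}. The only cosmetic difference is that the paper carries out the ``algebra-of-polynomials'' route you mention at the end --- it multiplies the two factors and simplifies using $f_i(s^i)(1-f_i(s^i))=0$ and $p_{n_i}(s^i)(f_i(s^i)-1)=0$ --- whereas you do the equivalent four-case split on whether $s^i=\theta_{n_i}$; both amount to the same verification.
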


\begin{proof} The first equality follows again by straightforward computation from \eqref{eq:causal_product}
and \eqref{eq:causal_product_op}: let
$s^1\in \{0,1\}^{n_1}$, $s^2\in \{0,1\}^{n_2}$ and compute
\begin{align*}
(f_1\vtl f_2)\wedge (f_2\vtl
f_1)(s^1s^2)&=\left(f_1(s^1)+p_{n_1}(s^1)(f_2(s^2)-1)\right)\left(f_2(s^2)+p_{n_2}(s^2)(f_1(s^1)-1)\right)\\
=f_1(s^1)f_2(s^2),
\end{align*}
here the last equality follows from the fact that $f_i(s^i)(1-f_i(s^i))=0$ (since $f_i(s^i)\in
\{0,1\}$) and the fact that $p_{n_1}$ is the least element in $\Fe_{n_1}$, so that
$p_{n_1}(s^1)(f_1(s^1)-1)=p_{n_1}(s^1)-p_{n_1}(s^1)=0$. 

 The second equality in the statement follows easily from the first, using
the definition of $\parr$ in \eqref{eq:notations}, Lemma \ref{lemma:causal_product} (i) and
the de Morgan laws in $\Fe_n$. 
\end{proof}

Using the last part of Lemma \ref{lemma:causal_product},
for a decomposition $[n]=\boxplus_i[n_i]$ and $f_i\in \Fe_{n_i}$,  we may define the function $f_1\vtl\dots \vtl f_k\in
\Fe_{n}$. Note that we have for $s=s^1\dots s^k$, 
\begin{align*}
(f_1\vtl \dots\vtl f_k)(s)&=(f_1-p_{n_1})(s^1)+p_{n_1}(s^1)(f_2-p_{n_2})(s^2)+\dots + 
 p_{n-n_k}(s^1\dots s^{k-1})f_k(s^k)\\
&= \begin{dcases} f_1(s_1) & \text{ if } s^1\ne \theta_{n_1}\\
f_2(s^2) & \text{ if } s^1=\theta_{n_1}, s^2\ne \theta_{n_2}\\
\dots & \\
f_k(s^k) & \text{ if } s^1=\theta_{n_1},\dots,  s^{k-1}=\theta_{n_{k-1}}.
\end{dcases}
\end{align*}
For any permutation $\pi\in \permut_k$, we define  $f_{\pi^{-1}(1)}\vtl \dots \vtl
f_{\pi^{-1}(k)}\in \Fe_n$ in an obvious way.

For the smallest and the largest element in $\Fe_n$, the causal product behaves as
follows.
\begin{lemma}\label{lemma:onechain_causal}
Let  $f\in \Fe_{n_1}$ and let $n_2\in \mathbb N$. Then for the decomposition
$[n_1+n_2]=[n_1]\boxplus [n_2]$,  
\[
f\vtl 1_{n_2}= f\otimes 1_{n_2}\le 1_{n_2}\vtl f\approx  (f^*\to 1_{n_2})
\]
and
\[
p_{n_2} \vtl f= f \otimes p_{n_2}\le f\vtl p_{n_2} \approx (1_{n_2}\to f).
\]
In particular,
\[
1_{n_1}\vtl p_{n_2}\approx 1_{n_2}\to 1_{n_1}=1-p_{[n_1]}+p_{n_1+n_2}
\]
is the chain type for $\{\emptyset\subsetneq [n_1]\subsetneq [n_1+n_2]\}$. Similar
properties hold for the decomposition $[n_1+n_2]=[n_2]\boxplus[n_1]$.
\end{lemma}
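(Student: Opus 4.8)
The plan is to verify all the stated identities by direct pointwise evaluation on strings $s = s^1 s^2 \in \{0,1\}^{n_1+n_2}$, using the defining formula \eqref{eq:causal_product} for $\vtl$ together with the fact that $1_{n_2}$ and $p_{n_2}$ are the top and bottom elements of $\Fe_{n_2}$. First I would compute $f \vtl 1_{n_2}$: by \eqref{eq:causal_product} it equals $f(s^1)$ when $s^1 \ne \theta_{n_1}$ and $1_{n_2}(s^2) = 1$ when $s^1 = \theta_{n_1}$; but $f(\theta_{n_1}) = 1$ since $f \in \Fe_{n_1}$, so this is just $f(s^1) = (f\otimes 1_{n_2})(s)$, giving the first equality. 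For the reversed product $1_{n_2} \vtl f$, formula \eqref{eq:causal_product_op} gives value $1_{n_2}(s^2)=1$ whenever $s^2 \ne \theta_{n_2}$ and $f(s^1)$ when $s^2 = \theta_{n_2}$; since $f(s^1) \le 1$ always, we get $f \otimes 1_{n_2} \le 1_{n_2}\vtl f$ pointwise, hence as functions in $\Fe_{n_1+n_2}$.

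The analogous pair $p_{n_2}\vtl f = f\otimes p_{n_2} \le f \vtl p_{n_2}$ is obtained the same way — using $f(\theta_{n_1})=1$ to collapse $f\vtl p_{n_2}$ on the branch $s^1=\theta_{n_1}$, and $p_{n_2}(s^2) \le 1$ to get the inequality — and I would simply note this, since it is the mirror computation. For the ``in particular'' clause, I would instantiate $f = 1_{n_1}$ in $1_{n_2}\vtl f$ read with the decomposition $[n_1+n_2]=[n_1]\oplus[n_2]$ (equivalently apply Lemma \ref{lemma:causal_product}(i) to $p_{n_1}\otimes 1_{n_2} = p_{n_1}\vtl 1_{n_2}$, whose dual is $1_{n_1}\vtl p_{n_2}$, using $p_{n_1}^* = 1_{n_1}$ and $1_{n_2}^* = p_{n_2}$). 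Then $(1_{n_1}\vtl p_{n_2})(s) = 1$ if $s^1 \ne \theta_{n_1}$ and $p_{n_2}(s^2) = \prod_{j}(1-s_{n_1+j})$ if $s^1 = \theta_{n_1}$; comparing with $p_{[n_1]} = \prod_{j\in[n_1]}(1-s_j)$ and $p_{n_1+n_2} = \prod_{j\in[n_1+n_2]}(1-s_j)$ one checks $1 - p_{[n_1]} + p_{n_1+n_2}$ takes exactly these values, which also exhibits it via Theorem \ref{thm:graded} as $f_\Ce$ for the chain $\{\emptyset \subsetneq [n_1]\subsetneq [n_1+n_2]\}$ (ranks $0,1,2$, signs $+,-,+$).

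There is no real obstacle here; the only point requiring a little care is bookkeeping the two opposite decompositions $[n_1+n_2]=[n_1]\oplus[n_2]$ versus $[n_2]\oplus[n_1]$ and remembering which factor the functions $f_1,f_2$ act on in a concatenation $s^1 s^2$, exactly as flagged in the remark preceding the lemma. The statement about the decomposition $[n_1+n_2]=[n_2]\oplus[n_1]$ follows by applying the symmetry $\sigma \in \permut_{n_1+n_2}$ swapping the two blocks together with Proposition \ref{prop:Xf_const}(iv)/Lemma \ref{lemma:Pf}(i) at the level of functions, i.e. $f \vtl_{[n_2]\oplus[n_1]} p_{n_2}$ is the relabelling of $f \vtl_{[n_1]\oplus[n_2]} p_{n_2}$, so I would just say ``similarly''.
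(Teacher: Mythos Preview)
Your proposal is correct and follows essentially the same approach as the paper: direct computation from the definition of the causal product. The only cosmetic difference is that for the inequalities $f\otimes 1_{n_2}\le 1_{n_2}\vtl f$ and $f\otimes p_{n_2}\le f\vtl p_{n_2}$ the paper invokes Lemma~\ref{lemma:causal_tensor} (since $f_1\otimes f_2=(f_1\vtl f_2)\wedge(f_2\vtl f_1)$ immediately gives $f_1\otimes f_2\le f_2\vtl f_1$), whereas you verify them by the same pointwise case split you used for the equalities; both amount to the same computation.
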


\begin{proof}
Immediate from the definition of the causal product,  Lemma
\ref{lemma:causal_product} (i) and Lemma \ref{lemma:causal_tensor}.

\end{proof}

Note that if $g=1_k\vtl f$, the elements in the set $[k]$ are output indices of $g$ that
are in the causal future of $f$, so that we may interpret $1_k\vtl f$ as adding a global
future to the maps represented by $f$. Similarly, $f\vtl p_k$ may be interpreted as adding
a global past. Using  Lemma \ref{lemma:onechain_causal} and currying (Lemma
\ref{lemma:combs} (i)), we
obtain
\begin{align*}
1_{k}\vtl f \vtl p_{l}&\approx 1_{l}\to (1_{k}\vtl f)\approx 1_{l}\to (f^*\to
1_{k})\approx f^*\to (1_{l}\to 1_{k}).
\end{align*}

\begin{exm}\label{exm:pm_past_future} Let $f\in \Te_4$ be the process matrix type,
$f=(\gamma_2\otimes\gamma_2)^*$ (see Example \ref{exm:T4}, part 3.). By the above computation,
we obtain
\[
1_{1}\vtl (\gamma_2\otimes \gamma_2)^* \vtl p_{1}\approx (\gamma_2\otimes\gamma_2)\to
\gamma_2,
\]
which describes process matrices with global past and future. In the diagrams
below, $(\gamma_2\otimes \gamma_2)^*$ acts on $s_1\dots s_4$, $1_1$ acts on $s_5$ and $p_1$ acts
on $s_6$:

\begin{center}
\begin{minipage}[b]{0.45\textwidth}
\centering
\includegraphics[scale=0.8]{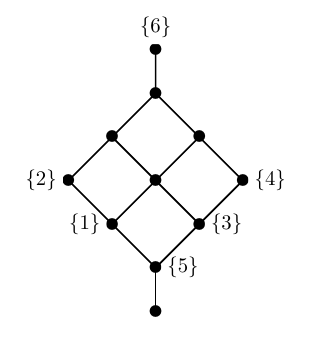}
\vskip -2mm
{\footnotesize $1_1\vtl(\gamma_2\otimes \gamma_2)^*\vtl p_1$}
\end{minipage}%
\begin{minipage}[b]{0.45\textwidth}

\centering
\includegraphics[scale=1]{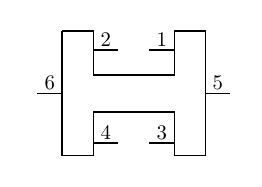}

{\footnotesize Process matrix with global past and future }
\end{minipage}%

\end{center}

\end{exm}

\begin{remark}[Type extensions] \label{rem:extensions} Observe that the causal product 
 $1_1\vtl f$ exactly matches the inductive definition of an  extension of the type $f$ by an elementary 
 type, introduced in \cite[Def. 2]{bisio2019theoretical}. Indeed, the elementary types of
 \cite{bisio2019theoretical} correspond to the (quantum) first order objects, which are 
 obtained from the type function $1_1$.  If $f=1_1$, then $1_1\vtl f=1_1\otimes
 1_1=1_{2}$ is the tensor product of two elementary types, 
and if $f\approx g\to h$ for some type functions $g$
and $h$, then we have 
\[
1_1\vtl f\approx 1_1\vtl(g\to h)\approx (g\otimes h^*)\to 1_1\approx g\to (h^*\to 1_1)\approx
g\to (1_1\vtl h),
\]
where we used Lemma \ref{lemma:onechain_causal} and currying. Similarly, $p_1\vtl
p_1=p_{2}$ and 
\[
f\vtl p_1\approx (g\to h)\vtl p_1\approx 1_1\to (g\to h)\approx g\to (1_1\to h)\approx
g\to (h\vtl p_1),
\]
so $f\vtl p_1$ can be understood as a dual extension. We can now prove that the
higher order maps  are complete, in the following sense: For any $f\in \Te_n$ and $g\in
\Te_k$, any collections  of first order objects $X_0,X_1,\dots,X_n,X_{n+1}$ and $Y_1,\dots, Y_k$, and any morphism
$X_f(X_1,\dots,X_n)\overset{\Phi}{\to} X_g(Y_1,\dots,Y_n)$, we have
\begin{equation}\label{eq:complete}
X_{1_1\vtl f\vtl p_1}(X_0,\dots,X_{n+1})\overset{id\otimes \Phi\otimes id}{\longrightarrow}X_{1_1\vtl
g\vtl p_1}(X_0,Y_1\dots,Y_k,X_{n+1})
\end{equation}
Indeed, note that by Lemmas \ref{lemma:causal_product} (i) and
\ref{lemma:onechain_causal}, we have $(1_1\vtl f\vtl p_1)^*=p_1\otimes f^*\otimes 1_1$. 
Taking adjoints and applying Proposition \ref{prop:Xf_const},  \eqref{eq:complete} is equivalent to 
\[
\tilde X_0^*\otimes X_{g^*}(\tilde Y_1,\dots, \tilde Y_k)\otimes \tilde
X_{n+1}\overset{id\otimes \Phi^*\otimes id}{\longrightarrow} \tilde X_0^*\otimes
X_{f^*}(\tilde X_1,\dots, \tilde X_k)\otimes \tilde
X_{n+1},
\]
 which is quite easy to see.

\end{remark}

We will next show that the causal product is related to the ordinal sum $\star$ of the corresponding
posets.

\begin{prop}\label{prop:vtl_ordinal} Let $f\in \Te_{n}$, $g\in \Te_{m}$
and
consider the decomposition $[n+m]=[n]\boxplus[m]$. Replace the labels of $\Pe_g$ by their
translations $L_S\mapsto n+L_S=\{n+i,\ i\in L_S\}$.
Then  
\begin{enumerate}
\item[(a)] If $[n]\in \Pe_f$ and $\emptyset \in \Pe_g$, then $\Pe_{f\vtl g}=\Pe_f\star
(\Pe_g\setminus\{\emptyset\})$, with all labels remaining the same.
\item[(b)] If $[n]\in \Pe_f$ and $\emptyset \notin \Pe_g$, then $\Pe_{f\vtl
g}=(\Pe\setminus \{[n]\})\star \Pe_g$, where the labels of $[n]$ are added to the
labels of elements in $\mathrm{Min}(\Pe_g)$.
\item[(c)] If $[n]\notin \Pe_f$ and $\emptyset \in \Pe_g$, then $\Pe_{f\vtl g}=\Pe_f\star
(\Pe_g\setminus\{\emptyset\})$, where the free outputs  of $f$ are added to the label sets
of elements in $\mathrm{Min}(\Pe_g\setminus \emptyset)$.
\item[(d)] If $[n]\notin \Pe_f$ and $\emptyset \notin \Pe_g$, then $\Pe_{f\vtl
g}=\Pe_f\star \{\bullet\}\star \Pe_g$, where $\{\bullet\}$ is a one-element poset with
label $L_\bullet=O_f^F$.

\end{enumerate}

\end{prop}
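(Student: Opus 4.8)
The plan is to prove the four cases by directly computing the Möbius transform of $f \vtl g$ from the defining identity $f_1 \vtl f_2 = f_1 \otimes 1_{n_2} + p_{n_1} \otimes (f_2 - 1_{n_2})$, and then reading off $\Pe_{f\vtl g}$ and its labels. Write $f = \sum_{S \in \Pe_f} \hat f_S\, p_S$ and $g = \sum_{T \in \Pe_g} \hat g_T\, p_T$ (by Theorem \ref{thm:mobius}), where by Theorem \ref{thm:graded} all coefficients are $\pm 1$. Using Lemma \ref{lemma:PSPT} we have $p_S \otimes p_T = p_{S \oplus T}$ (with the translated indexing $T \mapsto n+T$ understood throughout). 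The key computational step is to expand
\[
f \vtl g = f \otimes 1_{m} + p_{n} \otimes g - p_{n} \otimes 1_{m}.
\]
Now $f \otimes 1_m = \sum_{S \in \Pe_f} \hat f_S\, p_{S}$ (as a subset of $[n+m]$, i.e. $p_{S\oplus \emptyset}$), $p_n \otimes g = \sum_{T \in \Pe_g} \hat g_T\, p_{[n] \oplus T}$, and $p_n \otimes 1_m = p_{[n]}$. Collecting terms: the coefficient of $p_{[n]}$ becomes $\hat f_{[n]} + \hat g_{\emptyset} - 1$, where we read $\hat f_{[n]} = 0$ if $[n] \notin \Pe_f$ and $\hat g_{\emptyset} = 0$ if $\emptyset \notin \Pe_g$. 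Since in the nondegenerate cases $\hat f_{[n]} = (-1)^{r(f)} = 1$ (rank is even, $[n]$ being the top of $\Pe_f$) and $\hat g_{\emptyset} = 1$ (bottom of $\Pe_g$, rank $0$), this coefficient is $1$, $0$, $0$, or $-1$ in cases (a), (b), (c), (d) respectively — which immediately predicts whether $[n]$ appears in $\Pe_{f\vtl g}$. All other subsets $U \subseteq [n+m]$ receive at most one contribution: if $U \subsetneq [n]$ then only from $f \otimes 1_m$ (coefficient $\hat f_U$), and if $U \supseteq [n]$, $U \neq [n]$, then only from $p_n \otimes g$ (coefficient $\hat g_{U \setminus [n]}$); no $U$ that is neither below nor above $[n]$ can occur, since both $p_S$ and $p_{[n]\oplus T}$ have supports comparable to $[n]$. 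Hence as a bare poset $\Pe_{f\vtl g}$ is the ordinal sum of $\Pe_f$ (with its top possibly deleted) and $\Pe_g$ (with its bottom possibly deleted), glued by the single vertex $[n]$ exactly when its coefficient is nonzero; this yields $\Pe_f \star (\Pe_g \setminus \{\emptyset\})$, $(\Pe_f \setminus \{[n]\}) \star \Pe_g$, etc., matching the four displayed formulas.

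The remaining work is the labels. Recall $L_{U} = \{ i : i \in U,\ i \notin U' \text{ for all } U' \subsetneq U \text{ in } \Pe_{f\vtl g}\}$. For a vertex $U \subsetneq [n]$ coming from $\Pe_f$, its down-set in $\Pe_{f\vtl g}$ is exactly its down-set in $\Pe_f$ (nothing below $[n]$ is new), so $L_{U, f\vtl g} = L_{U,f}$ — labels of the $f$-part are unchanged, consistent with all four cases. For a vertex $[n] \oplus T$ with $T \in \Pe_g \setminus \{\emptyset\}$ coming from the $g$-part, its down-set consists of all $[n] \oplus T'$ with $T' \subseteq T$ in $\Pe_g$, together with (when $[n] \in \Pe_{f\vtl g}$) the vertex $[n]$ and everything below it. So if $[n]$ is retained (case (a)), then for $T$ not minimal in $\Pe_g$ the set below $[n]\oplus T$ already contains $[n]$, and its ``new'' indices are exactly $L_{T,g}$ translated; even for $T \in \mathrm{Min}(\Pe_g \setminus\{\emptyset\})$, everything in $[n]$ lies below, so no extra indices appear — labels stay the same. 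If $[n]$ is deleted (cases (b), (c), (d)), then a minimal element $T$ of the retained $g$-part acquires as new labels, in addition to $n + L_{T,g}$, precisely those indices of $[n+m]$ that were labels of the just-removed top piece and are not already below: in case (b) these are the labels of $[n]$ in $\Pe_f$; in case (c), where $[n] \notin \Pe_f$ to begin with, the removed piece is conceptually the ``$\hat 1$'' above $\Pe_f$ and the orphaned indices are exactly the free outputs $O_f^F = [n] \setminus \bigcup_{S \in \Pe_f} L_S$; in case (d) the single inserted vertex $\{\bullet\}$ sits above all of $\Pe_f$ and below all of $\Pe_g$, and the indices not labelling anything in $\Pe_f$ — again $O_f^F$ — must label $\bullet$, since every index of $[n]$ must appear as a label somewhere (as $S = \bigcup_{S' \le S} L_{S'}$ forces $[n+m]$ to be covered). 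Here I would lean on Corollary \ref{coro:pf_dual} and Corollary \ref{coro:Pf}(ii) to handle the free outputs cleanly, and on Lemma \ref{lemma:causal_product}(i) to reduce, e.g., case (c) to case (b) by duality ($(f\vtl g)^* = f^* \vtl g^*$ swaps presence of $[n]$ in $\Pe_f$ with presence of $\emptyset$ in $\Pe_{f^*}$).

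I expect the label bookkeeping in cases (c) and (d) to be the main obstacle: one must argue carefully that the only indices of $[n]$ that get ``reassigned'' to minimal elements of the $g$-part (or to $\bullet$) are the free outputs of $f$, and that they indeed all attach there and nowhere else. The clean way to see this is to note that $i \in O_f^F$ iff $f(e^i) = 1$ with $i$ appearing in no $\hat f$-support set, so in $f \vtl g$ the value at strings supported near $i$ is governed entirely by the $g$-block once $s^1 = \theta_{n_1}$; combined with Proposition \ref{prop:fh_outputs} and Proposition \ref{prop:pfinput}(i) this pins down which $U \in \Pe_{f\vtl g}$ has $i$ in its label. An alternative, perhaps shorter, route for the whole proposition is induction: reduce to $f, g$ each a tensor product or a dual of one, use Lemma \ref{lemma:causal_product}(i)–(iii) together with Corollary \ref{coro:Pf}, and invoke the fact that $\vtl$ distributes over $\wedge$ and $\vee$; but the direct Möbius computation above is more transparent and I would present that, relegating the inductive check only to the associativity-type bookkeeping if needed.
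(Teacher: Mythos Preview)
Your proposal is correct and follows essentially the same route as the paper: expand $f\vtl g$ via the defining identity, collect the M\"obius coefficients, and observe that the only interesting term is the coefficient of $p_{[n]}$, namely $\hat f_{[n]}+\hat g_\emptyset-1$, which takes the value $1$, $0$, $0$, $-1$ in cases (a)--(d). The paper writes exactly this one-line expansion and then declares the rest ``immediate''; your additional discussion of how the labels redistribute (and the suggested duality shortcut via $(f\vtl g)^*=f^*\vtl g^*$) is more explicit than the paper but not a different argument.
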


\begin{proof} By definition of the causal product, we have  
\[
f\vtl g=\sum_{S\in \Pe_f\setminus\{[n]\}} \hat f_Sp_S+ (\hat f_{[n]}-1 + \hat
g_\emptyset)p_{[n]}+\sum_{T\in \Pe_g\setminus \{\emptyset\}} \hat g_T p_{[n]\boxplus T}.
\]
The term in brackets can be equal to 1, -1, or 0, depending on whether $[n]\in \Pe_f$ and
$\emptyset\in \Pe_g$. The statement is now immediate.

\end{proof}

\begin{exm}\label{exm:causal}
The following Hasse diagrams show some examples of the causal products in cases (a)-(d) of
Proposition \ref{prop:vtl_ordinal}. All type functions are of the form $\beta\vtl f$ for a
chain type $\beta$.
By  Proposition
\ref{prop:append_chain_f} below, all the results are type functions. In the diagrams, the
indices belonging to $\beta$ are in green and yellow color. The color of the other indices
indicate the signalling relations.

\begin{center}
\begin{minipage}[c]{0.25\textwidth}
\centering
{\footnotesize (a)}
\vskip 10pt

\includegraphics[scale=0.7]{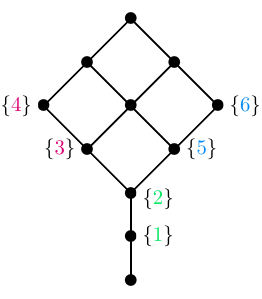}

{\footnotesize $\gamma_2\vtl (\gamma_2\otimes \gamma_2)$}
\end{minipage}%
\begin{minipage}[c]{0.25\textwidth}
\centering
{\footnotesize (b)}
\vskip 15pt

\includegraphics[scale=0.7]{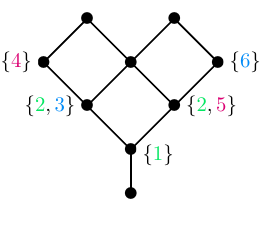}
\vskip 20pt

{\footnotesize $\gamma_2\vtl (\gamma_2\otimes \gamma_2)^*$}
\end{minipage}%
\begin{minipage}[c]{0.25\textwidth}
\centering
{\footnotesize (c)}
\vskip 10pt

\includegraphics[scale=0.7]{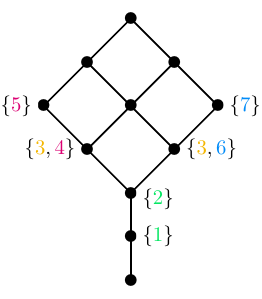}

{\footnotesize $\gamma_3^*\vtl (\gamma_2\otimes \gamma_2)$}
\end{minipage}%
\begin{minipage}[c]{0.25\textwidth}
\centering
{\footnotesize (d)}
\vskip 8pt

\includegraphics[scale=0.7]{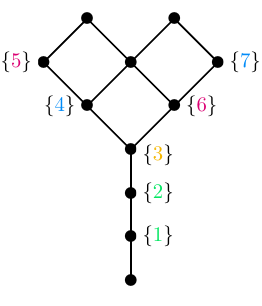}

{\footnotesize $\gamma^*_3\vtl(\gamma_2\otimes\gamma_2)^*$}

\end{minipage}
\end{center}

To identify maps of these types, note that from $f\le p_{O}^*$ with $O=O_f$, we get $\beta\vtl f\le \beta\vtl
p_{O}^*$ for any chain $\beta$. Since $\beta\vtl p_O^*$ is again a chain type, it follows
that any map of type $\beta\vtl f$ is essentially a comb, where the first tooth satisfies
additional constraints given by the type $f$ and  $\beta$. In example (a), $\beta=\gamma_2$ and $f$ is the type
of bipartite no-signalling channels, hence the corresponding map is a 2-comb as depicted
below, where the inputs can only signal to outputs of the same color or green. 
We can similarly find the maps in the case (b), here the first tooth is
a process matrix, with colors indicating the signalling restrictions:

\begin{center}
\begin{minipage}[c]{0.46\textwidth}
\centering
\includegraphics[scale=0.8]{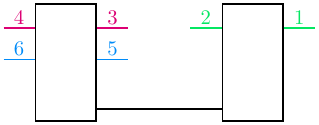}

{\footnotesize (a)}
\end{minipage}%
\begin{minipage}[c]{0.46\textwidth}
\centering
\includegraphics[scale=0.8]{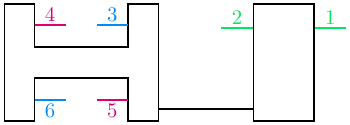}

{\footnotesize (b)}
\end{minipage}%

\end{center}

In the cases (c) and (d), $\beta=\gamma_3^*=\gamma_2\otimes 1_1=\gamma_2\vtl 1_1$ (Example
\ref{exm:T3} (d) and Lemma \ref{lemma:onechain_causal})
and $\beta\vtl f=(\gamma_2\vtl 1_1)\vtl f=\gamma_2\vtl(1_1\vtl f)$. It follows that the
corresponding maps have a similar form as in (a) resp. (b), but the first tooth has an
additional global output (indexed by 3). Here the pink and blue inputs can also signal to
the  yellow  output:

\begin{center}
\begin{minipage}[b]{0.46\textwidth}
\centering
\includegraphics[scale=0.8]{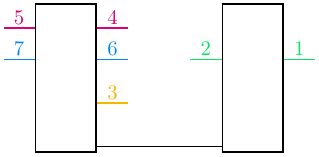}

{\footnotesize (c)}
\end{minipage}%
\begin{minipage}[b]{0.46\textwidth}
\centering
\includegraphics[scale=0.8]{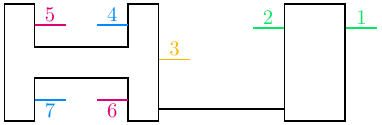}
\vskip 5pt
{\footnotesize (d)}
\end{minipage}%

\end{center}

\end{exm}

It can be seen from the proof of Proposition \ref{prop:vtl_ordinal} 
that if both $f$ and $g$ are chain types with chains of $N$ and $M$ elements,
respectively, then $f\vtl g$ is a chain type for a chain with $M+N\pm 1$
elements. Note also that that this construction can be
interpreted as appending the two chains in the respective order. 
 It is not clear that if $f$ and $g$ are arbitrary type functions, then $f\vtl g$ is a type function
as well, but using the inequalities in Lemma \ref{lemma:Pf} and
\ref{lemma:causal_tensor}, we have
\[
p_{I_f}\otimes p_{I_g}\le f\otimes g\le f\vtl g\le (f^*\otimes g^*)^*\le (p_{O_f}\otimes
p_{O_g})^*,
\]
so that $f\vtl g$ is a subtype, with inputs in $I_f\boxplus I_g$ and outputs in $O_f\boxplus
O_g$.

Our next result shows that if $f$ or $g$ is a chain type, we always obtain a type
function.

\begin{prop}\label{prop:append_chain_f}  Let $f\in \Te_{n_1}$ and let $\beta\in \Te_{n_2}$
be a chain type. Then both $f\vtl \beta$ and $\beta\vtl f$ are types, with outputs
$O=O_f\boxplus O_\beta$ and inputs $I=I_f\boxplus I_\beta$. 

\end{prop}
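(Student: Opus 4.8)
The plan is to prove the statement by induction on the number $N$ of elements in the chain of $\beta$, using the structure of type functions (Proposition~\ref{prop:type_min}) together with the associativity of the causal product (Lemma~\ref{lemma:causal_product}) and the special behaviour of $1_n$ and $p_n$ under $\vtl$ (Lemma~\ref{lemma:onechain_causal}). Since $N$ is necessarily odd (Proposition~\ref{prop:chains}), the base case is $N=1$, i.e.\ $\beta = p_{S}$ for some $S\subseteq [n_2]$; after a permutation we may assume $\beta = p_{n_2}$, and then $f\vtl p_{n_2} = f\otimes 1_{n_2}+p_{n_1}\otimes(p_{n_2}-1_{n_2})$. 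Using $p_{n_2}-1_{n_2} = p_{n_2}-1_{n_2}$ and the explicit form \eqref{eq:causal_product}, this equals $f\otimes p_{n_2}$ up to the identification in Lemma~\ref{lemma:onechain_causal}; since $f\in\Te_{n_1}$ and $p_{n_2}\in\Te_{n_2}$, the tensor product $f\otimes p_{n_2}$ is in $\Te_{n_1+n_2}$ by Proposition~\ref{prop:type_min}(2), and similarly for $p_{n_2}\vtl f$. The output/input sets come out as claimed from $O_{p_{n_2}}=\emptyset$, $I_{p_{n_2}}=[n_2]$ together with the behaviour of $O,I$ under $\otimes$.

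For the inductive step, suppose $\beta\in\Te_{n_2}$ is a chain type for a chain $\Ce$ of $N\ge 3$ elements. As in the proof of Proposition~\ref{prop:chains}, write $\beta \approx p_{k}\otimes \gamma \otimes 1_{l}$ where $\gamma$ is a chain type whose chain contains $\emptyset$ and the top element; and then splitting off the bottom two steps of that chain, $\gamma$ can be realised (up to permutation) as $\gamma' \vtl \delta$ where $\gamma' = (p_{m_1}\otimes 1_{m_2})^* = 1_{m_1}\vtl p_{m_2}$ is the chain type of a two-element-shorter... more precisely, $\gamma'$ is the elementary chain type $1_{m_1}\vtl p_{m_2}$ of length $2$ and $\delta$ is a chain type for a chain of $N-2$ elements (this is exactly the decomposition used in Proposition~\ref{prop:chains_combs}). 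Then, using associativity of $\vtl$ (Lemma~\ref{lemma:causal_product}) and the interplay with $\otimes$, one reduces $f\vtl\beta$ to an expression of the form $\big((f\vtl p_{k})\otimes \text{(stuff)}\big)\vtl\delta$ or similar, where the inner causal products involve only the elementary length-$2$ chain type and the smallest/largest functions; by the base case and Lemma~\ref{lemma:onechain_causal} these inner products stay inside $\Te$, and then $(-)\vtl\delta$ is handled by the induction hypothesis since $\delta$ is a chain type of fewer elements. The case $\beta\vtl f$ is symmetric, using the opposite causal product \eqref{eq:causal_product_op}, and the dual identity $(f_1\vtl f_2)^*=f_1^*\vtl f_2^*$ of Lemma~\ref{lemma:causal_product}(i) lets us pass between the two.

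The computation of $O$ and $I$ is the routine part: from \eqref{eq:causal_product} one checks $f\vtl\beta$ agrees with $f\otimes 1_{n_2}$ off the block $s^1=\theta_{n_1}$, so by Proposition~\ref{prop:fh_outputs} the output set on the first $n_1$ coordinates is $O_f$, and on the last $n_2$ coordinates it is $O_\beta$ (and symmetrically for inputs); combined with the additivity $O_{g\otimes h}=O_g\oplus O_h$ this gives $O = O_f\oplus O_\beta$, $I = I_f\oplus I_\beta$. The main obstacle is the inductive reduction itself: one must carefully track the permutations needed to bring $\beta$ into the form $p_k\otimes\gamma\otimes 1_l$ and then to peel off the elementary two-step chain, and check that associativity of $\vtl$ together with the identities in Lemmas~\ref{lemma:causal_product}, \ref{lemma:onechain_causal} actually delivers an expression built only from tensor products, complements, and permutations of functions already known to be in $\Te$ — i.e.\ that no "mixed" causal product of two genuinely higher-order functions ever appears. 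Once that bookkeeping is set up, every step is an application of Proposition~\ref{prop:type_min} and the base case.
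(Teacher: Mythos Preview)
Your overall strategy --- induction on the length $N$ of the chain, using associativity of $\vtl$ to peel off short pieces --- is exactly the paper's approach, and your argument for the input/output sets via Proposition~\ref{prop:fh_outputs} is correct and essentially identical to the paper's. However, your base case contains a genuine error.

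For $N=1$ you write ``after a permutation we may assume $\beta=p_{n_2}$''. This is false: a permutation takes $p_S$ to $p_{\sigma^{-1}(S)}$, so you can only arrange $\beta\approx p_k\otimes 1_{n_2-k}$ with $k=|S|$, not $\beta=p_{n_2}$ unless $S=[n_2]$. More seriously, you then claim $f\vtl p_{n_2}=f\otimes p_{n_2}$ ``up to the identification in Lemma~\ref{lemma:onechain_causal}''. That lemma says the opposite: $p_{n_2}\vtl f=f\otimes p_{n_2}\le f\vtl p_{n_2}$, and the inequality is strict in general. For instance $1_1\vtl p_1=p_{\{1\}}^*$, whereas $1_1\otimes p_1=p_{\{2\}}$. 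So $f\vtl p_{n_2}$ is \emph{not} a tensor product, and you cannot invoke Proposition~\ref{prop:type_min}(2) directly. The paper fixes this via duality (Lemma~\ref{lemma:causal_product}(i)): from $f\vtl 1_{n_2}=f\otimes 1_{n_2}\in\Te$ one gets $f\vtl p_{n_2}=(f^*\vtl 1_{n_2})^*=(f^*\otimes 1_{n_2})^*\in\Te$, and the intermediate case $\beta\approx p_{m_1}\otimes 1_{m_2}=p_{m_1}\vtl 1_{m_2}$ then follows by associativity from these two extremes.

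Your inductive step is also more complicated than needed. Rather than writing $\beta\approx p_k\otimes\gamma\otimes 1_l$ and then trying to split $\gamma$, the paper simply observes (Proposition~\ref{prop:vtl_ordinal}(d)) that any chain type with $N\ge 3$ decomposes as $\beta\approx\beta_1\vtl\beta_2$ where $\beta_1$ is an $(N-2)$-element chain type and $\beta_2$ is a one-element chain type; then $f\vtl\beta\approx(f\vtl\beta_1)\vtl\beta_2$ and $\beta\vtl f\approx\beta_1\vtl(\beta_2\vtl f)$ are types by the induction hypothesis and the (now corrected) base case. This avoids the ``(stuff)'' placeholder and the bookkeeping you flag as the main obstacle.
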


\begin{proof} Let  $\beta=\sum_{k=1}^{N}(-1)^{k-1}p_{S_k}$
for some odd $N$ and $S_1\subsetneq \dots \subsetneq S_N\subseteq [n_2]$. We will proceed
by induction on $N$. Suppose $N=1$. If $S_1=\emptyset$,
then $\beta=1_{n_2}$ and we have by Lemma \ref{lemma:onechain_causal}
\[
f\vtl 1_{n_2}=f\otimes 1_{n_2}\in \Te_{n_1+n_2}
\]
and
\[
1_{n_2}\vtl f=(p_{n_2}\vtl f^*)^*=(f\otimes p_{n_2})^*\in \Te_{n_1+n_2}.
\]
Assume that  $S_1=[n_2]$, then $\beta=p_{n_2}$ and the assertion follows by duality. 
If $\emptyset\ne S_1\subsetneq [n_2]$, then we have $\beta\approx p_{m_1}\otimes
1_{m_2}=p_{m_1}\vtl 1_{m_2}$ for $m_1=|S_1|$, $m_1+m_2=n_2$. Then 
\[
\beta\vtl f\approx p_{m_1}\vtl (1_{m_2}\vtl f)\in \Te_{n_1+n_2},\quad f\vtl \beta\approx (f\vtl p_{m_1})\vtl 1_{m_2} \in
\Te_{n_1+n_2},
\]
by the first part of the proof and Lemma \ref{lemma:causal_product}.

Assume next that the assertion holds for all odd numbers $M<N$. Using Proposition
\ref{prop:vtl_ordinal} (d), we see that $\beta\approx \beta_1\vtl\beta_2$, where $\beta_1$ is
an $N-2$-element chain type and $\beta_2$ is a one-element chain  type. Then
\[
\beta\vtl f\approx \beta_1\vtl(\beta_2\vtl f),\qquad f\vtl\beta\approx (f\vtl\beta_1)\vtl\beta_2
\]
are type functions, by the induction assumption.

To prove the statement on the output and input indices, note that for any $i\in [n_1]\boxplus [n_2]$, we have  $e^i_{n_1+n_2}=e^j_{n_1}\theta_{n_2}$ or
$e^i_{n_1+n_2}=\theta_{n_1}e^k_{n_2}$
for some $j\in [n_1]$, $k\in [n_2]$. Then   
\[
f\vtl\beta(e^i)=f(e^j_{n_1})\ \text{ or } f\vtl \beta(e^i)=\beta(e^k_{n_2}).
\]
The statement on input/output indices  follow from Lemma \ref{lemma:fh_setting}. The proof
for $\beta\vtl f$ is similar. 

\end{proof}

\subsection{The structure of type functions}

 The  following structure theorem for type functions is one of the main
results of the paper. It shows that any type function $f$ is contained in the lattice
generated by chains with the same input/output indices as $f$. This is quite similar to the 'normal form' of
\cite{hoffreumon2026projective}, which is defined in the quantum case and uses 
the formalism of projections and their algebraic properties. Moreover, we show  that
these chains are concatenations of some basic chains in different orders.

\begin{theorem}\label{thm:structure}
Let  $f\in \Te_n$. Then there is  a decomposition
$[n]=\boxplus_{i=1}^k[n_i]$, chain types 
$\beta_1\in \Te_{n_1}$,\dots, $\beta_k\in
\Te_{n_k}$ such that $O_f=\boxplus_j O_{\beta_j}$, $I_f=\boxplus_j I_{\beta_j}$, finite index sets $A$, $B$ and permutations $\pi_{a,b}\in
\permut_k$, $a\in A$, $b\in B$ such that 
\[
f\approx \bigvee_{a\in A}\bigwedge_{b\in B} (\beta_{\pi^{-1}_{a,b}(1)}\vtl \dots \vtl
\beta_{\pi^{-1}_{a,b}(k)})=\bigwedge_{b\in B}\bigvee_{a\in A}(\beta_{\pi^{-1}_{a,b}(1)}\vtl \dots \vtl
\beta_{\pi^{-1}_{a,b}(k)}).
\]
\end{theorem}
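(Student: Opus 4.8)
The plan is to induct on $n$, exploiting the recursive definition of $\Te_n$ via tensor products, duals, and permutations, just as in the proofs of Theorem~\ref{thm:graded} and Proposition~\ref{prop:append_chain_f}. For $n=1$ the claim is immediate: $\Te_1=\{1_1,p_1\}$, and both are chain types (one-element chains), so we may take $k=1$, $|A|=|B|=1$. For the inductive step, since the stated form is manifestly invariant under permutations of $[n]$ (permuting the blocks $[n_i]$ and composing the $\pi_{a,b}$), it suffices to show the form is preserved (i) under complementation and (ii) under tensor products $f=f_1\otimes f_2$ with $f_1\in\Te_{n_1}$, $f_2\in\Te_{n_2}$.

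For complementation, I would start from a representation $f\approx\bigvee_a\bigwedge_b(\beta_{\pi^{-1}_{a,b}(1)}\vtl\dots\vtl\beta_{\pi^{-1}_{a,b}(k)})$ and apply De Morgan together with Lemma~\ref{lemma:causal_product}(i), which gives $(\beta_1\vtl\dots\vtl\beta_k)^*=\beta_1^*\vtl\dots\vtl\beta_k^*$; since each $\beta_j^*$ is again a chain type (dual of a chain type, by Proposition~\ref{prop:chains} or the computation in its proof), $f^*$ has the required form with the roles of $\bigvee$ and $\bigwedge$ exchanged — which is harmless since the theorem asserts both forms are available (and that equivalence follows from the distributivity identities in Lemma~\ref{lemma:causal_product}(ii)–(iii)). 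For the tensor product, the key is Lemma~\ref{lemma:causal_tensor}, $f_1\otimes f_2=(f_1\vtr f_2)\wedge(f_2\vtr f_1)$. Using the inductive representations $f_1\approx\bigvee_{a_1}\bigwedge_{b_1}(\dots)$ and $f_2\approx\bigvee_{a_2}\bigwedge_{b_2}(\dots)$ over chain-type blocks $\beta^{(1)}_i$ of $[n_1]$ and $\beta^{(2)}_j$ of $[n_2]$, I would substitute these into both $f_1\vtr f_2$ and $f_2\vtr f_1$ and repeatedly apply the distributivity of $\vtl$ over $\vee$ and $\wedge$ (Lemma~\ref{lemma:causal_product}(ii)–(iii)) to push the causal products inside, so that each term becomes a single iterated causal product $\beta_{\pi^{-1}(1)}\vtl\dots\vtl\beta_{\pi^{-1}(k)}$ of all $k=$ (number of blocks of $f_1$) $+$ (number of blocks of $f_2$) chain types, in some order $\pi$ that respects the internal orders of the $f_1$-blocks and the $f_2$-blocks. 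Taking the outer $\wedge$ of the two resulting $\bigvee\bigwedge$ expressions and normalizing via distributivity yields a single $\bigvee_a\bigwedge_b$ over such concatenations; associativity of $\vtl$ (Lemma~\ref{lemma:causal_product}) makes the iterated products well-defined. The claims $O_f=\oplus_jO_{\beta_j}$ and $I_f=\oplus_jI_{\beta_j}$ follow by tracking input/output sets through tensor products and duals (Theorem~\ref{thm:boolean}, and the behavior of $I,O$ under $\otimes$ and $*$ noted after Proposition~\ref{prop:type_min}), together with Proposition~\ref{prop:append_chain_f} for the $\vtl$-blocks.

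The main obstacle I anticipate is the bookkeeping in the tensor-product step: one must verify that after the repeated applications of distributivity the terms are genuinely \emph{full} iterated causal products over \emph{all} $k$ blocks (not merely some of them), and that the permutations $\pi_{a,b}$ arising are legitimate — i.e.\ that the relative order within the $f_1$-blocks and within the $f_2$-blocks is preserved while the two groups are interleaved. This requires care because $f_1\vtr f_2$ and $f_2\vtr f_1$ interleave the two groups in opposite ways, so the indexing of the $\pi_{a,b}$ in the final $\bigvee\bigwedge$ needs a clean combinatorial description (a shuffle of the two block-orders). A secondary point to check is that the two dual normal forms in the statement genuinely coincide, which is exactly the content of the second and third displayed identities in Lemma~\ref{lemma:causal_product} applied term-by-term; I would dispatch this once at the end rather than carrying both forms through the induction.
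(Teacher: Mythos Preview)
Your approach is essentially identical to the paper's: induction on $n$, closure under permutation (trivial), complement (De Morgan plus Lemma~\ref{lemma:causal_product}(i), noting each $\beta_j^*$ is again a chain type), and tensor product (via Lemma~\ref{lemma:causal_tensor} and the distributive identities of Lemma~\ref{lemma:causal_product}). The paper's bookkeeping for the tensor step is exactly the block-permutation description $\pi_{(a,c),(b,d,\lambda)}=\rho_\lambda\circ(\pi_{a,b}\oplus\tau_{c,d})$ that you anticipate having to set up, with $\lambda\in\permut_2$ recording which of $f_1\vtl f_2$ or $f_2\vtl f_1$ is taken.

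One correction, however. You write that the equality $\bigvee_a\bigwedge_b(\cdots)=\bigwedge_b\bigvee_a(\cdots)$ ``follows from the distributivity identities in Lemma~\ref{lemma:causal_product}(ii)--(iii)'' and can be dispatched once at the end. This is not so; the paper explicitly remarks after the theorem that the interchange is \emph{by no means automatic}. In a distributive lattice one only has $\bigvee_a\bigwedge_b\le\bigwedge_b\bigvee_a$ in general, and Lemma~\ref{lemma:causal_product}(ii)--(iii) does not close that gap. The paper instead carries \emph{both} normal forms through the induction. For the tensor step the $\bigvee\bigwedge$-form of $f$ is obtained by first distributing $\otimes$ over $\vee,\wedge$ (Lemma~\ref{lemma:fproduct}(ii)) and only then applying Lemma~\ref{lemma:causal_tensor} to each inner term $\beta_1^{a,b}\otimes\beta_2^{c,d}$; the $\bigwedge\bigvee$-form is obtained by applying Lemma~\ref{lemma:causal_tensor} at the outer level first and then distributing $\vtl$ inward via Lemma~\ref{lemma:causal_product}(ii)--(iii). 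Both computations yield the \emph{same} family of permutations, and it is this coincidence that establishes the equality. So your plan works, but you must maintain both forms through the induction rather than hope to derive one from the other at the end.
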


\begin{proof} We will once again proceed by induction on $n$. Since any element in $\Te_n$
for $n\le 3$ is a chain type, the statement clearly holds in this case. Assume the
condition holds for all $m<n$. The condition is obviously invariant under permutations. 
Assume $f$ can be written in the given form, then
\[
f^*\approx \bigwedge _{a\in A}\bigvee_{b\in B} (\beta^*_{\pi^{-1}_{a,b}(1)}\vtl \dots \vtl
\beta^*_{\pi^{-1}_{a,b}(k)})=\bigvee_{b\in B}\bigwedge_{a\in A} (\beta^*_{\pi^{-1}_{a,b}(1)}\vtl \dots \vtl
\beta^*_{\pi^{-1}_{a,b}(k)}).
\]
Since $\beta_j^*$ is a chain type for each $j$, this proves the statement for $f^*$. 

It is now enough to show this form for  $f=f_1\otimes f_2$, where 
$f_1\in \Te_m$, $f_2\in \Te_{n-m}$ with $[n]=[m]\boxplus[n-m]$. By the induction assumption,
$f_1$ and $f_2$ satisfy the conditions, so that
\begin{align*}
f_1&\approx\bigvee_{a\in A}\bigwedge_{b\in B} (\beta^1_{\pi^{-1}_{a,b}(1)}\vtl \dots \vtl
\beta^1_{\pi^{-1}_{a,b}(k_1)})=\bigwedge_{b\in B}\bigvee_{a\in A} (\beta^1_{\pi^{-1}_{a,b}(1)}\vtl \dots \vtl
\beta^1_{\pi^{-1}_{a,b}(k_1)}),\\
f_2&\approx\bigvee_{c\in C}\bigwedge_{d\in D} (\beta^2_{\tau^{-1}_{c,d}(1)}\vtl \dots \vtl
\beta^2_{\tau^{-1}_{c,d}(k_2)})=\bigwedge_{d\in D} \bigvee_{c\in C}(\beta^2_{\tau^{-1}_{c,d}(1)}\vtl \dots \vtl
\beta^2_{\tau^{-1}_{c,d}(k_2)})
\end{align*}
for some chain types  $\beta^1_j\in \Te_{m_j}$, $[m]=\boxplus^{k_1}_{j=1}[m_j]$, and $\beta^2_j\in \Te_{l_j}$,
$[n-m]=\boxplus_{j=1}^{k_2}[l_j]$ and permutations $\pi_{a,b}\in \permut_{k_1}$, $\tau_{c,d}\in
\permut_{k_2}$.
Let 
\[
\beta^{a,b}_1:=\beta^1_{\pi^{-1}_{a,b}(1)}\vtl \dots \vtl
\beta^1_{\pi^{-1}_{a,b}(k_1)},\qquad \beta^{c,d}_2:= \beta^2_{\tau^{-1}_{c,d}(1)}\vtl \dots \vtl
\beta^2_{\tau^{-1}_{c,d}(k_2)}.
\]
Using the   properties of the tensor product (Lemma \ref{lemma:fproduct})ii), we get from
Lemma \ref{lemma:causal_tensor}
\[
f\approx \bigl(\bigvee_{a\in A}\bigwedge_{b\in B}\beta_1^{a,b}\bigr)\otimes \bigl(\bigvee_{c\in
C}\bigwedge_{d\in D}\beta_2^{c,d}\bigr)=\bigvee_{a,c}\bigwedge_{b,d}
(\beta_1^{a,b}\otimes \beta_2^{c,d})=\bigvee_{a,c}\bigwedge_{b,d}(\beta_1^{a,b}\vtl
\beta_2^{c,d})\wedge (\beta_2^{c,d}\vtl \beta_1^{a,b})
\]
On the other hand, using Lemma \ref{lemma:causal_tensor} and Lemma
\ref{lemma:causal_product}, we
get
\begin{align*}
f&\approx \bigl(\bigwedge_{b\in B}\bigvee_{a\in A}\beta_1^{a,b}\bigr)\otimes \bigl(\bigwedge_{d\in D}\bigvee_{c\in
C}\beta_2^{c,d}\bigr)\\
&=\left[\bigl(\bigwedge_{b\in B}\bigvee_{a\in A}\beta_1^{a,b}\bigr)\vtl \bigl(\bigwedge_{d\in D}\bigvee_{c\in
C}\beta_2^{c,d}\bigr)\right]\wedge \left[\bigl(\bigwedge_{d\in D}\bigvee_{c\in
C}\beta_2^{c,d}\bigr)\vtl\bigl(\bigwedge_{b\in B}\bigvee_{a\in A}\beta_1^{a,b}\bigr)
\right]\\
&= \bigl(\bigwedge_{b,d}\bigvee_{a,c} \beta_1^{a,b}\vtl \beta_2^{c,d}\bigr) \wedge
\bigl(\bigwedge_{b,d}\bigvee_{a,c} \beta_2^{c,d}\vtl \beta_1^{a,b}\bigr).
\end{align*}
We have the decomposition $[n]=\boxplus_{j=1}^k[n_j]$, with $k=k_1+k_2$ and $n_j=m_j$,
$j=1,\dots, k_1$,  $n_j=l_{j-k_1}$, $j=k_1+1,\dots,k$, and chain types $\beta_j\in
\Te_{n_j}$, $\beta_j=\beta_j^1$ for $j=1,\dots,k_1$ and $\beta_j=\beta^2_{j-k_1}$ for
$j=k_1+1,\dots,k$. To get the permutation sets, let $A'=A\times C$, $B'=B\times D\times
\permut_2$ and define $\pi_{a',b'}$ in $\permut_k$ as the block permutation with respect to the
decomposition $[k]=[k_1]\boxplus[k_2]$ (see Appendix \ref{sec:permut})
\[
\pi_{(a,c),(b,d,\lambda)}=\rho_\lambda\circ(\pi_{a,b}\boxplus \tau_{c,d}).
\]
This  finishes the proof.

\end{proof}

\begin{remark} Note that the fact that the minima and maxima in the above theorem can be
exchanged is by no means automatic and follows from the properties of the causal
product $\vtl$ and the structure of type functions.

\end{remark}

\begin{remark} In general, it is not clear for which sets of permutations such a
combination of chain types will be a chain type. Nevertheless, since all the connected
chains have the same input and output indices, a function of the form as in Theorem
\ref{thm:structure} will always be a subtype. That is, the objects corresponding to such a
function will describe a set of channels, obtained by taking pullbacks and pushouts of
objects describing combs.  On the other hand, Example \ref{exm:T3sub} shows that not all subtypes
are of this form.

\end{remark}

\begin{remark} Let us note that the set of basic chains $\beta_1,\dots,\beta_k$ in Theorem
\ref{thm:structure} is not given uniquely. We will consider a way to obtain some choice of this
set in the next section.

\end{remark}

\subsection{The reduced poset $\Pe_f^0$}

\label{sec:pf0}
 In this paragraph we will be concerned with the problem of obtaining the
decomposition in the structure theorem \ref{thm:structure} for a type function $f$. To this end, we introduce
the reduced poset of $f$,  denoted by $\Pe_f^0$, defined as the subposet in $\Pe_f$, consisting of the elements with 
nonempty labels and $\emptyset$ if it is contained in $\Pe_f$.
 We will show that any $f\in \Te_n$ is fully determined by $\Pe_f^0$ (and
$n$). This is convenient, because $\Pe_f^0$ is much smaller and easier to visualise that $\Pe_f$. More
importantly, from $\Pe_f^0$, one can find a  decomposition of $f$ as causal products  tensor products and complements
of other functions. In particular, it is possible to obtain from $\Pe_f^0$ some choice of
the chain types  $\beta_1,\dots,\beta_k$ in the structure decomposition of $f$.

We start by some basic properties  and examples of $\Pe_f^0$. Some further properties, and more
technical parts of the proofs, can be found in Appendix \ref{app:pf0}. 

Recall  from Section \ref{sec:labelled}  that for all $T\in \Pe_f$, we have $T=\cup\{L_{T'},\ T'\in \Pe_f^0, T'\subseteq
T\}$.

\begin{lemma}\label{lemma:p0_basic} Let $f\in \Te_n$.
\begin{enumerate}
\item[(i)] $\mathrm{Min}(\Pe_f^0)=\mathrm{Min}(\Pe_f)$.
\item[(ii)] $\Pe_f^0$ is a chain $\iff$  $\Pe_f$ is a chain $\iff$  $\Pe_f^0= \Pe_f$.

\item[(iii)] If $\Pe^0_f$ has a largest element,  then it is the largest element in
$\Pe_f$. In this case  $f$ or $f^*$ has a free output. 

\end{enumerate}
\end{lemma}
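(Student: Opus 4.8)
The plan is to handle the three parts in turn; (i) and (iii) are short and rest on the recalled identity $T=\bigcup\{L_{T'}:T'\in\Pe_f^0,\ T'\subseteq T\}$ (valid for all $T\in\Pe_f$) together with Corollaries~\ref{coro:pf_dual} and~\ref{coro:Pf}, while (ii) contains the only genuine difficulty. For (i): if $\emptyset\in\Pe_f$ then $\emptyset$ is the least element of both $\Pe_f$ and, by definition, of $\Pe_f^0$, so both minimum-sets are $\{\emptyset\}$. If $\emptyset\notin\Pe_f$, then each $S\in\mathrm{Min}(\Pe_f)$ is nonempty, hence $L_S=S\neq\emptyset$ and $S\in\Pe_f^0$; being minimal in the larger poset $\Pe_f$ it is minimal in $\Pe_f^0$. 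Conversely, if $S\in\mathrm{Min}(\Pe_f^0)$ failed to be minimal in $\Pe_f$, finiteness would give a minimal $S_0\in\Pe_f$ with $S_0\subsetneq S$, and $S_0\in\Pe_f^0$ by the previous sentence, a contradiction.

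For (iii): put $M=\max\Pe_f^0$. For every $T\in\Pe_f$ the recalled identity and $L_{T'}\subseteq T'\subseteq M$ for $T'\in\Pe_f^0$ force $T\subseteq M$, so $M$ is the largest element of $\Pe_f$. Also $\bigcup_{S\in\Pe_f}L_S=\bigcup_{S\in\Pe_f^0}L_S\subseteq M$. Hence if $M\neq[n]$ then $O_f^F\supseteq[n]\setminus M\neq\emptyset$ and $f$ has a free output. If $M=[n]$, then $[n]\in\Pe_f$ and $L_{[n],f}\neq\emptyset$; by Corollary~\ref{coro:pf_dual}, $[n]\notin\Pe_{f^*}$, and by Corollary~\ref{coro:Pf}(ii) the posets $\Pe_{f^*}$ and $\Pe_f$ agree, with their labels, outside $\{\emptyset,[n]\}$. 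Since any $i\in L_{[n],f}$ lies in no proper member of $\Pe_f$, hence in no $L_{S,f}$ with $S\in\Pe_f\setminus\{[n]\}$, we obtain $\bigcup_{S\in\Pe_{f^*}}L_{S,f^*}=\bigcup_{S\in\Pe_f\setminus\{[n]\}}L_{S,f}\subseteq[n]\setminus L_{[n],f}\subsetneq[n]$, so $O_{f^*}^F\neq\emptyset$ and $f^*$ has a free output.

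For (ii): the implication ``$\Pe_f$ chain $\Rightarrow\Pe_f^0=\Pe_f$'' is clear, since in a chain $S_1\subsetneq\dots\subsetneq S_N$ we have $L_{S_1}=S_1$ and $L_{S_{k+1}}=S_{k+1}\setminus S_k\neq\emptyset$, so the only possibly empty label is that of $S_1$, and then $S_1=\emptyset\in\Pe_f^0$ by definition; and ``$\Pe_f$ chain $\Rightarrow\Pe_f^0$ chain'' holds because a subposet of a chain is a chain. For ``$\Pe_f^0$ chain $\Rightarrow\Pe_f^0=\Pe_f$'': given $T\in\Pe_f$, the set $\{T'\in\Pe_f^0:T'\subseteq T\}$ is a nonempty subchain of $\Pe_f^0$ (it contains the minimal elements of $\Pe_f$ below $T$, which lie in $\Pe_f^0$ by (i)), so it has a maximum $T^{\ast}$; every element of it is then $\subseteq T^{\ast}$, whence $T=\bigcup\{L_{T'}:T'\in\Pe_f^0,\ T'\subseteq T\}\subseteq T^{\ast}\subseteq T$, i.e.\ $T=T^{\ast}\in\Pe_f^0$. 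Consequently the three conditions are equivalent once we prove the remaining implication, equivalently its contrapositive
\[
(\dagger)\qquad \Pe_f\ \text{not a chain}\ \Longrightarrow\ \exists\,S\in\Pe_f\ \text{with}\ S\notin\{\emptyset,[n]\}\ \text{and}\ L_{S,f}=\emptyset .
\]

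The main obstacle is $(\dagger)$, which cannot be settled order-theoretically: there are graded distributive lattices in $2^n$ which are not chains yet all of whose elements have nonempty labels, so one must exploit that $\Pe_f$ comes from $f\in\Te_n$. I would prove $(\dagger)$ by induction on $n$, using that by minimality in Proposition~\ref{prop:type_min} every $f\in\Te_n$ with $n\geq2$ is, up to a permutation, $f_1\otimes f_2$ or $(f_1\otimes f_2)^{\ast}$ with $f_i\in\Te_{n_i}$, $n_i\geq1$ (permutations fix $\emptyset$ and $[n]$ and relabel the label sets bijectively, so $(\dagger)$ is permutation-invariant). For $f=f_1\otimes f_2$ use $\Pe_f\simeq\Pe_{f_1}\times\Pe_{f_2}$ (Corollary~\ref{coro:Pf}(iii)): if $\Pe_f$ is not a chain then either both $\Pe_{f_i}$ have at least two elements --- each being graded of even rank (Theorem~\ref{thm:graded}), hence of rank $\geq2$, so one may pick $x\in\Pe_{f_1}$, $y\in\Pe_{f_2}$ of rank $1$; they are neither minimal nor maximal, so $x\notin\{\emptyset,[n_1]\}$, $y\notin\{\emptyset,[n_2]\}$, and the ``otherwise'' clause of Corollary~\ref{coro:Pf}(iii) gives $L_{(x,y),f}=\emptyset$ with $(x,y)\notin\{\emptyset,[n]\}$ --- or, say, $\Pe_{f_1}=\{S_0\}$ (so $f_1=p_{S_0}$) and $\Pe_{f_2}\simeq\Pe_f$ is not a chain, where the induction hypothesis gives $T_0\in\Pe_{f_2}$ with $T_0\notin\{\emptyset,[n_2]\}$ and $L_{T_0,f_2}=\emptyset$; then $T_0\notin\mathrm{Min}(\Pe_{f_2})$, so $L_{(S_0,T_0),f}=n_1+L_{T_0,f_2}=\emptyset$ with $(S_0,T_0)\notin\{\emptyset,[n]\}$. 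For $f=h^{\ast}$ with $h=f_1\otimes f_2$: if $\Pe_{h^{\ast}}$ is not a chain then $\Pe_h$ is not a chain, since by Corollary~\ref{coro:pf_dual} passing from $\Pe_h$ to $\Pe_{h^{\ast}}$ only adjoins or deletes the least and greatest elements, an operation preserving chains; the tensor case then yields $S_0\in\Pe_h$ with $S_0\notin\{\emptyset,[n]\}$ and $L_{S_0,h}=\emptyset$, and by Corollary~\ref{coro:Pf}(ii) this same $S_0$ lies in $\Pe_{h^{\ast}}$ with the same empty label --- which is precisely why the conclusion of $(\dagger)$ was strengthened to avoid $\emptyset$ and $[n]$, so that the witness survives complementation. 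The remaining steps are routine: the case analysis of Corollary~\ref{coro:Pf}(iii), the elementary facts that a finite graded poset of even rank $\geq2$ has a rank-one element that is neither minimal nor maximal, that a product of posets fails to be a chain exactly when both factors have at least two elements or one factor is a singleton and the other is not a chain, and that a chain remains a chain after adding or removing its top and bottom.
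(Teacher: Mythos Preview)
Your proof is correct and follows essentially the same approach as the paper. For (i) and (iii) the arguments coincide (the paper just says ``Obvious'' for (i) and uses $L_{[n],f}=O_{f^*}^F$ directly for (iii)). For (ii), the paper proves ``$\Pe_f^0$ chain $\Rightarrow\Pe_f$ chain'' by a direct label-chasing argument while you prove ``$\Pe_f^0$ chain $\Rightarrow\Pe_f^0=\Pe_f$'' by a closely related one; for the hard implication both use the product structure of $f$ or $f^*$ together with Corollary~\ref{coro:Pf}(iii) to exhibit an element with empty label. Your inductive treatment of $(\dagger)$, which separately handles the case where one factor has a singleton poset, is in fact more careful than the paper's version, which simply asserts that if $\Pe_f$ is not a chain then both factors can be taken to have rank $\ge 2$.
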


\begin{proof} 
(i) Obvious. For (ii), assume that $\Pe^0_f$ is a chain and let $S,T\in \Pe_f$. Let $i\in
S\setminus T$ and $j\in T$, and let $S'\subseteq  S$ and $T'\subseteq  T$ be such that $S',T'\in
\Pe_f^0$ and $i\in L_{S'}$, 
$j\in L_{T'}$. Since $\Pe_f^0$ is a chain, $S'$ and $T'$ are comparable. If $S'\subseteq T'$,
then $S'\subseteq  T'\subseteq  T$, so that $i\in T$, which is not possible. Hence
$T'\subseteq S' \subseteq S$, for all $T'\in \Pe_f^0$, $T'\subseteq T$. Hence $T\subseteq
S$, and $\Pe_f$ is a chain. It is clear that then $\Pe_f^0=\Pe_f$.  

If $\Pe_f$ is not a chain, then there are some type
functions
$f_1$, $f_2$ such that $f=f_1\otimes f_2$ or $f=(f_1\otimes f_2)^*$. Moreover, the ranks
of $f_1$ and $f_2$ are at least 2. It follows that both $\Pe_{f_1\otimes f_2}$ and
$\Pe_{(f_1\otimes f_2)^*}$ contain an element $S\boxplus T$, where $S\in \Pe_{f_1}$, $T\in
\Pe_{f_2}$ but none of the two elements is minimal. Then there is some $S'\in \Pe_{f_1}$
and $T'\in \Pe_{f_2}$ such that $S'\boxplus T$, $S\boxplus T'\subseteq S\boxplus T$, so that no
element of $S\boxplus T$ is a label. Hence
$S\boxplus T\notin
\Pe_f^0$, so that $\Pe_f\ne \Pe_f^0$. 

To prove (iii) let $T$ be the largest element in $P^0_f$. Then 
\[
\cup\Pe_f=\cup L_S \subseteq T\subseteq \cup \Pe_f.
\]
 If follows that $T=\cup\Pe_f $ is the largest element in $\Pe_f$. If $T\ne [n]$, then
 clearly, $f$ has some free outputs. If $T=[n]$, then since $[n]\in \Pe_f^0$, we have
 $\emptyset\ne L_{[n],f}=O_{f^*}^F$, so that $f^*$ has free outputs.

\end{proof}

The basic constructions on $\Pe_f^0$ can be easily obtained from those on $\Pe_f$.

\begin{lemma}\label{lemma:p0_constr} Let $f\in \Te_{n}$, $g\in \Te_{m}$. 
\begin{enumerate}
\item[(i)] $\Pe_f^0$ and $\Pe_{f^*}^0$ are related as follows:
\begin{itemize}
\item If $S\notin \{\emptyset,[n]\}$, then $S\in \Pe^0_{f} \iff S\in \Pe_{f^*}^0$.
\item  $\emptyset \in \Pe^0_f$ $\iff$ $\emptyset \notin \Pe^0_{f^*}$
\item $[n]\in \Pe^0_f$ $\iff$ $f^*$ has  free outputs $\implies$ $[n]\notin
\Pe_{f^*}^0$.
\end{itemize}
In particular, if both $f$ and $f^*$ have no free outputs, then
$\Pe_{f^*}^0=\Pe_f^0\triangle \{\emptyset\}$.
\item[(ii)] $\Pe^0_{f\otimes g}\simeq \{(S,T):\ S\in \Pe_f^0,\ T\in \Pe_g^0, \ S\in
\mathrm{Min}(\Pe_f) \text{ or } T\in \mathrm{Min}(\Pe_g)\}$, with ordering given as $(S,T)\le
(S',T')$ if and only if $S\le S'$, $T\le T'$, and with labels as in Corollary \ref{coro:Pf}(iii).

\item[(iii)] The causal product $f\vtl g$ is connected with the ordinal sum $\star$ of
posets as follows. Replace the labels of $\Pe^0_g$ by their
translations $L_S\mapsto n+L_S=\{n+i,\ i\in L_S\}$. 
\begin{itemize}
\item If $\emptyset \in \Pe_g^0$, then
\[
\Pe^0_{f\vtl g}=\Pe_f^0\star (\Pe_{g}^0\setminus \{\emptyset\})
\]
with free outputs of $f$ added to all labels of elements covering $\emptyset$ in
$\Pe_g^0$. 
\item If $\emptyset\notin \Pe_g^0$ and $f$ has no free outputs, then 
\[
\Pe^0_{f\vtl g}=(\Pe_{f}^0\setminus \{[n]\})\star \Pe_g^0
\]
with labels of $[n]$ added to labels of all elements in $\mathrm{Min}(\Pe_g^0)$.
\item If $\emptyset\notin \Pe_g^0$ and $f$ has free outputs, then 
\[
\Pe^0_{f\vtl g} =\Pe_f^0\star \{\bullet\}\star\Pe_g^0
\]
where the label set $L_{\bullet}$ consists of free outputs of $f$, all other labels remain the
same.

\end{itemize}

\end{enumerate}

\end{lemma}

\begin{proof} (i) The first two assertions are clear, the rest follows from  the
proof of Lemma \ref{lemma:p0_basic} (iii).
The statement (ii) follows from Corollary \ref{coro:Pf} (iii). 
For (iii) we use Proposition \ref{prop:vtl_ordinal}.
If $\emptyset \in \Pe_g^0$, then parts (a) and (c) of Proposition \ref{prop:vtl_ordinal} imply the result. 
If $\emptyset \notin \Pe_g^0$, we apply parts (b) and (d). If $f$ has no free
outputs, either of the situations may occur, but note that in the situation (d) the new
element has an empty label, so it does not belong to $\Pe_f^0$. If $f$ has free outputs,
then only the situation in (d) is possible, with the new element now
having a nonempty label. This  implies the statement.

\end{proof}

\begin{exm}\label{exm:pf0_basic} The diagrams below show the diagrams of the restricted posets $\Pe_f^0$ for
no-signalling channels $\gamma_2\otimes\gamma_2$, $n$-slot process matrices
$(\gamma_2\otimes \dots \otimes \gamma_2)^*$ and multiround process matrices
$(\gamma_4\otimes \gamma_6)^*$ (compare with the $\Pe_f$ diagrams in Example
\ref{exm:T4}). Notice that the diagrams for process matrices are disconnected and also the sets
of labels are separated,  compare this with Proposition \ref{prop:nofree_components}
below.  
\begin{center}
\begin{minipage}[b]{0.3\textwidth}
\centering
\includegraphics[scale=0.8]{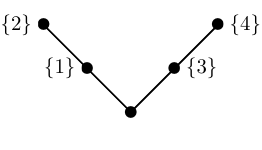}
\vskip -2mm
{\footnotesize no-signalling channel}
\end{minipage}%
\begin{minipage}[b]{0.3\textwidth}

\centering
\includegraphics[scale=0.8]{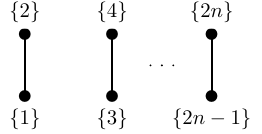}

{\footnotesize $n$-slot process matrix}
\end{minipage}%
\begin{minipage}[b]{0.3\textwidth}

\centering
\includegraphics[scale=0.8]{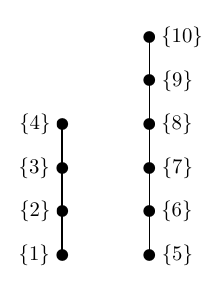}

{\footnotesize multiround process matrix}
\end{minipage}%

\end{center}

\end{exm}

\begin{exm}\label{exm:adapter} The next diagram shows the restricted poset $\Pe_f^0$ for
the type $a$ of adapters, see Example \ref{exm:subtype_afp}.  Since $a=g\to g$ for the
process matrix type $g$,
the diagram can be obtained from the diagram of $g$ in Example \ref{exm:pf0_basic} and Lemma
\ref{lemma:p0_constr}. The function of the input process matrix (pink indices in the diagram) acts on $s_1...s_4$,
while the target process matrix (blue indices) acts on $s_5\dots s_8$. This poset has a more
complicated structure as the previous ones, in particular, notice that the labels are
repeated.
\medskip
\begin{center}
\includegraphics[scale=0.8]{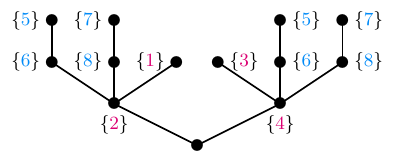}
\medskip

\footnotesize{adapter}
\end{center}

\end{exm}

\subsection{Decomposition of $\Pe_f^0$}

We now proceed by showing that $\Pe_f^0$ can be decomposed into a set of chains from which
$f$ is constructed by applying $\vtl$, $\otimes$ and $ ^*$.

It the statement below, notice that we have $f\vtl 1_\varepsilon=1_\varepsilon\vtl f=f$,
for the trivial type function $1_\varepsilon$.

\begin{prop}\label{prop:pf0_nofree_decomp} Let $f\in \Te_n$. There is a decomposition
$[n]=[k][k+1,m][m+1,n]$, where any of the subintervals might be empty, type functions
$\beta_1\in \Te_k$, $\beta'\in \Te_{n-m-1}$ and $h\in \Te_{m-k-1}$ such that $\beta$,
$\beta'$ are chain types, $h$ nor $h^*$ have any free indices and 
\[
f\approx \beta\vtl h\vtl \beta'.
\]

\end{prop}

\begin{proof}
We have by \eqref{eq:nofree}  that $f\approx p_{k_1}\otimes h_1 \otimes
1_{l_1}$, where $k_1=|I^F_f|$, $l_1=|O^F_f|$ and $h_1\in \Te_{n_1}$, $n_1:=n-k_1-l_1$ has no free indices. By Lemma
\ref{lemma:onechain_causal} we get $f\approx p_{k_1}\vtl h_1\vtl 1_{l_1}$.  Assume $h_1^*$ has free
indices, then $h_1^*\approx p_{k_2}\vtl h_2\vtl 1_{l_2}$, so that 
\[
f\approx p_{k_1}\vtl 1_{k_2}\vtl h_2^* \vtl p_{l_2}\vtl 1_{l_1}=\beta_2\vtl h_2^*\vtl
\beta_2'
\]
where $\beta_2:=p_{k_1}\vtl 1_{k_2}$ and $\beta_2':=p_{l_2}\vtl 1_{l_1}$ are chain types, and $h_2\in \Te_{n_2}$, $n_2:=n_1-k_2-l_2$ has no free indices. If $h_2^*$ has again free
indices, we may proceed in this way, obtaining 
\[
f\approx \beta_i\vtl h_i\vtl \beta_i'
\]
with chain types $\beta_i$, $\beta_i'$ and $h_i\in \Te_{n_i}$ at each step. Since $n_i$ decreases, we either get to
$n_i\le 3$, in which case $f$ must be a chain type, and we may put $k=m=n$ or $k=m=0$  in the above
decomposition, or we get to the situation when $h_i$ and $h_i^*$ have no free
indices. 

\end{proof}

In the situation of the above Proposition, if $f$ is not a chain, $\Pe_h^0$ and the chain
types $\beta_1$ and $\beta_2$ can be
seen from $\Pe_f^0$ as follows. Put $f_1=\beta_1\vtl h$, so that $f=f_1\vtl \beta_2$. 
Since $f_1$ has no largest element and no free outputs, we obtain from Lemma \ref{lemma:p0_constr} that
$\Pe_f^0= \Pe_{f_1}^0\star
(\Pe_\beta\setminus\{\emptyset\})$, with the same sets of labels.
It follows that there exists a largest element in $\Pe_f^0$ with the property that
it covers more than one element. Let this element be $S$ and let $T_1,\dots, T_k$ be the
elements covered by $S$. There is a chain $S=S_1\lneq \dots \lneq S_K$, where $S_K$ is
the largest element in $\Pe_f^0$. We then have $\Pe_{f_1}^0\simeq \cup_j T_j^{\downarrow}$. 
If $K$ is even, add an element $S_0$ with empty
label at the bottom of the chain to obtain a chain of even length. This then corresponds
to the chain type $\beta_2$.  Some examples are given below.

\begin{exm}\label{exm:pf0_causal} The following labelled Hasse diagrams correspond to
causal products of the form $f\vtl \beta$ for a chain $\beta$:

\begin{center}
\begin{minipage}[c]{0.3\textwidth}
\centering
{\footnotesize (a)}
\vskip 30pt

\includegraphics[scale=0.9]{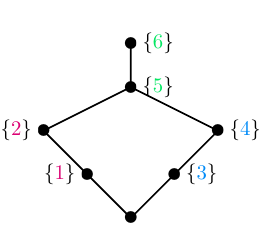}

{\footnotesize $(\gamma_2\otimes \gamma_2)\vtl \gamma_2$}
\end{minipage}%
\begin{minipage}[c]{0.3\textwidth}
\centering
{\footnotesize (b)}
\vskip  13pt

\includegraphics[scale=0.9]{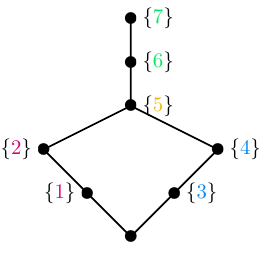}

{\footnotesize $(\gamma_2\otimes \gamma_2)\vtl \gamma_3$}
\end{minipage}%
\begin{minipage}[c]{0.3\textwidth}
\centering
{\footnotesize (c)}
\vskip 15pt

\includegraphics[scale=0.9]{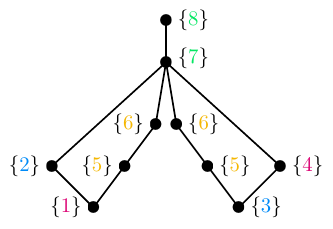}
\vskip 5pt

{\footnotesize $((\gamma_2\otimes \gamma_2)^*\otimes \gamma_2)\vtl \gamma_2$}
\end{minipage}
\end{center}

Similarly as in Example \ref{exm:causal}, we have $f\vtl\beta\le p_O^*\vtl \beta$, so that 
the maps of the corresponding types have the form of a comb, where this time the last
tooth has a special structure given by $f$ and $\beta$:

\begin{center}
\begin{minipage}[c]{0.46\textwidth}
\centering
\includegraphics[scale=0.8]{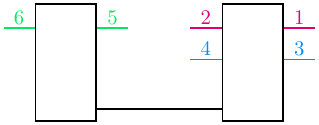}

{\footnotesize (a)}
\end{minipage}%
\begin{minipage}[c]{0.46\textwidth}
\centering
\includegraphics[scale=0.8]{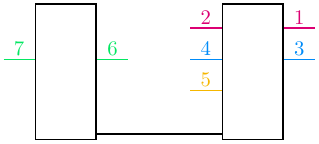}

{\footnotesize (b)}
\end{minipage}%

\end{center}
In (a),  green inputs can signal to all outputs, otherwise an input can only signal to
outputs of the same color. In the case (b), the channel in the last tooth has an
additional input (with index 5), which can signal to both pink  and blue outputs. This
is due to the fact that $\gamma_3=p_1\otimes \gamma_2=p_1\vtl
\gamma_2$, so that $f\vtl \gamma_3=(f\vtl p_1)\vtl \gamma_2$, which extends the type $f$
by a global past. The maps of the type in (c) have a 2-comb structure as well, but the
constraints are more complicated:
\begin{center}
\includegraphics[scale=0.8]{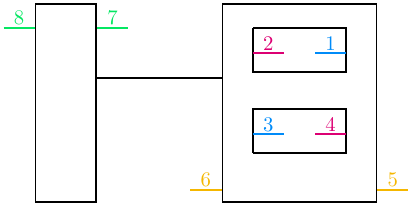}

{\footnotesize (c)}
\end{center}
The last tooth is a no-signalling composition of a process matrix with a channel. The colors
are related to signalling conditions as in case (a). Let us remark that the signalling
relations can be seen directly from the order structure of the reduced poset, see 
\cite{jencova2026order}.

\end{exm}

We now decompose $f_1$ as $f_1\approx \beta_1\vtl h$.
Assume first that $\Pe_{f_1}^0$ has no least element. In this case
$f_1^*$ has no free inputs, since $\mathrm{Min}(\Pe_{f^*_1}^0)=\{\emptyset\}$. We also have
$I^F_f=I^F_{f_1}$, and  $f_1\approx p_k\otimes h=p_k\vtl h$, with
$k=|I^F_{f_1}|$, which gives us $\beta_1=p_k$.

If  $\Pe_{f_1}^0$ has a least element $S_1$,  let $S$ be the smallest element in
$\Pe_f^0$ with the property that it is covered by more than one element. Let  $S_1\lneq
\dots \lneq S_K$ be a chain such that $S_K=S$ and let  $T_1,\dots, T_l$ be the elements
covering $S$. Put $L:=\cap_j L_{T_j}$. 
Using Lemma  \ref{lemma:p0_constr} as before, we have the following situations. 

{\it Case $L=\emptyset$.}\enspace   If $K$ is odd, then the chain corresponds to a chain type
$\beta_1\in \Te_{k}$, $k=|S_K|=\sum_j |L_{S_j}|$ and $\Pe_h^0=\cup_j
T_j^{\uparrow,f_1}\cup\{\emptyset\}$. If $K$ is even, then $\Pe_h^0=\cup_j
T_j^{\uparrow,f_1}$ and $\beta_1$
is the chain type for the chain $S_1\subsetneq \dots\subsetneq S_{K-1}$ (with free outputs
in $L_{S_K}$). 

{\it Case $L\ne \emptyset$.} \enspace Add an element $S_{K+1}$ at the end of the chain, with label
$L_{S_{K+1}}=L$ and put $k=|S_{K+1}|$. If $K$ is odd, then $\beta_1\in \Te_k$ is the chain type for the chain
$S_1\lneq \dots \lneq S_K$ (with free outputs in $L$) and $\Pe_H^0=\cup_j
T_j^{\uparrow, f_1}\cup \{\emptyset\}$, with labels of $T_j$ replaced by $L_{T_j}\setminus L$.  If
$K$ is even, then $\Pe_h^0=\cup_j T_j^{\uparrow,f_1}$ and $\beta_1$ is the chain type  for
$S_1\subsetneq\dots\subsetneq S_{K+1}$.

See also the diagrams in Example \ref{exm:causal} (with the corresponding reduced poset $\Pe_f^0$) for some examples.

\begin{exm}[Comb to comb]\label{exm:comb_to_comb} The diagram on the left below depicts the function
$g=(\gamma_2\to \gamma_4)=(\gamma_2\otimes \gamma_4^*)^*\in \Te_6$. By Proposition \ref{prop:chains_combs} and
Corollary  \ref{coro:maps}, the corresponding higher order objects represent maps from channels
to 2-combs. 
\begin{center}
\begin{minipage}[c]{0.45\textwidth}
\centering
\includegraphics[scale=0.8]{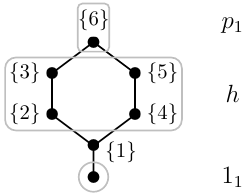}
\end{minipage}
\begin{minipage}[c]{0.45\textwidth}
\centering
\includegraphics[scale=0.8]{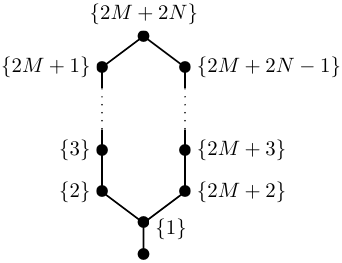}
\end{minipage}
\end{center}
Applying Proposition \ref{prop:pf0_nofree_decomp}  together with
the remarks below it, we obtain that $g\approx 1_1\vtl h\vtl p_1$, as indicated on the
diagram.  It is easily checked that
$h$ is the process matrix type $h\simeq(\gamma_2\otimes\gamma_2)^*$ (see Example \ref{exm:T4}), and
$g$ is the type of process matrices with global past and future (Example
\ref{exm:pm_past_future}). In this way, we decomposed $g$
into chain types, with respect to the decomposition of $[6]$ into four intervals
$[6]=[1][2,3][4,5][6,6]$. Let
$s=s^1\dots s^4$ be the corresponding concatenation. We indicate the component on which
any of the chain types is acting by an upper index, so that 
 $g\approx 1_1^1\vtl (\gamma_2^2\otimes \gamma_2^3)^*\vtl p_1^4$. By Lemmas \ref{lemma:causal_product} and \ref{lemma:causal_tensor}, we have
\begin{align*}
g&\approx 1^1_1\vtl((\gamma_2^2\vtl \gamma_2^3)\wedge (\gamma_2^3\vtl \gamma_2^2))^*\vtl
p^4_1=
1^1_1\vtl((\gamma_2^2\vtl \gamma_2^3)^*\vee(\gamma_2^3\vtl\gamma_2^2)^*)\vtl p^4_1\\
&=(1_1\vtl (\gamma^2_2)^*\vtl (\gamma_2^3)^*\vtl p_1^4)\vee (1_1\vtl (\gamma^3_2)^*\vtl
(\gamma_2^2)^*\vtl p_1^4).
\end{align*}
It can be checked that the last expression is equal to $\gamma_6\vee (\gamma_6\circ
\rho_\lambda)$, where $\lambda\in \permut_4$  acts as $(1234)\mapsto (1324)$ and
$\rho_\lambda$ is the block permutation corresponding to $\lambda$ and the above
decomposition of $[6]$, see Appendix \ref{sec:permut}. Similarly, for any $M,N$, the type
describing higher order objects corresponding to mappings of $M$-combs into $N$-combs (see
the diagram on the right) is
an element of $\Te{2(M+N)}$, and  with respect to the decomposition
$[2(M+N)]=[1][M][N-2][1]$,  it can be decomposed as
\begin{align*}
1_1\vtl(\gamma_{2M}\otimes \gamma_{2N-2})^*\vtl p_1&=(1_1\vtl \gamma^*_{2M}\vtl
\gamma_{2N-2}^*\vtl p_1)\vee (1_1\vtl \gamma^*_{2N-2}\vtl
\gamma_{2M}^*\vtl p_1)\\
&=\gamma_{2(M+N)}\vee (\gamma_{2(M+N)}
\circ\rho_\lambda).
\end{align*}
In the case of quantum objects, this corresponds to results of \cite{perinotti2017causal}.

\end{exm}

We will now deal with the case when $f$ and $f^*$ have no free indices.
Let $\Pe$ be a poset with labels in $[n]$ and let $\Pe_1,\Pe_2\subseteq \Pe$ be nonempty. We will say that $\Pe_1$ and $\Pe_2$ are {independent
components} of $\Pe$ if $\Pe=\Pe_1+\Pe_2$ (direct sum of posets) and $L_S\cap L_T=\emptyset$ for 
any $S\in \Pe_1$ and $T\in \Pe_2$. In this case, we will write
\[
\Pe=\Pe_1\indep \Pe_2.
\]

\begin{prop}\label{prop:nofree_components} Let $f\in \Te_n$ be such that $f$ and  $f^*$ have no
free indices. Assume that $\emptyset \notin \Pe_{f}$. 
\begin{enumerate}
\item[(i)] If $f^*\approx f_1\otimes f_2$ for some type functions $f_1$ and
$f_2$, then 
\[
\Pe_f^0\simeq (\Pe_{f_1}^0\setminus \emptyset) \indep  (\Pe_{f_2}^0\setminus
\emptyset).
\]
\item[(ii)] If $\Pe_f^0=\Pe_1\indep \Pe_2$ for some labelled subposets $\Pe_1$ and $\Pe_2$, 
then there are some type functions $f_1$ and $f_2$ such that $\Pe_1=(\Pe_{f_1}^0\setminus
\emptyset)$, $\Pe_2=(\Pe_{f_2}^0\setminus \emptyset)$ and
$f^*\approx f_1\otimes f_2$.
\item[(iii)] If $f\approx f_1\otimes f_2$ for type functions $f_1$ and
$f_2$, then no 
decomposition of $\Pe_f^0$ into independent components exists.

\end{enumerate}

\end{prop}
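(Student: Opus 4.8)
The plan is to prove the three statements more or less together, exploiting the correspondence between the tensor product $\otimes$ on type functions and the direct sum $\Indep$ of labelled posets, under passage to duals. First I would record the basic dictionary: by Lemma \ref{lemma:Pf}(ii), if $g\approx g_1\otimes g_2$ with respect to a decomposition $[n]=[n_1]\oplus[n_2]$, then $\Pe_g\simeq \Pe_{g_1}\times \Pe_{g_2}$ via $(S,T)\mapsto S\oplus T$. The key point is then to understand what happens to the sub-poset of elements with nonempty labels. Using Corollary \ref{coro:Pf}(iii), the label of $(S,T)$ in $\Pe_g$ is nonempty precisely when ($S\in\mathrm{Min}(\Pe_{g_1})$ and $L_T\neq\emptyset$) or ($T\in\mathrm{Min}(\Pe_{g_2})$ and $L_S\neq\emptyset$); and since the minimal elements of $\Pe_{g_i}^0$ coincide with those of $\Pe_{g_i}$ (Lemma \ref{lemma:p0_basic}(i)), one checks that $\Pe_g^0$, after removing a bottom element $\emptyset$ if present, is exactly $(\Pe_{g_1}^0\setminus\{\emptyset\})\Indep(\Pe_{g_2}^0\setminus\{\emptyset\})$: the elements of the first factor sit at $(S,T_0)$ for $T_0$ a fixed minimal element of $\Pe_{g_2}$, those of the second at $(S_0,T)$, and there is no order relation between the two families, while the label condition for independence is just that labels of the two families are disjoint, which holds since the labels in the first family lie in $[n_1]$ and in the second in $[n_2]$. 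This is the computation underlying (i), applied with $g=f^*$, $g_1=f_1$, $g_2=f_2$: note $\emptyset\in\Pe_{f^*}$ since $\emptyset\notin\Pe_f$ (Corollary \ref{coro:pf_dual}), and $[n]\notin\Pe_{f^*}$ since $f^*$ has no free outputs means $[n]\notin\Pe_{f^{**}}=\Pe_f$... here I must be careful about which of $[n],\emptyset$ lie where; the hypothesis ``$f$ and $f^*$ have no free indices'' together with $\emptyset\notin\Pe_f$ pins this down, and I would spell it out explicitly using Corollary \ref{coro:pf_dual} and the characterisations of free inputs/outputs in terms of $\emptyset,[n]\in\Pe_f$.

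For (ii), the plan is to run the argument in reverse. Given $\Pe_f^0=\Pe_1\Indep\Pe_2$, I would first pass to $\Pe_{f^*}^0$, which by Corollary \ref{coro:Pf}(ii) is obtained from $\Pe_f^0$ by adjoining a bottom element $\emptyset$ (it is added precisely because $\emptyset\notin\Pe_f$) and adjoining $[n]$ only if $f^*$ has a free output, which it does not; so $\Pe_{f^*}^0=\{\emptyset\}\star(\Pe_1\Indep\Pe_2)$ as labelled posets, where the labels are unchanged. Now I want to reconstruct $f^*$ from $\Pe_{f^*}^0$. The mechanism is the reconstruction procedure described in Section \ref{sec:pf0} (the remarks following Proposition \ref{prop:pf0_largest} and \ref{prop:pf0_smallest}, and the material in Appendix \ref{app:pf0}): from a labelled poset one reads off the full poset $\Pe$ (by taking $S=\bigcup\{L_{S'}:S'\le S\}$) and then the Möbius expansion $\sum_{S\in\Pe}(-1)^{\rho(S)}p_S$ via Theorem \ref{thm:graded}. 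The point to verify is that if $\Pe_{f^*}^0$ splits as an ordinal sum of a point over an independent sum, then the recovered poset $\Pe_{f^*}$ is a product $\Pe_{g_1}\times\Pe_{g_2}$ for the two posets $\Pe_{g_i}=\{\emptyset\}\star\Pe_i$ over the index sets $[n_i]$ spanned by the labels of $\Pe_i$ (these index sets partition $[n]$ because there are no free indices, so every element of $[n]$ is a label of some element of $\Pe_f$, hence lies in exactly one of $\Pe_1,\Pe_2$). Granting that, setting $f_i$ to be the type function with $\Pe_{f_i}=\Pe_{g_i}$ (one must check these are type functions — they are, being recovered from posets that are themselves $\Pe^0$ of honest type functions obtained by restricting the construction of $f$, or alternatively by induction), Lemma \ref{lemma:Pf}(ii) and uniqueness of the Möbius transform give $f^*\approx f_1\otimes f_2$, and then $\Pe_{f_i}^0\setminus\{\emptyset\}=\Pe_i$ by construction.

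Statement (iii) is then almost formal: if $f\approx f_1\otimes f_2$ with both $f_i$ of rank $\ge 1$, then by Lemma \ref{lemma:Pf}(ii) and the label computation above (applied to $f$ itself rather than $f^*$), $\Pe_f^0$ is \emph{connected} as a poset — more precisely, since $f$ (not $f^*$) is the product, $\Pe_f\simeq\Pe_{f_1}\times\Pe_{f_2}$ has a unique minimal element $(S_0^1,S_0^2)$ and a unique maximal element, and every element is comparable to both, so $\Pe_f^0$ cannot be written as a direct sum of two nonempty pieces (a direct sum of posets has no element comparable to everything unless one summand is empty). I would phrase this as: $\Pe_f$ (hence $\Pe_f^0$) has both a least and a greatest element when $f$ is a tensor product of two nontrivial types, whereas an independent-components decomposition $\Pe_f^0=\Pe_1\Indep\Pe_2$ forces $\Pe_f^0$ to have no least element (both minimal elements, one from each component, are incomparable), contradiction.

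The main obstacle I anticipate is (ii): making rigorous the claim that an independent-sum decomposition of the \emph{labelled} poset $\Pe_f^0$ actually lifts to a tensor decomposition of $f^*$. The subtlety is that $\Pe_f^0$ discards a lot of information (all the non-label elements), so one has to argue that the reconstruction $\Pe_f^0\leadsto\Pe_f$ is compatible with the splitting — i.e. that no ``new'' order relations or elements appear that would cross between the two index blocks. I expect this to follow from the fact that an element $S\in\Pe_f$ is determined by which labels lie below it ($S=\bigcup_{S'\le S}L_{S'}$), so if the labels are partitioned into two blocks with no comparabilities across, every element of $\Pe_f$ is a ``product'' of one element from each block's sub-poset — but pinning this down cleanly is where the real work is, and it is likely what Appendix \ref{app:pf0} is for. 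I would cite that appendix for the technical lemma if available, and otherwise prove the needed statement by induction on $n$ along the lines of the proofs of Propositions \ref{prop:pf0_largest}--\ref{prop:pf0_smallest}, peeling off free indices and chain-type factors until reaching the case already handled.
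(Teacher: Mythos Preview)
Your plan for (i) is essentially the paper's argument, and is fine.

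Your argument for (iii) has a genuine gap. You assert that when $f\approx f_1\otimes f_2$ the poset $\Pe_f\simeq\Pe_{f_1}\times\Pe_{f_2}$ has a unique minimal and a unique maximal element; this is false. Under the standing hypothesis $\emptyset\notin\Pe_f$, at least one factor $\Pe_{f_i}$ does \emph{not} have $\emptyset$ as least element, and there is no reason for it to have a least element at all. For instance, take $f_1=f_2=(\gamma_2\otimes\gamma_2)^*$: each factor has two minimal elements, so $\Pe_f$ has four. The paper's proof of (iii) is not ``almost formal'': it picks a minimal element $(U_i,V_j)\in\Pe_1$ and uses the label computation from Corollary~\ref{coro:Pf}(iii) together with Lemma~\ref{lemma:pf0_mincover} to propagate membership in $\Pe_1$ across all minimal elements $(U_{i'},V_{j'})$. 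The key observation is that a minimal covering element $(T,V_j)$ with $T\notin\mathrm{Min}(\Pe_{f_1})$ has label $L_T$ independent of $j$, so the independence condition on labels forces $(T,V_{j'})\in\Pe_1$ for every $j'$, and then comparability drags the minimal elements along.

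Your plan for (ii) is circular and incomplete. You invoke ``the reconstruction procedure described in Section~\ref{sec:pf0}'' to recover $f^*$ from $\Pe_{f^*}^0$, but Theorem~\ref{thm:pf0} (the statement that $\Pe_f^0$ determines $f$) is proved \emph{using} Proposition~\ref{prop:nofree_components}, so you cannot cite it here. You also do not explain why the pieces $f_i$ you would define are type functions; ``restricting the construction of $f$'' is not a well-defined operation. The paper's approach is quite different and uses (iii) as the key input: by the inductive construction of $\Te_n$, either $f$ or $f^*$ is (up to permutation) a nontrivial tensor product; (iii) rules out $f$, so $f^*\approx f_1\otimes f_2$ for some honest type functions $f_1,f_2$. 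Part~(i) then gives \emph{one} independent-component decomposition $\Pe_f^0=\Pe'_1\indep\Pe'_2$. To match an \emph{arbitrary} decomposition $\Pe_1\indep\Pe_2$, the paper passes to the finest decomposition $\Pe_f^0=\mathcal Q_1\indep\cdots\indep\mathcal Q_M$ and iterates the argument on the factors (applying (iii) and (i) again to $f_1^*$, etc.) to show that each $\mathcal Q_i$ is $\Pe_{g_i}^0\setminus\{\emptyset\}$ for a type function $g_i$; the desired $f_1,f_2$ are then tensor products of the $g_i$ grouped according to how $\Pe_1,\Pe_2$ partition the $\mathcal Q_i$.
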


\begin{proof} Under the assumptions, $\emptyset \in \Pe_{f^*}$ and 
$\Pe_f^0=\Pe_{f^*}^0\setminus \{\emptyset\}$. Assume that $f_1\in \Te_{n_1}$ and $f_2\in \Te_{n_2}$ are such that $f^*=f_1\otimes f_2$
for the decomposition $[n]=[n_1]\boxplus [n_2]$. Since $\emptyset \in
\Pe^0_{f^*}=\Pe^0_{f_1\otimes f_2}$, we see by Lemma \ref{lemma:p0_constr} (ii) that both
$\Pe^0_{f_1}$ and $\Pe^0_{f_2}$ must contain $\emptyset$, and  $\Pe_{f^*}^0$ consists
of $\Pe_{f_1}^0$ and $\Pe_{f_2}^0$ glued at $\emptyset$, with labels of $\Pe_{f_2}^0$
translated by $n_1$. Since $\Pe_f^0=\Pe^0_{f^*}\setminus \{\emptyset\}$, the assertion (i) 
follows.  

For (iii), assume that $f=f_1\otimes f_2$ and $\Pe_f^0=\Pe_1\indep \Pe_2$. Note that none of
the functions can be a 1-element chain, since then $f$ would have free inputs or outputs. Let
$\mathrm{Min}(\Pe_{f_1})=\{U_1,\dots,U_k\}$,  $\mathrm{Min}(\Pe_{f_2})=\{V_1,\dots,V_l\}$.
By Lemma \ref{lemma:p0_constr}, we have  $\mathrm{Min}(\Pe_{f})=\mathrm{Min}(\Pe_{f_1})\times \mathrm{Min}(\Pe_{f_2})$. 
For some $i$ and $j$, let $(U_i,V_j)\in \Pe_1$. Since $(U_i,V_j)$ and $(T,V_j)$ are
comparable for any $T\in \Pe^0_{f_1}$, $U_i\le T$, we must have $(T,V_j)\in \Pe_1$ for all
such $T$. By Lemma \ref{lemma:pf0_mincover}, there is some $T$ that covers $U_i$. 
But then $L_{(T,V_j)}=L_{T,V_{j'}}=L_T$ for all $j'$, so that $(T,V_{j'})\in \Pe_1$
for all $j'$. Since $(U_i,V_{j'})\le (T,V_{j'})$ for all $j'$, this implies that $(U_i,
V_{j'})\in \Pe_1$ for all $j'$. By the same reasoning with $V_j$, we get that all $(U_i,
V_j)\in \Pe_1$, which is not possible.

For (ii), assume that $\Pe_f^0=\Pe_1\indep \Pe_2$. Since either $f$ or $f^*$ is, up to a
permutation, a tensor
product of type functions and we cannot have $f\approx f_1\otimes f_2$ by (iii), 
 it must hold that $f^*\approx f_1\otimes f_2$. But then by (i) $\Pe_f^0=\Pe'_1\indep \Pe'_2$, with 
$\Pe_i'=\Pe^0_{f_i}\setminus \{\emptyset\}$. Let 
\[
\Pe_f^0=\mathcal Q_1\indep \dots \indep  \mathcal Q_M
\]
be the finest decomposition into independent components. Then there are some $C,D\subset [M]$
such that 
\[
\Pe_1=\Indep_{i\in C} \mathcal Q_i,\quad  \Pe_2=\Indep_{i\in [M]\setminus C} \mathcal
Q_i, \quad \Pe'_1=\Indep_{i\in D} \mathcal Q_i,\quad  \Pe'_2=\Indep_{i\in [M]\setminus D}
\mathcal Q_i.
\]
Assume that $M\ge 3$, otherwise each $\Pe_i$ is one of $\Pe'_j$ and we are done. 
Then $D$ or $[M]\setminus D$ has at least two elements. So assume $|D|\ge 2$. 
This implies that $\Pe_1'$ has no largest element and since we have $\Pe_1'=\Pe_{f_1}^0\setminus\{\emptyset\}$,  
$\Pe_{f_1}^0$ has no largest element either. By Lemma \ref{lemma:p0_constr} (i), this
implies that $f_1^*$ has no free outputs, and since 
 $f_1$ has no free outputs by the assumptions,
the same Lemma implies that  $\Pe_1'=\Pe_{f_1^*}^0$. It follows that  $f_1^*$ satisfies the assumptions
 in (iii). Hence $f_1\approx g_1\otimes g_2$ for some type functions $g_1$ and $g_2$ and we
 obtain that 
 \[
\Pe_{g_1}^0\setminus \{\emptyset\} = \Indep_{i\in D'} \mathcal Q_i,\quad \Pe_{g_2}^0\setminus
\{\emptyset\} = \Indep_{i\in D\setminus D'}\mathcal Q_i
 \]
for some $D'\subset D$. Continuing in this way, we obtain that for any $i\in [M]$ there is
some type function $g_i$ such that $\emptyset \in \Pe_{g_i}^0$ and $\mathcal
Q_i=\Pe_{g_i}^0\setminus \{\emptyset\}$. It follows that 
\[
\Pe_1=\Indep_{i\in C}(\Pe_{g_i}^0\setminus \{\emptyset\})=\Pe^0_{\otimes_{i\in
D}g_i}\setminus\{\emptyset\}
\]
and $f_1\approx \otimes_{i\in D}g_i$, similarly, $\Pe_2=\Pe^0_{\otimes_{i\in
[M]\setminus D}g_i}\setminus\{\emptyset\}$ and $f_2\approx\otimes_{i\in [M]\setminus
D}g_i$.

\end{proof}

It follows from the above proof that in the situation of the above Proposition, if $f^*\approx g_1\otimes
\dots\otimes g_k$, we can identify $\Pe_{g_l}^0$  by looking at the independent
components of $\Pe_f^0$.
\begin{exm}\label{exm:components} The next diagram shows $\Pe_f^0$ for a type function
$f\in \Te_{16}$:

\begin{center}
\includegraphics[scale=0.8]{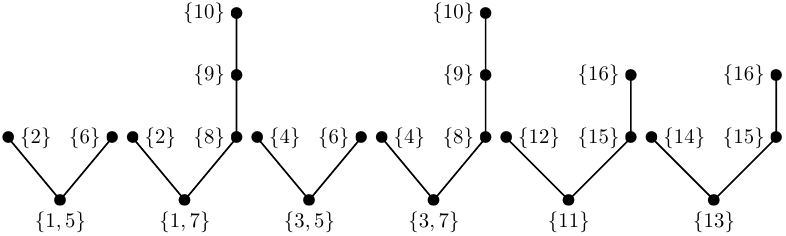}
\end{center}

Note that $\Pe_f^0$ has no largest or smallest element, and also no free indices, so the
assumptions of Proposition \ref{prop:nofree_components} are satisfied. As a poset, $\Pe_f^0$  is a direct sum of 6 posets, but there are only two independent
components: one with labels $\le 10$, and one with labels  $>10$. It follows that
$f^*\approx
g_1\otimes g_2$. We can obtain
$\Pe_{g_1}^0$ and $\Pe_{g_2}^0$ by adding $\emptyset$, that is, an unlabelled minimal
element to each component, and relabelling if necessary:
\begin{center}
\begin{minipage}[c]{0.6\textwidth}
\centering
\includegraphics[scale=0.8]{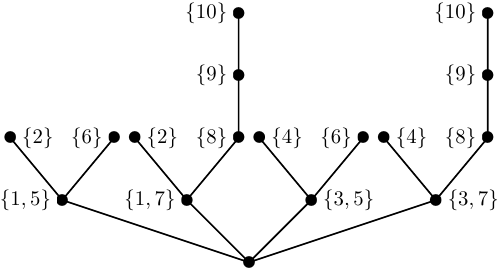}
\end{minipage}
\begin{minipage}[c]{0.35\textwidth}
\centering
\includegraphics[scale=0.8]{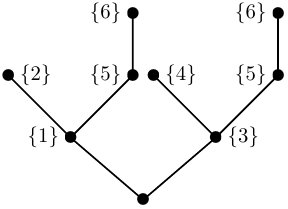}
\end{minipage}
\end{center}
We can also see that both $g_1^*$ and $g_2^*$ are products. Using the procedure described
in Appendix \ref{app:pf0}, we obtain that $g_1\approx ((\gamma_2\otimes \gamma_2)^*\otimes (\gamma_2\otimes
\gamma_4)^*)^*$ and $g_2\approx ((\gamma_2\otimes\gamma_2)^*\otimes \gamma_2)^*$.

\end{exm}

 The case when $\Pe_f^0$ has no independent components is somewhat more complicated.
 By the proposition  above, this
is the case when $f\approx f_1\otimes\dots\otimes f_k$. The proof of the following result is in Appendix \ref{app:pf0}.

\begin{prop}\label{prop:nofree_product}  Let $f\in \Te_n$ be such that $f$ and $f^*$ have
no free indices and $\emptyset\notin \Pe_f^0$. Assume $f\approx f_1\otimes
\dots\otimes f_k$ is a finest decomposition of $f$ as a product.  Then the labelled posets
$\Pe_{f_l}^0$, $l=1,\dots,k$ can be obtained from the structure of $\Pe_f^0$, up to a
permutation on the labels.

\end{prop}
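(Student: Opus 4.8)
The plan is to establish the statement by induction on $n$, as the final case in the inductive proof that every $g\in\Te_m$ is completely determined by the pair $(\Pe_g^0,m)$: Propositions~\ref{prop:pf0_largest} and \ref{prop:pf0_smallest} cover the cases in which $g$ or $g^*$ has a free index, Proposition~\ref{prop:nofree_components} the case in which $\Pe_g^0$ admits a decomposition into independent components, and the present Proposition the remaining case, in which $g\approx g_1\otimes\dots\otimes g_k$ with $g$ and $g^*$ free of free indices and $\emptyset\notin\Pe_g$. So let $f$ be as in the statement. By the $k$-fold iteration of Lemma~\ref{lemma:Pf}(ii) we have $\Pe_f\simeq\Pe_{f_1}\times\dots\times\Pe_{f_k}$ with $\hat f_{(S_1,\dots,S_k)}=\prod_l(\widehat{f_l})_{S_l}$, and by the $k$-fold iteration of Corollary~\ref{coro:Pf}(iii) an index $i$ lying in the $l$-th block of the decomposition $[n]=[n_1]\sqcup\dots\sqcup[n_k]$ is a label of $(S_1,\dots,S_k)$ if and only if $i\in L_{S_l,f_l}$ and $S_{l'}\in\mathrm{Min}(\Pe_{f_{l'}})$ for every $l'\ne l$. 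A preliminary observation: since $f$ has neither free inputs nor free outputs, the same is true of each $f_l$ (a free index of $f_l$ would be a free index of $f$ by Corollary~\ref{coro:Pf}(iii)); in particular $f_l$ is never of the form $p_S$, so each $\Pe_{f_l}$ has even rank $\ge 2$, hence at least three elements, and $\mathrm{Min}(\Pe_{f_l})\subsetneq\Pe_{f_l}$.

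The principal task is to recover the block partition $[n]=[n_1]\sqcup\dots\sqcup[n_k]$ from the labelled poset $\Pe_f^0$; once this is done, the rest is routine. Two pieces of information about $\Pe_f^0$ are relevant. First, by Lemma~\ref{lemma:p0_basic}(i), $\mathrm{Min}(\Pe_f^0)=\mathrm{Min}(\Pe_f)=\prod_l\mathrm{Min}(\Pe_{f_l})$, realised inside $2^{[n]}$ via $(U_1,\dots,U_k)\mapsto U_1\sqcup\dots\sqcup U_k$, and such a minimal element carries the label $U_1\sqcup\dots\sqcup U_k$, since $L_S=S$ for minimal $S$; thus the minimal elements form a ``product grid'' of set systems. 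Second, every non-minimal element $(S_1,\dots,S_k)$ of $\Pe_f^0$ has exactly one non-minimal coordinate $S_l$, and its label $L_{S_l,f_l}$ is then \emph{monochromatic}, i.e.\ contained in a single block $[n_l]$. I would recover the blocks by combining these: the monochromatic labels of non-minimal elements group together indices that lie in a common block, while the occurrence pattern of indices in the labels of minimal elements forces further indices of a common block together (two indices of the same block that never co-occur in a label are still tied by the way they partition the corresponding coordinate of the grid $\prod_l\mathrm{Min}(\Pe_{f_l})$). The claim to be proved is that the finest partition of $[n]$ compatible with all of this coincides with $[n]=[n_1]\sqcup\dots\sqcup[n_k]$; one inclusion is immediate from the monochromaticity of non-minimal labels together with the description of which covering relations of $\Pe_f$ survive in $\Pe_f^0$, and the other is the delicate point discussed below.

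Granting the partition, I recover each factor as a fibre of a coordinate projection. Fix a minimal element $M_0=(U_1^0,\dots,U_k^0)$ of $\Pe_f$. The subposet $F_l:=\{\,(S_1,\dots,S_k)\in\Pe_f:\ S_{l'}=U_{l'}^0\ \text{for all}\ l'\ne l\,\}$ is order-isomorphic to $\Pe_{f_l}$, and by the label formula $F_l\cap\Pe_f^0$ consists of $M_0$ together with exactly those elements of $F_l$ whose label is nonempty and contained in $[n_l]$; restricting labels to $[n_l]$ yields $\Pe_{f_l}^0$, up to the permutation of labels allowed in the statement. That $F_l\cap\Pe_f^0$ and its labels are visible inside $\Pe_f^0$ follows because its non-minimal members are precisely those $S\in\Pe_f^0$ with $\emptyset\ne L_S\subseteq[n_l]$ that can be reached from $M_0$ by a saturated chain in $\Pe_f^0$ whose steps only adjoin indices of $[n_l]$, and whether $\emptyset$ must be adjoined at the bottom is detected by whether the $[n_l]$-component of $\mathrm{Min}(\Pe_f^0)$ is a single set; independence of the choice of $M_0$, up to relabelling, uses the grid structure of $\mathrm{Min}(\Pe_f)$. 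With the labelled posets $\Pe_{f_l}^0$, $l=1,\dots,k$, in hand, the induction hypothesis applied to each $f_l$ (which lives on fewer than $n$ indices) shows that $f_l$ is determined by $\Pe_{f_l}^0$, and therefore $f\approx f_1\otimes\dots\otimes f_k$ is determined by $\Pe_f^0$.

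The hard part is the second inclusion claimed in the middle paragraph: that a connected $\Pe_f^0$ cannot be compatible with two genuinely different product decompositions, so that the block partition really is forced. The awkward configuration is when some factor $f_l$ does not contain $\emptyset$: then $\mathrm{Min}(\Pe_f^0)$ is a true product of nontrivial antichains and its elements carry labels that mix all blocks, so the partition is not simply read off from connectivity of $\Pe_f^0$. I expect this to require a careful analysis, using Lemma~\ref{lemma:pf0_mincover} and bookkeeping of which covering relations of $\Pe_f$ persist in $\Pe_f^0$, showing that the block of an index is pinned down by the co-occurrence and non-co-occurrence relations among labels lying over the minimal elements, and that the hypotheses ``$f$ and $f^*$ have no free indices'' and ``$\emptyset\notin\Pe_f$'' are precisely what rule out the degenerate cases in which this identification would fail.
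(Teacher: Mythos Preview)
Your outline correctly identifies the architecture of the argument: first recover the block partition (the ``coloring'') of $[n]$ from $\Pe_f^0$, then read off each $\Pe_{f_l}^0$ as a fibre. This is exactly the paper's strategy. However, you have not actually proved the step you yourself call ``the hard part'': you write that you \emph{expect} it to require careful analysis using Lemma~\ref{lemma:pf0_mincover} and bookkeeping of co-occurrence relations, but you give no such analysis. The paper devotes the bulk of its proof to precisely this, and the mechanism is not the vague co-occurrence idea you sketch. Concretely, the paper introduces, for each minimal covering label $L_i$, auxiliary sets $\mathcal V_i$ (the intersection of all minimal labels covered by $L_i$) and $\mathcal W_i$ (the indices appearing in minimal labels \emph{not} covered by $L_i$), shows these are monochromatic with $L_i$, and then defines an equivalence relation on the $L_i$ via common upper bounds and nonempty intersections of the sets $C'_i=\mathcal V_i\cup\mathcal W_i\cup L_i$. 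Four nontrivial Claims establish that (1) every minimal-label index lies in some $\mathcal V_i$, (2) labels from the same factor with $\emptyset\notin\Pe_{f_l}$ are equivalent, (3) equivalent labels come from the same factor, and (4) factors with $\emptyset\in\Pe_{f_l}^0$ are separated via independent components of $U^\uparrow\setminus\{U\}$. Claim~2 in particular uses both Lemma~\ref{lemma:pf0_mincover} and Lemma~\ref{lemma:pf0_covermin} (that a minimal covering element covers at most one minimal element), and the latter is not something your sketch invokes. Without these ingredients, there is no reason to believe the ``finest partition compatible with all of this'' really recovers the factors.

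Your fibre-extraction step also needs care. You propose to find $F_l\cap\Pe_f^0$ as the elements reachable from $M_0$ by saturated chains in $\Pe_f^0$ with steps adding only indices of $[n_l]$; but saturated chains in $\Pe_f^0$ need not correspond to saturated chains in $\Pe_f$, since the unlabelled elements are missing. The paper's Lemma~\ref{lemma:color_min} instead forms the upset (in $\Pe_f^0$) of all minimal elements that agree with $M_0$ off the $l$-th coordinate, removes the minima, and then picks out the relevant independent component by the coloring of the minimal covering labels. You should either adopt that construction or verify that your chain-reachability description is actually equivalent to it.
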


\begin{theorem}\label{thm:pf0} Every type function $f\in \Te_n$ is fully determined by the
labelled poset $\Pe_f^0$.  

\end{theorem}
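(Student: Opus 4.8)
The plan is to prove Theorem \ref{thm:pf0} by induction on $n$, reducing step by step to the cases already handled in Propositions \ref{prop:pf0_largest}, \ref{prop:pf0_smallest}, \ref{prop:nofree_components} and \ref{prop:nofree_product}, and in each case checking that the reduction is visible in $\Pe_f^0$ alone (given $n$). Since $f$ is determined by $\Pe_f$ together with the label sets and the value $n$ (by the discussion of labelled Hasse diagrams and Theorem \ref{thm:graded}), and since $\Pe_f$ is recovered from $\Pe_f^0$ via $T=\cup\{L_{T'}:T'\in\Pe_f^0,\ T'\subseteq T\}$, it suffices to show that the whole labelled poset $\Pe_f^0$ determines $f$; equivalently, that $\Pe_f^0$ (with $n$) determines a canonical decomposition of $f$ into chain types plus causal products, tensor products and complements, which by the structure theorem (Theorem \ref{thm:structure}) pins down $f$.

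First I would dispose of the base case $n\le 3$: here every $f\in\Te_n$ is a chain type by Examples \ref{exm:T2}--\ref{exm:T3}, so $\Pe_f^0=\Pe_f$ is a chain (Lemma \ref{lemma:p0_basic}(ii)) with the full label data, and $f$ is read off directly. For the inductive step, assume the claim for all $m<n$ and let $f\in\Te_n$. I split into cases according to what $\Pe_f^0$ looks like. \textbf{Case 1: $f$ or $f^*$ has a free output.} By Lemma \ref{lemma:p0_basic}(iii), this is detected from $\Pe_f^0$ (it has a largest element, or equals its own largest-element hull not reaching $[n]$ — more precisely $\cup\Pe_f^0\ne[n]$ or the largest element carries all remaining indices). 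Proposition \ref{prop:pf0_largest} then gives $f\approx h\vtl\beta$ with $\beta$ a chain type and $h\in\Te_m$, $m<n$, where $\beta$ and $\Pe_h^0$ are extracted from $\Pe_f^0$ exactly by the recipe in the paragraph following that proposition (locate the largest element of $\Pe_f^0$ covering more than one element, read off the terminal chain as $\beta$ and the down-sets of the covered elements as $\Pe_h^0$; pad the chain by an empty-label bottom element if its length is odd). Here one must be careful to record the correct value of $m$ for $h$ and of $n-m$ for $\beta$; both are determined since the label sets are. By induction $h$ is determined by $\Pe_h^0$, hence $f$ is determined. \textbf{Case 2: $f$ or $f^*$ has a free input} but no free output is handled symmetrically using Lemma \ref{lemma:p0_basic} (applied to $f^*$, using Corollary \ref{coro:pf_dual}) and Proposition \ref{prop:pf0_smallest}, extracting $f\approx\beta\vtl h$ with the dual recipe described after that proposition.

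\textbf{Case 3: $f$ and $f^*$ have no free indices.} Then by Proposition \ref{prop:fh_outputs} and Lemma \ref{lemma:p0_basic}, exactly one of $\emptyset\in\Pe_f$ or $\emptyset\in\Pe_{f^*}$ holds, i.e. up to complement we may assume $\emptyset\notin\Pe_f$, and this is visible in $\Pe_f^0$ ($\Pe_f^0$ has no least element, or its least element $\emptyset$ is absent after removing the empty-label bottom — concretely $\Pe_f^0$ has no minimum). Now examine whether $\Pe_f^0$ admits a decomposition into independent components $\Pe_f^0=\mathcal Q_1\indep\dots\indep\mathcal Q_M$ (which it always does, taking the finest one; $M\ge 1$). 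If $M\ge 2$, then by Proposition \ref{prop:nofree_components}(ii)--(iii) we have $f^*\approx g_1\otimes\dots\otimes g_k$ with the $\Pe_{g_l}^0$ obtained by regrouping the $\mathcal Q_i$ and adjoining an empty-label minimum to each group; since $f$ has no free indices, each $g_l\in\Te_{n_l}$ with $n_l<n$, so induction applies and $f^*$, hence $f$, is determined. If $M=1$, i.e. $\Pe_f^0$ has no independent components, then by Proposition \ref{prop:nofree_components}(iii) it must be $f$ (not $f^*$) that factors, $f\approx f_1\otimes\dots\otimes f_k$ with $k$ maximal, and Proposition \ref{prop:nofree_product} supplies the labelled posets $\Pe_{f_l}^0$ (up to a relabelling permutation, which is all we need since $f$ is defined up to $\approx$) from the structure of $\Pe_f^0$; again each $f_l\in\Te_{n_l}$ with $n_l<n$ and induction finishes the argument. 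In every branch the reduction strictly decreases $n$, and the case distinctions are mutually exhaustive and are all read off from $\Pe_f^0$ together with $n$, so the induction closes.

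\textbf{The main obstacle} is the bookkeeping in Case 3 when $M=1$: here Proposition \ref{prop:nofree_product} is the delicate point (its proof is deferred to Appendix \ref{app:pf0}), because one must show that a poset with no independent components nonetheless determines the finest tensor factorization of $f$ — the factorization is not a poset-product, and the labels of the factors are entangled in $\Pe_f^0$. Everything else is a matter of verifying that each structural reduction (free output, free input, independent-component split) is indeed detectable purely from the labelled poset, which follows from the already-proved propositions and their accompanying extraction recipes; the only real work is assembling these into a single exhaustive case analysis and confirming $n$ strictly decreases at each step so the induction is well-founded.
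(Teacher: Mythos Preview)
Your proposal is correct and follows essentially the same inductive scheme as the paper's proof: base case via chains for $n\le 3$, then strip off free indices using Propositions \ref{prop:pf0_largest} and \ref{prop:pf0_smallest}, and in the no-free-indices case split according to whether $\Pe_f^0$ has independent components, invoking Propositions \ref{prop:nofree_components} and \ref{prop:nofree_product}. The only organisational difference is that you treat free outputs and free inputs in two separate cases and invoke the induction hypothesis immediately after each reduction, whereas the paper combines both propositions in one step to write $f\approx\beta_1\vtl h\vtl\beta_2$ with $h$ having no free indices at all before appealing to induction; this is a harmless variation.
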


\begin{proof}  We will proceed by induction on
$n$. If $f$ is a chain type, the assertion follows from Lemma
\ref{lemma:p0_basic} and Proposition \ref{prop:chains}, so that the assertion holds for
$n\le 3$. Assume it is true for all $m<n$ and let $f\in \Te_n$. By Proposition
\ref{prop:pf0_nofree_decomp}  and remarks below it, if $f$ or
$f^*$ has
some free indices, then we have $f\approx \beta_1\vtl h \vtl \beta_2$, where $\beta_1,\beta_2$
are chain types and $h\in \Te_m$ is such that $h$ and $h^*$ do not have any free indices.
Moreover,  the chain types and $\Pe_h^0$ 
 can be obtained from $\Pe_f^0$. Since $m<n$,  $h$ is
 determined by $\Pe_h^0$ by the induction assumptions, so we are done.

If both $f$ and $f^*$ have no free indices, we may assume that $\emptyset \notin \Pe_f$, otherwise we 
replace $f$ by $f^*$. Then if $\Pe_f^0$ has independent components, we have
$f^*\approx f_1\otimes f_2$ for some type functions $f_i\in \Te_{n_i}$, $i=1,2$ and $n=n_1+n_2$,
such that the components have the form $\Pe_{f_i}^0\setminus \{\emptyset\}$. Since
$n_1,n_2<n$, we are done. If $\Pe_f^0$ has no independent components, then
$f\approx f_1\otimes\dots \otimes f_k$ for some $f_i\in \Te_{m_i}$, $i=1,\dots,k$, $m_1+\dots
+m_k=n$, and we obtain   $\Pe_{f_l}^0$ for all $l=1,\dots,k$ from $\Pe_f^0$ by Proposition
\ref{prop:nofree_product}. Again, the assertion follows by applying induction assumption
on each $f_l$.

\end{proof}

Using repeatedly the procedure described in the above proof, we get to the situation when
all the obtained components are necessarily chains. In this way, we get a decomposition of $f$ into
chain types, together with a recipe for a construction of  $f$ from these chain types by using
tensor products, complements  and causal products. It is then clear, using also the proof
of Theorem \ref{thm:structure}, that these chain types give us a choice of
$\beta_1,\dots,\beta_k$ in the structure theorem \ref{thm:structure}.

\section{Conclusions}

We have studied the category $\Af$ of affine subspaces in finite dimensional vector
spaces, endowed with a *-autonomous structure inherited from the compact closed category
$\FV$. We used this structure to define a hierarchy of higher order objects, which,
restricted to certain objects we call quantum, contains the higher order quantum maps. 
The quantum objects have the space $M_n^h$ as the underlying vector space and satisfy the
condition that both the affine subspace and its affine dual contains a multiple of the
identity. This restriction satisfies the assumptions of the setting in \cite{kissinger2019acategorical}.

We used the combinatorial characterization of types as in \cite{perinotti2017causal, bisio2019theoretical,
apadula2024nosignalling} to relate functions in $\Fe_n$ to certain objects in $\Af$,
constructed over a fixed sequence of first order objects $X_1,\dots,X_n$.  We proved that
this relation defines a representation of the boolean algebra $\Fe_n$ in the lattice of
subspaces of the tensor product of the underlying vector spaces $V_{X_1}\otimes
\dots\otimes V_{X_n}$. We also proved that all higher
order objects can be obtained in this way, the corresponding functions
are called the type functions.

We then studied the type functions and their properties. We defined a  labelled poset
$\Pe_f^0$ for each type function $f$ and proved that $f$ is related to combs if and only if
$\Pe^0_f$ is a chain. We showed that $f$ is fully determined by $\Pe_f^0$ and devised a
procedure to decompose $\Pe_f^0$ to a set of basic chains, from which the type function
$f$ is constructed. We proved a structure theorem, saying that $f$ can be obtained by
taking maxima and minima over concatenations of these chains in different orders.

There is a number of questions left for future research. The first set of open  questions
pertains the type functions themselves. It is still not clear how the type functions, or,
equivalently, the corresponding posets $\Pe_f$ or $\Pe_f^0$, can be characterized among
elements in $\Fe_n$. In this respect, it might be interesting to study some general notion
of labelled posets, their structure and categorical constructions over them.

In connection to higher order objects, it will be shown in a subsequent work
\cite{jencova2026order} that the
signalling relations of a type can be obtained from the order structure in  $\Pe_f^0$ (or
$\Pe_f$). As shown in
\cite{apadula2024nosignalling}, the signalling relations restrict the way how the higher
order maps can be connected so that the resulting map is given by a subtype. 
This should correspond to some construction over the posets of the involved type functions. 

In the context of the structure theorem, one may ask which sets of permutations over
connections of chains would lead to a type function after taking maxima and minima. By the proof of Theorem
\ref{thm:structure}, it seems that these should be of a block form, coming from the
operadic structure on permutations, see \ref{sec:permut}.

Another important problem is a deeper study of causal order of higher order objects,
related to causal separability, \cite{oreshkov2016causal}. 
Causal  nonseparability cannot be the property of a type function itself, 
but involves also the positive cone. Nevertheless, it is an interesting question how the problem of
witnessing causal nonseparability \cite{araujo2015witnessing} fits into the framework of $\Af$. 
 There are other important properties that are not expressed as affine
constraints, such as LOSR or LOSE channels, or constructions such as quantum circuits with
classical or quantum constraints \cite{wechs2021quantum}. Finding necessary additional
structures that would enable us to study such questions in $\Af$ is another direction of
future investigations.

Finally, as already mentioned in the introduction, one may think of building a similar
theory of higher order maps in a general probabilistic theory, see e.g.
\cite{plavala2023general}. In this framework, state spaces are represented by compact
convex sets that can be obtained by an intersection of a positive cone in a vector space by
a hyperplane given by an interior element of the dual positive cone.  In this way, it is
related to a first order object in $\Af$ and we may think of channels as linear maps
transforming between hyperplanes, satisfying certain positivity properties. Here the
choice of positive cones and their tensor products plays an important role.

Besides the quantum theory, one of the basic  examples is the  classical theory, where the state spaces are probability
simplexes. The underlying positive cones form a compact closed category
$\mathbf{Mat}(\mathbb R_+)$, which enables us to identify the higher order classical
theory with the classical objects in our framework. This situation appears also in the
quantum case, where the underlying compact closed category is $\bf{CPM}$, the category of completely positive maps.

The higher order theory over classical state spaces is far from trivial, since
it is known that there are purely classical higher order processes 
(e.g. the Lugano process \cite{baumeler2016thespace}) that exhibit indefinite causality. 
As mentioned above, such properties cannot be explained by type functions themselves, but
our approach enables us to study both quantum and classical higher order theory on an equal
footing.

For general GPTs, we cannot expect to have a compact closed category at our disposal. 
But if a theory can be constructed based of ordered vector spaces, with tensor product
giving it a structure of a closed monoidal or even *-autonomous category, then one can
expect the higher order maps to be described by higher order objects in $\Af$, intersected
by a positive cone.

\section*{Acknowledgement}  
I am indebted to Bert Lindenhovius and Gejza Jen\v{c}a for fruitful discussions, and  to  Adam Jen\v{c}a for technical support. 
I am also grateful to the anonymous referees whose comments helped me to considerably improve the
paper.  The research was supported by the grants VEGA 2/0128/24 and by the Slovak Research
and Development Agency  under the contract No. APVV-22-0570. Funded by the EU NextGenerationEU through the Recovery and Resilience
Plan for Slovakia under the project 07I04-04-V05-00114 (AI-ONKO).

\bibliographystyle{quantum}
\bibliography{all_hom}

\begin{thebibliography}{10}

\bibitem{apadula2024nosignalling}
Luca Apadula, Alessandro Bisio, and Paolo Perinotti.
\newblock ``No-signalling constrains quantum computation with indefinite causal
  structure''.
\newblock \href{https://doi.org/10.22331/q-2024-02-05-1241}{Quantum {\bf 8},
  1241}~(2024).
\newblock  \href{http://arxiv.org/abs/2202.10214}{arXiv:2202.10214}.

\bibitem{chiribella2008transforming}
G.~Chiribella, G.~M. D’Ariano, and P.~Perinotti.
\newblock ``Transforming quantum operations: Quantum supermaps''.
\newblock \href{https://doi.org/10.1209/0295-5075/83/30004}{EPL (Europhysics
  Letters) {\bf 83}, 30004}~(2008).
\newblock  \href{http://arxiv.org/abs/0804.0180}{arXiv:0804.0180}.

\bibitem{chiribella2008quantum}
G.~Chiribella, G.~M. D'Ariano, and P.~Perinotti.
\newblock ``Quantum circuit architecture''.
\newblock \href{https://doi.org/10.1103/PhysRevLett.101.060401}{Phys. Rev.
  Lett. {\bf 101}, 060401}~(2008).
\newblock  \href{http://arxiv.org/abs/0712.1325}{arXiv:0712.1325}.

\bibitem{chiribella2009theoretical}
Giulio Chiribella, Giacomo~Mauro D'Ariano, and Paolo Perinotti.
\newblock ``Theoretical framework for quantum networks''.
\newblock \href{https://doi.org/10.1103/PhysRevA.80.022339}{Phys. Rev. A {\bf
  80}, 022339}~(2009).
\newblock  \href{http://arxiv.org/abs/0904.4483}{arXiv:0904.4483}.

\bibitem{gutoski2007toward}
Gus Gutoski and John Watrous.
\newblock ``Toward a general theory of quantum games''.
\newblock In Proceedings of the thirty-ninth annual ACM symposium on Theory of
  computing.
\newblock \href{https://doi.org/10.1145/1250790.1250873}{Page 565–574}.
\newblock STOC07. ACM~(2007).
\newblock
  \href{http://arxiv.org/abs/quant-ph/0611234}{arXiv:quant-ph/0611234}.

\bibitem{chiribella2008memory}
Giulio Chiribella, Giacomo~M. D'Ariano, and Paolo Perinotti.
\newblock ``Memory effects in quantum channel discrimination''.
\newblock \href{https://doi.org/10.1103/PhysRevLett.101.180501}{Phys. Rev.
  Lett. {\bf 101}, 180501}~(2008).
\newblock  \href{http://arxiv.org/abs/0803.3237}{arXiv:0803.3237}.

\bibitem{ziman2008process}
M{\'a}rio Ziman.
\newblock ``Process positive-operator-valued measure: A mathematical framework
  for the description of process tomography experiments''.
\newblock \href{https://doi.org/10.1103/PhysRevA.77.062112}{Phys. Rev. A {\bf
  77}, 062112}~(2008).
\newblock  \href{http://arxiv.org/abs/0802.3862}{arXiv:0802.3862}.

\bibitem{oreshkov2012quantum}
Ognyan Oreshkov, Fabio Costa, and {\v C}aslav Brukner.
\newblock ``Quantum correlations with no causal order''.
\newblock \href{https://doi.org/10.1038/ncomms2076}{Nature Communications {\bf
  3}, 1092}~(2012).
\newblock  \href{http://arxiv.org/abs/1105.4464v3}{arXiv:1105.4464v3}.

\bibitem{chiribella2013quantum}
Giulio Chiribella, Giacomo~Mauro D’Ariano, Paolo Perinotti, and Benoit
  Valiron.
\newblock ``Quantum computations without definite causal structure''.
\newblock \href{https://doi.org/10.1103/PhysRevA.88.022318}{Physical Review A
  {\bf 88}, 022318}~(2013).
\newblock  \href{http://arxiv.org/abs/0912.0195}{arXiv:0912.0195}.

\bibitem{perinotti2017causal}
Paolo Perinotti.
\newblock ``Causal structures and the classification of higher order quantum
  computations''.
\newblock In Renato Renner and Sandra Stupar, editors, Time in Physics.
\newblock \href{https://doi.org/10.1007/978-3-319-68655-4_7}{Page 103–127}.
\newblock Springer International Publishing~(2017).
\newblock  \href{http://arxiv.org/abs/1612.05099}{arXiv:1612.05099}.

\bibitem{taranto2025higher}
Philip Taranto, Simon Milz, Mio Murao, Marco~Túlio Quintino, and Kavan Modi.
\newblock ``Higher-order quantum operations''~(2025).
\newblock  \href{http://arxiv.org/abs/2503.09693}{arXiv:2503.09693}.

\bibitem{bisio2019theoretical}
Alessandro Bisio and Paolo Perinotti.
\newblock ``Theoretical framework for higher-order quantum theory''.
\newblock \href{https://doi.org/10.1098/rspa.2018.0706}{Proceedings of the
  Royal Society A: Mathematical, Physical and Engineering Sciences {\bf 475},
  20180706}~(2019).
\newblock  \href{http://arxiv.org/abs/1806.09554}{arXiv:1806.09554}.

\bibitem{kissinger2019acategorical}
Aleks Kissinger and Sander Uijlen.
\newblock ``A categorical semantics for causal structure''.
\newblock \href{https://doi.org/10.23638/LMCS-15(3:15)2019}{Logical Methods in
  Computer Science{\bf 15}}~(2019).
\newblock  \href{http://arxiv.org/abs/1701.04732v6}{arXiv:1701.04732v6}.

\bibitem{simmons2022higher}
Will Simmons and Aleks Kissinger.
\newblock ``Higher-order causal theories are models of {BV}-logic''.
\newblock In 47th International Symposium on Mathematical Foundations of
  Computer Science (MFCS 2022).
\newblock \href{https://doi.org/10.4230/LIPIcs.MFCS.2022.80}{Volume 241,
  page~80}.
\newblock ~(2022).
\newblock  \href{http://arxiv.org/abs/2205.11219}{arXiv:2205.11219}.

\bibitem{simmons2024acomplete}
Will Simmons and Aleks Kissinger.
\newblock ``A complete logic for causal consistency''~(2024).
\newblock  \href{http://arxiv.org/abs/2403.09297}{arXiv:2403.09297}.

\bibitem{hoffreumon2026projective}
Timoth{\'{e}}e Hoffreumon and Ognyan Oreshkov.
\newblock ``Projective characterization of higher-order quantum
  transformations''.
\newblock \href{https://doi.org/10.22331/q-2026-01-21-1978}{{Quantum} {\bf 10},
  1978}~(2026).
\newblock  \href{http://arxiv.org/abs/2206.06206}{arXiv:2206.06206}.

\bibitem{araujo2015witnessing}
Mateus Ara{\'u}jo, Cyril Branciard, Fabio Costa, Adrien Feix, Christina
  Giarmatzi, and {\v{C}}aslav Brukner.
\newblock ``Witnessing causal nonseparability''.
\newblock \href{https://doi.org/10.1088/1367-2630/17/10/102001}{New Journal of
  Physics {\bf 17}, 102001}~(2015).
\newblock  \href{http://arxiv.org/abs/1506.03776}{arXiv:1506.03776}.

\bibitem{milz2024characterising}
Simon Milz and Marco~Túlio Quintino.
\newblock ``Characterising transformations between quantum objects,
  completeness of quantum properties, and transformations without a fixed
  causal order''.
\newblock \href{https://doi.org/10.22331/q-2024-07-17-1415}{Quantum {\bf 8},
  1415}~(2024).
\newblock  \href{http://arxiv.org/abs/2305.01247}{arXiv:2305.01247}.

\bibitem{wechs2021quantum}
Julian Wechs, Hippolyte Dourdent, Alastair~A. Abbott, and Cyril Branciard.
\newblock ``Quantum circuits with classical versus quantum control of causal
  order''.
\newblock \href{https://doi.org/10.1103/prxquantum.2.030335}{PRX Quantum {\bf
  2}, 030335}~(2021).
\newblock  \href{http://arxiv.org/abs/2101.08796}{arXiv:2101.08796}.

\bibitem{vanrietvelde2025consistent}
Augustin Vanrietvelde, Nick Ormrod, Hl{\'e}r Kristj{\'a}nsson, and Jonathan
  Barrett.
\newblock ``Consistent circuits for indefinite causal order''.
\newblock \href{https://doi.org/10.22331/q-2025-12-02-1923}{Quantum {\bf 9},
  1923}~(2025).
\newblock  \href{http://arxiv.org/abs/2206.10042}{arXiv:2206.10042}.

\bibitem{plavala2023general}
Martin Pl{\'a}vala.
\newblock ``General probabilistic theories: An introduction''.
\newblock \href{https://doi.org/10.1016/j.physrep.2023.09.001}{Physics Reports
  {\bf 1033}, 1--64}~(2023).
\newblock  \href{http://arxiv.org/abs/2103.07469}{arXiv:2103.07469}.

\bibitem{bavaresco2024indefinite}
Jessica Bavaresco, Amin Baumeler, Yelena Guryanova, and Costantino Budroni.
\newblock ``Indefinite causal order in boxworld theories''~(2024).
\newblock  \href{http://arxiv.org/abs/2411.00951}{arXiv:2411.00951}.

\bibitem{maclane2013categories}
Saunders Mac~Lane.
\newblock ``Categories for the working mathematician''.
\newblock \href{https://doi.org/10.1007/978-1-4757-4721-8}{Volume~5}.
\newblock Springer Science \& Business Media. ~(2013).

\bibitem{ponto2014traces}
Kate Ponto and Michael Shulman.
\newblock ``Traces in symmetric monoidal categories''.
\newblock \href{https://doi.org/10.1016/j.exmath.2013.12.003}{Expositiones
  Mathematicae {\bf 32}, 248--273}~(2014).
\newblock  \href{http://arxiv.org/abs/1107.6032}{arXiv:1107.6032}.

\bibitem{heunen2019categories}
Chris Heunen and Jamie Vicary.
\newblock ``Categories for quantum theory: an introduction''.
\newblock \href{https://doi.org/10.1093/oso/9780198739623.001.0001}{Oxford
  University Press}. ~(2019).

\bibitem{barr1979star}
Michael Barr.
\newblock ``*- autonomous categories''.
\newblock \href{https://doi.org/10.1007/BFb0064579}{Lecture Notes in
  Mathematics}. Springer. Berlin, Heidelberg~(1979).

\bibitem{gutoski2009properties}
Gus Gutoski.
\newblock ``Properties of local quantum operations with shared entanglement''.
\newblock \href{https://doi.org/10.26421/qic9.9-10-2}{Quantum Information \&
  Computation {\bf 9}, 739--764}~(2009).
\newblock  \href{http://arxiv.org/abs/0805.2209}{arXiv:0805.2209}.

\bibitem{chiribella2022quantum}
Giulio Chiribella and Zixuan Liu.
\newblock ``Quantum operations with indefinite time direction''.
\newblock \href{https://doi.org/10.1038/s42005-022-00967-3}{Communications
  Physics {\bf 5}, 190}~(2022).
\newblock  \href{http://arxiv.org/abs/2012.03859}{arXiv:2012.03859}.

\bibitem{milz2022resource}
Simon Milz, Jessica Bavaresco, and Giulio Chiribella.
\newblock ``Resource theory of causal connection''.
\newblock \href{https://doi.org/10.22331/q-2022-08-25-788}{Quantum {\bf 6},
  788}~(2022).
\newblock  \href{http://arxiv.org/abs/2110.03233}{arXiv:2110.03233}.

\bibitem{jencova2026order}
Anna Jenčová.
\newblock ``Order structure and signalling in higher order quantum
  maps''~(2026).
\newblock  \href{http://arxiv.org/abs/2604.09192}{arXiv:2604.09192}.

\bibitem{oreshkov2016causal}
Ognyan Oreshkov and Christina Giarmatzi.
\newblock ``Causal and causally separable processes''.
\newblock \href{https://doi.org/10.1088/1367-2630/18/9/093020}{New Journal of
  Physics {\bf 18}, 093020}~(2016).
\newblock  \href{http://arxiv.org/abs/1506.05449v3}{arXiv:1506.05449v3}.

\bibitem{baumeler2016thespace}
Amin Baumeler and Stefan Wolf.
\newblock ``The space of logically consistent classical processes without
  causal order''.
\newblock \href{https://doi.org/10.1088/1367-2630/18/1/013036}{New Journal of
  Physics {\bf 18}, 013036}~(2016).
\newblock  \href{http://arxiv.org/abs/1507.01714}{arXiv:1507.01714}.

\bibitem{leinster2004higher}
Tom Leinster.
\newblock ``Higher operads, higher categories''.
\newblock \href{https://doi.org/10.1017/CBO9780511525896}{Number 298 in London
  Mathematical Society Lecture Note Series}. Cambridge University Press.
  ~(2004).

\bibitem{stanley2011enumerative}
Richard~P Stanley.
\newblock ``Enumerative combinatorics, vol. 1''.
\newblock \href{https://doi.org/10.1017/CBO9781139058520}{Number~49 in
  Cambridge Studies in Advanced Mathematics}. Cambridge University Press.
  ~(2011).
\newblock 2nd edition.

\end{thebibliography}

\appendix

\numberwithin{equation}{section}

\section{Some basic definitions}
\label{app:basic}

For $m\le n\in \mathbb N$, we will denote the corresponding interval $\{m,m+1,\dots,n\}$ by
$[m,n]$. For $m=1$, we will simplify to  $[n]:=[1,n]$. Let $\permut_n$ denote the set of all permutations of $[n]$.

\subsection{Block permutations}
\label{sec:permut}

 For $n_1,n_2\in \mathbb N$, $n_1+n_2=n$, 
we will denote by $[n]=[n_1]\boxplus [n_2]$ the decomposition of $[n]$ as a concatenation of two 
intervals
\[
[n]=[n_1][n_1+1,n_1+n_2].
\]
Similarly, for $n=\sum_{j=1}^kn_j$, we have the decomposition
\[
[n]=\boxplus_{j=1}^k[n_j]=[m_1,m_1+n_1][m_2,m_2+n_2]\dots[m_k,m_k+n_k],
\]
where $m_j:=\sum_{l=1}^{j-1} n_j$ (so $m_1=0$). Note that the order of $n_1,\dots, n_k$ in
this decomposition is
fixed. 

We have two kinds of special permutations related to the above decomposition. For
$\sigma_j\in \permut_{n_j}$, we denote by $\boxplus_j \sigma_j\in \permut_n$ the permutation that acts as
\[
m_j+l\mapsto m_j+\sigma_j(l),\qquad l=1,\dots,n_j,\ j=1,\dots, k. 
\]
On the other hand, we have for any $\lambda\in \permut_k$ a unique permutation
$\rho_\lambda\in\permut_n$  such that $\rho_\lambda^{-1}$ acts as
\[
[m_1,m_1+n_1][m_2,m_2+n_2]...[m_k,m_k+n_k]\mapsto
[m_{\lambda(1)}+n_{\lambda(1)}][m_{\lambda(2)}+n_{\lambda(2)}]\dots[m_{\lambda(k)}+n_{\lambda(k)}]
\]
Note that we have
\[
\rho_\lambda\circ(\boxplus_j\sigma_j)=(\boxplus_j \sigma_{\lambda(j)})\circ\rho_\lambda.
\]
(These permutations  come from the operadic structure on the set of
all permutations $\permut_*$. See \cite{leinster2004higher} for the definition of and operad.)

\subsection{Partially ordered sets}\label{app:poset}

An overall reference for this section is \cite{stanley2011enumerative}.

A partially ordered set, or a {poset}, is a set $\Pe$ endowed with a reflexive, antisymmetric and
transitive relation $\le$, called the partial order. We will only encounter  the situation
when $\Pe$ is finite. A basic example of a poset is the set $\Pe(X)$ of all subsets of a
finite set $X$, ordered by inclusion. If $X=[n]$, we will denote $\Pe(X)$ by $2^n$. 

A subposet in a poset $\Pe$ is a $\mathcal Q\subseteq \Pe$ endowed with the partial order
relation inherited from $\Pe$. For any subset $\mathcal R\subseteq \Pe$, we define two
special 
subposets in $\Pe$ as
\[
\mathcal R^\downarrow=\{p\in \Pe,\ p\le r \text{ for some } r\in \mathcal R\}, \ \mathcal
R^\uparrow=\{p\in \Pe,\ p\ge  r \text{ for some } r\in \mathcal R\}.
\]

The set of minimal elements in $\Pe$ will be denoted by $\mathrm{Min}(\Pe)$. 
For elements $p,q\in \Pe$, we say that $q$ covers $p$, in notation $p\cover q$,  if $p\le q$
and for any $r$ such that $p\le r\le q$ we have $r=p$ or $r=q$. If $p$ covers a minimal
element, we will say that $p$ is a minimal covering element.

A totally ordered subposet $\Ce\subseteq \Pe$ is called a chain in $\Pe$. Such a chain  is
maximal if it is not contained in any other chain in $\Pe$.
The length of a chain $\Ce$ is defined as $|\Ce|-1$. 

We say that a poset $\Pe$  is graded of  rank $k$ if every maximal chain of $P$ has the
same length $k$. In this case, there is a (unique) rank function 
$\rho: \Pe\to \{0,1,\dots,k\}$ such that $\rho(p)=0$ if $p$ is a
minimal element of $\Pe$ and $\rho(q)=\rho(p)+1$ if $p\cover q$. Basic examples of graded
posets are chains, antichains and $2^n$.

If $\Pe$ and $\mathcal Q$ are posets with disjoint sets, their direct sum $\Pe+\mathcal Q$ is a poset defined as 
the disjoint union $\Pe\cup \mathcal Q$, such that the order is preserved in each
component and elements in different components are incomparable. 
 Another way to compose
$\Pe$ and $\mathcal Q$ is the ordinal sum $\Pe\star \mathcal Q$, where the underlying set
is again the union $\Pe\cup \mathcal Q$ and the order in each component is preserved, but
for $p\in \Pe$ and $q\in \mathcal Q$ we have $p\le q$. A third way to compose posets that
we will use is the direct product $\Pe\times \mathcal Q$, where the underlying set is the
cartesian product $\Pe\times \mathcal Q$, with $(p_1,q_1)\le (p_2,q_2)$ if $p_1\le p_2$ in
$\Pe$  and $q_1\le q_2$ in $\mathcal Q$.
If $\Pe_1$ and $\Pe_2$ are graded posets with rank functions $\rho_1$ and $\rho_2$, then 
$\Pe_1\times \Pe_2$ is graded as well, with rank function $\rho$ given as
\[
\rho(p_1,p_2)=\rho_1(p_1)+\rho_2(p_2).
\]

The Hasse diagram of a finite poset $\Pe$ is a graph whose vertices are elements of $\Pe$
and there is an edge between $p$ and $q$ if $p\cover q$, and if $p\lneq r$, then $r$ is drawn above
$p$. Two posets are isomorphic if and only if they have the same Hasse diagrams.

\subsection{Binary strings}
\label{app:binary}
A binary string of length $n$ is a sequence  $s=s_1\dots s_n$, where $s_i\in
\{0,1\}$. Such a string can be interpreted as an element $\{0,1\}^n$, but also as a 
map $[n]\to \{0,1\}$, or a subset in  $[n]:=\{1,\dots,n\}$. It will be convenient to use
all these interpretations, but we will distinguish between them. The strings in
$\{0,1\}^n$ will be denoted by small letters, whereas the corresponding subsets of $[n]$
will be denoted by the corresponding capital letters. More specifically, for $s\in \{0,1\}^n$ and 
$T\subseteq [n]$, we denote
\begin{equation}\label{eq:string_subset}
S:=\{i\in [n],\ s_i=0\},\qquad t:=t_1\dots t_n,\ t_j=0 \iff j\in T.
\end{equation}
As usual, the set of all subsets of $[n]$ will be denoted by $2^n$. 
With the inclusion ordering and complementation $S^c:=[n]\setminus S$,
$2^n$ is a boolean algebra, with the smallest element $\emptyset$ and largest element
$[n]$.  

The group $\permut_n$ has an obvious action on $\{0,1\}^n$. Indeed,
 for a string $s$  interpreted as a map $[n]\to 2$, we may define the action of
$\sigma\in \permut_n$ by precomposition as
\[
\sigma(s):=s\circ\sigma^{-1}=s_{\sigma^{-1}(1)}\dots s_{\sigma^{-1}(n)}.
\]
Note that in this way we have $\rho(\sigma(s))=(\rho\circ \sigma)(s)$. For a decomposition
$[n]=\boxplus_{j=1}^k[n_j]$, we have a corresponding decomposition of
any string $s\in \{0,1\}^n$ as a concatenation of strings
\[
s=s^1\dots s^k,\qquad s^j\in \{0,1\}^{n_j}.
\]
For permutations $\sigma_j\in \permut_{n_j}$ and  $\lambda\in
\permut_k$, we have
\[
\rho_\lambda\circ(\boxplus_j\sigma_j)(s^1\dots s^k)=\rho_\lambda(\sigma_1(s^1)\dots
\sigma_k(s^k))=\sigma_{\lambda(1)}(s^{\lambda(1)})\sigma_{\lambda(2)}(s^{\lambda(2)})\dots
\sigma_{\lambda(k)}(s^{\lambda(k)}).
\]

\subsection{Boolean functions and the  M\"obius transform}
\label{sec:boolean}

A function $f:\{0,1\}^n\to \{0,1\}$ is called a boolean function. 
The set of boolean functions, with pointwise ordering and complementation given by the
negation $\bar f=1-f$,  is a boolean algebra that can be identified with $2^{2^n}$.
We will denote the maximal element (the constant 1 function) by $1_n$. Similarly,
we denote the constant zero function by $0_n$.  For boolean
functions $f,g$, the pointwise minima and maxima will be denoted by $f\wedge g$ and $f\vee
g$. It is easily seen that we have
\begin{equation}\label{eq:wedgevee_fun}
f\vee g= f+g-fg,\qquad f\wedge g=fg,
\end{equation}
all the operations are pointwise. We now introduce and important example.

\begin{exm}\label{ex:pS}
For $S\subseteq [n]$, we define
\[
p_S(t)=\Pi_{j\in S}(1-t_j),\qquad t\in \{0,1\}^n.
\]
That is, $p_S(t)=1$ if and only if $S\subseteq T$. In particular,
$p_\emptyset=1_n$ and $p_{[n]}$ is the characteristic function of the zero string.
Clearly, for $S,T\subseteq [n]$ we have $p_{S\cup T}=p_Sp_T=p_S\wedge p_T$, in particular,
$p_S=\Pi_jp_{\{j\}}$. 
\end{exm}

By the M\"obius transform, all boolean functions can be expressed as combinations of the functions $p_S$, $S\subseteq
[n]$ from the previous example.

\begin{theorem}\label{thm:mobius} Any $f:\{0,1\}^n\to 2$ can be expressed  in the form 
\[
f=\sum_{S\subseteq [n]} \hat f_Sp_S
\]
in a unique way. The coefficients  $\hat f_S\in \mathbb R$ obtained as
\[
\hat f_S=\sum_{\substack{t\in \{0,1\}^n\\ t_j=1, \forall  j\in S^c}} (-1)^{\sum_{j\in
S}t_j}f(t).
\]

\end{theorem}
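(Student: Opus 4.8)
The plan is to show that the $2^{n}$ functions $\{p_{S}:S\subseteq[n]\}$ form a basis of the real vector space $\mathbb R^{\{0,1\}^{n}}$ of all functions $\{0,1\}^{n}\to\mathbb R$. Since this space has dimension $2^{n}$, it suffices to prove they are linearly independent; existence and uniqueness of the expansion $f=\sum_{S}\hat f_{S}p_{S}$ then follow at once for every function, in particular every boolean one.

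For a subset $S\subseteq[n]$ let $s\in\{0,1\}^{n}$ be the associated string of \eqref{eq:string_subset}, so that $s_{j}=0\iff j\in S$. By the description in Example~\ref{ex:pS}, $p_{S'}(s)=\prod_{j\in S'}(1-s_{j})$ equals $1$ if $S'\subseteq S$ and $0$ otherwise. Hence the $2^{n}\times 2^{n}$ matrix $Z=\bigl(p_{S'}(s)\bigr)_{S,S'}$ is the incidence (``zeta'') matrix of the inclusion order on $2^{n}$: ordering the subsets by any linear extension of $\subseteq$, say by cardinality, $Z$ is triangular with $1$'s on the diagonal, hence invertible. Invertibility of $Z$ says precisely that the evaluation map $g\mapsto\bigl(g(s)\bigr)_{S\subseteq[n]}$ is a bijection from $\operatorname{Span}\{p_{S'}\}$ onto $\mathbb R^{2^{n}}$, so the $p_{S'}$ are linearly independent, thus a basis; this proves the first assertion.

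For the explicit formula, evaluate the expansion at $s$ to get $f(s)=\sum_{S'\subseteq S}\hat f_{S'}$, which is exactly the summatory relation inverted by the M\"obius function of the boolean lattice $2^{n}$, namely $\mu(S',S)=(-1)^{|S\setminus S'|}$ for $S'\subseteq S$. M\"obius inversion yields $\hat f_{S}=\sum_{S'\subseteq S}(-1)^{|S\setminus S'|}f(s')$. To put this in the stated form, note that the strings $s'$ with $S'\subseteq S$ are exactly the strings $t\in\{0,1\}^{n}$ with $t_{j}=1$ for all $j\in S^{c}$, and for such a $t$ with zero-set $S'\subseteq S$ one has $\sum_{j\in S}t_{j}=|S\setminus S'|$, so $(-1)^{|S\setminus S'|}=(-1)^{\sum_{j\in S}t_{j}}$; substituting this reindexing gives the claimed expression for $\hat f_{S}$.

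As an alternative for the formula one may skip M\"obius inversion and verify it by direct substitution: plug the candidate $\hat f_{S}$ into $\sum_{S}\hat f_{S}p_{S}$, evaluate at an arbitrary string with zero-set $T$, interchange the two sums, and use the inclusion--exclusion identity $\sum_{S:\,S'\subseteq S\subseteq T}(-1)^{|S\setminus S'|}=[\,S'=T\,]$ to collapse the inner sum. Since the statement is the classical multilinear (M\"obius / Boolean-lattice Fourier) representation of boolean functions, I expect no genuine obstacle; the only point requiring care is keeping the zero-set convention of \eqref{eq:string_subset} consistent when translating between the subset indexing of the $p_{S}$ and the string indexing in the coefficient formula.
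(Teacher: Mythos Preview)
Your proof is correct and follows essentially the same route as the paper. Both arguments hinge on M\"obius inversion for the boolean lattice: the paper invokes the inversion formula directly for both existence and uniqueness, while you first phrase existence/uniqueness as the triangularity (hence invertibility) of the zeta matrix $Z$ and then apply M\"obius inversion for the explicit coefficients---but these are two presentations of the same fact, and the translation between subset and string indexing is handled identically.
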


\begin{proof}  By the M\"obius inversion formula (see \cite[Sec.
3.7]{stanley2011enumerative} for details),
functions $f, g: 2^n\to \mathbb R$ satisfy
\[
f(S)=\sum_{T\subseteq S} g(T),\qquad S\in 2^n
\]
if and only if 
\[
g(S)=\sum_{T\subseteq S}(-1)^{|S\setminus T|} f(T).
\]
We now express this in terms of the corresponding strings $s$ and $t$.
It is easily seen that $T\subseteq S$ if and only if
$s_j=0$ for all $j\in T$, equivalently, $t_j=1$ for all $j\in S^c$. Moreover,
in this case we have  $|S\setminus T|=\sum_{j\in S} t_j$. This shows that $g(S)=\hat f_S$,
as defined in the statement. The first equality now gives
\[
f(s)=f(S)=\sum_{T\subseteq S} g(T)=\sum_{T:s_i=0,\forall i\in T}\hat f_T=\sum_{T:
p_T(s)=1}\hat f_T=\sum_{T\subseteq [n]} \hat f_Tp_T.
\]
For uniqueness, assume that $f=\sum_{T\subseteq [n]} c_Tp_T$ for some coefficients $c_T\in
\mathbb R$. Then 
\[
f(s)=\sum_{T: p_T(s)=1}c_T=\sum_{T\subseteq S}c_T.
\]
Uniqueness now follows by  uniqueness in the M\"obius inversion formula.

\end{proof}


\subsection{The boolean algebra $\Fe_n$}

\label{app:functions}
Let us introduce the subset of boolean functions 
\[
\Fe_n:=\{f:\{0,1\}^n\to 2,\ f(\theta_n)=1\},
\]
where we use $\theta_n$ to denote the zero string $00\dots 0$. 
In other words, $\Fe_n$ is the interval of all elements greater than $p_{[n]}$ in the boolean algebra 
$2^{2^n}$ of all boolean functions. With the pointwise ordering, $\Fe_n$ is a distributive lattice, with top element $1_n$ and 
 bottom element $p_n:=p_{[n]}$. We also define complementation  in $\Fe_n$ as
\[
f^*:=1_n-f+p_n.
\]
It can be easily checked that with these structures $\Fe_n$ is a boolean algebra, though
it is not a subalgebra of $2^{2^n}$.

Note that $p_S\in \Fe_n$ for any $S\subseteq [n]$, so in particular for $[k]$, with $k\le
n$. If $k<n$, we denote these functions as before by $p_{[k]}$, using the notation $p_n$
only for the distinguished bottom  element.

We now introduce some more operations in $\Fe_n$. For $f\in \Fe_n$ and any permutation
$\sigma\in \permut_n$, we clearly have $f\circ \sigma\in \Fe_n$.
Further, let $f\in \Fe_{n_1}$ and $g\in \Fe_{n_2}$. With the decomposition
$[n_1+n_2]=[n_1]\boxplus [n_2]$
and the corresponding concatenation of strings $s=s^1s^2$,  we define
the function $f\otimes g\in \Fe_{n_1+n_2}$ as
\[
(f\otimes g)(s^1s^2)=f(s^1)g(s^2),\qquad s^1\in \{0,1\}^{n_1},\ s^2\in \{0,1\}^{n_2}.
\]
Let $\lambda\in \permut_2$ be the transposition, then we have for any $f\in \Fe_{n_1}$ and
$g\in \Fe_{n_2}$
\[
(g\otimes f)=(f\otimes g)\circ \rho_\lambda,
\]
where $\rho_\lambda$ is the block permutation defined in  Section \ref{sec:permut}.

If $f,g\in \Fe_n$ are such that $g=f\circ\sigma$ for some $\sigma\in \permut_n$, we write
$f\approx g$. It is easily observed that if $f_1\approx g_1$ and $f_2\approx g_2$, then
$f_1\otimes f_2\approx g_1\otimes g_2$ and if $f\approx g$ then also $f^*\approx g^*$.

We now show some further important properties of these operations.

\begin{lemma}\label{lemma:fproduct} For $f\in \Fe_{n_1}$ and  $g,h\in \Fe_{n_2}$, we have
\begin{enumerate}
\item[(i)] $f\otimes g\le (f^*\otimes g^*)^*$, with equality if and only if either
$f=1_{n_1}$ and $g=1_{n_2}$, or $f=p_{n_1}$ and $g=p_{n_2}$.
\item[(ii)] $f\otimes (g\vee h)= (f\otimes g)\vee (f\otimes h)$, $f\otimes (g\wedge h)=
(f\otimes g)\wedge (f\otimes h)$.
\end{enumerate}

\end{lemma}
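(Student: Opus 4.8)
The plan is to prove both parts by a direct pointwise computation in the Boolean algebra, using three elementary facts: that $f,g,h$ are $\{0,1\}$-valued (so $f^2=f$, etc., pointwise), that $p_{n_1}\le f$ in $\Fe_{n_1}$ and $p_{n_2}\le g$ in $\Fe_{n_2}$ (since $p_{[n_i]}$ is the characteristic function of the zero string and every element of $\Fe_{n_i}$ takes value $1$ there), and that $p_{n_1+n_2}=p_{n_1}\otimes p_{n_2}$ (by Lemma \ref{lemma:PSPT}, or directly from Example \ref{ex:pS} and the definition of $\otimes$).

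For (ii) I would just compute both sides at a concatenated string $s^1s^2$ with $s^1\in\{0,1\}^{n_1}$, $s^2\in\{0,1\}^{n_2}$. Using $g\vee h=g+h-gh$ and $g\wedge h=gh$ pointwise,
\[
(f\otimes(g\vee h))(s^1s^2)=f(s^1)\bigl(g(s^2)+h(s^2)-g(s^2)h(s^2)\bigr),
\]
whereas $((f\otimes g)\vee(f\otimes h))(s^1s^2)=f(s^1)g(s^2)+f(s^1)h(s^2)-f(s^1)^2g(s^2)h(s^2)$, and the two agree because $f(s^1)^2=f(s^1)$. The identity for $\wedge$ is the same observation applied to $f(s^1)g(s^2)h(s^2)$.

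For the inequality in (i) I would substitute $f^*=1_{n_1}-f+p_{n_1}$ and $g^*=1_{n_2}-g+p_{n_2}$ and evaluate at $s^1s^2$, abbreviating $x=f(s^1)$, $y=g(s^2)$, $a=p_{n_1}(s^1)$, $b=p_{n_2}(s^2)$. Then $(f^*\otimes g^*)^*(s^1s^2)=1-(1-x+a)(1-y+b)+ab$, and expanding and subtracting $(f\otimes g)(s^1s^2)=xy$ collapses, after collecting terms, to
\[
(f^*\otimes g^*)^*(s^1s^2)-(f\otimes g)(s^1s^2)=(x-a)(1-y)+(y-b)(1-x).
\]
Since $0\le a\le x\le 1$ and $0\le b\le y\le 1$, both summands are $\ge 0$, which gives $f\otimes g\le(f^*\otimes g^*)^*$.

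Finally, for the equality case: equality of functions means the displayed difference vanishes at every $(s^1,s^2)$, and being a sum of two nonnegative terms, both must vanish identically, i.e. $(f(s^1)-p_{n_1}(s^1))(1-g(s^2))=0$ and $(g(s^2)-p_{n_2}(s^2))(1-f(s^1))=0$ for all $s^1,s^2$. The first identity forces $g=1_{n_2}$ or $f=p_{n_1}$; the second forces $f=1_{n_1}$ or $g=p_{n_2}$. Since $n_1,n_2\ge 1$ we have $p_{n_i}\ne 1_{n_i}$, so the only jointly consistent possibilities are $(f,g)=(1_{n_1},1_{n_2})$ and $(f,g)=(p_{n_1},p_{n_2})$; conversely, in each of these one factor in each term is identically zero, so equality indeed holds. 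I expect this last piece of bookkeeping — tracking which of the four Boolean combinations survive — to be the only real subtlety; everything else is routine algebra on $\{0,1\}$-valued functions.
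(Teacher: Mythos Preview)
Your proof is correct. Part (ii) is handled exactly as in the paper, via the pointwise formulas $g\vee h=g+h-gh$ and $g\wedge h=gh$.

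For part (i) your route differs slightly from the paper's. The paper argues the inequality by cases: whenever $(f\otimes g)(s^1s^2)=1$ one has $f(s^1)=g(s^2)=1$, and then either $s^1s^2=\theta_{n_1+n_2}$ or $f^*(s^1)g^*(s^2)=0$, so the right-hand side is also $1$. The equality case is then handled by fixing a single $s^1$ with $f(s^1)=0$ and deducing $g=p_{n_2}$. Your approach instead produces the algebraic identity
\[
(f^*\otimes g^*)^*-(f\otimes g)=(f-p_{n_1})(1-g)+(g-p_{n_2})(1-f)
\]
pointwise, which immediately gives nonnegativity and makes the equality analysis a clean two-by-two case split. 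Both arguments are elementary and of the same length; yours has the mild advantage that the inequality and the equality characterization come from the same formula, while the paper's version avoids the expansion of $(1-x+a)(1-y+b)$.
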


\begin{proof} The inequality in (i) is easily  checked, since $(f\otimes g)(s^1s^2)$ can be 1 only if
$f(s^1)=g(s^2)=1$. If both $s^1$ and $s^2$ are the zero strings, then $s^1s^2=\theta_{n_1+n_2}$ and both sides
are equal to 1. Otherwise, the condition $f(s^1)=g(s^2)=1$ implies that $(f^*\otimes
g^*)(s^1s^2)=0$, so that the right hand side must be 1. If $f$ and $g$ are both
constant 1, then 
\[
(1_{n_1}\otimes 1_{n_2})^*=1_{n_1+n_2}^*=p_{n_1+n_2}=p_{n_1}\otimes
p_{n_2}=1_{n_1}^*\otimes 1_{n_2}^*,
\]
in the case when both $f$
and $g$ are the minimal elements equality  follows by
duality. Finally, assume the equality holds and that $f\ne 1_{n_1}$, so that there is some $s^1$ such that 
$f(s^1)=0$. But then $s^1\ne \theta_{n_1}$, so that $f^*(s_1)=1$  and for any $s^2$,
\[
0=(f\otimes g)(s^1s^2)=(f^*\otimes
g^*)^*(s^1s^2)=1-f^*(s^1)g^*(s^2)+p_{n_1+n_2}(s^1s^2)=1-g^*(s^2),
\]
which implies that $g(s^2)=0$ for all $s^2\ne\theta_{n_2}$, that is, $g=p_{n_2}$. By the same argument,
$f=p_{n_1}$ if $g\ne 1_{n_2}$, which implies that either $f=1_{n_1}$ and $g=1_{n_2}$, or
$f=p_{n_1}$ and $g=p_{n_2}$.

The statement (ii) is easily proved from \eqref{eq:wedgevee_fun}.

\end{proof}

Consider the decomposition $[n]=[n_1]\boxplus [n_2]$ and let $S\subseteq [n_1]$,
$T\subseteq [n_2]$. We then denote by $S\boxplus T$ the disjoint union 
\begin{equation}\label{eq:disu}
S\boxplus T:=S\cup (n_1+T)=S\cup\{n_1+j,\ j\in T\}.
\end{equation}
We summarize some easy properties of the basic functions $p_S$, $S\subseteq [n]$.

\begin{lemma}\label{lemma:PSPT}
\begin{enumerate}
\item[(i)] For $S,T\subseteq [n]$, we have $S\subseteq T$ $\iff$ $p_T\le p_S$ $\iff$
$p_Sp_T=p_S$.
\item[(ii)] For $S\subseteq [n]$, $\sigma\in \permut_n$,
$p_S\circ\sigma=p_{\sigma^{-1}(S)}$.
\item[(iii)] For $S\subseteq [n_1]$ and $T\subseteq [n_2]$, $p_S\otimes p_T=p_{S\boxplus T}$.

\end{enumerate}
\end{lemma}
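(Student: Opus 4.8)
The plan is to deduce all three parts directly from the defining formula $p_S(t)=\prod_{j\in S}(1-t_j)$ of Example~\ref{ex:pS}, keeping the string/subset dictionary of Appendix~\ref{app:binary} in mind; recall in particular that $p_S(t)=1$ precisely when $S\subseteq T$, where $T=\{j\in[n]:t_j=0\}$ is the subset associated with $t$. Everything is elementary bookkeeping, so the only care needed is to keep the conventions straight — the complementation in $t\leftrightarrow T$ and the action $\sigma(s)=s\circ\sigma^{-1}$.

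For (i) I would argue as follows. If $S\subseteq T$ as subsets of $[n]$, then $p_T=p_S\,p_{T\setminus S}$ by Example~\ref{ex:pS}, and since $p_{T\setminus S}$ takes values in $\{0,1\}$ we get $p_T\le p_S$ pointwise. Conversely, if $p_T\le p_S$, evaluate at the string $t^{(T)}$ defined by $t^{(T)}_j=0\iff j\in T$; then $p_T(t^{(T)})=1$, hence $p_S(t^{(T)})=1$, i.e.\ $S\subseteq T$. In particular $S\mapsto p_S$ is injective. The product condition is then immediate: since $p_S,p_T$ are $\{0,1\}$-valued we have $p_Sp_T=p_S\wedge p_T$, so $p_T\le p_S$ is equivalent to $p_S\wedge p_T=p_T$, that is, to $p_Sp_T=p_T$; alternatively one may use $p_Sp_T=p_{S\cup T}$ together with injectivity, noting $S\cup T=T\iff S\subseteq T$.

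For (ii), unwinding the definitions, $(p_S\circ\sigma)(s)=p_S(\sigma(s))=\prod_{j\in S}\bigl(1-s_{\sigma^{-1}(j)}\bigr)$ because $(\sigma(s))_j=s_{\sigma^{-1}(j)}$; re-indexing by $k=\sigma^{-1}(j)$ (so that $j$ ranges over $S$ iff $k$ ranges over the preimage $\sigma^{-1}(S)$) turns this into $\prod_{k\in\sigma^{-1}(S)}(1-s_k)=p_{\sigma^{-1}(S)}(s)$. For (iii), with the concatenation $s=s^1s^2$ relative to $[n_1+n_2]=[n_1]\oplus[n_2]$ one has $s_j=s^1_j$ for $j\le n_1$ and $s_j=s^2_{j-n_1}$ otherwise, so $(p_S\otimes p_T)(s^1s^2)=p_S(s^1)p_T(s^2)=\prod_{j\in S}(1-s^1_j)\prod_{k\in T}(1-s^2_k)=\prod_{j\in S\cup(n_1+T)}(1-s_j)=p_{S\oplus T}(s)$, using the definition $S\oplus T=S\cup(n_1+T)$ from~\eqref{eq:disu}. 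I do not anticipate any genuine obstacle; the one spot demanding attention is part (i), where the two directions require different arguments — an algebraic identity in one direction and evaluation at the distinguished string $t^{(T)}$ in the other — and where injectivity of $S\mapsto p_S$ is used to pass between the subset and product formulations.
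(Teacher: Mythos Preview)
Your argument is correct and is exactly the elementary verification the paper has in mind; the paper states the lemma without proof, treating it as a direct consequence of the definition $p_S(t)=\prod_{j\in S}(1-t_j)$ and the identity $p_Sp_T=p_{S\cup T}$ from Example~\ref{ex:pS}.

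One point worth flagging: in part (i) you (correctly) prove that $S\subseteq T\iff p_T\le p_S\iff p_Sp_T=p_T$, whereas the printed statement reads $p_Sp_T=p_S$. Since $p_Sp_T=p_{S\cup T}$ and $S\cup T=T$ when $S\subseteq T$, your version is the right one; the paper's ``$=p_S$'' is evidently a typo. Your silent correction is fine, but you may want to note the discrepancy explicitly.
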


Let $f\in \Fe_n$ and let $\hat f$ be the M\"obius transform. Note that since $f$ has
values in $\{0,1\}$, we have by the proof of Theorem \ref{thm:mobius}
\[
\forall S\in 2^n, \quad \sum_{T\subseteq S} \hat f_T=f(s)\in \{0,1\}; \qquad \sum_{T\in 2^n} \hat
f_T=f(\theta_n)=1.
\]

\begin{prop}\label{prop:mobius} 

\begin{enumerate}
\item[(i)] For $f\in \Fe_n$ and  $\sigma\in \permut_n$, 
$\widehat{(f\circ \sigma)}_S=\hat f_{\sigma(S)}, \qquad S\subseteq [n]$.
\item[(ii)] For $f\in \Fe_n$, $\widehat{f^*}_S=\begin{dcases} 1-\hat f_S & S=\emptyset\text{ or } S=[n],\\
-\hat f_S & \text{otherwise}.
\end{dcases}$
\item[(iii)] For $f\in \Fe_{n_1}$, $g\in \Fe_{n_2}$, we have 
$\widehat{(f\otimes g)}_{S\boxplus T}=\hat f_S\hat g_T$, $S\subseteq [n_1]$, $T\subseteq
[n_2]$.
%
\end{enumerate}

\end{prop}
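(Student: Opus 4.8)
The plan is to reduce all three identities to the uniqueness part of Theorem~\ref{thm:mobius}: since the Möbius coefficients $(\hat h_U)_{U\subseteq[m]}$ are the \emph{unique} real numbers with $h=\sum_{U\subseteq[m]}\hat h_U p_U$, it suffices in each case to rewrite the function in question explicitly as such a combination of the basis functions $p_U$ and read off the coefficients. The only point needing a moment's care is that each rewritten expression really does range over \emph{all} subsets of the relevant ambient index set, so that Theorem~\ref{thm:mobius} applies verbatim.

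For (i), I would start from $f=\sum_{S\subseteq[n]}\hat f_S p_S$ and compose with $\sigma$; by Lemma~\ref{lemma:PSPT}(ii) we have $p_S\circ\sigma=p_{\sigma^{-1}(S)}$, so $f\circ\sigma=\sum_{S\subseteq[n]}\hat f_S\,p_{\sigma^{-1}(S)}$. Since $S\mapsto\sigma^{-1}(S)$ is a bijection of $2^n$, reindexing by $S\mapsto\sigma(S)$ puts this in the form $f\circ\sigma=\sum_{S\subseteq[n]}\hat f_{\sigma(S)}\,p_S$, and uniqueness yields $\widehat{(f\circ\sigma)}_S=\hat f_{\sigma(S)}$. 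For (iii), I would expand $f=\sum_{S\subseteq[n_1]}\hat f_S p_S$ and $g=\sum_{T\subseteq[n_2]}\hat g_T p_T$, multiply, and use Lemma~\ref{lemma:PSPT}(iii), $p_S\otimes p_T=p_{S\oplus T}$, to get $f\otimes g=\sum_{S\subseteq[n_1]}\sum_{T\subseteq[n_2]}\hat f_S\hat g_T\,p_{S\oplus T}$. Because $(S,T)\mapsto S\oplus T$ is a bijection from $2^{n_1}\times 2^{n_2}$ onto $2^{n_1+n_2}$ (its inverse sends $U$ to $(U\cap[n_1],\{j:n_1+j\in U\})$), this is precisely a representation of $f\otimes g$ over the basis $\{p_U:U\subseteq[n_1+n_2]\}$, and uniqueness gives $\widehat{(f\otimes g)}_{S\oplus T}=\hat f_S\hat g_T$.

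For (ii), I would use the definition $f^*=1_n-f+p_n$ together with $1_n=p_\emptyset$ and $p_n=p_{[n]}$, so that $f^*=p_\emptyset+p_{[n]}-\sum_{S\subseteq[n]}\hat f_S p_S=(1-\hat f_\emptyset)p_\emptyset+(1-\hat f_{[n]})p_{[n]}-\sum_{\emptyset\ne S\ne[n]}\hat f_S p_S$; uniqueness then gives $\widehat{f^*}_S=1-\hat f_S$ for $S\in\{\emptyset,[n]\}$ and $\widehat{f^*}_S=-\hat f_S$ otherwise. I do not expect any genuine obstacle here: the argument is pure bookkeeping, the coefficient shift at $\emptyset$ and $[n]$ in (ii) being exactly the phenomenon already exploited in the proof of Theorem~\ref{thm:graded} (equation~\eqref{eq:dual_rank}), so the three parts together simply record how the Möbius transform on $\Fe_n$ interacts with the three basic operations $\circ\,\sigma$, $(-)^*$ and $\otimes$.
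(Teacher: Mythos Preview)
Your proposal is correct and follows exactly the approach the paper indicates: the paper's proof consists of a single sentence stating that all three statements follow from Lemma~\ref{lemma:PSPT} and the uniqueness part of Theorem~\ref{thm:mobius}, and you have simply spelled out those steps in full. There is nothing to add or correct.
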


\begin{proof} All statements follow easily from Lemma \ref{lemma:PSPT} and the uniqueness part in Theorem
\ref{thm:mobius}.

\end{proof}

\section{Affine subspaces}
\label{sec:app_affine}
Let $V$ be a finite dimensional real vector space. A subset $A\subseteq V$ is an {affine
subspace} in $V$ if for any choice of  $a_1,\dots, a_k\in A$ and  $\alpha_1,\dots,\alpha_k\in \mathbb R$
such that $\sum_i\alpha_i=1$, we have $\sum_i\alpha_i a_i\in A$. It is clear that
$A=\emptyset$ is trivially an affine subspace.  Moreover, any linear subspace in $V$ is an affine subspace,
and an
affine subspace $A$ is linear if and only if $0\in A$. If $A \neq\emptyset$ and also
$0\notin A$, we say that $A$ is {proper}. 

A proper  affine subspace $A\subseteq V$ can be determined in two ways. Let 
\[
\lin(A):=\{a_1-a_2,\ a_1,a_2\in A\}.
\]
It is easily verified that $\lin(A)$ is a linear subspace, moreover, for any $a\in A$, we
have
\begin{equation}\label{eq:affine_l}
\lin(A)=\{a_1-a,\ a_1\in A\},\qquad A=a+\lin(A).
\end{equation}
We put $\dim(A):=\dim(\lin(A))$, the dimension of $A$. 

Let $V^*$ be the vector space dual of $V$ and let $\<\cdot,\cdot\>$ be the
duality. For a subset $C\subseteq V$, put
\[
C^*:=\{v^*\in V^*,\ \<v^*,a\>=1,\ \forall a\in C\}.
\]
Let $\tilde a\in A^*$ be any element and let $\Span(A)$ be the linear span of $A$ in
$V$. We then have
\begin{equation}\label{eq:affine_s}
A=\Span(A)\cap \{\tilde a\}^*,
\end{equation}
independently of $\tilde a$. The relation between the two expressions for $A$, given by
\eqref{eq:affine_l} and \eqref{eq:affine_s} is obtained as
\begin{equation}\label{eq:LandS}
\Span(A)=\lin(A)+\mathbb R\{a\},\qquad \lin(A)=\Span(A)\cap \{\tilde a\}^\perp,
\end{equation}
independently of $a\in A$ or $\tilde a\in A^*$. Here $+$ denotes the direct sum of
the vector spaces and $C^\perp$ denotes the annihilator of a set $C$.

\begin{exm}\label{exm:app_affine}
A basic example of an affine subspace is a hyperplane in $\mathbb R^n$, given by a vector
$x\in \mathbb R^n$ and a constant $c\in \mathbb R$ as 
\[
H=\{y\in \mathbb R^n,\ \<x,y\>=c\}.
\]
$H$ is a proper affine subspace iff $x\ne 0$ and $c\ne 0$, we then  may assume
that $c=1$ by properly normalizing $x$. In this case, $L_H=\{x\}^\perp$,  $S_H=\mathbb
R^n$ and $H^*=\{x\}$. 

In general,  an affine subspace $A\subseteq \mathbb R^n$ is the intersection of (finitely many)
hyperplanes, which shows that $A$ can be represented as the set of all solutions of a
system of linear equations $My=b$, for some $m\times n$ matrix $M$ and a vector $b\in
\mathbb R^m$. Clearly, $A$ is proper iff $b\ne 0$ and a solution of the system exists. In
this case $L_A$ is the space of solutions of the homogeneous system. Using a change of
basis if necessary, we may assume that $b=(1,0,\dots,0)$. Then $S_A$ is the set of
solutions of the homogeneous system $\tilde My=0$, where $\tilde M$ is the $(m-1)\times
n$ matrix obtained from $M$ by removing the first row, and the space $A$ is the
intersection of  $S_A$ by the hyperplane determined by the first row of $M$.

A simple geometric example is a straight line in $\mathbb R^3$, given by a point $x$ and a
vector $u$ as
\[
A=\{x+tu,\ t\in \mathbb R\}.
\]
This affine subspace is proper iff $x\ne 0$. Here $L_A=\mathbb Ru$ and $S_A$ is the plane containing the origin and $A$.
Any other plane containing $A$ is determined as $\{z\}^*$ for  some vector $z\perp u$ such that
$\<z,x\>=1$, this gives the representation $A=S_A\cap \{z\}^*$. 

\end{exm}

The following lemma is  easily proven.

\begin{lemma}\label{lemma:dual} Let $C\subseteq V$ be any subset. Then $C^*$ is an affine subspace in 
$V^*$ and we have
\[
0\in C^* \iff C= \emptyset, \qquad C^*=\emptyset\iff 0\in \aff(C).
\]
Assume $C\ne \emptyset$ and $0\notin \aff(C)$. Then
\begin{enumerate}
\item[(i)] $C^*$ is proper and we have $\lin(C^*)=C^\perp=\Span(C)^\perp$,
\item[(ii)] $\aff(C)=C^{**}$ and for any $c_0\in
C$, we have
\begin{align*}
\lin(C):= \Span\{c_1-c_2,\ c_1,c_2\in C\}=\Span\{c-c_0,\ c\in C\}=\lin(C^{**}).
\end{align*}

\end{enumerate}

\end{lemma}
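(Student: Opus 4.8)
The plan is to establish the assertions in the order stated, using only elementary linear algebra together with the already-recorded identities \eqref{eq:affine_l}, \eqref{eq:affine_s}, \eqref{eq:LandS} for proper affine subspaces. That $C^*$ is an affine subspace is immediate: if $v_1^*,\dots,v_k^*\in C^*$ and $\sum_i\alpha_i=1$, then $\langle\sum_i\alpha_iv_i^*,a\rangle=\sum_i\alpha_i=1$ for every $a\in C$, so the affine combination lies in $C^*$. Next, $0\in C^*$ would force $\langle 0,a\rangle=1$ for all $a\in C$, i.e. $0=1$, impossible unless $C=\emptyset$; and for $C=\emptyset$ the defining condition on $C^*$ is vacuous, so $0\in C^*\iff C=\emptyset$.

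For $C^*=\emptyset\iff 0\in\aff(C)$ I would first dispose of $C=\emptyset$ (then $C^*=V^*\ne\emptyset$ and $\aff(\emptyset)=\emptyset$, so both sides are false), and then assume $C\ne\emptyset$. If $0\in\aff(C)$, write $0=\sum_i\alpha_ic_i$ with $c_i\in C$, $\sum_i\alpha_i=1$; any $\tilde a\in C^*$ would satisfy $0=\langle\tilde a,0\rangle=\sum_i\alpha_i=1$, so $C^*=\emptyset$. Conversely, if $0\notin\aff(C)$, pick $a_0\in C$ and put $W:=\lin(\aff(C))$; since $0\notin a_0+W$ we have $a_0\notin W$, so by \eqref{eq:LandS} $\Span(\aff(C))=W\oplus\mathbb R a_0$, and the functional that vanishes on $W$ and sends $a_0$ to $1$ (extended arbitrarily to $V^*$) lies in $\aff(C)^*\subseteq C^*$; hence $C^*\ne\emptyset$.

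Under the standing hypothesis $C\ne\emptyset$, $0\notin\aff(C)$ for parts (i)--(ii): properness of $C^*$ is now immediate from the two equivalences just proved ($C^*\ne\emptyset$ and $0\notin C^*$). For $\lin(C^*)=C^\perp=\Span(C)^\perp$, the last equality holds because a linear functional annihilates $C$ exactly when it annihilates $\Span(C)$; for the first, $C^*$ is a proper affine subspace, so \eqref{eq:affine_l} gives $\lin(C^*)=\{\tilde a_1-\tilde a_2:\tilde a_i\in C^*\}$, every such difference annihilates $C$, while conversely any $w^*\in C^\perp$ equals $(\tilde a+w^*)-\tilde a$ with $\tilde a\in C^*$ fixed and thus lies in $\lin(C^*)$.

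The one genuinely computational point is $\aff(C)=C^{**}$ in part (ii); I expect this to be the main obstacle, though still routine. The inclusion $\aff(C)\subseteq C^{**}$ is clear, since $C\subseteq C^{**}$ directly from the definitions (using $V^{**}=V$) and $C^{**}$ is affine. For the reverse, with $a_0\in C$ and $W$ as above one checks, by passing the constraints from $C$ to $\aff(C)=a_0+W$, that $C^*=\{\tilde a\in W^\perp:\langle\tilde a,a_0\rangle=1\}$, an affine hyperplane inside $W^\perp$ that avoids $0$, whence $\Span(C^*)=W^\perp$; therefore $v\in C^{**}$ iff $\langle\tilde a,v-a_0\rangle=0$ for all $\tilde a\in C^*$, iff $v-a_0$ is orthogonal to all of $W^\perp$, iff (by reflexivity in finite dimensions) $v-a_0\in W$, iff $v\in a_0+W=\aff(C)$. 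Finally the chain $\Span\{c_1-c_2:c_1,c_2\in C\}=\Span\{c-c_0:c\in C\}=\lin(\aff(C))$ is elementary --- the first equality because $c_1-c_2=(c_1-c_0)-(c_2-c_0)$ while $c-c_0$ is itself such a difference, the second because $a-c_0=\sum_i\alpha_i(c_i-c_0)$ whenever $a=\sum_i\alpha_ic_i$ with $\sum_i\alpha_i=1$ --- and combined with $\aff(C)=C^{**}$ it yields $\lin(C)=\lin(C^{**})$, completing part (ii).
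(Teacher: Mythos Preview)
The paper does not actually supply a proof of this lemma; it merely records that ``the following lemma is easily proven'' and moves on. Your argument is correct and complete, filling in the details in the natural way using only elementary linear algebra and the identities \eqref{eq:affine_l}--\eqref{eq:LandS} already recorded in the appendix; in particular your computation $\Span(C^*)=W^\perp$ (via the dimension count $\dim\lin(C^*)+1=\dim V-\dim W$) is exactly what is needed to close the double-dual identity $\aff(C)=C^{**}$.
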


\begin{coro}\label{coro:dual} Let $A\subseteq V$ be a proper affine subspace. Then 
\begin{enumerate}
\item[(i)] $A^*$ is a proper affine subspace in $V^*$ and $A^{**}=A$.
\item[(ii)] $\lin(A^*)=\Span(A)^\perp$, $\Span(A^*)=\lin(A)^\perp$.
\item[(iii)] $\dim(A^*)=\dim(V)-\dim(A)-1$.
\end{enumerate}

\end{coro}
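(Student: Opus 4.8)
The plan is to deduce all three parts from Lemma \ref{lemma:dual}, applied both to $C=A$ and to $C=A^*$. The key preliminary observation is that the hypotheses of that lemma are met: since $A$ is a \emph{proper} affine subspace, we have $A\neq\emptyset$ and $0\notin A$, and because $A$ is affine it coincides with its affine hull, so $A=\aff(A)$ and in particular $0\notin\aff(A)$. Thus Lemma \ref{lemma:dual} applies with $C=A$.

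For (i), Lemma \ref{lemma:dual}(i) directly gives that $A^*$ is a proper affine subspace in $V^*$, and Lemma \ref{lemma:dual}(ii) gives $A^{**}=\aff(A)=A$, using again that $A$ is affine. For the first identity in (ii), Lemma \ref{lemma:dual}(i) with $C=A$ already states $\lin(A^*)=A^\perp=\Span(A)^\perp$. For the second identity, I would apply Lemma \ref{lemma:dual}(i) to $C=A^*$ instead: by part (i) just proved, $A^*$ is itself a proper affine subspace, so $A^*\neq\emptyset$ and $0\notin\aff(A^*)=A^*$, and the lemma yields $\lin\bigl((A^*)^*\bigr)=\Span(A^*)^\perp$. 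Since $(A^*)^*=A^{**}=A$, this reads $\lin(A)=\Span(A^*)^\perp$. Taking annihilators of both sides (the left-hand side a subspace of $V$, so its annihilator sits in $V^*$) and using $(W^\perp)^\perp=W$ for a subspace $W$ of a finite dimensional space, I get $\Span(A^*)=\lin(A)^\perp$, as claimed.

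For (iii), I would combine (ii) with the dimension count $\dim(A^*)=\dim(\lin(A^*))=\dim(\Span(A)^\perp)=\dim(V)-\dim(\Span(A))$. By \eqref{eq:LandS}, for any $a\in A$ the sum $\Span(A)=\lin(A)+\mathbb R\{a\}$ is direct; since $A=a+\lin(A)$ and $0\notin A$, we have $a\notin\lin(A)$, so the summand $\mathbb R\{a\}$ is one-dimensional and $\dim(\Span(A))=\dim(\lin(A))+1=\dim(A)+1$. Substituting gives $\dim(A^*)=\dim(V)-\dim(A)-1$. I do not expect any real obstacle here: the only thing needing care is the bookkeeping of which ambient space (V or $V^*$) each annihilator lives in, and the double-annihilator identity, which are both routine in finite dimensions.
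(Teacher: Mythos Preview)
Your proposal is correct and follows exactly the approach implicit in the paper: the corollary is stated without proof as an immediate consequence of Lemma~\ref{lemma:dual}, and you have supplied precisely the routine deductions (apply the lemma to $C=A$ and to $C=A^*$, then use the double-annihilator identity and the direct-sum decomposition \eqref{eq:LandS}) that the paper leaves to the reader.
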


The proper affine subspace $A^*$ in the above Corollary will be called the {affine
dual}
of $A$. Note that the dual depends on the choice of the ambient vector space $V$.

\subsection{A proof of Lemma \ref{lemma:combs} (ii)}
\label{app:lemma}

Let us restate the statement (ii) of the lemma.

\begin{lemma} Let $Z$ be any object in $\Af$ and let
$F_0,F_1,F_2$ be first order objects. Then we have
\[
(Z\multimap F_1)\multimap (F_0\multimap F_2)=((F_0\multimap Z)\multimap
F_1)\multimap F_2.
\]
\end{lemma}

\begin{proof} It is clear by the definition of $\multimap$ that we need to prove that
\[
(Z\multimap F_1)\otimes F_0\simeq (F_0\multimap Z)\multimap F_1.
\]
Put  $X:=F_0\otimes (Z\multimap F_1)$ and  $Y:=(F_0\multimap Z)\otimes F_1^*$, then it is enough to show that $X=Y^*$.
We first check that  $V_X=V_{F_0}\otimes V_Z^*\otimes V_{F_1}=V_{Y^*}$. 
Let now  $w\in A_{X}$ and $v\in A_{Y}$. Since $A_{F_1^*}=\{\tilde a_{F_1}\}$,  we have  $v=u\otimes
\tilde a_{F_1}$ for some $u\in A_{F_0\multimap Z}$ and 
\[
\<w,v\>=\<w,u\otimes \tilde a_{F_1}\>=\<(id\otimes \tilde a_{F_1})(w),u\>.
\]
Using Lemma \ref{lemma:combs} (iii) we get  $(id\otimes \tilde a_{F_1})(w)\in A_{F_0\otimes Z^*}$,
which implies that $\<w,v\>=1$. Hence  $A_{X}\subseteq A_{Y}^*$. Note that, in fact,  in any
*-autonomous category, we have a canonical embedding $(A\multimap B)\otimes
C\hookrightarrow
(A^*\otimes C)\parr B\simeq (C\multimap A)\multimap B$.

It is now enough to prove that $d_X=d_{Y^*}$, or equivalently, $\dim(S_X)=\dim(S_{Y^*})$. We have
\[
S_X=S_{F_0}\otimes S_{(Z\otimes F_1^*)^*} = V_{F_0}\otimes L_{Z\otimes F_1^*}^\perp,
\]
so that 
\[
\dim(S_X)=D_{F_0}(D_ZD_{F_1}-d_{Z\otimes F_1^*})= D_{F_0}D_ZD_{F_1}-D_{F_0}d_Z,
\]
where we used Lemma \ref{lemma:tensor_spaces} and the fact that $F_1$ is first order, so
that $d_{F_1^*}=0$. Further, since $S_{Y^*}=L_Y^\perp$, we have
\[
\dim(S_{Y^*})=D_{F_0}D_ZD_{F_1}-d_Y=D_{F_0}D_ZD_{F_1}-d_{(F_0\otimes
Z^*)^*},
\]
where we again have used the fact that $F_1$ is first order. We have
\[
d_{(F_0\otimes Z^*)^*}= \dim(S_{F_0\otimes
Z^*}^\perp)=D_{F_0}D_Z-D_{F_0}\dim(L_{Z}^\perp)=D_{F_0}D_Z-D_{F_0}(D_Z-d_Z)=D_{F_0}d_Z,
\]
so that indeed $d_X=d_{Y^*}$.

\end{proof}

\section{Labelled posets and type functions} \label{app:pf0}

We start by showing some basic properties of the reduced poset $\Pe_f^0$, $f\in \Te_n$.
 By definition, $\Pe_f^0$ is  obtained from $\Pe_f$ by removing any element
that is not minimal and has an empty label, hence it is a
poset whose elements are labelled by subsets in $[n]$. The elements in $\Pe_f^0$ will be
denoted by capital letters $S,T,R,\dots$, but they will not be viewed as subsets of $[n]$.
The label set of $T$ will be denoted by $L_T$. The
order  relation in $\Pe_f^0$ will be denoted as $\le$.

The labelling in $\Pe_f^0$ has some immediate properties: if $S,T\in \Pe_f^0$, then $S\le
T$ implies that $L_S\cap L_T=\emptyset$ and if $S,T\in \mathrm{Min}(\Pe_f^0)$, then
$L_S\subseteq L_T$ implies $S=T$. Also, the smallest element, if present, is the only element in
$\Pe_f^0$ that may have an empty label set.

\begin{lemma}\label{lemma:pf0_mincover} If $\Pe_f^0$ has more than one element,
then any $S\in \mathrm{Min}(\Pe_f^0)$ is covered by at least one element.

\end{lemma}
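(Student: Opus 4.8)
If $\Pe_f^0$ has more than one element, then any $S\in\mathrm{Min}(\Pe_f^0)$ is covered by at least one element.

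The plan is to deduce the lemma from the following strengthened statement, proved by induction on $n$: ($\star$) for every $f\in\Te_n$ with $|\Pe_f|>1$, every $S\in\mathrm{Min}(\Pe_f)$ is covered, in $\Pe_f$, by some $T$ with $L_T\neq\emptyset$. Since $\Pe_f^0$ is the subposet of $\Pe_f$ on the elements with nonempty label together with $\emptyset$ (if present), and $\mathrm{Min}(\Pe_f^0)=\mathrm{Min}(\Pe_f)$ by Lemma~\ref{lemma:p0_basic}(i), a $T$ furnished by ($\star$) lies in $\Pe_f^0$ and, nothing lying strictly between $S$ and $T$ already in $\Pe_f$, covers $S$ in $\Pe_f^0$; and $|\Pe_f^0|>1$ forces $|\Pe_f|>1$. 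I would first record two facts. First, $|\Pe_f|>1$ implies $r(f)\geq2$: by Theorem~\ref{thm:graded} $r(f)$ is even, and $r(f)=0$ would make $\Pe_f$ an antichain, which (being then unable to contain $\emptyset$) gives $f=\sum_{M\in\Pe_f}p_M$ and hence $f(\theta_n)=|\Pe_f|\geq2$, contradicting $f\in\Fe_n$. Second, $|\Pe_f|>1$ implies $|\Pe_f^0|>1$: if $|\mathrm{Min}(\Pe_f)|\geq2$ this is clear, and if $\mathrm{Min}(\Pe_f)=\{S_0\}$ then any $T$ covering $S_0$ has every proper sub-element in $\Pe_f$ of rank $0$, hence equal to $S_0$, so $L_T=T\setminus S_0\neq\emptyset$.

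For the induction, $n\leq3$ is the base case: there every type function is a chain type, $\Pe_f$ is a chain, and if $|\Pe_f|>1$ its least element is covered by its successor, whose label is a nonempty set difference. The statement ($\star$) is invariant under the permutation action of $\permut_n$ by Corollary~\ref{coro:Pf}(i), and every $f\in\Te_n$ with $n\geq2$ is, up to a permutation, either $g\otimes h$ or $(g\otimes h)^{*}$ with $g\in\Te_m$, $h\in\Te_{n-m}$ and $m,n-m<n$; it therefore suffices to treat these two forms assuming ($\star$) in all smaller arities. If $f=g\otimes h$, then $\Pe_f\simeq\Pe_g\times\Pe_h$ with rank function $\rho_g+\rho_h$ and $\mathrm{Min}(\Pe_f)=\mathrm{Min}(\Pe_g)\times\mathrm{Min}(\Pe_h)$ (Corollary~\ref{coro:Pf}(iii)); since $|\Pe_g|\cdot|\Pe_h|>1$ I may assume $|\Pe_g|>1$ (the case $|\Pe_h|>1$ being symmetric, invoking the third label clause of Corollary~\ref{coro:Pf}(iii) in place of the second), and for $S=(S_1,S_2)\in\mathrm{Min}(\Pe_f)$ the induction hypothesis gives $T_1$ covering $S_1$ in $\Pe_g$ with $L_{T_1}\neq\emptyset$; then $(T_1,S_2)$ covers $S$ in $\Pe_f$ and $L_{(T_1,S_2)}=L_{T_1}\neq\emptyset$ because $T_1\notin\mathrm{Min}(\Pe_g)$ while $S_2\in\mathrm{Min}(\Pe_h)$.

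The real work is the case $f=(g\otimes h)^{*}$, where by Corollary~\ref{coro:Pf}(ii) $\Pe_f$ is obtained from $\Pe_{f^{*}}=\Pe_g\times\Pe_h$ by toggling the presence of $\emptyset$ and of $[n]$, the labels of all other elements being unchanged. If $\emptyset\in\Pe_f$ it is the unique minimal element, and since $r(f)\geq2$ it is covered by a rank-$1$ element $T$ whose only proper sub-element in $\Pe_f$ is $\emptyset$, so $L_T=T\neq\emptyset$. If $\emptyset\notin\Pe_f$, then $\emptyset\in\Pe_g\times\Pe_h$, forcing $\emptyset_g\in\Pe_g$ and $\emptyset_h\in\Pe_h$; deleting $\emptyset$, the minimal elements of $\Pe_f$ are exactly the rank-$1$ elements of $\Pe_g\times\Pe_h$, so up to the symmetry exchanging the tensor factors every $S\in\mathrm{Min}(\Pe_f)$ has the form $(U,\emptyset_h)$ with $U$ covering $\emptyset_g$ in $\Pe_g$. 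The elements covering $S$ in $\Pe_f$ are the $(U',\emptyset_h)$ with $U$ covered by $U'$ in $\Pe_g$ and the $(U,W)$ with $\emptyset_h$ covered by $W$ in $\Pe_h$; the latter have empty label by the last clause of Corollary~\ref{coro:Pf}(iii), while $L_{(U',\emptyset_h)}=L_{U',g}$. So the task reduces to finding $U'$ covering the rank-$1$ element $U$ in $\Pe_g$ with $L_{U',g}\neq\emptyset$. If $g$ is a chain type this is the successor of $U$; otherwise I apply the induction hypothesis to $g^{*}\in\Te_m$: as $\emptyset_g\in\Pe_g$, $U$ is a minimal element of $\Pe_{g^{*}}$ (and $|\Pe_{g^{*}}^0|>1$ by the auxiliary facts), so some $U'$ covers $U$ in $\Pe_{g^{*}}$ with $L_{U',g^{*}}\neq\emptyset$, and — arranging from the start that the first tensor factor $g$ carries no free outputs, which I may do by absorbing $O_g^F$ into $h$ — this $U'$ is not $[m]$, hence lies in $\Pe_g$ with $L_{U',g}=L_{U',g^{*}}$ and still covers $U$ there.

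I expect the main obstacle to be precisely this last step: controlling the extremal elements $\emptyset$, $[n]$, $\emptyset_g$, $[m]$ under the toggling of Corollary~\ref{coro:Pf}(ii) — in particular ensuring the cover of $U$ produced for $g^{*}$ is a genuine element of $\Pe_g$ and not the artificially reintroduced top element $[m]$, which is exactly why the normalization ``$g$ has no free outputs'' is needed — together with the small-poset degeneracies (such as $\Pe_g$ being a $3$-chain) that have to be peeled off by hand before the induction hypothesis can be invoked.
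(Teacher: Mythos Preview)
Your proof is essentially correct and follows the same inductive skeleton as the paper's (induction on $n$, treating $f=g\otimes h$ and $f=(g\otimes h)^*$ separately). Where you diverge is in the dual-of-product case: the paper argues by contradiction entirely within $\Pe_f^0$, showing that an uncovered minimal element $S$ forces one of the factors $f_i$ to inherit the same pathology, whence by induction $\Pe_{f_i^*}^0$ collapses to a singleton, which is impossible. You instead construct the cover directly, reducing to finding a labelled rank-$2$ element above $U$ in $\Pe_g$ and invoking the induction hypothesis on $g^*$, using your normalization (absorbing $O_g^F$ into $h$) to rule out $U'=[m]$. Both routes work; the paper's contradiction argument is a bit slicker since it never needs the free-output normalization, while your direct construction makes explicit exactly which element does the covering.

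Two small points to tighten. First, your parenthetical ``$|\Pe_{g^*}^0|>1$ by the auxiliary facts'' is not actually justified by the two facts you recorded: neither says anything about passing from $g$ to $g^*$. What you need is that when $\emptyset\in\Pe_g$ and $g$ is not a chain type, $|\Pe_{g^*}|>1$. This does hold: if $|\Pe_{g^*}|\le 1$ then $\Pe_g\subseteq\{\emptyset,U,[m]\}$, and since $r(g)$ is even (Theorem~\ref{thm:graded}) the two-element case is excluded, leaving exactly the $3$-chain you already handle separately. State this explicitly. Second, the phrase ``arranging from the start'' is misleading: the normalization of $g$ must happen \emph{after} the WLOG step placing $S$ in the form $(U,\emptyset_h)$, since which factor needs its free outputs absorbed depends on which side $S$ sits on. The argument survives this reordering (as you can check, $S$ retains the required form under the regrouping $\Pe_g\simeq\Pe_{g'}\times\{\emptyset\}$), but the exposition should reflect it.
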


\begin{proof} We will proceed by induction on $n$. The assertion is clearly true for
chains, so for $n\le 3$. Assume it holds for all $m<n$ and let $f\in \Te_n$. 
 Assume that  $S\in
\mathrm{Min}(\Pe_f)$ is not covered by any element. If $S$ is the least element, then
$\Pe_f^0=\{S\}$. 
Otherwise, 
$\Pe_f^0$ does not contain $\emptyset$ and has no largest element, so that $\emptyset \in \Pe_{f^*}^0$ and
 $f^*$ has no free outputs.

If $f\approx
f_1\otimes f_2$, then $S=(S_1,S_2)$ for $S_i\in \mathrm{Min}(\Pe_{f_i}^0)$ and it is clear
that both $S_1$ and $S_2$ must be coverless. By the induction assumption,
$\Pe_{f_1}^0=\{S_1\}$ and $\Pe_{f_2}^0=\{S_2\}$, so that $\Pe_f^0=\{(S_1,S_2)\}$.

 Assume next that $f^*\approx f_1\otimes f_2$. In this case, both
$\Pe_{f_i}^0$ contain $\emptyset$ and $f_i$ have no free outputs. 
Since $S$ covers $\emptyset$ in $\Pe_{f^*}^0$, we must have the same situation in one of
$\Pe^0_{f_i}$, that is, say in $\Pe_{f_1}^0$, there is some $S_1$ such that $\emptyset\cover
S_1$ and $S=(S_1,\emptyset)$.

Note that $S$ cannot be covered in $\Pe_{f^*}^0$ unless $[n]\in \Pe_{f^*}^0$, since any
other cover in $\Pe_{f^*}^0$ would be also present in $\Pe_f^0$. But it is easy to see
from Lemma \ref{lemma:p0_constr} (ii) that this is impossible. Indeed, assume that
$\Pe_{f^*}^0$ has a largest element $T$, then it must have one of the forms
$T=(T_1,\emptyset)$ or $(\emptyset, T_2)$. But then, say,  $(T_1,\emptyset)=T\ge (\emptyset,
S_2)$ for any $S_2\in \Pe_{f_2}^0$, we conclude that $\Pe_{f_2}=\Pe_{f_2}^0=\{\emptyset\}$, in which
case $f^*$ would have free outputs. 

It follows that $S$ has no cover in $\Pe^0_{f^*}$ and we must have the same situation for
$S_1$ in $\Pe^0_{f_1}$. Again, $\Pe_{f_1}^0$ has no largest element, so that $f_1^*$ has
no free outputs. Since also $f_1$ has no free outputs,
$\Pe^0_{f_1^*}=\Pe^0_{f_1}\setminus\{\emptyset\}$, so that $S_1$ is a minimal element in
$\Pe_{f_1^*}^0$ that has no cover. By the induction assumption, $\Pe^0_{f_1^*}=\{S_1\}$,
 which is not possible with the assumed structure of $\Pe^0_{f_1}$.

\end{proof}

\begin{lemma}\label{lemma:pf0_covermin} Any element $T\in \Pe_f^0$ can cover at most one
minimal element. 

\end{lemma}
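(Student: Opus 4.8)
The plan is to argue by induction on $n$, in the spirit of the proofs of Lemmas~\ref{lemma:p0_basic} and~\ref{lemma:pf0_mincover}. For $n\le 3$ every element of $\Te_n$ is a chain type, so $\Pe_f^0$ is a chain (Lemma~\ref{lemma:p0_basic}(ii)) with a single minimal element and there is nothing to prove. For the inductive step I would use that, up to a permutation of the coordinates---which acts on $\Pe_f^0$ as a relabelling by Corollary~\ref{coro:Pf}(i) and hence preserves the property---every $f\in\Te_n$ with $n\ge 2$ is of the form $g\otimes h$ or $(g\otimes h)^*$ with $g\in\Te_{n_1}$, $h\in\Te_{n_2}$ and $n_1,n_2<n$ (cf.\ the proof of Lemma~\ref{lemma:fh_setting}). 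In either case one first observes that if $\emptyset\in\Pe_f$ then $\emptyset$ is the unique minimal element of $\Pe_f^0$ and the claim is vacuous, so one may assume $\emptyset\notin\Pe_f$; thus the work lies in the two cases below, and the induction hypothesis is available for $g,h,g^*,h^*$, all of arity $<n$.

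\emph{The case $f=g\otimes h$.} Suppose for contradiction that some $(U,V)\in\Pe_f^0$ covers two distinct minimal elements $(U_1,V_1)\ne(U_2,V_2)$ of $\Pe_f^0$. By Corollary~\ref{coro:Pf}(iii), $\Pe_f\simeq\Pe_g\times\Pe_h$ with $\mathrm{Min}(\Pe_f)=\mathrm{Min}(\Pe_g)\times\mathrm{Min}(\Pe_h)$, and $(S,T)$ has empty label exactly when $S\notin\mathrm{Min}(\Pe_g)$ and $T\notin\mathrm{Min}(\Pe_h)$. Hence $U_i\in\mathrm{Min}(\Pe_g)$, $V_i\in\mathrm{Min}(\Pe_h)$, and interchanging the roles of $g$ and $h$ if necessary I may assume $U_1\ne U_2$; then $U_1,U_2$ are incomparable in $\Pe_g$ and, since $U_1,U_2\le U$ with neither equal to $U$, we have $U_1<U$ and $U_2<U$. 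Now split on whether $U\in\Pe_g^0$. If $U\in\Pe_g^0$, pick $W$ minimal in $\{x\in\Pe_g^0:U_1<x\le U\}$ (nonempty, as it contains $U$); then $U_1\cover W$ in $\Pe_g^0$, and since $W>U_1$ is non-minimal in $\Pe_g$ while $V_1\in\mathrm{Min}(\Pe_h)$, Corollary~\ref{coro:Pf}(iii) gives $L_{(W,V_1)}=L_W\ne\emptyset$, so $(W,V_1)\in\Pe_f^0$ and $(U_1,V_1)<(W,V_1)\le(U,V_1)\le(U,V)$; the covering then forces $(W,V_1)=(U,V)$, i.e.\ $W=U$ and $V_1=V$. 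The symmetric argument with $U_2$ gives $U_2\cover U$ in $\Pe_g^0$ as well, so $U$ covers the two distinct minimal elements $U_1,U_2$ of $\Pe_g^0$, contradicting the induction hypothesis for $g$. If instead $U\notin\Pe_g^0$, then $L_U=\emptyset$; since $U\notin\mathrm{Min}(\Pe_g)$, Corollary~\ref{coro:Pf}(iii) gives $L_{(U,V)}\in\{L_U,\emptyset\}=\{\emptyset\}$, so $(U,V)\notin\Pe_f^0$, again a contradiction. This settles the tensor case.

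\emph{The case $f=(g\otimes h)^*$.} Here $\emptyset\notin\Pe_f$ forces $\emptyset\in\Pe_{g\otimes h}$, hence $\emptyset\in\Pe_g$ and $\emptyset\in\Pe_h$, so $\mathrm{Min}(\Pe_g)=\{\emptyset\}$ and $\mathrm{Min}(\Pe_h)=\{\emptyset\}$. By Corollary~\ref{coro:pf_dual}, $\Pe_f$ is $\Pe_{g\otimes h}\simeq\Pe_g\times\Pe_h$ with $\emptyset$ deleted and $[n]$ toggled. Reading off labels from Corollary~\ref{coro:Pf}(iii), the only product elements with nonempty label are the $(S,\emptyset)$ with $\emptyset\ne S\in\Pe_g$, $L_S\ne\emptyset$, and the $(\emptyset,T)$ with $\emptyset\ne T\in\Pe_h$, $L_T\ne\emptyset$; these two families are mutually incomparable and carry disjoint label sets, and (up to the top elements) they are copies of $\Pe_{g^*}^0$ and $\Pe_{h^*}^0$ respectively. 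Thus, after separately disposing of the single possibly-added greatest element $[n]$ (Lemma~\ref{lemma:p0_basic}(iii)) by inspecting the few small configurations in which a greatest element can cover a minimal element, one has $\Pe_f^0\simeq\Pe_{g^*}^0+\Pe_{h^*}^0$ as labelled posets. In a direct sum the covers of any element, and the minimal elements below it, lie in a single summand, so the claim for $f$ follows from the induction hypothesis applied to $g^*\in\Te_{n_1}$ and $h^*\in\Te_{n_2}$.

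The step I expect to be the main obstacle is the label bookkeeping in the product poset: one must not work in $\Pe_g\times\Pe_h$ itself but track exactly which product elements survive into $\Pe_f^0$. In particular a minimal covering element of $\Pe_g$ may itself have empty label and thus fail to lie in $\Pe_g^0$, which is precisely why the dichotomy ``$U\in\Pe_g^0$ versus $L_U=\emptyset$'' is needed in the tensor case; and in the dual case one must handle the toggled top element $[n]$ so that it does not spuriously produce a covering of two minimal elements.
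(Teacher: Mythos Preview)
Your overall scheme---induction on $n$ with the case split $f\approx g\otimes h$ versus $f\approx(g\otimes h)^*$---is exactly the paper's, and your tensor case is correct (indeed more carefully argued than the paper's somewhat terse version, since you explicitly track covers in $\Pe_f^0$ rather than in the product $\Pe_g\times\Pe_h$).

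The dual case, however, has a genuine gap. Your identification ``$\Pe_f^0\simeq\Pe_{g^*}^0+\Pe_{h^*}^0$'' is not correct even after removing $[n]$. What you actually get is
\[
\Pe_f^0\setminus\{[n]\}\;\simeq\;\bigl(\Pe_g^0\setminus\{\emptyset\}\bigr)\;+\;\bigl(\Pe_h^0\setminus\{\emptyset\}\bigr),
\]
and $\Pe_g^0\setminus\{\emptyset\}$ differs from $\Pe_{g^*}^0$ precisely at $[n_1]$ (and similarly for $h$ at $[n_2]$). Concretely: if $[n_1]\in\Pe_g^0$, then $[n_1]\notin\Pe_{g^*}$, so the element $([n_1],\emptyset)\in\Pe_f^0$ has no counterpart in $\Pe_{g^*}^0$. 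When $T=([n_1],\emptyset)$ is the offending element covering several minimal $(S_i,\emptyset)$, your induction on $g^*$ yields nothing: in $\Pe_{g^*}^0$ the $S_i$ sit as an antichain with nothing above them. This is not one of the ``few small configurations'' attached to the greatest element $[n]$ of $\Pe_f^0$; the element $([n_1],\emptyset)$ need not be greatest in $\Pe_f^0$ at all.

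This is exactly the case the paper isolates and treats by a separate argument: when the covering element $S$ is the largest element of $\Pe_{f_1}^0$ (your $\Pe_g^0$), one invokes Proposition~\ref{prop:pf0_largest} to write $f_1\approx h\vtl\beta$ and then applies Lemma~\ref{lemma:pf0_mincover} to the resulting antichain to reach a contradiction. Only when $S$ is \emph{not} largest can one pass to $f_1^*$ and close by induction, because then $S\ne[n_1]$ survives into $\Pe_{f_1^*}^0$. So your plan needs an additional ingredient for the ``top of a component'' case; Lemma~\ref{lemma:p0_basic}(iii) alone does not supply it.
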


\begin{proof} We will proceed by induction on $n$. Since the assertion is trivial for
chains, it holds for $n\le 3$. Assume it is true for $m<n$ and let $f\in \Te_n$. Let
$T$ be an element that covers $T_1,\dots, T_k\in \mathrm{Min}(\Pe^0_f)$, $k>1$, so that we
must have  $\emptyset \notin \Pe^0_f$. 
Then  $T$ is not the largest element, otherwise by Lemma \ref{lemma:p0_basic}, $T$ would
be the largest element in $\Pe_f$ in which case the rank of $f$ would be 1, which is not
possible. It follows that  $T$  and $T_1,\dots,
T_k$ are all contained in $\Pe_{f^*}^0$. 

Assume that $f\approx f_1\otimes f_2$, then any $T_i$ is of the form
$T_i=(T^i_1,T^i_2)$, with $T^i_j\in \mathrm{Min}(\Pe_{f_j}^0)$.  If $i\ne i'$, we may
assume that, say, $T^i_1\ne T^{i'}_1$. Since $T$ covers both $T_i$ and $T_{i'}$, we must
have $T^i_2=T^{i'}_2$ and $T=(S,T^i_2)$ for some $S\in \Pe_{f_1}^0$ such that
$T^i_1,T^{i'}_1\cover S$. By the induction assumption, this is not possible. 

If $f^*\approx f_1\otimes f_2$, then we may assume that there are
$S, S_1,\dots,S_k\in \Pe_{f_1}^0$ such that $\emptyset \cover S_i \cover S$. 
Then $S_1,\dots,S_k$ are minimal elements in  $\Pe^0_{f_1^*}$ which are either uncovered
or all covered by $S$. The former  case is excluded by Lemma \ref{lemma:pf0_mincover} and
the latter by the induction assumption.

\end{proof}

\begin{lemma}\label{lemma:mc_elements}  Any minimal covering element in $\Pe_f^{0}$ is
minimal covering in $\Pe_f$.
\end{lemma}

\begin{proof} Let $S\in \Pe_f^0$ be a minimal covering element. By Lemma
\ref{lemma:pf0_mincover}, $S$ covers exactly one minimal element $U\in
\mathrm{Min}(\Pe_f^0)$. Assume that $S$ is not minimal covering in $\Pe_f$. Since $U\in
\mathrm{Min}(\Pe_f)$ by Lemma \ref{lemma:p0_basic} and $U\le S$ in $\Pe_f$, this can only happen if there is some
$T\in \Pe_f$ such that $U\cover T\le S$. In this case, $T\notin \Pe_f^0$, so that
$L_T=\emptyset$. Since $T=\cup_{T'\le T}L_{T'}$, this implies that there must be some
$U'\ne U$ strictly below $T$ and since  $\Pe_f$ is graded, $U'$ must be a minimal element and
$U'\cover T$. This implies that $U'\le S$, but since $S$ cannot cover $U'$ in $\Pe_f^0$,
there must be, by Lemma \ref{lemma:pf0_covermin}, some minimal covering $S'\in \Pe_f^0$ such that $U'\cover
S'\le S$. In $\Pe_f$, this implies that  $U\le T\le S'\le S$, which implies that $S'=S$ and
hence $S$ covers both $U$ and $U'$, a contradiction.

\end{proof}

We now introduce the following relations. For $f\in \Te_n$ and indices $i,j\in [n]$, we say that $j$ covers
$i$ in $f$, in notation $i<_f j$, if there are some $S,T\in \Pe_f^0$, $S\cover T$, such
that $i\in L_S$ and $j\in L_T$. We say that $j$ strongly covers $i$ in $f$, in notation
$i\cover_f j$,  if for all
$T\in \Pe_f^0$ such that $j\in L_T$, there is some $S\in \Pe_f^0$ with $i\in S\cover T$.

To clarify these notions, consider the following diagram of the reduced poset $\Pe_f^0$
for the product of process matrices:
\begin{center}
\includegraphics[scale=0.9]{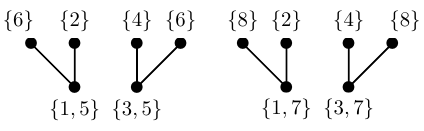}
\end{center}
Here we have $1\cover_f 2$, $5<_f 2$ but $5\not\cover_f 2$,  and $2\not<_f 3$.

We will say that an index $i\in [n]$ is minimal in $f$ if $i\in U$ for some $U\in
\mathrm{Min}(\Pe_f^0)$ and that $j\in [n]$ is minimal covering if $j\in L_T$ for some minimal
covering element $T\in \Pe_f^0$. It is clear from Proposition \ref{prop:pfinput} and
Lemmas \ref{lemma:p0_basic} and \ref{lemma:mc_elements} that  any
element $S\in\Pe_f^0$ is minimal (minimal covering) if its label set contains a minimal
(minimal covering) index.

\begin{lemma}\label{lemma:strong_cover} If $\Pe_f^0$ is not a singleton,  
any minimal index $i$ of $f$ is strongly covered by some minimal covering index $j$.

\end{lemma}

\begin{proof} We will once more proceed by induction. The statement is quite obvious for
chains, so for $n\le 3$. Assume it holds for all $m<n$ and let $f\in \Te_n$. 
Let $i\in U$ for some $U\in \mathrm{Min}(\Pe_f^0)$.

Let $f=f_1\otimes f_2$. Then  $U=(U_1,U_2)$ for $U_1\in \mathrm{Min}(\Pe_{f_1}^0)$,
$U_2\in \mathrm{Min}(\Pe_{f_2}^0)$ and 
we may assume that $i\in U_1$. If $\Pe_{f_1}^0$ is a singleton, then
$\Pe_{f_1}^0=\Pe_{f_1}=\{U_1\}$. It follows that $i$ is a free input of $f$, contained in any
minimal element of $\Pe_f^0$ and  
it is easily seen that 
$i$ is strongly covered in $f$ by any minimal covering index of $f$. 
If $\Pe_{f_1}^0$ is not a singleton, then $i\cover_{f_1} j$ for some minimal covering
index $j$ of $f_1$, by the induction assumption.
Let $T\in \Pe_{f}^0$, then $j\in L_{T}$ if and only if $T=(T_1,U_2)$ for $U_2\in
\mathrm{Min}(\Pe_{f_2}^0)$ and $j\in L_{T_1}$. Since $i\cover_{f_1} j$, there is some
$V_1$ such that $i\in V_1$ and $V_1\cover T_1$, so that $i\in (V_1,U_2)\cover T$.

Let now $f=(f_1\otimes f_2)^*$. Since the statement is trivial if $\emptyset \in \Pe_f^0$
and easily seen to be true for free inputs, we may assume that $f$ has no free inputs
and that $\emptyset \notin \Pe_f^0$, which implies that $f^*$ has no free inputs either.
Then $f\simeq h\vtl \beta$ for a chain $\beta$ and $h$ and $h^*$ have no free indices. 
Since by our assumptions $\Pe_f$ has no least element, we see from Lemmas
\ref{lemma:pf0_mincover} and \ref{lemma:pf0_covermin} that there must be at least two
minimal covering elements, which are mutually incomparable and hence they must all belong
to $\Pe_h^0$, as well as all the minimal elements. If $\beta$ is nontrivial, the statement
follows from the induction assumption. If $\beta$ is trivial, then $f=h$ and by
Proposition \ref{prop:nofree_components}, $\Pe_f^0=\Pe_1\indep \Pe_2$, with
$\Pe_k=\Pe_{f_k}\setminus\{\emptyset\}$, $k=1,2$. A minimal index $i$ of $f$ is contained in the
labels of exactly one of the components, say $\Pe_1$, and  is a label of a minimal element
in $\Pe_1$, hence $i$ is a minimal index of 
$f_1^*$. If $\Pe_{f_1^*}^0$ is a singleton, then $\Pe_1$
is a two-element chain, so that $i$ is covered by all the labels of the top element $T$ in
$\Pe_1$, which are easily seen to be  minimal covering indices  of $f$. Moreover, any
label set $L_S$ of $f$
containing some of these indices must be related to $\Pe_1$ and hence $S=T$. 

If $f_1^*$ is not a singleton, then by the induction assumption, $i\cover_{f_1^*} j$ for some minimal covering
index $j$ of $f_1^*$. By similar arguments as above, it can be seen that $j$ is a minimal
covering index of $f$ and  that $i\cover_f j$. 

\end{proof}

We now proceed towards the proof of Proposition \ref{prop:nofree_product}. 
Let $f\in \Te_n$ be such that $f$ and $f^*$ have no free indices and  $\emptyset\notin \Pe_f$. Assume that
$\Pe_f^0$ has no independent component, so that $f\approx f_1\otimes \dots \otimes f_k$ for some
type functions $f_1,\dots,f_k$ and $[n]=[n_1]\boxplus\dots\boxplus [n_k]$. Assume that this is a finest  decomposition
of this form, so that no $f_l$ is a product.  These assumptions will be kept throughout
this section. Note that for any $l$, $f_l$ cannot have free indices, since
these would be also free indices of $f$.

By Lemma \ref{lemma:p0_constr}, any element in $\Pe_f^0$ can be written as  $T=(T_1,\dots,T_k)$, with all $T_l\in
\mathrm{Min}(\Pe_{f_l}^0)$ except possibly one index $l_0$. Application of a 
permutation  is manifested only on the label sets.

\begin{lemma}\label{lemma:colors} There
is a decomposition $C_1,\dots, C_k$ of $[n]$ and a
bijection $\varphi_l:[n_l]\to C_l$ such that for any $T=(T_1,\dots,T_k)\in
\Pe_f^0$
\[
L_{(T_1,\dots,T_k)}=\begin{dcases} \varphi_l(L_{T_l}) & 
\text{if } \exists \ l, T_l\notin \mathrm{Min}(\Pe_{f_l})\\
\cup_l \varphi_l(L_{T_l}), & \text{otherwise}.
\end{dcases}
\]

\end{lemma}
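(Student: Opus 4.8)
The plan is to prove Lemma~\ref{lemma:colors} by induction on $k$, the number of factors in the finest product decomposition $f\approx f_1\otimes\dots\otimes f_k$, using Corollary~\ref{coro:Pf}~(iii) as the engine. For $k=1$ there is nothing to prove: take $C_1=[n]$ and $\varphi_1=\mathrm{id}$. For the inductive step, regroup the decomposition as $f\approx f_1\otimes g$, where $g\approx f_2\otimes\dots\otimes f_k$, with respect to $[n]=[n_1]\oplus([n]\setminus[n_1])$, and apply Corollary~\ref{coro:Pf}~(iii) to this two-factor product. That corollary gives $\Pe_f\simeq\Pe_{f_1}\times\Pe_g$ with the label sets $L_{(S,T)}$ described there; since $f$ and $f^*$ have no free indices and $\emptyset\notin\Pe_f$, the same holds for $f_1$, $g$ and their duals (a free index of a factor would be a free index of the product, and $\emptyset\in\Pe_{f_1}$ would force $\emptyset\in\Pe_f$ via Corollary~\ref{coro:Pf}~(iii) once one checks the minimal-element product structure). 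Hence the induction hypothesis applies to $g$, yielding a decomposition $C_2',\dots,C_k'$ of $[n]\setminus[n_1]$ and bijections $\varphi_l':[n_l]\to C_l'$ for $l=2,\dots,k$.

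Next I would assemble the global data. Set $C_1$ to be the union of all labels appearing on elements $(S,T)\in\Pe_f^0$ with $S\notin\mathrm{Min}(\Pe_{f_1})$ together with the "$f_1$-part" of the labels on minimal elements; concretely, by the first three cases of the label formula in Corollary~\ref{coro:Pf}~(iii), the contribution of the first factor is always a translate of $L_S$ sitting inside a fixed block, so define $\varphi_1:[n_1]\to C_1$ to be the restriction of the global relabelling permutation to the first block, and let $C_2,\dots,C_k$ be the images of $C_2',\dots,C_k'$ under that same relabelling, with $\varphi_l=\rho\circ\varphi_l'$ where $\rho$ is the permutation witnessing $f\approx f_1\otimes g$. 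One then checks the displayed formula for $L_{(T_1,\dots,T_k)}$ by splitting into cases according to whether the "non-minimal index" $l_0$ (if it exists) equals $1$ or lies in $\{2,\dots,k\}$: in the first case only the first factor contributes a label and the formula for $\Pe_{f_1}\times\Pe_g$ collapses to $\varphi_1(L_{T_1})$; in the second case the first factor is minimal and the corollary's formula routes the label through $n_1+L_T$ into $g$, where the induction hypothesis identifies it as $\varphi_{l_0}(L_{T_{l_0}})$; if all $T_l$ are minimal, case~1 of Corollary~\ref{coro:Pf}~(iii) gives $L_S\cup(n_1+L_T)$, which unwinds to $\varphi_1(L_{T_1})\cup\bigcup_{l\ge2}\varphi_l(L_{T_l})$.

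The main obstacle I anticipate is bookkeeping the permutation $\approx$ correctly: the statement $f\approx f_1\otimes g$ only holds up to a permutation $\rho\in\permut_n$, and composing this with the permutation coming from the inductive hypothesis on $g$, and with the block structure of Corollary~\ref{coro:Pf}~(iii), requires care so that the claimed bijections $\varphi_l$ genuinely have the disjoint images $C_l$ partitioning $[n]$. Here the remark in the excerpt that "application of a permutation is manifested only on the label sets" is exactly what makes this tractable: the poset structure of $\Pe_f^0$ is permutation-invariant, so one may first prove the formula for a convenient representative (the literal $f_1\otimes\dots\otimes f_k$ with $[n]=[n_1]\oplus\dots\oplus[n_k]$, for which $\varphi_l$ is the obvious inclusion $[n_l]\hookrightarrow[n]$), and then transport along $\rho$ to obtain the general $C_l=\rho^{-1}([m_l+1,m_l+n_l])$ and $\varphi_l=\rho^{-1}\circ(\text{inclusion})$. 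A secondary point to verify is that the decomposition $f\approx f_1\otimes\dots\otimes f_k$ being \emph{finest} is not actually needed for this lemma (it is needed for Proposition~\ref{prop:nofree_product}), so the induction goes through for any product decomposition, which is what makes the regrouping $f\approx f_1\otimes g$ legitimate.
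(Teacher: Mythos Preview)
Your proposal is correct, but it takes a more circuitous route than the paper. The paper's proof is essentially the content of your final paragraph: pick $\sigma\in\permut_n$ with $f\circ\sigma=f_1\otimes\dots\otimes f_k$, apply Corollary~\ref{coro:Pf}~(i) to get $L_{(T_1,\dots,T_k),f}=\sigma^{-1}(L_{(T_1,\dots,T_k),\otimes_l f_l})$, and then read off the label formula for the literal product from Corollary~\ref{coro:Pf}~(iii) (iterated, or stated directly for $k$ factors). This immediately gives $C_l=\sigma^{-1}(m_l+[n_l])$ and $\varphi_l(i)=\sigma^{-1}(m_l+i)$ where $m_l=\sum_{i<l}n_i$, with no induction required.

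Your inductive regrouping $f\approx f_1\otimes g$ would also work, but it creates exactly the permutation bookkeeping you flag as the main obstacle, and then you resolve that obstacle by reverting to the direct argument anyway. A minor point: your parenthetical claim that ``$\emptyset\in\Pe_{f_1}$ would force $\emptyset\in\Pe_f$'' is not quite right (it would require $\emptyset\in\Pe_g$ as well), but as you correctly observe at the end, none of the standing hypotheses on $f$ are actually needed for this lemma---the formula is a purely combinatorial consequence of Corollary~\ref{coro:Pf}---so this does not affect the validity of the argument.
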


\begin{proof} 
Let $\sigma\in \permut_n$ be such that $f\circ\sigma=f_1\otimes \dots\otimes f_k$. 
By Corollary \ref{coro:Pf}, 
\[
L_{(T_1,\dots,T_k),f}=\sigma^{-1}(L_{(T_1,\dots,T_k),\otimes_l
f_l})=\begin{dcases}\sigma^{-1}(m_l+L_{T_l}) & 
\text{if } \exists \ l, T_l\notin \mathrm{Min}(\Pe_{f_l})\\
\cup_l \sigma^{-1}(m_l+L_{T_l}), & \text{otherwise},
\end{dcases}
\]
where $m_l=\sum_{i=1}^{l-1}n_i$. Put $C_l=\sigma^{-1}(m_l+[n_l])$ and
$\varphi_l(i)=\sigma^{-1}(m_l+i)$.

\end{proof}

To ease the subsequent notations, we will replace
the labels of $T_{l}\in \Pe_{f_{l}}^0$ with $\varphi_l(L_{T_l})\subseteq C_l$.
Any $i\in C_l$ is thus connected to $f_l$. We will refer to the inclusion of an index $i$ in $C_l$ as
coloring $i$ by a color $l\in\{1,\dots,k\}$.

\begin{lemma}\label{lemma:color_min} Assume that the coloring is known for all minimal and
minimal covering indices of $f$. Then we can reconstruct all
$\Pe_{f_l}^0$ from $\Pe_f^0$.

\end{lemma}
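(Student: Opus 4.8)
The plan is to recover the lemma in two stages: first upgrade the given partial colouring to a \emph{full} colouring of $[n]$, i.e. determine the sets $C_1,\dots,C_k$ (equivalently the bijections $\varphi_l$ up to a permutation on the labels), and then exhibit each $\Pe_{f_l}^0$ as a distinguished labelled subposet of $\Pe_f^0$ singled out by the colouring. Throughout I use the iterated identification $\Pe_f\simeq \Pe_{f_1}\times\dots\times\Pe_{f_k}$ from Corollary \ref{coro:Pf}(iii) together with Lemma \ref{lemma:colors}: every $T\in\Pe_f^0$ is of the form $(T_1,\dots,T_k)$ with \emph{at most one} coordinate non-minimal, and $L_T=\varphi_l(L_{T_l})\subseteq C_l$ when $T_l$ is the unique non-minimal coordinate, while $L_T=\bigcup_l\varphi_l(L_{T_l})$ when all coordinates are minimal. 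Two consequences are used repeatedly: every non-minimal element of $\Pe_f^0$ has a monochromatic label (only minimal elements may mix colours), and, since $f$ has no free indices, every $i\in[n]$ occurs in the label of some element of $\Pe_f^0$.

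\textbf{Stage 1: the full colouring.} If $i$ occurs in the label of a minimal element of $\Pe_f^0$, its colour is already prescribed. Otherwise $i$ occurs only in labels of non-minimal elements; fix one, say $T=(V_{\neq l},T_l)$ with $i\in L_T$, so $T_l$ is the unique non-minimal coordinate and $L_T\subseteq C_l$. Descend along a maximal chain of $\Pe_f^0$ starting at $T$. A covering step in $\Pe_f^0$ out of $(V_{\neq l},X_l)$ can only decrease the $l$-th coordinate (the others sit at minimal elements), to an $X_l'$ covered by $X_l$ in $\Pe_{f_l}^0$; hence the whole descending chain has the form $(V_{\neq l},\,\cdot\,)$, with all members non-minimal (so monochromatic of colour $l$) except the bottom element $(V_{\neq l},A)$, $A\in\mathrm{Min}(\Pe_{f_l})=\mathrm{Min}(\Pe_{f_l}^0)$ (Lemma \ref{lemma:p0_basic}(i)), which is minimal in $\Pe_f^0$. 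Its predecessor $R=(V_{\neq l},B)$ on the chain covers this minimal element, hence is a minimal covering element, and therefore its colour is prescribed and equals $l$. Propagating the colour $l$ from $R$ up the chain to $T$, and hence to $i$, assigns a colour to $i$; doing this for all $i$ reconstructs $C_1,\dots,C_k$. (Lemmas \ref{lemma:pf0_mincover} and \ref{lemma:pf0_covermin} guarantee the chain structure used here; the few degenerate possibilities are excluded by the standing hypotheses and, if needed, the reductions of Propositions \ref{prop:pf0_largest} and \ref{prop:pf0_smallest}.)

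\textbf{Stage 2: reconstruction of $\Pe_{f_l}^0$.} Fix $l$ and any minimal element $W_0=(V_1^0,\dots,V_k^0)$ of $\Pe_f^0$. Form the labelled subposet $\Pe^{(l)}\subseteq\Pe_f^0$ consisting of all $U$ with $L_U\subseteq C_l$ together with all minimal $W$ satisfying $L_W\setminus C_l=L_{W_0}\setminus C_l$, where each such $U$ is given the label $\varphi_l^{-1}(L_U\cap C_l)$. Using the product description one checks that $\Pe^{(l)}=\{(V_{\neq l}^0,T_l):T_l\in\Pe_{f_l}^0\}$: any $U>W_0$ with $L_U\subseteq C_l$ has $l$ as its unique non-minimal coordinate and all other coordinates forced equal to the $V_j^0$; and a minimal $W$ with the prescribed $(\setminus C_l)$-label must, since minimal elements satisfy $L_A=A$ and so have pairwise distinct labels, be exactly of the form $(V_{\neq l}^0,A)$ with $A\in\mathrm{Min}(\Pe_{f_l})$, each of which does lie in $\Pe_f^0$. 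Hence $(V_{\neq l}^0,T_l)\mapsto T_l$ is an order isomorphism $\Pe^{(l)}\to\Pe_{f_l}^0$ preserving labels, which recovers $\Pe_{f_l}^0$ from $\Pe_f^0$ and the now-complete colouring, up to the permutation implicit in $\varphi_l$ — exactly as needed for Proposition \ref{prop:nofree_product}.

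The main obstacle is Stage 1: one must be certain that descending from an uncoloured index always terminates at (a minimal element whose predecessor is) a minimal covering element lying in the \emph{same} factor. This is precisely where monochromaticity of non-minimal labels and the covering lemmas \ref{lemma:pf0_mincover}, \ref{lemma:pf0_covermin} are indispensable, and where one must be careful that covering relations in the subposet $\Pe_f^0$ — rather than in $\Pe_f$ — move within a single coordinate. Stage 2, by contrast, is a routine unwinding of the product structure of $\Pe_f$.
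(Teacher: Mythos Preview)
Your two-stage approach differs from the paper's. The paper does not extend the colouring at all: it fixes a minimal element $U$, forms the upset $\Pe_l$ of all minimal elements whose labels agree with $L_U$ off $C_l$, removes the minimal layer, and then identifies the $l$-component among the resulting independent pieces using only the (given) colours of minimal covering labels. Your Stage~1, by contrast, first propagates the colouring to every index by exploiting the monochromaticity of non-minimal labels --- a clean observation that makes the subsequent extraction conceptually simpler. Stage~1 is correct as written; the citations of Lemmas~\ref{lemma:pf0_mincover} and~\ref{lemma:pf0_covermin} are not actually needed there, since the descent works simply because $\Pe_f^0$ is finite and any non-minimal element of $\Pe_f^0$ below $(V_{\neq l},T_l)$ must keep all coordinates $j\neq l$ fixed.

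Stage~2, however, has a genuine (if easily repaired) gap. Your set $\Pe^{(l)}$ as defined contains \emph{all} non-minimal $U\in\Pe_f^0$ with $L_U\subseteq C_l$, i.e.\ all $(V_{\neq l},T_l)$ with $T_l\in\Pe_{f_l}^0$ non-minimal and with \emph{arbitrary} minimal coordinates $V_j\in\mathrm{Min}(\Pe_{f_j})$ for $j\neq l$ --- not only those with $V_j=V_j^0$. Hence the claimed equality $\Pe^{(l)}=\{(V_{\neq l}^0,T_l):T_l\in\Pe_{f_l}^0\}$ fails whenever some $\Pe_{f_j}$, $j\neq l$, has more than one minimal element (which is the generic situation here, since $\emptyset\notin\Pe_{f_j}$ would force this unless $f_j$ is very degenerate). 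Your own justification treats only the case $U>W_0$, which suggests you intended that restriction in the definition; with it --- or, equivalently, taking $\Pe^{(l)}$ to be the upset in $\Pe_f^0$ of the minimal elements $W$ you single out, intersected with $\{U:L_U\subseteq C_l\}\cup\mathrm{Min}(\Pe_f^0)$ --- the argument goes through and yields the desired isomorphism onto $\Pe_{f_l}^0$.
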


\begin{proof}
Let  $U\in \mathrm{Min}(\Pe_f^0)$,  $U=(Z_1,\dots, Z_k)$ with
  $Z_l\in \mathrm{Min}(\Pe_{f_l})$. By the assumption, we know the coloring of any $i\in
  L_U=\cup_l L_{Z_l}$. Since $L_{Z_l}$ is a label set of $f_l$ if and only if
  $L_{Z_l}\subseteq C_l$, we obtain all label sets of minimal elements in $\Pe_{f_l}^0$.
  
For any   $U=(Z_1,\dots,Z_k)\in \mathrm{Min}(\Pe_f)$, let 
\[
\tilde L_U^l:=L_U\cap
(\cup_{l'\ne l}C_{l'})=\{i\in L_U, i\notin C_l\}=\cup_{l'\ne l}
L_{Z_{l'}},
\]
i.e. $\tilde U^l$ is  the set of all indices in $L_U$ of color
different from $l$.  It follows by the properties of the minimal label sets that 
$\tilde L^l_U\subseteq L_{U'}$ if and only if $\tilde L^l_U=\tilde L_{U'}^l$, so that 
$U'=(Z_1,\dots,Z_l',\dots,Z_k)$ for some $Z'_l\in \mathrm{Min}(\Pe_{f_l}^0)$. 

Fix a minimal element $U$ and consider the subposet in $\Pe_f^0$, given as
\[
\Pe_l:=\{U'\in \mathrm{Min}(\Pe_f^0),\ \tilde L^l_U\subseteq L_{U'}\}^\uparrow.
\]
From Lemma \ref{lemma:p0_constr} (ii) we see that after removing the minimal elements of $\Pe_l$, the poset
decomposes into independent components, one of which corresponds to $\Pe_{f_l}^0$ with
removed minimal elements. This component can be recognized by the labels of minimal
covering elements (which are now minimal elements in the component), colored by $l$. 
To this component, we add back the minimal elements in $\Pe_l$, with the order relations
as in $\Pe_f^0$, but removing the labels that are in $\tilde L_U^l$.

\end{proof}

We now show how to obtain the coloring of labels of minimal and minimal covering elements.

Let $L_1,\dots, L_M$ be all the different label sets for minimal covering elements in
$\Pe_f^0$ and let us denote $\mathbb T_a:=\{T\in \Pe_f^0,\ L_T=L_a\}$, $a=1,\dots,M$. Note that 
all $T\in \mathbb T_a$ are labeled by minimal covering indices and hence  must be minimal covering elements. 
Also, by Lemma \ref{lemma:color_min}, each
such $T$ covers exactly one minimal element $U$. By Lemma \ref{lemma:p0_constr}, we have
$T=(T_1,\dots,T_k)$, where all $T_{l'}$ are minimal in $\Pe_{f_{l'}}$ except a
single index $l$, for which $T_{l}$ is minimal covering in $\Pe_{f_{l}}^0$. Moreover,
$L_a=L_T=L_{T_l}$, in particular, all indices in $L_a$ have the same color $l$.

We introduce the following sets. Let $\mathcal U$ be the set of minimal indices of $f$ and
\begin{align*}
\mathcal V_a&:=\{i\in \mathcal U,\ i\cover_f j \text{ for some } j\in L_a\}\\
\mathcal W_a&:=\{i\in \mathcal U,\ i\not <_f j,\  \forall j\in L_a\}.
\end{align*}
Then all indices in $\mathcal V_a\cup\mathcal W_a\cup L_a$ must have the same color.
Indeed, we have already seen that $L_a\subseteq C_l$ for some $l$. Assume that a minimal
index $i\in C_{l'}$ for $l'\ne l$.  
Since $i$ cannot be a free input of $f_{l'}$, there must be some minimal elements $U_{l'}$
and $U'_{l'}$ in $\Pe_{f_{l'}}^0$ such that $i\in U_{l'}$ and $i\notin U'_{l'}$. Choose any
$j\in L_a$ and let $T_l$ be a minimal covering element in $\Pe_{f_l}^0$ such that $j\in L_{T_l}$.
Let also $U_j\in \mathrm{Min}(\Pe_{f_j}^0)$ for $j\ne l'$. 
 
Put $T=(U_1,\dots, U_{l'},\dots, T_l, \dots U_k)$ and
$T'=(U_1,\dots, U'_{l'},\dots, T_l, \dots U_k)$. Then $j\in L_T=L_{T'}$ and $i\in
(U_1,\dots,U_k)\cover T$, so that $i<_f j$, while $T'$ cannot cover any element containing
$i$, so that $i\not\cover_f j$. Hence
$p\notin \mathcal V_i\cup\mathcal W_i$.

For all $a$, let us denote
$C'_a:=\mathcal V_a\cup \mathcal W_a\cup L_a$ and define $a\sim b$ if 
$C_a'\cap C_b'\ne \emptyset$ or  if
there are some
$T\in \mathbb T_a$ and $S\in \mathbb T_b$ that have a common upper bound in $\Pe_f^0$. Take the transitive closure of this relation (also denoted by $\sim$).
We next prove several claims:

\medskip
\noindent
\textbf{Claim 1.} For  any $i\in \mathcal U$, there is some $a\in [M]$ such that $i\in \mathcal
V_a$, so that all minimal indices are in some set $C'_a$. Indeed, by Lemma \ref{lemma:strong_cover},
any $i\in \mathcal U$ is strongly covered by some minimal covering element $j$, which must
belong to some $L_a$.

\medskip

\noindent
\textbf{Claim 2.} If $a\sim b$, then $L_a,L_b\in C_l$ for some $l$.  To see this, assume
that $a$ and $b$ are such that $C'_{a}\cap C'_{b}\ne \emptyset$. Since $C'_a\subseteq C_l$
if $L_a\subseteq C_l$ and $C_l$ and $C_{l'}$ are disjoint
for $l\ne l'$, this implies that $L_a,L_b$ belong to the same component. Similarly, if
there are some $S\in \mathbb T_{a}$, $T\in \mathbb T_{b}$ with a common upper
bound, then $L_a$ and $L_b$ must be connected to the same function $f_l$. Hence both
relations imply that $L_a,L_b\subseteq C_l$ for some $l$, and this is clearly preserved by
the transitive closure.

\medskip
\noindent
\textbf{Claim 3.} If $\emptyset \notin \Pe_{f_l}$ and  $L_a,L_b\subseteq C_l$,  then
$a\sim b$. 
Assume first that  $\Pe_{f_l}^0$ has a largest element, then we can easily construct
elements $S\in \mathbb T_a$ and $T\in \mathbb T_b$ that have a common upper bound. Hence
$a\sim b$. 
Assume that $\Pe_{f_l}$ has no largest element, then one can check that $f_l$ and $f_l^*$ have no free indices. 
Since the decomposition
$f=f_1\otimes \dots\otimes f_k$ is the  finest decomposition of $f$ as a product, $f_l$
cannot be a product. By Proposition \ref{prop:nofree_components}, we obtain that $\Pe_{f_l}^0$ must have
independent components,  $\Pe_{f_l}^0=\Pe_1\indep \Pe_2$. Note that none of the $\Pe_i$ can be
a singleton by Lemma \ref{lemma:pf0_covermin}, so that either of the components has at
least one  of the label sets $L_{a'}$ associated with it.
 Assume that  $L_a$ belongs to $\Pe_1$, then any
minimal index $i$ of $f_l$  such that $i<_f j$ for $j\in L_a$ must belong to $\Pe_1$. 
 Hence $\mathcal W_a$ contains all minimal indices of $\Pe_2$. In particular, $\mathcal
 V_{a'}\subseteq \mathcal W_a$ for
all $L_{a'}$ in $\Pe_2$, so that $a\sim a'$ for all such $a'$.  If
$L_b$ is connected with $\Pe_2$, we are done, otherwise we have $a\sim a'\sim b$ by
repeating the argument with $a$ replaced by $a'$.

\medskip 
\noindent
\textbf{Claim 4.} If $\emptyset \in \Pe_{f_l}$, and $L_a\subseteq C_l$, then $L_a=C'_a$ (so that $\mathcal
V_a=\mathcal W_a=\emptyset$) and $a\not\sim b$ for any $b$ with $\mathcal V_b$ or
$\mathcal W_b$ nonempty.   
If $L_a$ and $L_b$ are two such label sets,  then they are connected with
the same $f_l$  if and only if they are in the same  independent component of the labelled poset $U^\uparrow\setminus \{U\}$,
here  $U$ is an arbitrary minimal element in $\mathrm{Min}(\Pe_f^0)$.
Indeed, the first statement is clear from the definition of $\mathcal V_a$ and $\mathcal
W_a$, and Claim 2. For the second statement, note that by Lemma \ref{lemma:p0_constr}, $U^\uparrow\setminus \{U\}$
contains $\Pe_{f_l}^0\setminus\{\emptyset\}$ for any $\Pe_{f_l}$ containing $\emptyset$  as one of its independent
components. If $\Pe_{f_l}^0$ has a largest element, then $\Pe_{f_l}^0\setminus\{\emptyset\}$ cannot
have independent components. Otherwise we have
$\Pe_{f_l}^0\setminus\{\emptyset\}=\Pe_{f_l^*}^0$ (since $f_l$ has no free outputs) and since $f_l$ is not a product,
$\Pe_{f_l^*}^0$ cannot have any independent components  by Proposition
\ref{prop:nofree_components}. 

\begin{proof}[Proof of Proposition \ref{prop:nofree_product}] Let $C_a'$, $a=1,\dots,M$ be
as described above. Assume that the equivalence relation $\sim$ has $k'$ equivalence
classes, then pick some of the colors $1,\dots, k'$ for each equivalence class $[a]_\sim$ and use it
for all indices  in  $\cup_{b\in [a]_\sim} C'_b$. Take all $L_a$ such that  $\mathcal
V_a=\mathcal W_a=\emptyset$ that are not related to any $b$ with nonempty $\mathcal V_b$ or
$\mathcal W_b$,  and use the procedure described in Claim 4 to merge some of the equivalence
classes if necessary. Claims 1-4 show that for all label indices for minimal and minimal
covering elements, we obtain a coloring that matches the decomposition of $f$ as a tensor
product $f\approx f_1\otimes\dots\otimes f_k$. Using Lemma \ref{lemma:color_min}, we get
all the labelled posets $\Pe_{f_l}^0$, with labels transformed by the bijections
$\varphi_l$. Applying any bijection $C_l\to [n_j]$ on the label sets, we obtain
$\Pe_{f_l}^0$ up to a permutation of the labels.

\end{proof}

\begin{exm} The following example shows the diagram of $\Pe_f^0$ for a function $f\in
\Te_{18}$:
\begin{center}
\includegraphics[scale=0.9]{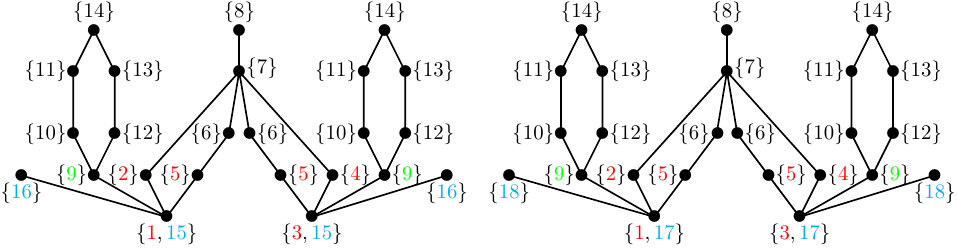}
\end{center}
It is clear that $f$ has no free outputs, since all indices in $[18]$ are contained in the
label sets. Also $f^*$ has no free output, since $\Pe_f^0$ has no largest element. It is
also clear that $f$ and $f^*$ have no free inputs, and $\emptyset\notin \Pe_f^0$.
The poset consists of two direct summands, but the labels are repeating, so there are no
independent components. Hence $f\approx f_1\otimes\dots\otimes f_k$ and the
assumptions of Proposition \ref{prop:nofree_product} are satisfied. We will now show how
to find the functions using the procedure described in the proof of Proposition
\ref{prop:nofree_product}. There are 6 different  labels of the minimal covering elements.
The label sets $L_a$ and the sets
$\mathcal V_a$ and $\mathcal W_a$ are shown in the following table:
\begin{center}
\begin{tabular}{|c|c|c|c|c|c|c|}
\hline
$a$& 1 & 2& 3&4 &5 &6\\
\hline
$L_a$ & $\{2\}$ & $\{4\} $& $\{5\}$ & $\{9\}$ & $\{16\}$ & $\{18\}$\\
\hline
$\mathcal V_a$ & $\{1\}$& $\{3\} $& $\emptyset$ & $ \emptyset$ & $\{15\}$ & $\{17\}$\\
\hline
$\mathcal W_a$& $\{ 3\}$& $\{ 1\} $& $\emptyset$ & $ \emptyset$ & $\{ 17\}$ & $\{15\}$\\
\hline
\end{tabular}

\end{center}
It is immediately clear that $1\sim 2$ and $5\sim 6$, since $C'_1\cap C'_2\ne \emptyset$
and $C'_5\cap C'_6\ne \emptyset$. We also have $1\sim 2\sim 3$, since vertices labeled by $L_1=\{2\}$,
$L_2=\{4\}$ and $L_3=\{5\}$ have a common upper bound (labelled by $\{7\}$). It follows
that we have three color classes: $C_1=\{1,2,3,4,5\}$ (colored red in the diagram),
$C_2=\{15,16,17,18\}$ (colored cyan) and $C_3=\{9\}$ (colored green). Since $C_3$ does not
contain any labels of minimal elements, the corresponding function $f_3$ satisfies
$\emptyset \in \Pe_{f_3}^0$. We now apply Lemma \ref{lemma:color_min} to obtain
$\Pe_{f_1}^0$. Let $U=\{1,15\}$, then $\tilde L^1_U=\{15\}$ and $\Pe_1=\{U'\in
\mathrm{Min}(\Pe_f^0), 15\in L_{U'}\}^\uparrow$ is then the left half of the diagram. 
Continuing similarly for $C_2$ and $C_3$, we obtain the following three factors of
$\Pe_f^0$:
\begin{center}
\begin{minipage}[b]{0.3\textwidth}
\centering
\includegraphics[scale=0.9]{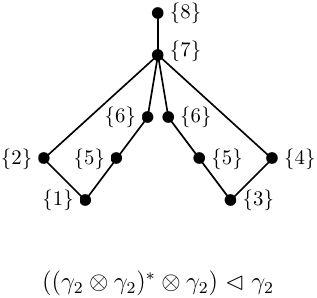}
\end{minipage}
\begin{minipage}[b]{0.3\textwidth}
\centering
\includegraphics[scale=0.9]{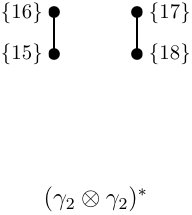}
\end{minipage}
\begin{minipage}[b]{0.3\textwidth}
\centering
\includegraphics[scale=0.9]{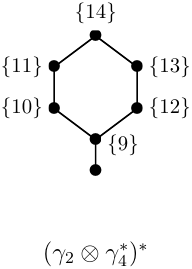}
\end{minipage}
\end{center}

The first factor here corresponds to a more complicated type (see Example
\ref{exm:pf0_causal} (c)), while the second and third
factors are the types of process matrices  (Examples \ref{exm:T4}, \ref{exm:pf0_basic} and  \ref{exm:ns_pm})
and process matrices with global past and future (Examples \ref{exm:pm_past_future} and
\ref{exm:comb_to_comb}).

\end{exm}

\end{document}